\documentclass[11pt]{article}
\usepackage[margin=1in]{geometry}
\usepackage[T1]{fontenc}
\usepackage{setspace}
\usepackage{lmodern,microtype,amsmath,amssymb,amsthm,mathtools}
\usepackage{booktabs,enumitem,xcolor,graphicx,needspace,placeins}
\usepackage{tikz,pgfplots}
\pgfplotsset{compat=1.18}
\usepgfplotslibrary{fillbetween,groupplots}
\usetikzlibrary{arrows.meta,calc,decorations.pathreplacing,positioning,patterns,backgrounds}
\usepackage{algorithm,algpseudocode}
\usepackage[authoryear,round]{natbib}
\usepackage{xurl}
\usepackage{etoc}
\usepackage{bbm}
\usepackage[colorlinks=true,linkcolor=blue!45!black,citecolor=blue!45!black,urlcolor=blue!45!black]{hyperref}
\setlist{nosep,leftmargin=1.6em}
\algrenewcommand{\algorithmiccomment}[1]{\hfill\textcolor{blue!65!black}{$\triangleright$ #1}}
\newtheorem{theorem}{Theorem}[section]
\newtheorem{lemma}[theorem]{Lemma}
\newtheorem{proposition}[theorem]{Proposition}

\newtheorem{definition}[theorem]{Definition}

\theoremstyle{definition}\newtheorem{example}[theorem]{Example}
\theoremstyle{remark}
\newcommand{\STARS}{\textsf{STARS}}

\newcommand{\E}{\mathbb E}\newcommand{\Pp}{\mathbb P}
\newcommand{\one}{\mathbf 1}\newcommand{\Reg}{\operatorname{Reg}}
\newcommand{\clip}{\operatorname{clip}}\newcommand{\Var}{\operatorname{Var}}
\newcommand{\epsT}{\varepsilon_T}
\allowdisplaybreaks[2]
\title{{\sf Trading with the \STARS{}: Algorithm Design \& Spectrum of Fundamental Limits for Trading with Storage}}
\author{Jerry Anunrojwong\thanks{Yale University. Email: {\tt jerry.anunrojwong@yale.edu}} \and Akshit Kumar\thanks{University of Toronto. Email: {\tt akshit.kumar@utoronto.ca}} \and Rachitesh Kumar\thanks{Carnegie Mellon University. Email: {\tt rachitesh@cmu.edu}}}
\date{}
\begin{document}
\maketitle
\begin{abstract}
\normalsize
    We study an online trading problem where a trader, given a sequence of i.i.d. prices drawn from a known distribution $F$ on $[0,1]$, must make irrevocable buy, sell, or hold decisions subject to storage constraints. We investigate achievable algorithmic performance measured in terms of regret, the difference between the expected profit of the hindsight optimal policy that knows the entire price sequence and an online algorithm. We analyze finite atomic and continuous distributions characterized by their local behavior around the median which we capture using a parameter $\beta$. The parameter $\beta$ quantifies how the mass of prices accumulates around the distribution median. We identify a new driver of algorithmic performance, demonstrating that median gaps coupled with an initial inventory level of zero can force regret scaling of $\Omega(T^{(\beta + 1)/(2\beta+4)})$ --- establishing a novel spectrum of fundamental limits on algorithmic performance. We then study \STARS{}, short for {\sf S}torage {\sf T}rading by {\sf A}veraging {\sf R}epeatedly across multiple {\sf S}cenarios, which simulates possible future price scenarios to approximate the value-to-go function and make buy/sell/hold decisions. We show that \STARS{} obtain near-optimal algorithmic performance (upto poly-logarithmic factors) across a broad range of distributions.
    In particular, it achieves $O(\log T)$ regret for finite atomic prices, $\widetilde O(T^{\beta/(2\beta+2)})$ for continuous distributions without a median gap and $\widetilde O(T^{(\beta+1)/(2\beta+4)})$ for continuous distributions with a median gap for $\beta \geq 0$.

    \medskip
    \noindent
    \textbf{Keywords:} online algorithms, trading with storage, regret scaling, algorithmic design principles
\end{abstract}
\etocdepthtag.toc{main}
\section{Introduction}
\label{sec:introduction}

Online resource allocation problems arise in revenue management, online matching, e-commerce fulfillment, and online marketplaces. In the canonical formulation, a decision maker begins with an exogenously given stock of resources and decides how to allocate them as requests arrive. Accepting a request consumes a resource; rejecting it preserves that resource for subsequent requests. A central question is how much the decision maker loses, relative to a hindsight optimizer, by making these decisions without knowing future requests; see \cite{balseiro2024survey}.

Many operational settings depart from this one-way allocation paradigm. A decision maker must procure resources \emph{endogenously} and decide how to allocate them to exploit future selling opportunities. Inventory then becomes central. Buying increases inventory, selling decreases it, and both decisions must respect capacity limits. A warehouse manager chooses when to acquire, store, and sell a commodity. A battery operator makes the analogous charging, idling, and discharging decisions. Such procurement-and-allocation decisions also arise for intermediaries whose inventory consists of contractual service capacity.

A concrete, albeit stylized, motivation is freight brokerage. A broker brings together shippers requiring transportation and carriers supplying it. For concreteness, consider a single-lane broker who can buy one unit of carrier service, sell one unit of that service to a shipper, or hold at each time step. The broker begins with no carrier capacity reserved. A purchase secures one unit of contractual carrier service, a sale allocates that unit to shipper demand, and holding keeps the commitment unallocated. The intermediary limits its unallocated commitments to $B$ units, which can be interpreted as an internal risk or working-capital limit. We deliberately abstract from the details of freight brokerage to focus on the core operational question: how does one make online trading decisions when resources need to be {\it endogenously} procured and allocated?

We study a prototypical version of this problem. Over a horizon of $T$ time steps, a trader observes i.i.d. prices drawn from a known distribution and can buy one unit, sell one unit, or hold at each time step. Inventory lies between zero and a capacity $B$, and the terminal inventory must equal the initial inventory $y_0$. Performance is measured by regret relative to the best feasible trading sequence with full knowledge of the prices.

The closest classical analogue is the multisecretary problem (see \cite{kleinberg2005multiple,arlotto2019uniformly}), where a decision maker observes a sequence of $n$ candidate valuations and makes irrevocable accept/reject decisions, subject to accepting at most $k$ candidates. The connection to trading with storage is particularly close at the level of the hindsight problem. 
%Suppose that the initial inventory is a linear distance from both boundaries, $0$ and $B$, and that capacity is proportional to the horizon, for example, $B=T/2$ 
Suppose that both the initial inventory $y_0$ and the unused storage capacity $B-y_0$ are proportional to the horizon $T$. If we temporarily ignore the pathwise inventory constraints, while retaining the requirement that terminal inventory equal initial inventory, an optimal hindsight trading strategy buys at the $\lfloor T/2\rfloor$ lowest prices, sells at the $\lfloor T/2\rfloor$ highest prices, and holds once if $T$ is odd. Thus, the empirical median acts as a hindsight threshold, with ties broken at random. {Selecting the sale periods is the same as selecting the accepted candidates} in a multisecretary problem with $n=T$ candidates and $k=\lfloor T/2\rfloor$ acceptances when valuations are known in advance. Under i.i.d. prices, the temporal ordering of these selected trades produces inventory fluctuations of order $\sqrt{T}$. {Because both buffers are proportional to $T$, they dominate the $\sqrt{T}$-scale inventory fluctuations. The relaxed hindsight strategy is therefore feasible with high probability, connecting the two hindsight problems.}
%Linear slack from both boundaries therefore makes this trading sequence feasible with high probability, connecting the two hindsight problems.

However, when the initial inventory level is zero, the problems differ substantially. 
%At $y_0=0$, the static threshold solution 
{The relaxed hindsight strategy} may prescribe a sale before the first purchase{, making it infeasible}. The hindsight optimizer must now account for when buying and selling opportunities arrive, {rather than simply selecting the lowest and highest prices.}
%and its thresholds depend on both the inventory and the remaining horizon. 
Unlike the multisecretary problem, the optimum can no longer be characterized by a single static threshold. The connection therefore fails already at the hindsight level and is even less transparent for online policies, which act without knowledge of future prices. This is the central difficulty introduced by storage constraints.

These observations motivate three questions. \textit{(i) How do storage constraints fundamentally change achievable performance relative to the classical dynamic resource allocation paradigm?} \textit{(ii) What are the fundamental drivers of the attainable regret scaling?} \textit{(iii) What simple algorithmic principles achieve these regret scalings?}

\subsection{Main contributions}
\label{subsec:contributions}

\begin{enumerate}[label={$(\roman*)$},leftmargin=*,itemsep=0.7em]
    \item \textit{Novel spectrum of fundamental limits}: We identify the \textit{initial inventory} and the \textit{local geometry of the price distribution} as two fundamental drivers of algorithmic performance. For regular continuous distributions (Definition~\ref{def:continuous-family}), the parameter $\beta\geq0$ describes how rapidly probability mass accumulates near the median, and $\Delta$ denotes the median gap. When $y_0\in[\varepsilon B,(1-\varepsilon)B]$ and $B=\rho T$, for fixed $\varepsilon,\rho>0$, the initial inventory provides linear slack from both boundaries. Proposition~\ref{prop:interior-transfer} shows that a feasible implementation of the \textsf{CwG} principle of \cite{besbes2025} then achieves regret $\widetilde O(T^{\beta/(2\beta+2)})$. Its inventory avoids the boundaries with high probability, recovering the spectrum of regret exponents from the multisecretary problem.

    A new spectrum emerges when the system starts at the boundary, i.e., $y_0 = 0$. Theorem~\ref{lb:thm-positive-gap} shows that, for every $\beta \geq 0$ and every fixed regular continuous distribution with a positive median gap, every feasible online policy starting empty incurs $\Omega(T^{(\beta+1)/(2\beta+4)})$ regret. %Theorem~\ref{lb:thm-positive-gap} shows, for every $\beta\geq0$, a bounded-density distribution with a positive median gap, for which every feasible online policy, including the optimal online policy, incurs $\Omega(T^{(\beta+1)/(2\beta+4)})$ regret. 
    Since $\beta\geq0$, the new exponent is at least $1/4$. The contrast is particularly striking when the density is positive at the median-gap edges ($\beta = 0$). For the same price distribution and storage capacity, starting with both inventory and unused capacity proportional to $T$ permits polylogarithmic regret, whereas starting empty forces every feasible online policy to incur regret of at least order $T^{1/4}$.

    This distinction also sheds a new perspective on \textit{degeneracy} in resource allocation. 
    %Consider the fluid relaxation without intermediate inventory constraints. 
    Consider the fluid relaxation that ignores the intermediate inventory constraints and requires purchases and sales to balance only in expectation.
    For a continuous distribution with a median gap, every price in the gap is an optimal shadow price for the balance constraint.
    The fluid problem therefore has {\it multiple} optimal dual prices, a form of degeneracy studied in the resource allocation literature (see \cite{jasin2012re,balseiro2024survey}). For the multisecretary problem, \citet{besbes2025} show that gaps do not change the achievable polynomial regret exponent, although logarithmic factors may differ. In contrast, for regular continuous distributions, a positive median gap forces a strictly larger polynomial regret exponent when trading starts from zero inventory. 
    Figure~\ref{fig:fundamental-drivers} summarizes this distinction.

    \begin{figure}[t]
    \centering
    \begingroup
    \definecolor{twsgray}{HTML}{5D6770}
    \definecolor{twsdensityred}{HTML}{C62828}

    % Native PGFPlots panels: the formulas, axis dimensions, and final width
    % match the former standalone density plots, without external PDF assets.
    \pgfplotsset{
        twsdensityaxis/.style={
            width=3.0cm,
            height=2.8cm,
            axis lines=left,
            axis line style={black,->,line width=0.55pt},
            xmin=0,
            xmax=1.06,
            ymin=0,
            xtick=\empty,
            ytick=\empty,
            xlabel={$p$},
            ylabel={$f(p)$},
            xlabel style={font=\scriptsize,at={(axis description cs:0.96,-0.08)},anchor=north},
            ylabel style={font=\scriptsize,at={(axis description cs:-0.10,0.50)},anchor=south},
            clip=false,
        },
        twsdensitycurve/.style={twsdensityred,line width=1.05pt,no marks},
    }
    \newcommand{\twsdensityplot}[2]{%
        \resizebox{1.72cm}{!}{%
            \begin{tikzpicture}
                \begin{axis}[twsdensityaxis,ymax=#1]
                    #2
                \end{axis}
            \end{tikzpicture}%
        }%
    }
    \newcommand{\twsgapplots}[2]{%
        \begin{scope}[shift={(#1,#2)}]
            \node[anchor=south west,inner sep=0] at (0,0)
                {\twsdensityplot{1.70}{%
                    \addplot[twsdensitycurve,domain=0:0.4,samples=2] {1.25};
                    \addplot[twsdensitycurve,domain=0.6:1,samples=2] {1.25};
                }};
            \node[anchor=south west,inner sep=0] at (1.92,0)
                {\twsdensityplot{3.00}{%
                    \addplot[twsdensitycurve,domain=0:0.4,samples=51] {6.25*(0.4-x)};
                    \addplot[twsdensitycurve,domain=0.6:1,samples=51] {6.25*(x-0.6)};
                }};
            \node[anchor=south west,inner sep=0] at (3.84,0)
                {\twsdensityplot{4.50}{%
                    \addplot[twsdensitycurve,domain=0:0.4,samples=51] {23.4375*(0.4-x)^2};
                    \addplot[twsdensitycurve,domain=0.6:1,samples=51] {23.4375*(x-0.6)^2};
                }};
            \node[font=\scriptsize,anchor=south] at (0.86,1.30) {$\beta=0$};
            \node[font=\scriptsize,anchor=south] at (2.78,1.30) {$\beta=1$};
            \node[font=\scriptsize,anchor=south] at (4.70,1.30) {$\beta=2$};
        \end{scope}
    }

    \newcommand{\twsnogapplots}[2]{%
        \begin{scope}[shift={(#1,#2)}]
            \node[anchor=south west,inner sep=0] at (0,0)
                {\twsdensityplot{1.35}{%
                    \addplot[twsdensitycurve,domain=0:1,samples=2] {1};
                }};
            \node[anchor=south west,inner sep=0] at (1.92,0)
                {\twsdensityplot{2.40}{%
                    \addplot[twsdensitycurve,domain=0:1,samples=101] {4*abs(x-0.5)};
                }};
            \node[anchor=south west,inner sep=0] at (3.84,0)
                {\twsdensityplot{3.60}{%
                    \addplot[twsdensitycurve,domain=0:1,samples=101] {12*(x-0.5)^2};
                }};
            \node[font=\scriptsize,anchor=south] at (0.86,1.30) {$\beta=0$};
            \node[font=\scriptsize,anchor=south] at (2.78,1.30) {$\beta=1$};
            \node[font=\scriptsize,anchor=south] at (4.70,1.30) {$\beta=2$};
        \end{scope}
    }

    \resizebox{0.98\textwidth}{!}{%
    \begin{tikzpicture}[x=1cm,y=1cm]
        % White cells only.
        \fill[white] (2.25,0.00) rectangle (14.75,6.80);
        \draw[twsgray!70,line width=0.75pt] (2.25,0.00) rectangle (14.75,6.80);
        \draw[twsgray!70,line width=0.75pt] (8.50,0.00) -- (8.50,6.80);
        \draw[twsgray!70,line width=0.75pt] (2.25,3.40) -- (14.75,3.40);
        \draw[blue,line width=1.35pt]
            (8.50,3.40) rectangle (14.75,6.80);

        % Rotated row labels.
        \node[rotate=90,align=center,font=\small] at (1.25,5.10)
            {\textbf{Positive gap}\\ $\Delta>0$};
        \node[rotate=90,align=center,font=\small] at (1.25,1.70)
            {\textbf{No gap}\\ $\Delta=0$};

        % Three native density panels in each regime.
        \twsgapplots{2.58}{5.08}
        \twsgapplots{8.83}{5.08}
        \twsnogapplots{2.58}{1.68}
        \twsnogapplots{8.83}{1.68}

        % Left: achievable CwG-transfer bounds. Right: boundary results.
        \node[font=\scriptsize] at (3.44,4.60)
            {$\mathcal O(\log^2 T)$};
        \node[font=\scriptsize] at (5.36,4.60)
            {$\tilde{\mathcal O}(T^{\frac{1}{4}})$};
        \node[font=\scriptsize] at (7.28,4.60)
            {$\tilde{\mathcal O}(T^{\frac{1}{3}})$};

        \node[font=\scriptsize,text=blue] at (9.69,4.60)
            {$\tilde{\Theta}(T^{\frac{1}{4}})$};
        \node[font=\scriptsize,text=blue] at (11.61,4.60)
            {$\tilde{\Theta}(T^{\frac{1}{3}})$};
        \node[font=\scriptsize,text=blue] at (13.53,4.60)
            {$\tilde{\Theta}(T^{\frac{3}{8}})$};

        \node[font=\scriptsize] at (3.44,1.20)
            {$\mathcal O(\log T)$};
        \node[font=\scriptsize] at (5.36,1.20)
            {$\mathcal O(T^{\frac{1}{4}})$};
        \node[font=\scriptsize] at (7.28,1.20)
            {$\mathcal O(T^{\frac{1}{3}})$};

        \node[font=\scriptsize,text=twsgray] at (9.69,1.20)
            {$\tilde{\mathcal O}(1)$};
        \node[font=\scriptsize,text=twsgray] at (11.61,1.20)
            {$\tilde{\mathcal O}(T^{\frac{1}{4}})$};
        \node[font=\scriptsize,text=twsgray] at (13.53,1.20)
            {$\tilde{\mathcal O}(T^{\frac{1}{3}})$};

        % Interior and boundary policies are explicitly distinguished.
        \node[font=\scriptsize] at (5.38,3.82)
            {\cite{besbes2025}, all $\beta\geq0$:\quad
             $\tilde{\mathcal O}(T^{\frac{\beta}{2\beta+2}})$};
        \node[font=\scriptsize,text=blue] at (11.63,3.82)
            {Optimal regret, all $\beta\geq 0$:\quad
             $\tilde{\Theta}(T^{\frac{\beta+1}{2\beta+4}})$};
        \node[font=\scriptsize] at (5.38,0.42)
            {\cite{besbes2025}, all $\beta\geq0$:\quad
             $\tilde{\mathcal O}(T^{\frac{\beta}{2\beta+2}})$};
        \node[font=\scriptsize,text=twsgray] at (11.63,0.42)
            {\STARS{} upper bound, all $\beta\geq 0$:\quad
             $\tilde{\mathcal O}(T^{\frac{\beta}{2\beta+2}})$};

        % Starting-inventory labels below the matrix.
        \node[align=center,font=\small] at (5.38,-0.48)
            {\textbf{Linearly interior}\\[-1pt]
             $y_0\wedge(B-y_0)\ge\eta T$};
        \node[align=center,font=\small] at (11.63,-0.48)
            {\textbf{At a boundary}\\[-1pt] $y_0=0$};
    \end{tikzpicture}%
    }
    \endgroup
    \caption{Regret scaling by starting inventory and the local geometry of the price distribution.}
    \label{fig:fundamental-drivers}
\end{figure}

    \FloatBarrier

    \item \textit{Fragility of boundary-unaware algorithms}: We next show that algorithmic ideas inherited directly from the dynamic resource allocation literature without any consideration for the boundary constraint can be highly sub-optimal. Consider the classical certainty-equivalent heuristic which continually re-solves a fluid problem and steers inventory toward the terminal target. For uniform prices, Proposition~\ref{ce:theorem} shows that it achieves the classical logarithmic regret rate when the initial inventory is a linear distance from both storage boundaries.

    The same heuristic behaves very differently at $y_0=0$. The boundary-unaware certainty-equivalent heuristic does not value the inventory slack needed to exploit future selling opportunities. It may therefore sell when the inventory is close to zero and lose precisely the opportunities that a temporally aware hindsight policy preserves inventory for. Proposition~\ref{ce:theorem} proves that this boundary-unaware certainty-equivalent policy incurs $\Omega(\sqrt T)$ regret even for uniform prices. For the same uniform instance, \STARS{}, discussed next, has only $O(\log^2 T)$ regret by Theorem~\ref{thm:stars-upper}.

    \item \textit{A unifying algorithmic principle}: Motivated by the fragility of the boundary-unaware certainty-equivalent heuristic, we study an approximate dynamic programming approach based on simulated future prices. At each time step, the trader draws multiple future scenarios, computes the best feasible continuation for each current action in each scenario, and implements the action with the highest average total profit, including the current trade. We call this algorithm \STARS{}, short for {\sf S}torage {\sf T}rading by {\sf A}veraging {\sf R}epeatedly across multiple {\sf S}cenarios. Given a price simulator and a hindsight optimizer, the algorithm requires no distribution-specific fine-tuning. Its buy/hold/sell regions arise endogenously from the scenario continuation values.

    For expositional simplicity and motivated by practical settings, we focus on $y_0=y_T=0$. With capacity proportional to the horizon and sufficiently many scenarios at each time step, \STARS{} achieves $\mathcal{O}(\log T)$ regret for finite atomic distributions, $\widetilde{\mathcal{O}}(T^{\beta/(2\beta+2)})$ for regular continuous prices without a median gap, and $\widetilde{\mathcal{O}}(T^{(\beta+1)/(2\beta+4)})$ for regular continuous prices with a median gap (Theorem~\ref{thm:stars-upper}). 
% On the canonical positive-gap instances,
% {For every fixed distribution in the continuous positive-gap family considered here,}
For positive-gap distributions,
this guarantee matches our fundamental limit up to logarithmic factors.

% Specific distribution examples reveal how the algorithm protects inventory. For finite atomic examples with a median gap, scenario optimization generates a {\it logarithmic inventory buffer}: below a reserve level that grows logarithmically with the remaining horizon, the policy favors buying or holding near the upper gap edge, even at prices at which it would sell with more inventory. It thereby preserves stock for more valuable selling opportunities.
    Specific examples illustrate how \STARS{} protects inventory. In finite-price examples with a median gap, the policy may hold or buy at a moderately high price when inventory is low, but sell at the same price when inventory is plentiful. This behavior preserves stock for more valuable future selling opportunities. The inventory level separating these behaviors grows logarithmically with the remaining horizon.
% , creating a logarithmic inventory buffer.
For uniform prices, protection takes a softer form. When inventory is small relative to the square root of the remaining horizon and far from the upper boundary, the expected upward drift is of inverse-inventory order: the smaller the inventory, the stronger the push away from zero. This \textit{Bessel-type repulsion} weakens as inventory grows. Section~\ref{sec:proof-ideas} develops both interpretations.

    \item \textit{Key technical innovation}: We follow the compensated-coupling approach of \citet{vera2021bayesian}: regret is the sum of \textit{marginal compensations}, each measuring the loss from making hindsight take the online action at the online policy's current inventory (Proposition~\ref{prop:main-decomposition}). In the multisecretary analyses of \citet{bray2025} and \citet{besbes2025}, a crucial step is to bound expected marginal compensation uniformly over the current state by a quantity that decreases with the remaining horizon. Summing this time-dependent bound gives the regret guarantee. This simplification is unavailable here. For uniform prices, Proposition~\ref{prop:main-nonvanishing} shows that expected marginal compensation at zero inventory remains bounded below by a positive constant as the remaining horizon grows. We must therefore account jointly for \textit{how often the policy visits a state} and \textit{how much compensation it pays there}. Our analysis derives both the compensation and the inventory movement from the same hindsight marginal values, then combines them along the policy's inventory path. Section~\ref{sec:proof-ideas} explains this compensation-weighted accounting and how it yields guarantees for the different distribution classes.
\end{enumerate}
\FloatBarrier

\subsection{Related literature}
\label{subsec:related-literature}
\paragraph{\it Regret and distributional geometry in resource allocation.}
For the multisecretary problem with a fixed finite support, \cite{arlotto2019uniformly} develop a \textsf{Budget-Ratio} policy with regret bounded uniformly over the horizon and hiring budget. For regular continuous valuations, \cite{bray2025} establishes logarithmic regret by controlling hindsight shadow prices. In network revenue management, \cite{vera2021bayesian,bumpensanti2020re} obtain bounded loss by combining re-solving with thresholding. These results show how adaptivity and distributional structure improve on square-root losses; see \cite{balseiro2024survey} for a broader survey. 
\cite{besbes2025} identify a spectrum of multisecretary regret rates governed by mass accumulation around gaps. Their \textsf{CwG} principle and clustered-distribution framework motivate our focus on median geometry, and their \textsf{RAMS} algorithm provides the simulation-based idea we specialize to our problem with storage constraints. We allow inventory to be replenished as well as depleted and require feasibility at every time step.
We identify a different set of fundamental limits: a continuous distribution with a positive median gap can impose an additional polynomial penalty when starting close to the boundary. Our regret decomposition builds on the compensated-coupling framework of \cite{vera2021bayesian}, which compares online and hindsight actions from the same state and expresses regret as the sum of the resulting losses. In the multisecretary setting, \cite{bray2025} relate one-step loss to uncertainty in hindsight marginal values, while \cite{besbes2025} use marginal-compensation bounds to transfer guarantees to \textsf{RAMS}. We retain this decomposition, but must also control the inventory process generated by \STARS{}. We combine state-dependent compensation with the time spent in each state. 

\paragraph{\it Simulation-based and model-predictive policies.}
\cite{talluri1999randomized} compute network bid prices by averaging dual prices from demand-scenario linear programs. Model predictive control provides the broader approach of repeatedly optimizing a future plan and implementing its current action \citep{mayne2000,mesbah2016}. Relatedly, \cite{sinclair2023hindsight} use hindsight planning to learn policies under exogenous uncertainty. 
In a related dynamic matching context, \cite{chen2026feature} introduce \textsf{SOAR}, which samples future demand, solves a matching problem with the remaining supply, and implements the current assignment. Its expected average reward is characterized by an average of finite-sample hindsight values. \cite{kumar2026dark} replace the simulator with realized demand observations in \textsf{ROAR}, obtaining a related identity and regret guarantees without prior distributional knowledge. \STARS{} uses the same forward-looking idea to preserve inventory, but averages multiple scenarios from a known price distribution. The benchmarks also differ: those papers study per-match loss relative to the fluid problem, whereas we study total loss relative to the hindsight optimum which is a tighter benchmark. The ordering constraints on purchases and sales lead to a totally different regret argument and analytical techniques.

\paragraph{\it Prophet inequalities for trading.}
The trading-prophet literature studies the closely related problem of repeated buying and selling against a hindsight benchmark. \cite{correa2026trading} establish competitive guarantees for stochastic prices, \cite{azar2026initial} examine initial capital and transaction costs, and \cite{aggarwal2026asymmetric} quantify the effects of capacity and initial inventory with asymmetric buying and selling prices. Their principal guarantees identify the fraction of hindsight profit achievable uniformly over admissible price distributions. We build on this perspective to ask \emph{which properties of a fixed distribution determine the difficulty of online trading}. With unit trades and capacity proportional to the horizon, our analysis identifies median gaps, local mass accumulation, and initial inventory as joint drivers of regret scaling. Across finite atomic and regular continuous distributions, these features give rise to different regret scalings. In particular, a continuous median gap coupled with zero initial inventory creates a new polynomial lower-bound spectrum, while linear initial slack permits smaller exponents (Theorem~\ref{lb:thm-positive-gap} and Proposition~\ref{prop:interior-transfer}). This distribution-sensitive classification explains how price geometry and inventory conditions affect the cost of online decision making.

%Our algorithmic focus is repeated scenario optimization, motivated by access to a price simulator. The trading-prophet algorithms include threshold rules, with sample-based variants in \cite{correa2026trading}, and LP-based randomized rounding in \cite{aggarwal2026asymmetric}. \STARS{} optimizes simulated continuations from the current inventory and averages their values to choose the current action. We analyze how this procedure generates inventory- and horizon-dependent holding, obtaining distribution-sensitive regret guarantees with the same decision rule across the three families (Theorem~\ref{thm:stars-upper}).

%\paragraph{\it Storage control and closed resource networks.} \cite{secomandi2010storage} study storage with capacity and injection/withdrawal limits under Markov prices and characterize an optimal policy through two inventory targets. This provides an antecedent for the buy/hold/sell structure; our focus is its loss relative to sample-path hindsight. In closed resource networks, \cite{kanoria2024mirror} develop Mirror Backpressure without knowledge of demand rates. Their nonlinear congestion prices protect scarce inventory and yield near-optimal performance, including from unfavorable initial configurations. This is related to the boundary protection generated by \STARS{} endogeneously. %Their benchmark is an optimal online controller in a reusable-resource network, whereas ours is a balanced-endpoint hindsight trader. We connect scenario-induced holding to additive regret rates governed by local price geometry.

\section{Model and Classes of Price Distribution}
\label{sec:model}

\subsection{Feasible trading and regret}
The horizon is $T\in\mathbb N$, capacity is $B=\lceil\rho T\rceil$ for fixed $\rho\in(0,1)$, and $y_0\in\{0,\ldots,B\}$ is the initial inventory. Prices $p_1,\ldots,p_T$ are i.i.d. with known CDF $F$ supported on $[0,1]$. At time step $t$, the trader observes $p_t$ and chooses $x_t\in\{-1,0,+1\}$, corresponding to sell, hold, and buy. Profit is $-p_tx_t$, and
\begin{align}
 y_t=y_{t-1}+x_t,\qquad 0\le y_t\le B,\qquad y_T=y_0.
 \label{eq:model-dynamics}
\end{align}
Write $\tau_t=T-t+1$ for the remaining number of time steps,
including the current decision. We use $\tau$ for the remaining
length of a generic continuation problem and $y$ for its inventory;
the continuation after the decision at $t$ has $\tau_t-1$ prices.
We reserve $n$ and $k$ for the horizon and selection quota of a
multisecretary instance. From a feasible pre-decision state $y$, the actions preserving a feasible completion are exactly
\begin{align}
 \mathcal A_t(y)=\{x\in\{-1,0,1\}:0\le y+x\le B,\ |y+x-y_0|\le T-t\}.
 \label{eq:feasible-actions}
\end{align}
The constraint $|y+x-y_0|\le T-t$ ensures that the trader can still return to $y_0$ by time $T$.

Let $\zeta_t$ encode the policy's fresh randomization, independent across time and independent of all actual prices. Put $\mathcal F_t=\sigma(p_s,\zeta_s:s\le t)$ and let $\mathcal F_0$ be trivial. A feasible online policy $\pi$ chooses an $\mathcal F_t$-measurable $x_t^\pi\in\mathcal A_t(y_{t-1}^\pi)$. Denote this policy set by $\Pi(B,T,y_0)$. For a price sequence $\mathbf p$, hindsight chooses the best \emph{feasible} action sequence:
\begin{align*}
 V_1^{\mathsf{hs}}(B,T,y_0;\mathbf p)
 =\max\left\{\sum_{t=1}^T-p_tx_t:\ (x_t,y_t)_{t=1}^T\text{ satisfy \eqref{eq:model-dynamics}}\right\}.
\end{align*}
Note that the maximum exists since the feasible action set is finite and contains the all-hold sequence. The expected regret for a policy $\pi$ is defined as
\begin{align*}
 \Reg(\pi;B,T,y_0)=\E\left[V_1^{\mathsf{hs}}(B,T,y_0;\mathbf p)-\sum_{t=1}^T-p_tx_t^\pi\right].
\end{align*}
The expectation includes actual prices and policy randomization. We denote $R_T(\pi;y_0)=\sum_{t=1}^T-p_tx_t^\pi$ as the realized trading profit for policy $\pi$, and we have that
\begin{align*}
 \Reg^*(B,T;y_0)=\inf_{\pi\in\Pi(B,T,y_0)}\Reg(\pi;B,T,y_0),\qquad
 \Reg^*(B,T)=\sup_{y_0\in\{0,\ldots,B\}}\Reg^*(B,T;y_0).
\end{align*}
For real numbers $a,b$, write $a\wedge b=\min\{a,b\}$ and
$a\vee b=\max\{a,b\}$. All logarithms are natural. We state horizon-dependent rate bounds for $T\ge2$; when $T=1$, the equal initial and terminal inventories force a hold, so every feasible policy has zero regret.

The distribution $F$ is held fixed in this notation. The model and the \STARS{} algorithm (presented in Section~\ref{sec:stars}) allow any initial inventory.
% We compare interior and zero initial inventory levels in Section~\ref{lb:sec-fundamental-limits}; for expositional simplicity and motivated by practical applications, our \STARS{} upper bounds focus on $y_0=0$.
Section~\ref{lb:sec-fundamental-limits} compares regret guarantees for linearly interior and zero initial inventory. Our regret guarantees for \STARS{} in Theorem~\ref{thm:stars-upper} assume $y_0=y_T=0$.

\subsection{Three regimes of median geometry}
We consider three distribution families, motivated by the $(\beta, \varepsilon, \delta)-$clustered distributions of \cite{besbes2025}. Our definitions focus on how probability mass accumulates near the median, since this determines the value of holding inventory for future trades. The first family consists of finite atomic distributions while the other two consist of regular continuous distributions, with and without a median gap. In all three families, the distribution is fixed independently of $T$.

\begin{definition}[Median edges]\label{def:median-edges}
For $0<q<1$, define $F^{-1}(q)=\inf\{p\in[0,1]:F(p)\ge q\}$. The lower and upper median edges are $a_F^-=F^{-1}(1/2)$ and $a_F^+=\lim_{q\downarrow1/2,\ q>1/2}F^{-1}(q)$, respectively. The median gap has width $\Delta_F=a_F^+-a_F^-$.
\end{definition}
The interval $[a_F^-,a_F^+]$ is the set of medians of $F$.
% When its endpoints differ, the intervening open interval contains no probability mass.
A positive median gap separates the lower and upper halves of the distribution by a price interval containing no probability mass. This is the gap construction of \citet[Section~3.1]{besbes2025}, specialized to quantile $1/2$.
Note that the definition identifies the gap specifically at the median, allowing other gaps elsewhere in the support. We drop the subscript $F$ and abbreviate the edges by $a^-,a^+$ when $F$ is understood.

\begin{definition}[Finite atomic distributions]\label{def:atomic-family}
% For $\mu\in(0,1/2]$ and $\Delta\in[0,1]$, let $\mathcal F^{\mathrm{at}}_{\mu,\Delta}$ contain the class of distributions with finite support $0\le A_1<\cdots<A_m\le1$, positive masses $w_i$ satisfying $\sum_iw_i=1$, and median gap $\Delta_F=\Delta$, whose quantiles satisfy $F^{-1}(1/2-q)=a^-$ and $F^{-1}(1/2+q)=a^+$ for every $0<q<\mu$.
% Note that the support and masses are fixed as $T$ varies.
{For $\mu\in(0,1/2]$ and $\Delta\in[0,1]$, let $\mathcal F^{\mathrm{at}}_{\mu,\Delta}$ be the class of distributions $F$ with finite support in $[0,1]$, median gap $\Delta_F=\Delta$, and quantiles satisfying $F^{-1}(1/2-q)=a_F^-$ and $F^{-1}(1/2+q)=a_F^+$ for every $0<q<\mu$.}
% The support and probability masses are fixed independently of $T$.
\end{definition}
Every fixed finite atomic distribution belongs to such a family for some $\mu>0$. If a cumulative mass equals $1/2$, the two adjacent support prices form a positive median gap. Otherwise $1/2$ lies strictly inside one atom, and its quantile interval has positive length on both sides of $1/2$. The parameter $\mu$ specifies the local quantile windows around the median. 

\begin{example}[Finite atomic prices, with and without a median gap]\label{ex:atomic}
% Take $(A_1,A_2,A_3,A_4)=(0.2,0.4,0.6,0.8)$. Equal masses $w_i=1/4$ give $a^-=0.4$, $a^+=0.6$, and $\Delta_F=0.2$. Changing the masses to $\mathbf{w} = (0.25,0.20,0.30,0.25)$ leaves the support unchanged but places the median inside the atom at $0.6$: now $a^-=a^+=0.6$ and $\Delta_F=0$.
{Consider prices taking values $0.2,0.4,0.6,0.8$. If each price has probability $1/4$, then $a^-=0.4$, $a^+=0.6$, and $\Delta_F=0.2$. If instead these prices have probabilities $0.25,0.20,0.30,0.25$, respectively, the median lies inside the atom at $0.6$: now $a^-=a^+=0.6$ and $\Delta_F=0$.} Both distributions satisfy Definition~\ref{def:atomic-family} with $\mu=0.05$.
% Figure~\ref{fig:example-atomic} shows why gaps between support prices need not be median gaps.
{Figure~\ref{fig:example-atomic} shows that the same support can yield either a positive median gap or no median gap, depending on the probability masses.}
\end{example}
% Example figure: finite atomic four-point distributions on a fixed support.
% Left column: uniform masses (positive median gap).
% Right column: perturbed masses (median gap closes).
% Top row: probability mass functions; bottom row: quantile functions.
\begin{figure}[!htb]
    \centering
    \begingroup
    \definecolor{twsred}{HTML}{C62828}
    \definecolor{twsgold}{HTML}{D97706}
    \definecolor{twsblue}{HTML}{315A7D}
    \definecolor{twslight}{HTML}{F7E8E8}
    \begin{tikzpicture}
        \begin{groupplot}[
            group style={
                group size=2 by 2,
                horizontal sep=1.55cm,
                vertical sep=1.20cm,
            },
            width=0.42\textwidth,
            height=0.26\textwidth,
            axis lines=left,
            axis line style={black,semithick},
            tick style={black},
            tick label style={font=\scriptsize},
            label style={font=\small},
            title style={font=\scriptsize\bfseries,align=center,yshift=1mm},
            xmin=0,
            xmax=1,
            enlargelimits=false,
            clip=false,
        ]

        % ------------------------------------------------------------------
        % (1,1) PMF, uniform masses: positive median gap.
        % ------------------------------------------------------------------
        \nextgroupplot[
            title={Positive median gap: $\Delta_F=0.2$},
            xlabel={$p$},
            ylabel={probability mass},
            ymin=0,
            ymax=0.40,
            xtick={0.2,0.4,0.6,0.8},
            xticklabels={$0.2$,$0.4$,$0.6$,$0.8$},
            ytick=\empty,
        ]
            \addplot[ycomb,mark=*,mark size=1.8pt,line width=1.3pt,color=twsblue]
                coordinates {(0.2,0.25) (0.4,0.25) (0.6,0.25) (0.8,0.25)};
            \node[font=\scriptsize,anchor=south,text=twsblue]
                at (axis cs:0.2,0.262) {$0.25$};
            \node[font=\scriptsize,anchor=south,text=twsblue]
                at (axis cs:0.4,0.262) {$0.25$};
            \node[font=\scriptsize,anchor=south,text=twsblue]
                at (axis cs:0.6,0.262) {$0.25$};
            \node[font=\scriptsize,anchor=south,text=twsblue]
                at (axis cs:0.8,0.262) {$0.25$};
            \draw[<->,>=Latex,semithick,color=twsgold]
                (axis cs:0.4,0.045)
                -- node[above,font=\scriptsize,fill=white,inner sep=1pt]
                    {$\Delta_F$}
                (axis cs:0.6,0.045);

        % ------------------------------------------------------------------
        % (1,2) PMF, perturbed masses: same support, no median gap.
        % ------------------------------------------------------------------
        \nextgroupplot[
            title={No median gap: $\Delta_F=0$},
            xlabel={$p$},
            ylabel={probability mass},
            ymin=0,
            ymax=0.40,
            xtick={0.2,0.4,0.6,0.8},
            xticklabels={$0.2$,$0.4$,$0.6$,$0.8$},
            ytick=\empty,
        ]
            \addplot[ycomb,mark=*,mark size=1.8pt,line width=1.3pt,color=twsblue]
                coordinates {(0.2,0.25) (0.4,0.20) (0.6,0.30) (0.8,0.25)};
            \node[font=\scriptsize,anchor=south,text=twsblue]
                at (axis cs:0.2,0.262) {$0.25$};
            \node[font=\scriptsize,anchor=south,text=twsblue]
                at (axis cs:0.4,0.212) {$0.20$};
            \node[font=\scriptsize,anchor=south,text=twsblue]
                at (axis cs:0.6,0.312) {$0.30$};
            \node[font=\scriptsize,anchor=south,text=twsblue]
                at (axis cs:0.8,0.262) {$0.25$};

        % ------------------------------------------------------------------
        % (2,1) Quantile function, uniform masses: jump of size Delta at 1/2.
        % ------------------------------------------------------------------
        \nextgroupplot[
            xlabel={$q$},
            ylabel={$F^{-1}(q)$},
            ymin=0,
            ymax=1,
            xtick={0,0.5,1},
            xticklabels={$0$,$\tfrac12$,$1$},
            ytick={0.2,0.4,0.6,0.8},
            yticklabels={$0.2$,$0.4=\,a_F^-$,$0.6=\,a_F^+$,$0.8$},
        ]
            \path[fill=twslight]
                (axis cs:0.45,0) rectangle (axis cs:0.55,1);
            \draw[densely dashed,color=twsred!70]
                (axis cs:0.45,0) -- (axis cs:0.45,1);
            \draw[densely dashed,color=twsred!70]
                (axis cs:0.55,0) -- (axis cs:0.55,1);
            \draw[dashed,line width=1.1pt,color=black!60]
                (axis cs:0.5,0) -- (axis cs:0.5,1);
            \node[rotate=90,font=\scriptsize,text=black!60]
                at (axis cs:0.47,0.84) {median};
            \addplot[line width=1.3pt,color=twsblue] coordinates {
                (0.00,0.2) (0.25,0.2)
                (0.25,0.40) (0.50,0.40)
                (0.50,0.60) (0.75,0.60)
                (0.75,0.8) (1.00,0.8)
            };
            \addplot[only marks,mark=*,mark size=1.6pt,color=twsblue]
                coordinates {(0.5,0.40)};
            \addplot[only marks,mark=o,mark size=1.8pt,color=twsblue,fill=white]
                coordinates {(0.5,0.60)};
            \draw[<->,>=Latex,semithick,color=twsred]
                (axis cs:0.25,0.33)
                -- node[below,font=\scriptsize,fill=white,inner sep=1pt] {$0.25$}
                (axis cs:0.50,0.33);
            \draw[<->,>=Latex,semithick,color=twsgold]
                (axis cs:0.50,0.67)
                -- node[above,font=\scriptsize,fill=white,inner sep=1pt] {$0.25$}
                (axis cs:0.75,0.67);
            \draw[<->,>=Latex,semithick,color=twsgold]
                (axis cs:0.585,0.40)
                -- node[right,font=\scriptsize,fill=white,inner sep=1pt] {$\Delta_F$}
                (axis cs:0.585,0.60);
            \draw[{Latex[length=1.2mm]}-{Latex[length=1.2mm]},semithick,color=twsred]
                (axis cs:0.45,0.08) -- (axis cs:0.50,0.08);
            \node[font=\scriptsize,text=twsred,anchor=east,inner sep=1pt]
                at (axis cs:0.435,0.08) {$\mu$};

        % ------------------------------------------------------------------
        % (2,2) Quantile function, perturbed masses: the jump moves to 0.45,
        % outside the window of radius 0.05, so the median gap is zero.
        % ------------------------------------------------------------------
        \nextgroupplot[
            xlabel={$q$},
            ylabel={$F^{-1}(q)$},
            ymin=0,
            ymax=1,
            xtick={0,0.5,1},
            xticklabels={$0$,$\tfrac12$,$1$},
            ytick={0.2,0.4,0.6,0.8},
            yticklabels={$0.2$,$0.4$,$0.6=\,a_F^\pm$,$0.8$},
        ]
            \path[fill=twslight]
                (axis cs:0.45,0) rectangle (axis cs:0.55,1);
            \draw[densely dashed,color=twsred!70]
                (axis cs:0.45,0) -- (axis cs:0.45,1);
            \draw[densely dashed,color=twsred!70]
                (axis cs:0.55,0) -- (axis cs:0.55,1);
            \draw[dashed,line width=1.1pt,color=black!60]
                (axis cs:0.5,0) -- (axis cs:0.5,1);
            \node[rotate=90,font=\scriptsize,text=black!60]
                at (axis cs:0.47,0.84) {median};
            \addplot[line width=1.3pt,color=twsblue] coordinates {
                (0.00,0.2) (0.25,0.2)
                (0.25,0.40) (0.45,0.40)
                (0.45,0.60) (0.75,0.60)
                (0.75,0.8) (1.00,0.8)
            };
            \addplot[only marks,mark=*,mark size=1.6pt,color=twsblue]
                coordinates {(0.45,0.40)};
            \addplot[only marks,mark=o,mark size=1.8pt,color=twsblue,fill=white]
                coordinates {(0.45,0.60)};
            \draw[{Latex[length=1.2mm]}-{Latex[length=1.2mm]},semithick,color=twsred]
                (axis cs:0.45,0.53) -- (axis cs:0.50,0.53);
            \node[font=\scriptsize,text=twsred,anchor=east,inner sep=1pt]
                at (axis cs:0.43,0.53) {$0.05$};
            \draw[<->,>=Latex,semithick,color=twsgold]
                (axis cs:0.50,0.67)
                -- node[above,font=\scriptsize,fill=white,inner sep=1pt] {$0.25$}
                (axis cs:0.75,0.67);
            \draw[{Latex[length=1.2mm]}-{Latex[length=1.2mm]},semithick,color=twsred]
                (axis cs:0.45,0.08) -- (axis cs:0.50,0.08);
            \node[font=\scriptsize,text=twsred,anchor=east,inner sep=1pt]
                at (axis cs:0.435,0.08) {$\mu$};

        \end{groupplot}
    \end{tikzpicture}
    \endgroup
    % \caption{The finite atomic distributions
    % with support $(A_1,A_2,A_3,A_4)=(0.2,0.4,0.6,0.8)$.
    % The top row shows probability masses and the bottom row shows quantiles.}
    \caption{Finite atomic distributions on $\{0.2,0.4,0.6,0.8\}$. The top panels show probability masses, and the bottom panels show quantile functions.}
    \label{fig:example-atomic}
\end{figure}

{For continuous prices, the key feature is how much probability lies close to the median edges. We use $\beta$ to describe a local power law: the probability mass within price distance $z$ below $a^-$ or above $a^+$ is asymptotically proportional to $z^{\beta+1}$. Equivalently, a probability mass $q$ next to an edge occupies a price interval of width asymptotically proportional to $q^{1/(\beta+1)}$. The case $\beta=0$ gives linear mass accumulation, while larger $\beta$ describes sparser prices near the edges.}

\begin{definition}[Regular continuous family]\label{def:continuous-family}
% For $\beta\ge0$, $\varepsilon_0\in(0,1/4]$, and $\Delta\in[0,1)$, let $\mathcal F^{\mathrm{ac}}_{\beta,\Delta,\varepsilon_0}$ contain the absolutely continuous laws on $[0,1]$ with density $0\le f\le M<\infty$, median gap $\Delta_F=\Delta$, and constants $c_-,c_+>0$ for which
{For $\beta\ge0$, $\varepsilon_0\in(0,1/4]$, and $\Delta\in[0,1)$, let $\mathcal F^{\mathrm{ac}}_{\beta,\Delta,\varepsilon_0}$ be the class of absolutely continuous distributions $F$ on $[0,1]$ with bounded density and median gap $\Delta_F=\Delta$, satisfying the following conditions. There exist constants $c_-,c_+>0$ and remainder functions $r_-,r_+$ such that, for every $0<q<\varepsilon_0$,}
% \begin{align*}
%  a^--F^{-1}(1/2-q)&=c_-q^{1/(\beta+1)}(1+r_-(q)), \qquad 0 < q < \varepsilon_0 \\
%  F^{-1}(1/2+q)-a^+&=c_+q^{1/(\beta+1)}(1+r_+(q)),
%  \qquad 0<q<\varepsilon_0.
% \end{align*}
{
\begin{align*}
 a_F^--F^{-1}(1/2-q)&=c_-q^{1/(\beta+1)}(1+r_-(q)), \\
 F^{-1}(1/2+q)-a_F^+&=c_+q^{1/(\beta+1)}(1+r_+(q)).
\end{align*}
}
The remainder functions satisfy $|r_\pm(q)|\le1/2$ and $r_\pm(q)\to0$ as $q\downarrow0$.
% When $\beta=0$, we additionally assume that $|r_\pm(q)|\le Lq^\eta$ for some finite $L\ge0$ and $\eta>0$. %All constants and error functions are fixed with $F$, independently of $T$.
{When $\beta=0$, they additionally satisfy $|r_\pm(q)|\le Lq^\eta$ for some $L\ge0$ and $\eta>0$. The density bound, constants, and remainder functions may depend on $F$.}
\end{definition}
The cases $\Delta=0$ and $\Delta>0$ give the two continuous families.
% The expansions describe two-sided mass accumulation of order $z^{\beta+1}$ at each occupied median edge. The leading constants may differ on the two sides, and behavior away from the median is unrestricted apart from the density bound. The vanishing relative remainders identify a regular leading term near each edge.
{The parameter $\varepsilon_0$ specifies a probability window around the median, and $c_-$ and $c_+$ allow different scales on the two sides. The remainder functions allow departures from an exact power law while requiring the relative error to vanish near each edge. These regularity conditions control the inventory drift in the \STARS{} analysis; the additional condition at $\beta=0$ controls the error terms in the logarithmic regret estimates. Behavior away from these median neighborhoods is unrestricted apart from the density bound.}

%The proof uses this information beyond the two-sided power envelope to determine the leading inventory drift. At $\beta=0$, the additional remainder condition controls the accumulated error in the logarithmic estimate. Atomic laws satisfy the separate local quantile conditions in Definition~\ref{def:atomic-family}.

\begin{example}[Continuous prices without a median gap]\label{ex:no-gap}
With $\Delta=0$, let $F_{\beta,0}$ have density $f_{\beta,0}(p)=(\beta+1)2^\beta|p-1/2|^\beta$ for $0\le p\le1$. These give uniform, V-shaped, and quadratic examples at $\beta=0,1,2$ respectively. Their quantiles satisfy $F_{\beta,0}^{-1}(1/2\pm q)=1/2\pm(2q)^{1/(\beta+1)}/2$ for $0<q<1/2$.
Thus they satisfy Definition~\ref{def:continuous-family} with zero remainder. Figure~\ref{fig:example-no-gap} displays the densities and their quantile functions. The shaded band isolates the local median window.
\end{example}
% Example figure: continuous distributions without a median gap,
% f_beta(p) = (beta+1) 2^beta |p-1/2|^beta, drawn for beta = 0, 1, 2.
% Left: densities in price space.  Right: quantile functions.
\begin{figure}[!htb]
    \centering
    \begingroup
    \definecolor{twsred}{HTML}{C62828}
    \definecolor{twsgold}{HTML}{D97706}
    \definecolor{twsblue}{HTML}{315A7D}
    \definecolor{twsbluemid}{HTML}{4E85B5}
    \definecolor{twsbluelight}{HTML}{7FA8CC}
    \definecolor{twslight}{HTML}{F7E8E8}
    \begin{tikzpicture}
        \begin{groupplot}[
            group style={
                group size=2 by 1,
                horizontal sep=1.55cm,
            },
            width=0.42\textwidth,
            height=0.30\textwidth,
            axis lines=left,
            axis line style={black,semithick},
            tick style={black},
            tick label style={font=\scriptsize},
            label style={font=\small},
            xmin=0,
            xmax=1,
            enlargelimits=false,
            clip=false,
        ]

        % ------------------------------------------------------------------
        % Left: densities f_beta for beta = 0, 1, 2.
        % ------------------------------------------------------------------
        \nextgroupplot[
            xlabel={$p$},
            ylabel={$f_{\beta,0}(p)$},
            ymin=0,
            ymax=3.30,
            xtick={0,0.5,1},
            xticklabels={$0$,$a_F^-{=}\,a_F^+$,$1$},
            ytick={1,2,3},
            legend style={
                draw=none,
                fill=none,
                font=\scriptsize,
                at={(0.5,1.03)},
                anchor=south,
                legend columns=-1,
                /tikz/every even column/.append style={column sep=7pt},
            },
        ]
            \addplot[line width=1.3pt,densely dotted,color=twsbluelight,domain=0:1,samples=2]
                {1};
            \addlegendentry{$\beta=0$}
            \addplot[line width=1.3pt,densely dashed,color=twsbluemid,domain=0:1,samples=201]
                {4*abs(x-0.5)};
            \addlegendentry{$\beta=1$}
            \addplot[line width=1.3pt,color=twsblue,domain=0:1,samples=201]
                {12*(x-0.5)^2};
            \addlegendentry{$\beta=2$}

        % ------------------------------------------------------------------
        % Right: quantile functions, drawn parametrically (exact inverses).
        % ------------------------------------------------------------------
        \nextgroupplot[
            xlabel={$q$},
            ylabel={$F_{\beta,0}^{-1}(q)$},
            ymin=0,
            ymax=1,
            xtick={0,0.5,1},
            xticklabels={$0$,$\tfrac12$,$1$},
            ytick={0,0.5,1},
            yticklabels={$0$,$a_F^\pm$,$1$},
        ]
            \path[fill=twslight]
                (axis cs:0.35,0) rectangle (axis cs:0.65,1);
            \draw[densely dashed,color=twsred!70]
                (axis cs:0.35,0) -- (axis cs:0.35,1);
            \draw[densely dashed,color=twsred!70]
                (axis cs:0.65,0) -- (axis cs:0.65,1);
            \draw[dashed,line width=1.1pt,color=black!60]
                (axis cs:0.5,0) -- (axis cs:0.5,1);
            \node[rotate=90,font=\scriptsize,text=black!60]
                at (axis cs:0.47,0.84) {median};
            \addplot[line width=1.3pt,densely dotted,color=twsbluelight,domain=0:1,samples=2]
                ({x},{x});
            \addplot[line width=1.3pt,densely dashed,color=twsbluemid,domain=0:0.5,samples=101]
                ({0.5-2*(0.5-x)^2},{x});
            \addplot[line width=1.3pt,densely dashed,color=twsbluemid,domain=0.5:1,samples=101]
                ({0.5+2*(x-0.5)^2},{x});
            \addplot[line width=1.3pt,color=twsblue,domain=0:0.5,samples=101]
                ({0.5-4*(0.5-x)^3},{x});
            \addplot[line width=1.3pt,color=twsblue,domain=0.5:1,samples=101]
                ({0.5+4*(x-0.5)^3},{x});
            \draw[<->,>=Latex,semithick,color=twsred]
                (axis cs:0.35,0.08)
                -- node[above,font=\scriptsize,fill=white,inner sep=1pt] {$\varepsilon_0$}
                (axis cs:0.50,0.08);
            \node[
                font=\scriptsize,
                text=twsblue,
                fill=white,
                fill opacity=0.88,
                text opacity=1,
                inner sep=1pt,
                align=center,
            ] at (axis cs:0.80,0.22)
                {$F_{\beta,0}^{-1}(\tfrac12{\pm}q)$\\
                 $=\tfrac12\pm\tfrac12\,(2q)^{\frac{1}{\beta+1}}$};

        \end{groupplot}
    \end{tikzpicture}
    \endgroup
    \caption{The continuous no-gap family of Example~\ref{ex:no-gap}, drawn
    for $\beta\in\{0,1,2\}$.
    The shaded band is the quantile window of radius $\varepsilon_0=0.15$
    used in Definition~\ref{def:continuous-family}.}
    \label{fig:example-no-gap}
\end{figure}

\begin{example}[Continuous prices with a positive median gap]\label{ex:positive-gap}
For $\Delta\in(0,1)$, put $a^\pm=(1\pm\Delta)/2$ and let $F_{\beta,\Delta}$ have density
\begin{align}
 f_{\beta,\Delta}(p)=\frac{(\beta+1)2^\beta}{(1-\Delta)^{\beta+1}}
 \begin{cases}(a^--p)^\beta,&0\le p<a^-,\\
     0,&a^-<p<a^+, \\
 (p-a^+)^\beta,&a^+<p\le1.\\
 \end{cases}
 \label{eq:canonical-gap-density}
\end{align}
These examples belong to Definition~\ref{def:continuous-family}, with $r_\pm=0$, $c_-=c_+=(1-\Delta)2^{-\beta/(\beta+1)}$, and $M=(\beta+1)/(1-\Delta)$. For $\beta>0$ the density vanishes at the gap edges. Equation~\eqref{eq:canonical-gap-density} also supplies the explicit instances for our lower bound discussed in Theorem \ref{lb:thm-positive-gap}. The quantiles satisfy $F_{\beta,\Delta}^{-1}(1/2\pm q)=a^\pm\pm(1-\Delta)(2q)^{1/(\beta+1)}/2$ for $0<q<1/2$.
Figure~\ref{fig:example-positive-gap} shows the resulting jump at the median for $\Delta=0.2$.
\end{example}
% Example figure: continuous distributions with a positive median gap,
% drawn with Delta = 0.2 (a_F^- = 0.4, a_F^+ = 0.6) for beta = 0, 1, 2.
% Left: densities in price space.  Right: quantile functions.
\begin{figure}[!htb]
    \centering
    \begingroup
    \definecolor{twsred}{HTML}{C62828}
    \definecolor{twsgold}{HTML}{D97706}
    \definecolor{twsblue}{HTML}{315A7D}
    \definecolor{twsbluemid}{HTML}{4E85B5}
    \definecolor{twsbluelight}{HTML}{7FA8CC}
    \definecolor{twslight}{HTML}{F7E8E8}
    \begin{tikzpicture}
        \begin{groupplot}[
            group style={
                group size=2 by 1,
                horizontal sep=1.55cm,
            },
            width=0.42\textwidth,
            height=0.30\textwidth,
            axis lines=left,
            axis line style={black,semithick},
            tick style={black},
            tick label style={font=\scriptsize},
            label style={font=\small},
            xmin=0,
            xmax=1,
            enlargelimits=false,
            clip=false,
        ]

        % ------------------------------------------------------------------
        % Left: densities f_{beta,Delta} for beta = 0, 1, 2.
        % ------------------------------------------------------------------
        \nextgroupplot[
            xlabel={$p$},
            ylabel={$f_{\beta,\Delta}(p)$},
            ymin=0,
            ymax=3.95,
            xtick={0,0.4,0.6,1},
            xticklabels={$0$,$a_F^-$,$a_F^+$,$1$},
            ytick=\empty,
            legend style={
                draw=none,
                fill=none,
                font=\scriptsize,
                at={(0.5,1.03)},
                anchor=south,
                legend columns=-1,
                /tikz/every even column/.append style={column sep=7pt},
            },
        ]
            \draw[densely dashed,semithick,color=twsgold]
                (axis cs:0.4,0) -- (axis cs:0.4,3.95);
            \draw[densely dashed,semithick,color=twsgold]
                (axis cs:0.6,0) -- (axis cs:0.6,3.95);
            \addplot[line width=1.3pt,densely dotted,color=twsbluelight,domain=0:0.4,samples=2]
                {1.25};
            \addlegendentry{$\beta=0$}
            \addplot[line width=1.3pt,densely dashed,color=twsbluemid,domain=0:0.4,samples=101]
                {6.25*(0.4-x)};
            \addlegendentry{$\beta=1$}
            \addplot[line width=1.3pt,color=twsblue,domain=0:0.4,samples=101]
                {23.4375*(0.4-x)^2};
            \addlegendentry{$\beta=2$}
            \addplot[line width=1.3pt,densely dotted,color=twsbluelight,domain=0.6:1,samples=2,forget plot]
                {1.25};
            \addplot[line width=1.3pt,densely dashed,color=twsbluemid,domain=0.6:1,samples=101,forget plot]
                {6.25*(x-0.6)};
            \addplot[line width=1.3pt,color=twsblue,domain=0.6:1,samples=101,forget plot]
                {23.4375*(x-0.6)^2};
            \draw[<->,>=Latex,semithick,color=twsgold]
                (axis cs:0.4,0.35)
                -- node[above,font=\scriptsize,fill=white,inner sep=1pt]
                    {$\Delta_F$}
                (axis cs:0.6,0.35);

        % ------------------------------------------------------------------
        % Right: quantile functions, drawn parametrically (exact inverses).
        % ------------------------------------------------------------------
        \nextgroupplot[
            xlabel={$q$},
            ylabel={$F_{\beta,\Delta}^{-1}(q)$},
            ymin=0,
            ymax=1,
            xtick={0,0.5,1},
            xticklabels={$0$,$\tfrac12$,$1$},
            ytick={0.4,0.6},
            yticklabels={$a_F^-$,$a_F^+$},
        ]
            \path[fill=twslight]
                (axis cs:0.35,0) rectangle (axis cs:0.65,1);
            \draw[densely dashed,color=twsred!70]
                (axis cs:0.35,0) -- (axis cs:0.35,1);
            \draw[densely dashed,color=twsred!70]
                (axis cs:0.65,0) -- (axis cs:0.65,1);
            \draw[dashed,line width=1.1pt,color=black!60]
                (axis cs:0.5,0) -- (axis cs:0.5,1);
            \node[rotate=90,font=\scriptsize,text=black!60]
                at (axis cs:0.47,0.84) {median};
            \addplot[line width=1.3pt,densely dotted,color=twsbluelight,domain=0:0.4,samples=2]
                ({0.5-1.25*(0.4-x)},{x});
            \addplot[line width=1.3pt,densely dotted,color=twsbluelight,domain=0.6:1,samples=2]
                ({0.5+1.25*(x-0.6)},{x});
            \addplot[line width=1.3pt,densely dashed,color=twsbluemid,domain=0:0.4,samples=101]
                ({0.5-3.125*(0.4-x)^2},{x});
            \addplot[line width=1.3pt,densely dashed,color=twsbluemid,domain=0.6:1,samples=101]
                ({0.5+3.125*(x-0.6)^2},{x});
            \addplot[line width=1.3pt,color=twsblue,domain=0:0.4,samples=101]
                ({0.5-7.8125*(0.4-x)^3},{x});
            \addplot[line width=1.3pt,color=twsblue,domain=0.6:1,samples=101]
                ({0.5+7.8125*(x-0.6)^3},{x});
            \addplot[only marks,mark=*,mark size=1.6pt,color=twsblue]
                coordinates {(0.5,0.40)};
            \addplot[only marks,mark=o,mark size=1.8pt,color=twsblue,fill=white]
                coordinates {(0.5,0.60)};
            \draw[<->,>=Latex,semithick,color=twsgold]
                (axis cs:0.58,0.40)
                -- node[right,font=\scriptsize,fill=white,inner sep=1pt] {$\Delta_F$}
                (axis cs:0.58,0.60);
            \draw[<->,>=Latex,semithick,color=twsred]
                (axis cs:0.35,0.08)
                -- node[above,font=\scriptsize,fill=white,inner sep=1pt] {$\varepsilon_0$}
                (axis cs:0.50,0.08);
            \node[
                font=\scriptsize,
                text=twsblue,
                fill=white,
                fill opacity=0.88,
                text opacity=1,
                inner sep=1pt,
                align=center,
            ] at (axis cs:0.79,0.21)
                {$F_{\beta,\Delta}^{-1}(\tfrac12{\pm}q)$\\
                 $=a_F^\pm\pm\tfrac{1-\Delta}{2}\,(2q)^{\frac{1}{\beta+1}}$};

        \end{groupplot}
    \end{tikzpicture}
    \endgroup
    \caption{The positive-gap family of Example~\ref{ex:positive-gap},
    drawn with $\Delta_F=0.2$ for $\beta\in\{0,1,2\}$.}
    \label{fig:example-positive-gap}
\end{figure}

\section{Novel Spectrum of Fundamental Limits}
\label{lb:sec-fundamental-limits}

In this section, we present a novel spectrum of fundamental limits that
arises when continuous price distributions have a positive median gap and the initial inventory is zero. To isolate the effect of the initial inventory, we first
consider a setting with linear slack from both storage boundaries. A
feasible implementation of a multisecretary policy then recovers the
regret guarantees of \cite{besbes2025}. We subsequently change the initial and terminal inventories to
% BEGIN retained section overview (before full-class strengthening).
% zero. For the canonical positive-gap distributions, this change forces
% every online policy to incur a strictly larger polynomial regret exponent.
% END retained section overview.
zero. For every positive-gap distribution in the regular continuous family
of Definition~\ref{def:continuous-family}, this change forces every online
policy to incur a strictly larger polynomial regret exponent.

\subsection{Starting in the interior: Recovering previously known regret scalings}
\label{subsec:interior-transfer}

Consider first the unconstrained \emph{static median-sign rule}: buy one unit
when the price is below a fixed median, sell one unit when it is above,
and hold at equality. When there is a positive median gap, choose the
median strictly inside it. For continuous prices, its cumulative trades form
a simple symmetric random walk (SSRW). The inventory fluctuations are of order
$\sqrt T$, so linear slack from both boundaries protects this SSRW from hitting any of the boundaries
with high probability. To obtain a feasible trading policy, we must also
balance purchases and sales by the end. Interpreting a selected price as a sale
and a rejected price as a purchase, this leads to a
multisecretary type policy that adjusts its threshold to the remaining quota
and the gaps in the price distribution. 
% \ak{the median-sign rule has not been described yet, perhaps we should add a small sentence saying what really is the median-sign rule?}

We can hence use the \textsf{CwG} principle of \cite{besbes2025} with a selection
quota of $k = n/2$, where $n=2\lfloor T/2\rfloor$, and hold in the final
time step when $T$ is odd. %For finite atomic prices, use their \textsf{CwG} policy with all atom-boundary quantiles. For continuous prices, focus the rule on the median: use the \textsf{CE} quantile when $\Delta=0$; when $\Delta>0$, designate only the median gap for \textsf{CwG}'s threshold adjustment. 
The details are deferred to Appendix~\ref{app:interior-transfer}.
%The policy exhausts its quota by accepting every remaining price once the remaining quota equals the remaining horizon, and rejects once the quota is exhausted. If a proposed trade would violate a storage boundary, instead return toward $y_0$ and hold after reaching it. The remaining quota ensures that this return is feasible.
Denote the resulting feasible policy by $\pi^{\textsf{CwG}}$.

\begin{proposition}\label{prop:interior-transfer}
Fix $\eta>0$, let $T\ge2$, and assume $y_0\wedge(B-y_0)\ge\eta T$.
\begin{enumerate}[label=$(\alph*)$]
    \item Fix $\mu\in(0,1/2]$ and $\Delta\in[0,1]$. For every distribution $F\in\mathcal F^{\rm at}_{\mu,\Delta}$, there exists a constant $C=C(F,\eta)<\infty$ such that $\Reg(\pi^{\textup{\textsf{CwG}}};B,T,y_0)\le C$.
    \item Fix $\beta\ge0$, $\varepsilon_0\in(0,1/4]$, and let $\Delta=0$. For every distribution $F\in\mathcal F^{\rm ac}_{\beta,\Delta,\varepsilon_0}$, there exists a constant $C=C(F,\eta)<\infty$ such that
    \begin{align*}
        \Reg(\pi^{\textup{\textsf{CwG}}};B,T,y_0)
        \le C\log T\,\mathbbm{1}\{\beta=0\}
        +CT^{\beta/(2\beta+2)}\,\mathbbm{1}\{\beta>0\}.
    \end{align*}
    \item Fix $\beta\ge0$, $\varepsilon_0\in(0,1/4]$, and $\Delta\in(0,1)$. For every distribution $F\in\mathcal F^{\rm ac}_{\beta,\Delta,\varepsilon_0}$, there exists a constant $C=C(F,\eta)<\infty$ such that
    \begin{align*}
        \Reg(\pi^{\textup{\textsf{CwG}}};B,T,y_0)
        \le C\log^2 T\,\mathbbm{1}\{\beta=0\}
        +CT^{\beta/(2\beta+2)}(\log T)^{(\beta+2)/(2\beta+2)}
          \,\mathbbm{1}\{\beta>0\}.
    \end{align*}
\end{enumerate}
%The constants are independent of $B$, $T$, and $y_0$ subject to the stated slack condition.
\end{proposition}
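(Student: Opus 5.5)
The plan is to reduce the trading problem with interior slack to a multisecretary problem with $n=2\lfloor T/2\rfloor$ candidates and quota $k=n/2$. On the event that the unmodified \textsf{CwG} trajectory never touches a storage boundary, we then transfer the \textsf{CwG} guarantees of \cite{besbes2025}. The first step is a benchmark comparison. Let $V^{\mathsf{ms}}$ be the relaxed hindsight value that drops the pathwise constraints $0\le y_t\le B$ and keeps only balance. Since it relaxes a maximum, $V^{\mathsf{ms}}\ge V_1^{\mathsf{hs}}$. The relaxed value buys the $n/2$ lowest prices and sells the $n/2$ highest, so it equals $\sum_{\text{top } k}p_t-\sum_{\text{bottom } k}p_t$. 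Writing $\sum_{\text{top}}-\sum_{\text{bottom}}=2\sum_{\text{top}}-\sum_t p_t$ (up to the odd held period), this is an affine function of the multisecretary hindsight value plus a policy-independent term $-\sum_t p_t$. The same identity holds for any online policy that sells exactly $k$ times and buys exactly $k$ times. Hence the trading regret of the unconstrained \textsf{CwG} sequence against $V^{\mathsf{ms}}$ is exactly twice its multisecretary regret. Up to the factor $2$, the bounds in (a)--(c) are then the corresponding \textsf{CwG} bounds of \citet{besbes2025}: constant for finite atomic laws, $\log T$ or $T^{\beta/(2\beta+2)}$ without a gap, and the extra logarithmic factors with a median gap. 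This requires checking that our Definitions~\ref{def:atomic-family} and \ref{def:continuous-family} imply their $(\beta,\varepsilon,\delta)$-clustered hypotheses at quantile $1/2$, which they are designed to do.

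The second step controls the boundary modification. Let $E$ be the event that the inventory path of the unconstrained \textsf{CwG} sequence, $y_0+\sum_{s\le t}x_s$, stays in $[0,B]$ for all $t$. On $E$, $\pi^{\textsf{CwG}}$ coincides with the unconstrained rule. Off $E$, the profit difference is at most $T$, since each period contributes at most $1$ in absolute value. So the extra regret is at most $T\,\Pp(E^c)$, and it suffices to show that $\Pp(E^c)\le C e^{-cT}$, or at least $O(1/T)$ for a constant loss. The inventory increments are $\pm1$ or $0$. Until the quota-exhaustion phase, the threshold is a median-type quantile adjusted by the remaining quota ratio. The drift of the increment is therefore proportional to the deviation of the remaining quota ratio from $1/2$, which is a self-correcting quantity. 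I would write $y_t-y_0=(\#\text{buys}-\#\text{sells})$ and note that it equals $t-2S_t$, where $S_t$ is the number of selections. I would then show that $\max_t|S_t-t/2|\le \eta T/2$ with probability $1-e^{-cT}$. The two ingredients are Azuma--Hoeffding applied to the martingale part of $S_t$, and the fact that the compensator's drift pushes $S_t$ toward the quota line: the threshold moves in the direction that restores balance, so the drift never amplifies deviations. The terminal phase, where all remaining prices are forced to be accepted or rejected, moves inventory monotonically toward $y_0$ and cannot exit $[0,B]$ once the pre-phase deviation is below $\eta T$.

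The main obstacle is this uniform linear-deviation bound for the \textsf{CwG} selection process, because the \textsf{CwG} threshold rule is state dependent and contains gap-adjustment branches. My first attempt would be a crude argument that needs no fine structure. At any time, the conditional selection probability $\Pp(\text{select}\mid\mathcal F_{t-1})$ lies between $1/2$ and the current quota ratio $(k-S_{t-1})/(n-t+1)$, or is within $o(1)$ of it. Then $D_t=S_t-\sum_{s\le t}\Pp(\text{select at }s\mid\mathcal F_{s-1})$ is a martingale with bounded increments, and the deterministic part follows a contracting ODE for the quota ratio. Combining these gives $|S_t-t/2|\le \eta T/2$ with high probability for all $t$ up to $T-\eta T/2$. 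Over the last $\eta T/2$ steps, the inventory moves by at most $\eta T/2$ regardless of decisions, so it cannot reach either boundary. If the \textsf{CwG} threshold could deviate from the quota ratio by more than $o(1)$, I would fall back to a stopping-time argument at the first time $|S_t-t/2|$ exceeds $\eta T/4$, and use the fact that \textsf{CwG}'s regret bound itself forces $\E|S_t-t/2|=O(\sqrt T)$. Together these yield the bounds in (a)--(c) with constants depending on $F$ and $\eta$.
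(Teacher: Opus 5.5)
Your skeleton is the paper's. It has four pieces:
\begin{enumerate}
\item the identity $R^{\rm virtual}=2\sum_tp_ta_t-\sum_tp_t$, compared against the relaxed benchmark $V^{\rm stat}\ge V_1^{\mathsf{hs}}$;
\item a restoring drift $\E[x_t\mid\mathcal F_{t-1}]=-D_{t-1}/\tau_t+e_t$, where outside a final phase of bounded length the snapping perturbations satisfy $|e_t|\le2\sqrt{2\log\tau_t/\tau_t}$, so $\sum_t|e_t|=o(n)$;
\item a maximal martingale bound giving $\Pp(\max_j|D_j|\ge\eta n)\le2e^{-\eta^2n/128}$;
\item an $O(ne^{-cn})$ charge for the fallback on the rare bad paths.
\end{enumerate}
Your ``contracting ODE'' step is most cleanly made rigorous as the paper does it. On an excursion of $D$ away from $0$, drop the nonpositive restoring drift to get $|D_j|\le\sum_t|e_t|+2\max_i|M_i|$. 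Discard your backup plan. A regret bound does not control $\E|S_t-t/2|$. Even $\E|S_t-t/2|=O(\sqrt T)$ with Markov only yields $\Pp(E^c)=O(T^{-1/2})$, which is an $O(\sqrt T)$ extra charge.

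The genuine gap is in parts (b) and (c). You dispose of them by saying they are the corresponding bounds of \citet{besbes2025} once a hypothesis check is done. That check is where the content lies, and it is not routine, for two reasons. First, the classes $\mathcal F^{\rm ac}_{\beta,\Delta,\varepsilon_0}$ constrain $F^{-1}$ only on the window $|q-1/2|<\varepsilon_0$. Elsewhere anything with bounded density is allowed, including further gaps or slower mass accumulation, so a hypothesis on the whole distribution need not hold, or may hold only with a worse exponent. Second, the statement asserts precise logarithmic powers: $\log T$ versus $\log^2T$ at $\beta=0$, none for $\beta>0$ without a gap, and $(\log T)^{(\beta+2)/(2\beta+2)}$ with a gap. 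These must come out of an argument.

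The missing idea is to split each price at the median. With uniform quantiles $U_t$ and $p_t=F^{-1}(U_t)$, write $F^{-1}(u)=a^-+\Delta\one\{u>1/2\}+h(u)$. Here $\alpha=1/(\beta+1)$ and $|h(u)|\le C_F|u-1/2|^{\alpha}$ for all $u$; outside the window use $|h|\le1$. Let $z_t=\one\{U_t>1/2\}$, let $a_t=\one\{U_t\ge q_t\}$ be the selections, and let $S_k(\cdot)$ be the sum of the $k$ largest entries. Then, pathwise,
\[
 S_k(\mathbf p)-\sum_ta_tp_t\le\Delta\Bigl(S_k(\mathbf z)-\sum_ta_tz_t\Bigr)+\sum_t|h(U_t)|\,\one\{a_t\ne z_t\}.
\]
\begin{itemize}
\item The first term is a two-point multisecretary loss, to which the finite-type \textsf{CwG} guarantee applies by exact coupling. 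It is absent when $\Delta=0$, where plain \textsf{CE} is used.
\item The second term has conditional mean at most $C|q_t-1/2|^{1+\alpha}$, and $|q_t-1/2|\le|D_{t-1}|/(2\tau_t)$ in the main phase.
\item The recursions $\E D_{t-1}^2\le\tau_t$ for \textsf{CE} and $\E D_{t-1}^2\le36\tau_t\log n$ under median snapping then give exactly the stated rates after summing over $t$.
\end{itemize}

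There are two smaller omissions. A multisecretary rule need not exhaust its quota, so you need completion rules. These also give $|D_{t-1}|\le\tau_t$, which is what makes the return-to-$y_0$ fallback feasible. You also need the observation that completion only adds selections and so cannot lower the selection reward. Finally, odd $T$ needs $V_T^{\rm stat}\le V_{T-1}^{\rm stat}+1$.
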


The key observation is that, before enforcing the physical boundaries, the
inventory displacement has a restoring drift with at most a small
quantile perturbation. Its maximum is less than $\eta T$ except with
probability exponentially small in $T$. On this event the storage
policy implements the multisecretary decisions exactly. Moreover, its
trading profit is twice the selection reward minus the sum of all
prices and the same identity holds for the static hindsight benchmark.
Consequently, the storage regret is at
most twice the $n$-step multisecretary regret of the specified selection rule, plus
$O_{F,\eta}(Te^{-c_\eta T})+1$. Appendix~\ref{app:interior-transfer}
proves the concentration, the feasible fallback, and this comparison.

Proposition~\ref{prop:interior-transfer} establishes the achievable
rates in Figure~\ref{fig:fundamental-drivers} for this transferred
policy. In particular, a positive median gap preserves the polynomial
exponent $\beta/(2\beta+2)$ of the multisecretary guarantee. 
The proof separates the cost of choosing the wrong side of the median
from the cost of the price's distance to its median edge. The first is
controlled by the finite-type \textsf{CwG} guarantee while the second is
controlled by the local quantile bound and the inventory fluctuations.

\subsection{Starting from zero: a new lower-bound spectrum}
\label{subsec:boundary-limits}

Starting from zero, an online policy must build inventory before it can
exploit selling opportunities. Following the static median-sign rule would
prescribe sales even when the inventory level is zero.
% Therefore, preserving inventory requires additional corrections to this rule.
{A feasible policy must therefore modify this rule and may also
forgo sales to preserve inventory for later.}
% A positive median gap makes unnecessary reserve costly, while thin mass near the gap edges makes cheap opportunities to build that reserve scarce.
A positive median gap makes unnecessary reserve costly.
{Forgoing sales near the upper gap edge is less costly than
forgoing sales at higher prices, but such opportunities are scarce when
little probability mass lies near that edge.} We
show that this tradeoff imposes a polynomial regret lower bound on every
online policy.

% BEGIN retained canonical-only setup (before full-class strengthening).
% We establish the lower bound using an explicit hard distribution. Fix $\beta\geq0$ and $\Delta\in(0,1)$, and put
% $\kappa=\beta+1$, $a^\pm=(1\pm\Delta)/2$, and
% $D_{\beta,\Delta}=2^{\kappa-1}/(1-\Delta)^\kappa$.
% Let $F_{\beta,\Delta}$ be the CDF associated with the canonical density
% $f_{\beta,\Delta}$ in \eqref{eq:canonical-gap-density}.  Thus this is an
% instance of Definition~\ref{def:continuous-family} with zero remainder.
% Each occupied component has probability $1/2$, and the density is globally
% bounded by $\kappa/(1-\Delta)$. We have that
% $F_{\beta,\Delta}^{-1}(1/2\pm q)=a^\pm\pm(q/D_{\beta,\Delta})^{1/\kappa}$
% for $0<q<1/2$.
% Equivalently, the probability mass within price distance $z$ of either
% inner gap edge is exactly $D_{\beta,\Delta}z^\kappa$ for
% $0\leq z\leq(1-\Delta)/2$.
% END retained canonical-only setup.
We establish the lower bound for every fixed distribution in the regular
continuous family of Definition~\ref{def:continuous-family} with a positive
median gap. The local quantile bounds control both the scarcity of cheap
online corrections and the cost of the corrections available in hindsight.
% The two gap edges may have different leading constants, and the argument
% allows the remainder terms in that definition.

% BEGIN retained canonical Theorem 3.2 (before full-class strengthening).
% \begin{theorem}
% \label{lb:thm-positive-gap}
% Fix $\beta\ge0$, $\Delta\in(0,1)$, and $\rho\in(0,1)$.
% Assume prices are i.i.d.\ with density $f_{\beta,\Delta}$ in
% \eqref{eq:canonical-gap-density}, let $B=\lceil\rho T\rceil$, and
% set $y_0=y_T=0$. There exist constants
% $c=c(\beta,\Delta,\rho)>0$ and $T_0=T_0(\beta,\Delta,\rho)<\infty$
% such that every possibly randomized, nonanticipative feasible policy
% $\pi$ satisfies
% $\Reg(\pi;B,T,0)\ge cT^{(\beta+1)/(2\beta+4)}$
% for every $T\ge T_0$.
% \end{theorem}
% END retained canonical Theorem 3.2.
\begingroup
\begin{theorem}
\label{lb:thm-positive-gap}
Fix $\beta\ge0$, $\varepsilon_0\in(0,1/4]$, $\Delta\in(0,1)$, and
$\rho\in(0,1)$. Suppose prices are i.i.d.\ with a fixed distribution
$F\in\mathcal F^{\mathrm{ac}}_{\beta,\Delta,\varepsilon_0}$, as defined in
Definition~\ref{def:continuous-family}. Let $B=\lceil\rho T\rceil$ and
set $y_0=y_T=0$. There exist constants $c=c(F,\rho)>0$ and
$T_0=T_0(F,\rho)<\infty$ such that, for every $T\ge T_0$, every
possibly randomized, nonanticipative feasible policy $\pi$ satisfies
$\Reg(\pi;B,T,0)\ge cT^{(\beta+1)/(2\beta+4)}$.
\end{theorem}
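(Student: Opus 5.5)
The plan is to lower-bound regret through a tradeoff faced by every online policy near the empty boundary. The policy either pays for building a protective inventory reserve, or it accepts frequent visits to inventory zero. Each such visit forfeits a selling opportunity that hindsight exploits. I will set $\kappa=\beta+1$ and write $\delta\mapsto q(\delta)\asymp\delta^{\kappa}$ for the probability mass within price distance $\delta$ above $a^+$ (or below $a^-$). The two-sided bound $\asymp$ follows from Definition~\ref{def:continuous-family}, since $|r_\pm|\le 1/2$, and all constants may depend on $F,\rho$.

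\emph{Step 1 (hindsight benchmark).} I first lower-bound $\E V_1^{\mathsf{hs}}$. I compare it with the relaxed static value $\sum_t (p_t-m)\,\mathrm{sign}(p_t-m)$ for any $m\in(a^-,a^+)$, minus the cost of making the median-sign walk path-feasible from $0$. The walk's running minimum is of order $\sqrt T$. Hindsight repairs it by flipping that many trades whose prices are closest to the gap edges, at total cost $O(\sqrt T\cdot (\sqrt T/T)^{1/\kappa})=O(T^{\beta/(2\beta+2)})$. This is $o(T^{\kappa/(2\kappa+2)})$, because $\beta/(2\beta+2)<(\beta+1)/(2\beta+4)$. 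It therefore suffices to show that every online policy loses $\Omega(T^{\kappa/(2\kappa+2)})$ relative to the static benchmark $\E\sum_t|p_t-m|$ itself.

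\emph{Step 2 (online loss decomposition).} For any feasible $\pi$, I write the static benchmark minus the online profit as a sum of per-step losses. At each step the loss is $|p_t-m|$ if $x_t^\pi$ differs in sign from $\mathrm{sign}(m-p_t)$ (a hold counts as half a deviation), plus an endpoint correction from $y_T=0$. Since prices outside $[a^-,a^+]$ are at distance at least $\Delta/2$ from $m$, two kinds of loss appear. (i) Each step at which $y_{t-1}^\pi=0$ and $p_t>a^+$ forces a non-sale and costs at least $c\Delta$. (ii) Each voluntary deviation at a price $p$ with $|p-a^\pm|=d$ costs at least of order $\Delta$ as well. Only deviations near the edges are cheap in the sense that matters, namely relative to their effect on the inventory drift. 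So I will refine (ii): a deviation at distance $d$ above $a^+$ shifts the drift by the same amount as any other deviation. The net regret contribution relative to the optimal reserve-building alternative is governed by $d$, giving a cost of order $\int_0^\delta d\,dq(d)\asymp \delta^{\kappa+1}$ per step when the policy deviates on the window $[a^+,a^++\delta]$.

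\emph{Step 3 (drift--boundary tradeoff).} Let $\varepsilon_t\in[0,1]$ denote the conditional expected upward drift of $y^\pi$ at time $t$, given $\mathcal F_{t-1}$, beyond the zero-mean median-sign increment. Producing drift $\varepsilon$ as cheaply as possible means choosing $\delta$ with $q(\delta)=\varepsilon$, at conditional cost at least $c\,\varepsilon\,\delta\cdot\delta$ if one also accounts for the hindsight correction. I will aim to show a per-step cost of at least $c\,\varepsilon_t^{(\kappa+2)/\kappa}$. Next, using a martingale/Tanaka-type argument for the nonnegative process $y^\pi$, I will lower-bound the expected number of boundary steps $N_0=\#\{t:y_{t-1}^\pi=0\}$. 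The intended inequality, valid on windows of length $L$ over which $\sum\varepsilon_t$ is controlled, has the form $\E N_0\gtrsim \min\{\sqrt L,\,1/\bar\varepsilon\}\cdot(\text{number of windows})$. A reflected walk with drift $\bar\varepsilon$ revisits $0$ about $1/\bar\varepsilon$ times. Combining the pieces, the regret is at least $c\bigl(\E N_0\,\Delta+T\bar\varepsilon^{(\kappa+2)/\kappa}\bigr)\gtrsim \bar\varepsilon^{-1}+T\bar\varepsilon^{(\kappa+2)/\kappa}$. I will then convert time-varying $\varepsilon_t$ into an average $\bar\varepsilon$ by Jensen's inequality, which applies because the map $\varepsilon\mapsto\varepsilon^{(\kappa+2)/\kappa}$ is convex. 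Minimizing over $\bar\varepsilon$ gives $\bar\varepsilon\asymp T^{-\kappa/(2\kappa+2)}$ and regret $\Omega(T^{\kappa/(2\kappa+2)})=\Omega(T^{(\beta+1)/(2\beta+4)})$.

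\emph{Main obstacle.} I expect the hardest part to be making Step 3 rigorous for arbitrary adaptive policies. Two difficulties arise. First, the drift may depend on inventory, for example a Bessel-type repulsion concentrated at small $y$, so I need an occupation-time bound that does not assume constant drift. Second, I must justify the effective per-unit-drift cost $\varepsilon^{(\kappa+2)/\kappa}$ rather than the naive $\varepsilon^{(\kappa+1)/\kappa}$, by using the fact that hindsight itself does a matching amount of cheap near-edge flipping. My first attempt is the following potential-function argument. Take $\Phi(y)=\min\{y,h\}$ with $h=\bar\varepsilon^{-1}$ and apply optional stopping to $\Phi(y_t^\pi)$. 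This bounds the time spent at $0$ from below by $(\text{local time at }h)/h$ minus the accumulated drift inside $[0,h]$. The drift inside the strip is then charged via the convex cost above.
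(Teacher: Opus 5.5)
\textbf{Step 1 is false.} For any $m\in[a^-,a^+]$ and any feasible path from zero, compare with $\sum_t|p_t-m|$. An upper-side price that is held rather than sold loses at least $a^+-m$. A lower-side price that is not bought loses at least $m-a^-$. Both losses hold however close the price is to its edge. Every feasible path makes at least $M_T$ upper-side changes, and by terminal balance at least $S_T+M_T$ lower-side changes. Hence hindsight's expected deficit relative to your static benchmark is at least $\Delta\,\E M_T\asymp\Delta\sqrt T$. Your $O(T^{\beta/(2\beta+2)})$ counts only the edge premia and omits this fixed gap charge.

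Consequently the reduction to ``every online policy loses $\Omega(T^{\gamma})$ relative to $\E\sum_t|p_t-m|$'', with $\gamma=\kappa/(2\kappa+2)$, is vacuous: hindsight itself loses $\Omega(\sqrt T)$ there. Since $T^\gamma=o(\sqrt T)$, this common $\Theta(\sqrt T)$ baseline must be cancelled exactly. The paper does this with the pathwise identity $\Reg(\pi;B,T,0)=\E[C_T(\pi)-C_T(\pi^{\mathsf{hs}})]$ of Lemma~\ref{lb:lem-accounting}. It pairs that identity with a deadline-respecting comparator satisfying $\E C_T(\sigma)\le\Delta\,\E M_T+o(T^\gamma)$ (Proposition~\ref{lb:prop-comparator}). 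What remains is $\Delta$ times the \emph{excess} count $A_T=U_T-M_T$, plus edge premia. In that accounting your Step~2(i) survives in modified form. A forced non-sale at zero costs its edge premium, a positive constant in expectation, not $\Delta$; so $\Reg\ge c_F\,\E N_0-o(T^\gamma)$. Every other charge must be recast as a charge on $A_T$ or on premia.

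\textbf{Step 3 contradicts the paper's upper bound.} Any policy with regret $O(T^{\gamma})$ (or $O(T^{1/4}\log T)$ when $\beta=0$) has $\E A_T$ of the same order. Hence $\E U_T\asymp\sqrt T$, so its time-averaged upward drift is $\bar\varepsilon\asymp T^{-1/2}$. Your intended inequality $\E N_0\gtrsim\min\{\sqrt L,1/\bar\varepsilon\}\cdot T/L$ would then force at least order $\sqrt T$ visits to zero, and so regret $\gtrsim\sqrt T$. That contradicts Theorem~\ref{thm:stars-upper}(c). Protection concentrated at low inventory prevents boundary visits far more efficiently than its time average suggests. So no bound in terms of $\bar\varepsilon$ followed by Jensen can work. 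The exponent $(\kappa+2)/\kappa$ is also reverse-engineered rather than derived.

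\textbf{The missing idea} is to charge the full gap $\Delta$ to corrections that turn out to be unnecessary. This must be measured with the policy-independent reflected free walk $\overline Y_t=S_t+M_t$, not with the policy's own inventory or drift. Split the cheap upper corrections ($Z_t\le z$) by whether $\overline Y_{t-1}\le L$.
\begin{itemize}
\item Inside the band, cheap opportunities are scarce: $\E U_{\mathrm{in}}\le Cz^\kappa(L+1)\sqrt T$ (Lemma~\ref{lb:lem-scarcity}).
\item Outside the band, $\overline Y$ never returns to zero with conditional probability $\gtrsim L/\sqrt T$. On that event the running minimum is already final, so the correction is excess. This gives $\E A_T\ge c(L/\sqrt T)\,\E U_{\mathrm{out}}$ (Lemma~\ref{lb:lem-last-return}).
\item The remaining mandatory corrections each pay a premium above $z$.
\end{itemize}
Taking $z\asymp\Delta L/\sqrt T$ gives $\Reg\ge c_1L-c_2L^{\kappa+2}/T^{\kappa/2}-o(T^\gamma)$, and $L\asymp T^\gamma$ gives the theorem. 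A potential such as $\min\{y,h\}$ on the policy's inventory cannot detect whether a correction will later be needed, and that is exactly the information the last-return argument supplies.
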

\endgroup

% BEGIN retained canonical-only comparison (before full-class strengthening).
% Compare Theorem~\ref{lb:thm-positive-gap} with
% Proposition~\ref{prop:interior-transfer} for the same fixed
% $F_{\beta,\Delta}$ with $\Delta>0$ and the same capacity. The zero initial inventory
% lower-bound exponent exceeds the upper-bound exponent when starting in the interior by
% $(\beta+1)/(2\beta+4)-\beta/(2\beta+2)
% =1/[2(\beta+1)(\beta+2)]>0$. Thus the boundary lower bound eventually
% exceeds the achievable interior regret, including its logarithmic
% factors, for every fixed $\beta\ge0$. At $\beta=0$, the distribution is
% uniform on two separated intervals: moving from linear initial slack
% to zero initial inventory changes an achievable $O(\log^2 T)$ guarantee into
% an unavoidable $\Omega(T^{1/4})$ loss. The price distribution and its median gap
% are unchanged but with the initial inventory being zero, the decision maker must now acquire the inventory needed for
% future sales.
% END retained canonical-only comparison.
Compare Theorem~\ref{lb:thm-positive-gap} with
Proposition~\ref{prop:interior-transfer} for the same fixed
$F\in\mathcal F^{\mathrm{ac}}_{\beta,\Delta,\varepsilon_0}$ with $\Delta>0$
and the same capacity. The zero initial inventory lower-bound exponent
exceeds the upper-bound exponent when starting in the interior by
$(\beta+1)/(2\beta+4)-\beta/(2\beta+2)
=1/[2(\beta+1)(\beta+2)]>0$. Thus the boundary lower bound eventually
exceeds the achievable interior regret, including its logarithmic
factors, for every fixed $\beta\ge0$. At $\beta=0$, starting with linear
slack permits an $O(\log^2 T)$ guarantee, whereas starting empty forces
an $\Omega(T^{1/4})$ loss. The canonical distribution
$F_{0,\Delta}$, uniform on two separated intervals, is a concrete example.
The price distribution and its median gap are unchanged in this comparison,
but with the initial inventory being zero, the decision maker must now
acquire the inventory needed for future sales.

The contrast also explains how degeneracy interacts with the initial
inventory. In the balanced fluid relaxation, the intermediate storage
constraints are removed and the mean trade must be zero.
{Interpret $\lambda$ as an accounting value assigned to one
unit of inventory: buying at price $p$ gives net value $\lambda-p$,
while selling gives $p-\lambda$. Because expected net purchases are zero,
these inventory values cancel from expected profit.} For every
$\lambda\in[0,1]$ and every action rule $x(p)\in[-1,1]$ with
$\E[x(p)]=0$, we have
$\E[-px(p)]\stackrel{(a)}{=}\E[-(p-\lambda)x(p)]
\stackrel{(b)}{\le}\E|p-\lambda|$,
where $(a)$ uses the balance constraint and $(b)$ uses $|x(p)|\le1$.
The static median-sign rule attains equality for every
$\lambda\in[a^-,a^+]$. Outside this interval, the occupied prices
between $\lambda$ and the nearest median edge make the inequality
strict for that rule. Hence the optimal dual prices are precisely
$\mathop{\rm argmin}_{\lambda\in[0,1]}\E|p-\lambda|=[a^-,a^+]$.
This multiplicity of optimal shadow prices is the degeneracy created
by the median gap.

In the multisecretary problem, the decision maker has the resource
upfront and decides which arrivals should consume it. The
\textsf{CwG} policy of \cite{besbes2025} handles gaps while retaining
the same polynomial regret exponent as in the corresponding no-gap
setting, up to logarithmic factors. In trading, starting with zero inventory requires the decision maker
to acquire inventory before a sale
can be made. The same gap now makes an unnecessary inventory reserve
costly, and the continuous mass near its edges determines the
availability of inexpensive opportunities to build that reserve.
Proposition~\ref{prop:interior-transfer} and
Theorem~\ref{lb:thm-positive-gap} isolate this interaction: for the
same price distribution, linear initial slack permits the
multisecretary rates, while zero initial inventory forces the larger exponent.

% BEGIN retained uniform-example introduction (before full-class strengthening).
% To understand the lower-bound argument, first take $\beta=0$, so
% prices are uniform on the two intervals $[0,a^-]$ and $[a^+,1]$.
% END retained uniform-example introduction.
To understand the lower-bound argument, first take the canonical example
$F_{0,\Delta}$, with prices uniform on the two intervals $[0,a^-]$ and
$[a^+,1]$, where $a^\pm=(1\pm\Delta)/2$.
Call a price \emph{low} if $p_t\le a^-$ and \emph{high} if
$p_t\ge a^+$. The static median-sign rule buys at every low price and sells
at every high price. Let $S_t$ be the number of low prices minus the
number of high prices in the first $t$ steps, and let
$M_t=\max_{0\le s\le t}(-S_s)^+$. The expected deepest deficit
$\E M_T$ is of order $\sqrt T$. If the high prices outnumber the
low prices among the first $t$ arrivals, a trader starting at zero cannot make all the
sales prescribed by the static median-sign rule. Every feasible policy,
including hindsight, must therefore change at least $M_T$ units
of its prescribed high-price trades. Holding instead of selling
counts as one unit; buying instead of selling counts as two.

The necessity of these changes is shared by online and hindsight
policies. Their costs depend on \emph{which} high-price trades are
changed. Hindsight can choose prices close to $a^+$, subject to
feasibility at every time step: by the first time $S_t=-y$, at least
$y$ units of changes must already have occurred. It also chooses
which low-price purchases to forgo to restore the terminal
inventory to zero. Appendix~\ref{lb:app-positive-gap} constructs
such a feasible hindsight path whose expected correction cost is
at most $\Delta\E M_T+O(\log T)$. The first term is the minimum
gap charge; the second accounts for the distances of the chosen
prices from the gap edges and the rare adjustments needed to respect
capacity.

An online policy must choose these trades without knowing future
prices. To quantify its difficulty, apply the static median-sign rule
from zero with sales blocked whenever the inventory level is zero, ignoring the
upper boundary and terminal requirement. Its inventory is
$\overline Y_t=S_t+M_t$. This process depends only on the observed
low/high price sequence; the online
policy's inventory can differ from it. Fix $2\le L\le\sqrt T$
and $0<z\le(1-\Delta)/2$. Consider the band
$0\le\overline Y_{t-1}\le L$ and call a high price cheap to forgo
when $p_t\in[a^+,a^++z]$. For the uniform example, such a price
has probability $z/(1-\Delta)$. The static median-sign rule spends at
most order $L\sqrt T$ steps in the band in expectation, so there
are at most order $zL\sqrt T$ cheap opportunities there. This
limits how often the online policy can preserve inventory at
little additional edge cost while $\overline Y_{t-1}$ is close to zero.

The policy can also forgo a cheap high-price sale outside the band.
After such an arrival, however, $\overline Y$ has
probability at least order $L/\sqrt T$ of never returning to zero
before the horizon. On that event, the running minimum of $S$
has already reached its final value. The last visit of $\overline Y$ to zero occurred before the current arrival, and the changes
needed to cover the deficit $M_T$ had to be made by that visit.
The current high-price change therefore increases the total above
this minimum. Each excess unit
adds the fixed gap $\Delta$ to the correction cost. Thus preserving
inventory can turn out to have been unnecessary even though all
inventory is ultimately returned to zero. The online policy has
to balance this risk against forgoing higher-priced sales, which
carry a larger edge cost.

Choosing $z$ proportional to $\Delta L/\sqrt T$ balances these two
costs. The scarcity estimate then gives, for the uniform example,
$\Reg(\pi;B,T,0)\ge c_1L-c_2L^3/\sqrt T-C\log T$.
Taking $L$ to be a sufficiently small constant times $T^{1/4}$
makes the first two terms a positive multiple of $T^{1/4}$,
% BEGIN retained canonical-only proof-sketch extension (before full-class strengthening).
% while the hindsight edge cost is smaller. For general $\beta$,
% the probability of a cheap high price is
% $D_{\beta,\Delta}z^{\beta+1}$. The same argument gives
% END retained canonical-only proof-sketch extension.
while the hindsight edge cost is smaller. For any fixed distribution
$F$ in the stated class, the probability of a cheap high price is at most
$C_Fz^{\beta+1}$ for sufficiently small $z>0$. The local quantile bounds
also keep the hindsight edge cost at a smaller order than the claimed
lower bound. The same argument therefore gives
$\Reg(\pi;B,T,0)\ge
c_1L-c_2L^{\beta+3}/T^{(\beta+1)/2}-o(L)$
at the scale $L\asymp T^{(\beta+1)/(2\beta+4)}$, yielding
Theorem~\ref{lb:thm-positive-gap}. The appendix proves the
deadline construction, band-occupation estimate, and last-return
bound, and includes the physical capacity and terminal constraints.

This correction-cost calculation also connects the lower bound to
the marginal-compensation analysis used for the upper bounds.
Write $C_T(\pi)$ for the total gap and edge costs of the policy's
departures from the static median-sign rule, as defined in
Lemma~\ref{lb:lem-accounting}. That lemma gives the pathwise identity
$\sum_{t=1}^T\partial R_t
=R_T(\pi^{\mathsf{hs}};0)-R_T(\pi;0)
=C_T(\pi)-C_T(\pi^{\mathsf{hs}})$.
Subtracting the common baseline $\Delta M_T$ from both correction
costs leaves their difference unchanged. The argument above bounds
the additional gap and edge costs forced
by online decisions, and hence the same regret measured by
marginal compensation.

We illustrate the fundamental limit numerically in
Figure~\ref{fig:dp-boundary-fit}. We compute the optimal online
policy using \textsf{Dynamic Programming (DP)} for the three canonical
distributions with $\beta\in\{0,1,2\}$ and median gap $\Delta=0.2$.
We vary $T$ from $10$ to $10{,}000$, set $B=T/2$, and start and end with
zero inventory. Each point is the average hindsight profit minus the
realized \textsf{DP} profit over $200$ independent price sequences. %Thus the figure isolates the cost of not knowing future prices, even when the policy knows the distribution and optimizes its online decisions.
We fit curves of the form $aT^{(\beta+1)/(2\beta+4)}-b$, shown as dashed
lines. The coefficient $a$ and offset $b$
are estimated using weighted least squares,
with weights equal to the inverse squared Monte Carlo standard errors.
For all three distributions, the fitted curves closely track the observed
regret. This is consistent with the polynomial
spectrum in Theorem~\ref{lb:thm-positive-gap} and with the upper bounds
in Theorem~\ref{thm:stars-upper}, which also bound the regret of optimal
\textsf{DP}.

\begin{figure}[htbp]
    \centering
    % Native PGFPlots input. Source: full-20260924-v1, 200 paired paths per setting.
\begingroup
\definecolor{numDP}{HTML}{0072B2}
\definecolor{numSTARS}{HTML}{D55E00}
\definecolor{numCE}{HTML}{009E73}
\pgfplotsset{
  num series/.style={line width=0.7pt,mark size=1.5pt},
  num errors/.style={error bars/.cd,y dir=both,y explicit,
    error bar style={line width=0.45pt},
    error mark options={rotate=90,mark size=1.4pt}}
}
\begin{tikzpicture}
\begin{groupplot}[
  group style={group size=3 by 1,horizontal sep=0.65cm,ylabels at=edge left},
  width=0.275\textwidth,height=3.8cm,scale only axis,
  xmode=normal,ymode=normal,scaled ticks=false,
  xmin=0,xmax=10500,ymin=0,ymax=12.5,
  xtick={0,5000,10000},xticklabels={0,5000,10000},
  ytick={0,3,6,9,12},
  grid=major,grid style={black!10},
  axis line style={black!55},tick style={black!55},
  tick label style={font=\fontsize{8}{9.5}\selectfont},
  label style={font=\fontsize{9}{10.5}\selectfont},
  title style={font=\fontsize{9}{10.5}\selectfont},
  xlabel={Horizon $T$},ylabel={Expected regret},
  legend style={font=\fontsize{8}{9}\selectfont,draw=none,fill=white,
    at={(0.02,0.99)},anchor=north west,legend columns=1,
    inner sep=1pt,row sep=0pt,nodes={inner sep=0.5pt}},
  legend cell align=left
]
\nextgroupplot
\addplot+[num series,color=numDP,mark=*,num errors]
  table[x=T,y=mean,y error minus=errlow,y error plus=errhigh]
  {figures/numerics/algorithm_zero_beta_0_dp.dat};
\addlegendentry{Simulation}
\addplot[color=black!80,dashed,line width=0.9pt,domain=500:10000,samples=100] {0.6422131672512151*x^(1/4)-0.7939569174171421};
\addlegendentry{$0.642T^{1/4}-0.794$}
\nextgroupplot
\addplot+[num series,color=numDP,mark=*,num errors]
  table[x=T,y=mean,y error minus=errlow,y error plus=errhigh]
  {figures/numerics/algorithm_zero_beta_1_dp.dat};
\addlegendentry{Simulation}
\addplot[color=black!80,dashed,line width=0.9pt,domain=500:10000,samples=100] {0.4725346477080489*x^(1/3)-0.8438516809091756};
\addlegendentry{$0.473T^{1/3}-0.844$}
\nextgroupplot
\addplot+[num series,color=numDP,mark=*,num errors]
  table[x=T,y=mean,y error minus=errlow,y error plus=errhigh]
  {figures/numerics/algorithm_zero_beta_2_dp.dat};
\addlegendentry{Simulation}
\addplot[color=black!80,dashed,line width=0.9pt,domain=500:10000,samples=100] {0.38491666359254967*x^(3/8)-1.162518854442828};
\addlegendentry{$0.385T^{3/8}-1.163$}
\end{groupplot}
\def\numHeaderZeroTitle{(a) $\beta=0$}
\def\numHeaderOneTitle{(b) $\beta=1$}
\def\numHeaderTwoTitle{(c) $\beta=2$}
\input{figures/numerics/density_headers}
\end{tikzpicture}
\endgroup
    \caption{Optimal online regret with zero initial inventory for the canonical positive-gap distributions, with $\Delta=0.2$ and $B=T/2$.}
    \label{fig:dp-boundary-fit}
\end{figure}

\section{Algorithmic Design Principles}
\label{sec:algorithmic-design}

Section~\ref{lb:sec-fundamental-limits} identifies how initial inventory
changes the achievable regret spectrum. We now ask how an online
policy should account for the storage boundary. We first examine
certainty equivalence for uniform prices to see where inventory
protection is needed. We then introduce \STARS{}, which generates a
holding region by averaging optimized future scenarios, state its
performance guarantees, and explain the main proof ideas.

\subsection{Fragility of the boundary-unaware certainty-equivalent heuristic}
\label{ce:section}

We begin with the simplest continuous distribution, uniform prices on
$[0,1]$. Here the fluid re-solving threshold underlying \textsf{CwG}
reduces to the certainty-equivalent threshold since there is no gap to adjust
for. The approximation balances inventory over the remaining horizon
while ignoring intermediate storage constraints. We study its natural
trading implementation, which projects each proposed trade onto the
feasible actions. The same rule has logarithmic regret with linear
initial slack, but at least square-root regret from zero initial inventory.
Since this distribution has no median gap, the latter loss isolates
the cost of boundary-unaware decisions even without the positive-gap
obstruction in Section~\ref{lb:sec-fundamental-limits}.

Write $\tau_t=T-t+1$ and $d=y-y_0$. In the fluid problem, a trading
profile $x:[0,1]\to[-1,1]$ specifies the trade at each price quantile.
The fluid problem maximizes $-\tau_t\int_0^1 F^{-1}(q)x(q)\,dq$
subject to $\tau_t\int_0^1x(q)\,dq=-d$: the remaining trades must offset
the current displacement in expectation.
Since lower prices are preferable for buying and higher prices for
selling, an optimal fluid profile buys below a quantile threshold and
sells above it. 
Consequently, the certainty-equivalent threshold is $q_t^{\textsf{CE}}(y)=1/2- (y-y_0)/(2\tau_t).$
For the uniform distribution, the price and its quantile coincide.
The policy tentatively buys if $p_t<q_t^{\textsf{CE}}(y)$ and sells if
$p_t>q_t^{\textsf{CE}}(y)$, then projects this tentative action onto 
$\mathcal A_t(y;y_0)
 =\{x\in\{-1,0,1\}:0\le y+x\le B,
                         \ |y+x-y_0|\le\tau_t-1\}.$
Projection means choosing the feasible action nearest the tentative
action. This defines $\pi^{\textsf{CE}}$ completely. The projection enforces both physical
storage feasibility and the ability to return to $y_0$ by time $T$.

\begin{proposition}
\label{ce:theorem}
Assume prices are independent and uniform on $[0,1]$, and use the
projected policy $\pi^{\textup{\textsf{CE}}}$ defined above.
\begin{enumerate}[label=$(\alph*)$]
    \item Fix $\eta>0$. There exists a constant $C=C(\eta)<\infty$
    such that, for every integer $T\ge2$ and every feasible integer $B,y_0$
    with $y_0\wedge(B-y_0)\ge\eta T$, we have
    $\Reg(\pi^{\textup{\textsf{CE}}};B,T,y_0)\le C\log T$.
    \item Fix $\rho\in(0,1)$ and let $B=\lceil\rho T\rceil$.
    There exist constants $c=c(\rho)>0$ and $T_0=T_0(\rho)<\infty$
    such that $\Reg(\pi^{\textup{\textsf{CE}}};B,T,y_0 = 0)\ge c\sqrt T$
    for every $T\ge T_0$.
\end{enumerate}
\end{proposition}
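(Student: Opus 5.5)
For part (a), I will reduce to the multisecretary setting, as in Proposition~\ref{prop:interior-transfer}. Consider the unprojected certainty-equivalent recursion. Its displacement $D_t=y_t-y_0$ moves by $+1$ with probability $q_t^{\textsf{CE}}$ and by $-1$ otherwise. This gives a restoring drift $-D_{t-1}/\tau_t$, which is exactly the classical re-solving dynamics of the multisecretary problem with quota $\lfloor T/2\rfloor$ and uniform values.

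The first step is a maximal inequality for this martingale-plus-drift process. I expect $\Pp(\max_t|D_t|\ge \eta T)\le 2e^{-c\eta^2T}$, via Azuma applied to $D_t$ plus a pathwise comparison: the drift only pulls toward zero. On the complementary event, projection onto $\mathcal A_t$ is inactive, except through the terminal constraint $|y+x-y_0|\le\tau_t-1$. That terminal constraint is exactly the multisecretary quota-exhaustion rule.

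On this event, profit equals $2\times$(selection reward) minus $\sum_t p_t$. The hindsight benchmark satisfies the same identity, because with linear slack the relaxed threshold solution is feasible outside an exponentially small event. So the regret is at most twice the known $O(\log T)$ multisecretary regret of certainty-equivalent re-solving for uniform values \citep{bray2025,besbes2025}, plus $O(Te^{-cT})$. Alternatively, I can bound the compensated-coupling terms directly: the marginal compensation at time $t$ is of order $\E|p_t-\text{hindsight threshold}|\,\one\{\text{misclassified}\}\lesssim 1/\tau_t$, and summing gives $\log T$.

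For part (b) I will exhibit a concrete loss. Start at $y_0=0$. While $y$ is small and $\tau_t$ is large, $q^{\textsf{CE}}_t(y)\approx1/2-y/(2\tau_t)\approx1/2$. So the policy behaves like the median-sign rule reflected at zero: at $y=0$ a tentative sale is projected to a hold. The inventory therefore behaves like a reflected near-symmetric walk, which visits $0$ about $\sqrt T$ times in the first half of the horizon (about $\sqrt{T}$ local time at zero). Each visit at which $p_t>1/2$ is a forced hold, and these forced holds number order $\sqrt T$ in expectation.

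The key step is to show that each such forced hold costs a constant amount relative to hindsight. The loss is not the hold itself but the accumulated lost net volume. Hindsight, starting at $0$ with $B=\rho T$, can buy at the lowest prices and sell at the highest prices with a dynamic threshold. Its profit is $\E[\text{top } T/2\text{ minus bottom } T/2]-O(\log T)$: with zero initial inventory the hindsight penalty for the reflection is only an $O(\sqrt T)$ number of units traded near price $1/2$, and these cost $o(1)$ each. By contrast, the CE policy holds at high prices $p_t>1/2$ while at zero, whose average is $3/4$.

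To quantify this, I would use the identity profit $=\sum_t(1/2-p_t)x_t+\tfrac12(y_T-y_0)$, where the last term is zero. Each forced hold at a price $p_t$ uniform on $(1/2,1]$ forfeits $p_t-1/2$ relative to selling, an expected $1/4$. Every CE trade earns at most $|p_t-1/2|$, while hindsight earns essentially $\E|p-1/2|$ per step minus $O(\sqrt T\cdot T^{-1/2})$.

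So the regret is at least $\tfrac14\E[\#\{t:y_{t-1}=0,p_t>1/2\}]-O(\log T)-(\text{CE's threshold-drift losses, which are nonnegative})$. The remaining task is to show $\E[\#\text{visits to }0]\ge c\sqrt T$. I would do this by coupling with the reflected SSRW on $t\le T/2$, where the drift $y/\tau_t$ is $O(T^{-1/2})$ at the relevant scale $y=O(\sqrt T)$, and by applying the standard local-time estimate.

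I expect the main obstacle to be the hindsight comparison in (b). I need a lower bound on hindsight profit of $T\E|p-1/2|-o(\sqrt T)$ from $y_0=0$. Equivalently, I need to show that the reflection costs hindsight only about $O(\sqrt T)$ units traded at prices within $O(T^{-1/2})$ of $1/2$, or $O(\log T)$ in total. I plan to construct the hindsight path the way the lower-bound appendix does. It forgoes the deficit $M_T=O_p(\sqrt T)$ at high prices nearest $1/2$ before each deadline. With no gap, the $k$-th nearest such price is at distance about $k/T$ above $1/2$, so the cost is about $\sum_{k\le\sqrt T} k/T=O(1)$. It then rebalances the final inventory using the low prices nearest $1/2$, at similar cost, and handles the capacity constraint on an exponentially small event.
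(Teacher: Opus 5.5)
Your proof is correct. Part (a) follows the paper's route. The main text derives it from Proposition~\ref{prop:interior-transfer}, and Appendix~\ref{ce:appendix} carries out the same second-moment computation directly through the premium $G_T$ of Lemma~\ref{ce:accounting}.

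Part (b), however, is argued differently. The paper never builds a global hindsight path there. It shows that CE visits inventory one at least $\E M_T-3/2$ times in expectation. On the disjoint events $\{y_{t-1}=1,\ p_t\in[1/2,3/4],\ p_{t+1}\in[7/8,1]\}$ it then replaces CE's sell-then-hold by hold-then-sell, a local improvement worth at least $1/8$. That argument is self-contained, but the resulting constant is small.

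You instead use the exact identity $\Reg(\pi^{\textsf{CE}};B,T,0)=\E G_T(\pi^{\textsf{CE}})-\E G_T(\pi^{\mathsf{hs}})$. You bound the first term below by the blocked sales at zero, $\E G_T(\pi^{\textsf{CE}})\ge\tfrac18\E N_0\ge\tfrac14\E M_T$ with $N_0=\sum_t\one\{y_{t-1}=0\}$. You bound the second term above by the deadline comparator of Appendix~\ref{lb:app-positive-gap}. For uniform prices, that appendix's correction cost $C_T$ coincides with $G_T$ (take $a^\pm=1/2$, $\Delta=0$). The proofs of Lemmas~\ref{lb:lem-upper-deadlines} and~\ref{lb:lem-reverse-repair} and Proposition~\ref{lb:prop-comparator} never use $\Delta>0$, so $\E G_T(\pi^{\mathsf{hs}})=O(\log T)$. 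Your route gives a much larger leading constant and attributes CE's loss directly to its blocked sales. The price is importing the first-passage and order-statistic machinery.

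A few repairs are needed.
\begin{itemize}
\item Your cost estimate $\sum_{k\le\sqrt T}k/T$ for the comparator ignores the deadlines. The $y$th correction must be made by the first passage $\sigma_y$ of $S$ to $-y$, when only $(\sigma_y+y)/2$ high prices have arrived. So it costs about $y/\sigma_y$, and Lemma~\ref{lb:lem-deadline-sum} sums these to $O(\log T)$ rather than $O(1)$. This is still $o(\sqrt T)$, so your conclusion stands.
\item Do not count visits to zero perturbatively. A drift of order $T^{-1/2}$ accumulates to order $\sqrt T$ over $T/2$ steps, the same scale as the fluctuations. Instead, note that CE's threshold is at most $1/2$ and projection only lowers inventory. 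The same-price coupling then gives $y_t\le Z_t$, where $Z_t=0\vee(Z_{t-1}+\xi_t)$ and $\xi_t=\pm1$ is the median-sign action. Hence $\E N_0\ge2\E M_T$ over the whole horizon.
\item CE's remaining premium terms enter with a plus sign. They should simply be dropped, not subtracted.
\item For odd $T$ in (a), the terminal projection converts one action into a hold when $|D_{t-1}|=\tau_t-1$. This is not the quota rule of an even-horizon multisecretary instance, so your reduction does not cover it directly. It costs at most one, as in Lemma~\ref{ce:terminal}.
\end{itemize}
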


% Addressed author comment, retained for the editing record:
% \ak{the explanation can be a bit more detailed here. the first part of the proposition you can just refer to the previous section where the guarantees follows from CwG. no need to explainig that. You should try to emphasize the \sqrt{T} lower bound when starting close to zero. Basically you want to say that lets consider a policy that always buys below 1/2 and sells above 1/2 and basically this policy will hit the lower boundary around \sqrt{T} times (this follows from the SSRW hitting the zero state), now the CE will also incerase the downward drift because it will always try to come to the initial state of zero, so I will only hit the zero state more often and whenever I hit zero basically there is a cahnce that I would have been able to sell at the next price had I not sold now and made some profit which the hindsight could have avoided by swapping the decisions and this leads to regret of the order \sqrt{T}}

The result in Proposition~\ref{ce:theorem}(a) follows from the result in 
Proposition~\ref{prop:interior-transfer} for the uniform distribution, where \textsf{CwG} reduces to the certainty equivalent heuristic. We focus here on why the same principle incurs a regret of $\Omega(\sqrt{T})$ when the initial inventory is zero.

For reference, consider the static median-sign rule from
Section~\ref{lb:sec-fundamental-limits}, with sales blocked at zero:
buy below $1/2$, sell above $1/2$ when inventory is positive, and hold
when a sale is requested while the inventory level is zero. Temporarily ignore the
upper boundary and the terminal requirement. The rule's proposed trades
form the SSRW described in that section; a proposed sale at zero is
replaced by a hold. The expected number of these blocked sales is of
order $\sqrt T$, and the static median-sign rule consequently spends
order $\sqrt T$ time steps at zero inventory.

The certainty equivalent heuristic adds a downward pull to the static median-sign rule. When $y_0=0$, its threshold
is $1/2-y/(2\tau_t)$: at positive inventory, it sells at some prices
below $1/2$ at which the static median-sign rule would buy. The expected
tentative change in inventory is $-y/\tau_t$. Under the same price
sequence, its inventory stays below that of the static median-sign rule, so
its inventory level is zero whenever the static median-sign rule's inventory
level is zero. The capacity and
terminal constraints preserve this ordering. Thus \textsf{CE} spends
at least order $\sqrt T$ time steps at zero in expectation.

To see how this leads to a regret of $\Omega(\sqrt{T})$, consider an occasion when \textsf{CE}
has one unit left. Suppose the current price lies in $[1/2,3/4]$ and
the next price lies in $[7/8,1]$. \textsf{CE} sells its last unit at
the current price and then holds at the next, higher price because its
inventory is zero. Hindsight can instead hold for one time step and
sell at the higher price. This increases profit by at least $1/8$,
uses only one unit of storage, and leaves exactly the same inventory
after the two time steps. Conditional on having one unit at the start,
the two price intervals occur with probability $1/32$.

There are sufficiently many such opportunities. At zero inventory
before the final step, the next price causes a purchase with probability
$1/2$. Each departure from zero must eventually be followed by a return
from inventory one to zero, since the terminal inventory is zero.
Consequently, the expected number of time steps with one unit remaining
is also at least order $\sqrt T$. Apart from at most the final step,
each permits the two-step improvement above. These improvements do not
overlap: after a selected sale, the original \textsf{CE} inventory is
zero, so another selected pair cannot start at the next step. Their
gains can therefore be added along a single feasible hindsight path.
Appendix~\ref{ce:appendix} makes this counting argument precise and
obtains the $\Omega(\sqrt T)$ regret bound.

The example identifies the additional task in algorithm design:
preserve inventory for valuable selling opportunities before the inventory becomes zero. One can encode this protection through an
inventory-dependent holding region, whose width depends on the price
law and the remaining horizon. We next show how scenario optimization
generates such a region directly from feasible future trades.

\subsection{\textsf{STARS}: scenario optimization generates a holding region}
\label{sec:stars}

% We specialize the \textsf{RAMS} idea of \cite{besbes2025} to trading with
% storage. Each simulated continuation respects the inventory constraints,
% so its value reflects the selling opportunities that the current
% inventory makes possible. Averaging these values allows the policy to
% adapt its holding decisions without specifying a distribution-dependent
% reserve or repulsion rule.

% \subsubsection{The algorithm}
\STARS{} stands for \textsf{S}torage \textsf{T}rading by \textsf{A}veraging \textsf{R}epeatedly across multiple \textsf{S}cenarios. The algorithm uses a simulator to evaluate the future trading opportunities created by each feasible current action. At time step $t$, a candidate action changes inventory to $y_{t-1}+x$. For each simulated future price sequence, we compute the best continuation from that inventory to the target $y_0$, using a deterministic hindsight optimizer. We use the same scenarios for all candidate actions, average their profits, and execute only the maximizing current action. %The future scenario plans are discarded, and the procedure is repeated at the next time step.

To define this precisely, let $V_{t+1}^{\mathsf{hs}}(z;\mathbf p_{>t})$ be the maximum profit over time steps $t+1,\ldots,T$, starting from $z$, ending at $y_0$, and respecting capacity and unit-trade constraints. Its domain is $\mathcal Y_{t+1}=\{z\in\{0,\ldots,B\}:|z-y_0|\le T-t\}$, with terminal value $V_{T+1}^{\mathsf{hs}}(y_0;\varnothing)=0$.
The optimizer knows the prices within its scenario, so this continuation differs from an optimal nonanticipative policy. Feasible actions in \eqref{eq:feasible-actions} ensure that every continuation value evaluated below is defined.

\begin{algorithm}[htbp]
\caption{\STARS{}: Storage Trading by Averaging Repeatedly across multiple Scenarios}
\label{alg:stars}
\begin{algorithmic}[1]
\Require $T,B,y_0$, a simulator for $F$, and positive sample counts $K_t$ chosen before observing $p_t$.
\For{$t=1,\ldots,T$}
 \State Observe $p_t$ and inventory $y_{t-1}$.
 \State Form $\mathcal A_t(y_{t-1})$ using \eqref{eq:feasible-actions}.
 \Comment{Preserve a feasible completion}
 \State Draw independent $\widetilde{\mathbf p}_{>t}^{\,(i)}\sim F^{\otimes(T-t)}$, $i=1,\ldots,K_t$.
 \Comment{Fresh future scenarios}
 \For{each $x\in\mathcal A_t(y_{t-1})$}
  \State Evaluate $V_{t+1}^{\mathsf{hs}}(y_{t-1}+x;\widetilde{\mathbf p}_{>t}^{\,(i)})$ for all $i$.
  \Comment{Reuse the same scenarios}
 \EndFor
 \State Choose a maximizing feasible action
 \Statex \hspace{\algorithmicindent}$\displaystyle\begin{aligned}
 x_t\in\arg\max_{x\in\mathcal A_t(y_{t-1})}\bigg\{&-p_tx+\frac1{K_t}\sum_{i=1}^{K_t}
 V_{t+1}^{\mathsf{hs}}(y_{t-1}+x;\widetilde{\mathbf p}_{>t}^{\,(i)})\bigg\}.
 \end{aligned}$
 \State Execute $x_t$ and set $y_t=y_{t-1}+x_t$.
 \Comment{Implement only this action}
\EndFor
\end{algorithmic}
\end{algorithm}
Ties are resolved in the order hold, buy, sell, restricted to feasible maximizers. Scenarios are independent of all actual prices and previous simulations. Throughout the performance guarantees, \STARS{} means this finite-scenario algorithm, with $K_t\ge(T-t+1)^2$ chosen before observing $p_t$. We also write $\pi^K$ when the sampling schedule needs to be explicit.

% \subsubsection{What the scenario average is estimating}
An additional unit is valuable because it changes what trades can be executed in the future. For adjacent feasible inventories define its realized and expected marginal continuation values:
\begin{align*}
 \Delta V_{t+1}^{\mathsf{hs}}(y;\mathbf p_{>t})
 &=V_{t+1}^{\mathsf{hs}}(y;\mathbf p_{>t})-V_{t+1}^{\mathsf{hs}}(y-1;\mathbf p_{>t}),\\*
 \lambda_{t+1}(y)&=\E_{\mathbf p_{>t}}\left[\Delta V_{t+1}^{\mathsf{hs}}(y;\mathbf p_{>t})\right],\\*
 \widehat\lambda_{t+1}(y)&=\frac1{K_t}\sum_{i=1}^{K_t}
 \Delta V_{t+1}^{\mathsf{hs}}(y;\widetilde{\mathbf p}_{>t}^{\,(i)}).
\end{align*}
Here $\E_{\mathbf p_{>t}}$ averages a fresh i.i.d.\ future while keeping the stated inventory fixed. With holding feasible at $y$, subtracting its common continuation value from the three scenario-averaged profits gives
\begin{align*}
 \underbrace{-p_t+\widehat\lambda_{t+1}(y+1)}_{\text{buy}},\qquad
 \underbrace{0}_{\text{hold}},\qquad
 \underbrace{p_t-\widehat\lambda_{t+1}(y)}_{\text{sell}}.
\end{align*}

To understand what these sample averages estimate, let
$\pi^\circ$ be the auxiliary rule that substitutes $\lambda_{t+1}$ for
$\widehat\lambda_{t+1}$, with the same feasibility and tie conventions.
We use this reference rule in the analysis; the implemented algorithm and
the performance theorem below use finitely many scenarios.

\begin{proposition}
\label{prop:main-concavity}
Fix a terminal target $y_0$ and a decision epoch $t$.
\begin{enumerate}[label=$(\alph*)$]
    \item For every fixed future price sequence $\mathbf p_{>t}$,
    the function $V_{t+1}^{\mathsf{hs}}(\cdot;\mathbf p_{>t})$ is
    discretely concave on $\mathcal Y_{t+1}$. Its expectation and
    its empirical average are also discretely
    concave. Consequently, the realized, expected, and sample-average
    marginal values are nonincreasing in inventory.
    \item At every inventory $y$ where all three actions are feasible,
    \textup{\STARS{}} uses the rule
    \begin{align*}
     x_t=\begin{cases}
     +1,&p_t<\widehat\lambda_{t+1}(y+1),\\
     0,&\widehat\lambda_{t+1}(y+1)\le p_t\le\widehat\lambda_{t+1}(y),\\
     -1,&p_t>\widehat\lambda_{t+1}(y).
     \end{cases}
    \end{align*}
\end{enumerate}
\end{proposition}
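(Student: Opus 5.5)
Part (a) rests on writing the hindsight continuation value as a linear program with a totally unimodular constraint matrix. For a fixed future price sequence $\mathbf p_{>t}$ and starting inventory $z$, the continuation problem maximizes $\sum_{s>t}-p_sx_s$ subject to three groups of constraints:
\[
-1\le x_s\le 1,\qquad 0\le z+\textstyle\sum_{r=t+1}^{s}x_r\le B,\qquad z+\textstyle\sum_{r>t}x_r=y_0 .
\]
The constraint matrix consists of identity rows and prefix-sum rows, so it is an interval (consecutive-ones) matrix and hence totally unimodular. The right-hand sides are integers shifted affinely in $z$. Therefore, for every integer $z\in\mathcal Y_{t+1}$, the LP optimum equals the integer optimum $V_{t+1}^{\mathsf{hs}}(z;\mathbf p_{>t})$.

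The LP value is the maximum of a linear objective over a polyhedron whose right-hand side depends affinely on $z$, so it is a concave function of $z$ on its (convex) domain. To verify concavity directly, take optimal solutions $x$ and $x'$ at $z$ and $z'$. The convex combination $\theta x+(1-\theta)x'$ is feasible at $\theta z+(1-\theta)z'$. Restricting this concave function to the integers shows that $V_{t+1}^{\mathsf{hs}}(\cdot;\mathbf p_{>t})$ is discretely concave on $\mathcal Y_{t+1}$, which is an integer interval. Equivalently, the increments $\Delta V_{t+1}^{\mathsf{hs}}(y)$ are nonincreasing in $y$.

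The remaining claims in (a) follow from preservation of concavity. Expectations and finite averages are nonnegative combinations, and discrete concavity (nonincreasing increments) is preserved under such combinations, provided the domain $\mathcal Y_{t+1}$ does not depend on $\mathbf p_{>t}$, which it does not. Thus $\lambda_{t+1}$ and $\widehat\lambda_{t+1}$ are nonincreasing. All values are bounded by $T$, so the expectation is finite.

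For part (b), suppose all three actions are feasible at $y$. Subtracting the common hold value, the scenario-averaged scores are
\[
\text{buy: } -p_t+\widehat\lambda_{t+1}(y+1),\qquad \text{hold: } 0,\qquad \text{sell: } p_t-\widehat\lambda_{t+1}(y).
\]
By (a), $\widehat\lambda_{t+1}(y+1)\le\widehat\lambda_{t+1}(y)$, which gives three cases.
\begin{itemize}
\item If $p_t<\widehat\lambda_{t+1}(y+1)$, buying has a strictly positive score, and selling has score $p_t-\widehat\lambda_{t+1}(y)<0$. Buying is the unique maximizer.
\item If $p_t>\widehat\lambda_{t+1}(y)$, selling is strictly positive and buying is strictly negative, so selling is the unique maximizer.
\item Otherwise both the buy and sell scores are $\le0$, so hold is a maximizer. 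The tie-breaking order (hold first) then selects hold, including at the boundary equalities.
\end{itemize}

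The main care point is the LP/total-unimodularity step, including the requirement that the domain of $z$ be exactly the set where the LP is feasible. Feasibility holds iff $0\le z\le B$ and $|z-y_0|\le T-t$, which is an integer interval.

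If one prefers to avoid LP duality, concavity can instead be shown by a direct exchange argument on optimal paths. Given optimal paths from $z-1$ and $z+1$, cross them at their first meeting point to build two feasible paths from $z$ whose total value is at least the sum of the originals. This gives $2V(z)\ge V(z-1)+V(z+1)$. I would present the LP argument first, since it is shorter.
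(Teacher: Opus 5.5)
Your proof is correct, but it reaches concavity by a different route from the paper.

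\textbf{Your route.} You embed the hindsight continuation problem in its LP relaxation. Total unimodularity of the interval (consecutive-ones) constraint matrix identifies the LP value with $V_{t+1}^{\mathsf{hs}}$ at integer inventories. You then use concavity of a parametric LP value whose right-hand side varies affinely in $z$.

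\textbf{The paper's route.} The paper argues pathwise. It takes optimal inventory paths $w^\pm$ from $j\pm1$ and forms the midpoint path $m_s=(w^-_s+w^+_s)/2$, which is integer or half-integer with unit-bounded steps. It then rounds this path down and up. This yields two feasible integer paths from $j$ whose trades sum, period by period, to those of $w^-$ and $w^+$, so $2V(j)\ge V(j-1)+V(j+1)$ follows directly.

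\textbf{Comparison.} The two arguments are closely related: the floor/ceiling decomposition is a hands-on instance of the integer-decomposition property that total unimodularity guarantees. The trade-offs are as follows.
\begin{itemize}
\item Your argument is shorter once the polyhedral facts are accepted. It gives a concave extension to all real $z$, and it carries over verbatim to any constraint system with consecutive-ones structure, such as time-varying capacities or trade bounds.
\item The paper's argument is self-contained and produces explicit paths. This matches the coupling and coalescence style used elsewhere in the paper, for example in Lemma~\ref{lem:main-marginal-range}.
\item The paper's separate min--max recursion (Lemma~\ref{lem:stars-concavity}, stated for target zero) also shows that the marginals are among the future prices. The LP route does not give this.
\end{itemize}
Your part (b), and your passage from pathwise concavity to expected and sample-average concavity via common scenarios, coincide with the paper's.

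\textbf{Caveat on your fallback exchange sketch.} Crossing the paths from $z\mp1$ at a meeting point only reproduces paths from $z\mp1$, not from $z$. To obtain two paths from $z$, you must shift each path one unit toward the other until their gap first drops to at most one, and then hand off. This is essentially what the paper's midpoint rounding accomplishes.
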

The proposition explains why scenario optimization generates a holding region. Selling gives up the continuation value of the last stored unit, whereas buying adds the next, weakly less valuable unit. The policy holds when neither transaction compensates for the resulting change in continuation value. This hold region adapts to inventory, the remaining horizon, and the distribution without requiring a gap width, edge exponent, or target buffer as an input. % Appendix~\ref{app:basic-facts} gives a pathwise rounding proof valid for every terminal target $y_0$.

\subsection{Performance guarantees of \STARS{}}
\label{sec:stars-performance}
\begin{theorem}
\label{thm:stars-upper}
Fix $\rho\in(0,1)$, let $B=\lceil\rho T\rceil$, and assume $y_0=y_T=0$.
Run Algorithm~\ref{alg:stars} with fresh common scenarios and predictable
sample counts $K_t\ge(T-t+1)^2$.
\begin{enumerate}[label=$(\alph*)$]
    \item Fix $\mu\in(0,1/2]$ and $\Delta\in[0,1]$. For every
    distribution $F\in\mathcal F^{\mathrm{at}}_{\mu,\Delta}$,
    there exists a constant $C=C(F,\rho)<\infty$ such that
    $\Reg(\textup{\STARS};B,T,0)\le C\log T$.
    \item Fix $\beta\ge0$ and $\varepsilon_0\in(0,1/4]$, and let
    $\Delta=0$. For every distribution
    $F\in\mathcal F^{\mathrm{ac}}_{\beta,0,\varepsilon_0}$,
    there exists a constant $C=C(F,\rho)<\infty$ such that
    \begin{align*}
     \Reg(\textup{\STARS};B,T,0)
     \le C\log^2 T\,\mathbbm{1}\{\beta=0\}
       +CT^{\beta/(2\beta+2)}\,\mathbbm{1}\{\beta>0\}.
    \end{align*}
    \item Fix $\beta\ge0$, $\varepsilon_0\in(0,1/4]$, and
    $\Delta\in(0,1)$. For every distribution
    $F\in\mathcal F^{\mathrm{ac}}_{\beta,\Delta,\varepsilon_0}$,
    there exists a constant $C=C(F,\rho)<\infty$ such that
    \begin{align*}
     \Reg(\textup{\STARS};B,T,0)
     \le CT^{1/4}\log T\,\mathbbm{1}\{\beta=0\}
       +CT^{(\beta+1)/(2\beta+4)}\,\mathbbm{1}\{\beta>0\}.
    \end{align*}
\end{enumerate}
% Each bound holds for every integer $T\ge2$ and every permitted sampling
% schedule. The dependence on $F$ includes its regularity moduli in
% Definition~\ref{def:continuous-family}; the constants are independent of
% $T$ and the sampling schedule and may vary with the fixed gap $\Delta$.
\end{theorem}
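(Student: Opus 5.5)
We would prove Theorem~\ref{thm:stars-upper} through the compensated-coupling decomposition (Proposition~\ref{prop:main-decomposition}). The regret of \STARS{} equals $\E\sum_{t=1}^T\partial R_t$. Here $\partial R_t$ is the marginal compensation: the hindsight loss from forcing the hindsight continuation, started at the online inventory $y_{t-1}$, to take the online action $x_t$.

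\textbf{Step 1: compensation as a function of price and state.} Conditional on $(y_{t-1},p_t)$, we would express $\E[\partial R_t]$ through the realized marginal values. Selling when hindsight would hold costs $(\Delta V^{\mathsf{hs}}_{t+1}(y)-p_t)^+$, and the other mismatches give analogous expressions. By Proposition~\ref{prop:main-concavity}, the reference rule $\pi^\circ$ acts exactly on the sign of $p_t-\lambda_{t+1}(\cdot)$. Its expected one-step compensation is therefore bounded by a quantity of the form
\[
\E\big[|\Delta V^{\mathsf{hs}}_{t+1}(y)-\lambda_{t+1}(y)|\,\one\{p_t \text{ between them}\}\big].
\]
This is a concentration-times-local-mass term, as in \citet{bray2025,besbes2025}. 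Because $K_t\ge\tau_t^2$, Hoeffding's inequality gives $|\widehat\lambda-\lambda|\lesssim\sqrt{\log T}/\tau_t$ with high probability. The finite-scenario perturbation then contributes at most $\sum_t \tau_t^{-1}\log T=O(\log^2T)$, or less, so it suffices to analyze $\pi^\circ$.

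\textbf{Step 2: hindsight marginal values.} We would control the fluctuations of $\Delta V^{\mathsf{hs}}_{t+1}(y)$ via a ballot/random-walk representation of the hindsight optimum. The marginal unit at inventory $y$ is sold at a threshold determined by the running-minimum excursion structure of the future price walk. We expect the following behavior.
\begin{itemize}
\item When $y\gtrsim\sqrt{\tau}$, the marginal value concentrates within a quantile window of width $\tau^{-1/2}$ around the median. This yields local compensation of order $\tau^{-(\beta+1)/(2\beta+2)}\cdot\tau^{-1/2}\cdot$(mass), recovering the multisecretary per-step rate.
\item When $y$ is small, the marginal value sits near $a^+$ (or near the median when there is no gap). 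Its spread then depends on $y$ rather than on $\tau$.
\end{itemize}
For atomic laws, $\lambda$ equals an atom except on an event of probability $e^{-c y}$ or $e^{-c\tau}$, which gives $O(1)$ compensation per visit weighted by an exponentially small factor.

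\textbf{Step 3: occupation measure of the inventory.} We would use the same marginal values to derive the drift of $y_t$ under $\pi^\circ$. Near zero, the hold region $[\lambda(y+1),\lambda(y)]$ yields the expected upward push: a logarithmic buffer for atomic laws, an inverse-inventory Bessel-type repulsion for uniform and no-gap laws, and a gap-dependent push for positive-gap laws. A Lyapunov/occupation-time argument then bounds $\E\#\{t: y_{t-1}=y,\tau_t\approx\tau\}$. The regret is the sum over states of compensation times visits, which we would split into dyadic bands of $y$ and $\tau$.
\begin{itemize}
\item \textbf{Case (a).} Exponentially small compensation times polynomial visits sums to $O(\log T)$.
\item \textbf{Case (b).} Summing the band-wise product gives $\log^2T$ at $\beta=0$ and $T^{\beta/(2\beta+2)}$ for $\beta>0$.
\item \textbf{Case (c).} This mirrors the lower-bound calculation. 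For $y\le L$, visits number at most $L\sqrt T$ and the edge cost is $z\sim L/\sqrt T$, so compensation is at most $z^{\beta+1}$ per visit. Above $L$, the gap over-reservation charge is $\Delta$ per excursion with probability $L/\sqrt T$. Optimizing at $L\asymp T^{(\beta+1)/(2\beta+4)}$ gives the stated rate, with an extra $\log T$ at $\beta=0$ from the Hoeffding window.
\end{itemize}
The upper boundary $B=\lceil\rho T\rceil$ and the terminal constraint would be handled by showing that $y_t$ reaches $\rho T$ only with exponentially small probability, and that the last $O(\sqrt T)$-scale return costs at most the stated order.

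\textbf{Main obstacle.} Step 3 is the crux. Proposition~\ref{prop:main-nonvanishing} shows that compensation at $y=0$ does not vanish, so the drift and the compensation must be coupled state by state, with sharp enough estimates of $\lambda_{t+1}(y)$ for small $y$. Our first attempt would be an explicit approximation of $\lambda_{t+1}(y)$ by a functional of the maximum of the future median-sign random walk. We would bound the approximation error using the regular quantile expansions of Definition~\ref{def:continuous-family}.
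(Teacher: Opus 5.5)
The proposal breaks at the reduction at the end of Step~1. A Hoeffding bound on $|\widehat\lambda_{t+1}-\lambda_{t+1}|$ controls the sampled action only at the current inventory. Once \STARS{} and $\pi^\circ$ act differently, they visit different inventories. The regret of \STARS{} is a sum of compensations over \emph{its own} occupation measure, so a per-state comparison does not show that it suffices to analyze $\pi^\circ$.

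For continuous prices the gap can be closed along your lines, but you need one more ingredient. The paper shows, by a no-crossing coalescence coupling, that the reference rule's own value function $\mathcal J_\tau$ satisfies $0\le\mathcal J_\tau(y+1)-\mathcal J_\tau(y)\le1$. A performance-difference telescope along the sampled path then gives $\Reg(\pi^K)\le\Reg(\pi^\circ)+M\sum_t\E K_t^{-1/2}$, because the density bound makes the expected action discrepancy at most $M$ times the expected threshold errors.

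For atomic prices this route fails, because the action discrepancy at an atom is not controlled by the threshold error. In the median-gap case, the population margin $\lambda_{t+1}(y)-a^+$ changes sign at an inventory of order $\log\tau$. There it need not exceed the $O(1/\tau_t)$ sampling error, so \STARS{} flips its decision at price $a^+$ with probability comparable to that atom's mass. Its inventory path then genuinely departs from that of $\pi^\circ$. Separately, your $O(\log^2T)$ sampling overhead already exceeds the $O(\log T)$ claimed in part~(a).

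The paper handles atoms in two steps. First, it bounds the sampled compensation at a fixed state by $2\partial\bar R^B_\tau(y)+C/K_t$. This uses the fact that a wrong-sign estimate costs exactly the absolute mean advantage times the wrong-sign probability, which Bernstein's inequality makes $O(1/K_t)$. Second, it controls occupation along the sampled trajectory directly. It shows that the sampled thresholds keep the reference direction with high probability: buying at $a^+$ below the logarithmic buffer (where the margin is $\gtrsim\log\bar\tau/\sqrt{\bar\tau}$), selling at $a^+$ above $A\log\tau$, and buying at $a^-$ below $\sqrt{c_*\tau\log\tau}$. It then bounds visits through the crossing identity and by comparing upward corrections with the running minimum of the fair median-sign walk.

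Your description of atomic compensation is also off. With a median gap it is of order $(r^y\wedge(y+1)/\sqrt\tau)+s^y$, hence of order one at $y=0$. What the proof makes small is the occupation of low inventories, not the charge there.

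Your Steps~2--3 point the right way: the paper also uses the count-process representation of hindsight marginals and weights compensation by occupation. But Step~3 lacks the mechanism that produces the rates. Your case-(c) calculation replays the lower-bound tradeoff, which constrains every online policy but does not show that \STARS{} attains it. As written it does not even give the rate: $L\sqrt T$ visits at cost $z^{\beta+1}$, with $z\asymp L/\sqrt T$ and $L\asymp T^{(\beta+1)/(2\beta+4)}$, total $\sqrt T$.

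In the paper the gap never enters the compensation. The Jensen gap $J_F(W)$ is at most $\E R_F(W)$, and $R_F$ vanishes on $[a^-,a^+]$. So the charge at inventory $y$ is $C\{(y+1)^{-r}+((y+1)/\tau)^r+\tau^{-r/2}\}$, with $r=1+\alpha$, whether or not there is a gap. The gap enters only the drift, through the deadline correction $\lambda_\infty(y)-\lambda^B_\tau(y)\lesssim\Delta y/\sqrt\tau+y\tau^{-r/2}$ set against the price premium $\lambda_\infty(y)-a^+\asymp y^{-\alpha}$. This leaves a strong upward drift only for $y\le H\asymp(\sqrt\tau/\Delta)^{1/r}$, and merely $\gamma\ge\tfrac12$ up to a level of order $\sqrt\tau$. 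A birth--death potential built from the weights $\pi_j=\prod_{i\le j}b_i/(1-b_i)$ converts this into weighted occupation $\E\sum(y_{t-1}+1)^{-r}\lesssim\sqrt{\bar\tau}\,H^{-\alpha}\asymp\bar\tau^{(\beta+1)/(2\beta+4)}$ per dyadic block (up to a logarithm at $\beta=0$).

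A generic Lyapunov argument would miss three quantitative points:
\begin{itemize}
\item For $\beta>0$ you need the premium coefficient $\kappa_\alpha=\Gamma(1+\alpha)^{1/\alpha}$ to satisfy $\kappa_\alpha>\alpha$ strictly. Weights growing like $j^\kappa$ with $\kappa\le\alpha$ give weighted occupation of larger order than $T^{(1-\alpha)/2}$.
\item At $\beta=0$, $\Delta=0$ the coefficient is exactly critical. Your dyadic-band accounting then costs $\log^2$ per block and $\log^3T$ overall; the paper avoids this with a single potential over the whole horizon.
\item Summing the $((y+1)/\tau)^r$ term requires the terminal moment bound $\E y_{t-1}^2\lesssim\tau_tL_{\tau_t}$. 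This rests on the mean thresholds falling below the median once $y\gtrsim\sqrt{\tau L_\tau}$.
\end{itemize}
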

% On the canonical positive-gap instances,
{For every fixed $F\in\mathcal F^{\mathrm{ac}}_{\beta,\Delta,\varepsilon_0}$
with $\Delta>0$,}
these upper bounds match the
polynomial exponent of Theorem~\ref{lb:thm-positive-gap}.
{The bounds match in order for $\beta>0$, and up to a
factor $\log T$ for $\beta=0$.}
All the guarantees apply to the same
finite-scenario decision rule, although the proof treats sampling
differently for continuous and atomic prices. For continuous prices,
Appendix~\ref{app:stars-continuous} bounds the reference rule and
Appendix~\ref{app:stars-sampling} transfers that bound to
Algorithm~\ref{alg:stars}. For atomic prices,
Appendix~\ref{app:finite-atomic} extends the compensation and inventory
estimates of Appendix~\ref{app:stars-atomic} directly to the sampled
trajectory.

For a continuous distribition with
density at most $M$, the additional regret
relative to ideal \STARS{} (which exactly computes the expected marginal value $\lambda_{t + 1}$) $\pi^\circ$ is at most
$M\E\sum_{t=1}^{T-1}K_t^{-1/2}\le M\log T$.
% At an atom, a small threshold error can change an action with positive
% probability, so this action-comparison argument is unavailable.
% Instead, the sampled conditional compensation is at most twice its
% population-threshold counterpart plus $C/K_t$, and the inventory proof
% controls the sampled path itself. The atomic guarantee therefore
% accounts for the states visited by the sampled policy directly.
For uniform prices there is a stronger special identity: at every
state the sampled inventory transition probabilities equal those of
$\pi^\circ$. For any fixed $K\ge1$, the regret is exactly
$(1+1/K)\Reg(\pi^\circ;B,T,0)$ and this fixed-$K$ exception is proved in
Appendix~\ref{app:basic-facts}.

% Replacement overview approved by the author; original Section 4.4 is retained below.
\begingroup

\subsection{Proof ideas: compensation and inventory dynamics}
\label{sec:proof-ideas}

At time step $t$, give hindsight the online policy's current inventory
$y$. Hindsight can choose whichever feasible buy, hold, or sell action
yields the highest profit. Now force its current action to be the online
policy's choice $x$, and let it optimize all subsequent trades from
$y+x$. The loss from this restriction is the \emph{marginal compensation}
\begin{align*}
 \partial R_t(y,x;\mathbf p_{\ge t})
 =V_t^{\mathsf{hs}}(y;\mathbf p_{\ge t})
 -\left(-p_tx+V_{t+1}^{\mathsf{hs}}(y+x;\mathbf p_{>t})\right)
 \ge0.
\end{align*}

\begin{proposition}
\label{prop:main-decomposition}
Fix $T,B$, an initial and terminal inventory $y_0$, and a feasible
online policy $\pi$.
\begin{enumerate}[label=$(\alph*)$]
\item For every realized price sequence and policy trajectory,
\begin{align*}
 V_1^{\mathsf{hs}}(B,T,y_0;\mathbf p)
 -\sum_{t=1}^T(-p_tx_t^\pi)
 =\sum_{t=1}^T
 \partial R_t(y_{t-1}^\pi,x_t^\pi;\mathbf p_{\ge t}).
\end{align*}
\item Consequently,
\begin{align*}
 \Reg(\pi;B,T,y_0)
 =\sum_{t=1}^T\E\left[
 \E\left[
 \partial R_t(y_{t-1}^\pi,x_t^\pi;\mathbf p_{\ge t})
 \mid\mathcal F_{t-1}
 \right]\right].
\end{align*}
\end{enumerate}
\end{proposition}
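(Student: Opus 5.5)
The plan is a telescoping argument along the online trajectory, followed by the tower property.

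For part (a), fix a realized price sequence $\mathbf p$ and the online trajectory $(y_t^\pi,x_t^\pi)_{t=1}^T$. Since $\pi$ is feasible, every visited state satisfies $y_t^\pi\in\mathcal Y_{t+1}$, so each continuation value $V_{t+1}^{\mathsf{hs}}(y_t^\pi;\mathbf p_{>t})$ is well defined and finite. By definition of the marginal compensation,
\begin{align*}
 \partial R_t(y_{t-1}^\pi,x_t^\pi;\mathbf p_{\ge t})
 =V_t^{\mathsf{hs}}(y_{t-1}^\pi;\mathbf p_{\ge t})
 -V_{t+1}^{\mathsf{hs}}(y_t^\pi;\mathbf p_{>t})
 +p_tx_t^\pi .
\end{align*}
Summing over $t=1,\ldots,T$, the continuation values telescope to $V_1^{\mathsf{hs}}(y_0^\pi;\mathbf p)-V_{T+1}^{\mathsf{hs}}(y_T^\pi;\varnothing)$. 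The terminal term vanishes because feasibility forces $y_T^\pi=y_0$ and $V_{T+1}^{\mathsf{hs}}(y_0;\varnothing)=0$. The first term is exactly $V_1^{\mathsf{hs}}(B,T,y_0;\mathbf p)$, since $y_0^\pi=y_0$ and the time-$1$ continuation problem coincides with the full hindsight problem. The leftover terms $\sum_t p_tx_t^\pi$ equal $-\sum_t(-p_tx_t^\pi)$, which gives the identity.

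I will also check that $V_t^{\mathsf{hs}}$ is the time-$t$ analogue of $V_{t+1}^{\mathsf{hs}}$, namely the maximal profit over steps $t,\ldots,T$ from $y$ to $y_0$. In that case $V_t^{\mathsf{hs}}(y)\ge -p_tx+V_{t+1}^{\mathsf{hs}}(y+x)$ for every $x\in\mathcal A_t(y)$, which is the dynamic programming recursion and confirms $\partial R_t\ge0$. This nonnegativity is not needed for the identity itself.

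For part (b), take expectations of the pathwise identity over prices and the policy randomization. Every term is bounded in absolute value by $2T$, because profits lie in $[-T,T]$, so integrability is automatic and linearity gives $\Reg(\pi)=\sum_t\E[\partial R_t]$. Writing each $\E[\partial R_t]=\E[\E[\partial R_t\mid\mathcal F_{t-1}]]$ by the tower property yields the stated form.

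The conditioning on $\mathcal F_{t-1}$ rather than $\mathcal F_t$ is deliberate. Later arguments will use that, given $\mathcal F_{t-1}$, the price $p_t$ and the future $\mathbf p_{>t}$ are fresh i.i.d. and independent of $y_{t-1}^\pi$. I will not need that fact here, only measurability.

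The only real obstacle is bookkeeping. One has to confirm that the feasible-action set $\mathcal A_t$ keeps each $y_t^\pi$ inside the domain $\mathcal Y_{t+1}$, so that no continuation value is $-\infty$. One also has to pin down the boundary conventions $V_1$ and $V_{T+1}$ so that the telescoping endpoints match exactly.
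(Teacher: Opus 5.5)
Your proposal is correct and follows the paper's own proof. The paper likewise substitutes $y_t^\pi=y_{t-1}^\pi+x_t^\pi$ into the compensations and telescopes the continuation values, using $V_{T+1}^{\mathsf{hs}}(y_0;\varnothing)=0$ for part (a), then takes expectations and conditions on $\mathcal F_{t-1}$ for part (b). Your extra checks — that $\mathcal A_t$ keeps $y_t^\pi\in\mathcal Y_{t+1}$, and that each term is bounded and hence integrable — are bookkeeping the paper leaves implicit.
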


\begin{proof}
Substitute the inventory update $y_t^\pi=y_{t-1}^\pi+x_t^\pi$
into the definition of compensation and sum over time. Successive
continuation values cancel, and the terminal value
$V_{T+1}^{\mathsf{hs}}(y_0;\varnothing)$ is zero, giving part (a).
Taking expectations and conditioning each summand on the past gives
part (b). The inner expectation fixes the current inventory and
averages over the current price, actual future, and fresh policy
randomization.
\end{proof}

This is the compensated-coupling decomposition of
\citet{vera2021bayesian} and \citet{besbes2025}. We now explain how it
% is used for \STARS{}, with zero initial and terminal inventory and the
% capacity scaling of Theorem~\ref{thm:stars-upper}. The difficulty is
is used for \STARS{}, with zero initial and terminal inventory and
capacity proportional to the horizon, as in
Theorem~\ref{thm:stars-upper}.

The difficulty is
that compensation need not become small merely because many periods
remain. Even for uniform prices, every feasible online policy has
expected compensation bounded below by a positive constant whenever
its inventory is zero and at least two decisions remain
% (Appendix~\ref{app:nonvanishing-proof}). We must therefore account
(Appendix~\ref{app:nonvanishing-proof}). A worst-case bound over all
inventory levels therefore cannot give sublinear regret. We must account
for both the loss incurred at an inventory level and how often the
policy visits it.

To see how compensation arises, consider a state where all three
actions are feasible. Let $b^{\mathsf{hs}}$ and $s^{\mathsf{hs}}$ be
hindsight's buying and selling thresholds, and let $\widehat b$ and
$\widehat s$ be the corresponding scenario estimates used by \STARS{}.
These are the marginal continuation values defined in
% Section~\ref{sec:stars}. In Figure~\ref{fig:compensation-thresholds}(a),
Section~\ref{sec:stars}: $b^{\mathsf{hs}}$ is the value of adding the
next unit, whereas $s^{\mathsf{hs}}$ is the value lost by giving up
the last stored unit.

In Figure~\ref{fig:compensation-thresholds}(a),
\STARS{} buys at price $p_t$, while hindsight would hold. The extra
unit is worth only $b^{\mathsf{hs}}$ in the actual future, so compensation
is $p_t-b^{\mathsf{hs}}$. In panel (b), \STARS{} sells a unit worth
$s^{\mathsf{hs}}$ for only $p_t$, losing $s^{\mathsf{hs}}-p_t$.
Compensation thus depends on both how often decisions disagree and
how costly those disagreements are.

% The retained figure uses semantic colors, not revision colors.
\begingroup
\color{black}
\begin{figure}[htbp]
\centering
\begingroup
\definecolor{compblue}{HTML}{0072B2}
\definecolor{comporange}{HTML}{D55E00}
\definecolor{compteal}{HTML}{009E73}
\definecolor{compink}{HTML}{19232D}
\begin{tikzpicture}
\begin{groupplot}[
    group style={group size=1 by 2,vertical sep=1.15cm},
    width=0.91\linewidth,height=2.05cm,scale only axis,
    xmin=0,xmax=1.04,ymin=0,ymax=1,
    hide axis,clip=false,
    every axis plot/.append style={no marks},
    every node/.append style={font=\small},
]
\nextgroupplot
\node[anchor=west,font=\small\bfseries] at (axis cs:0,1.30)
    {(a) Hindsight holds; \STARS{} buys};
\fill[comporange,opacity=0.10]
    (axis cs:0.30,0.27) rectangle (axis cs:0.55,0.66);
\draw[compink,-{Stealth[length=4pt]},line width=0.65pt]
    (axis cs:0,0.45) -- (axis cs:1.01,0.45);
\node[anchor=west,font=\footnotesize] at (axis cs:1.01,0.45) {$p$};
\draw[compblue,line width=1.05pt]
    (axis cs:0.30,0.25) -- (axis cs:0.30,0.67);
\node[compblue,anchor=north] at (axis cs:0.30,0.21) {$b^{\mathsf{hs}}$};
\draw[comporange,densely dashed,line width=1.05pt]
    (axis cs:0.55,0.25) -- (axis cs:0.55,0.67);
\node[comporange,anchor=north] at (axis cs:0.55,0.21) {$\widehat b$};
\draw[compblue!45,line width=0.8pt]
    (axis cs:0.70,0.25) -- (axis cs:0.70,0.67);
\node[compblue!65,anchor=north] at (axis cs:0.70,0.21) {$s^{\mathsf{hs}}$};
\draw[comporange!45,densely dashed,line width=0.8pt]
    (axis cs:0.82,0.25) -- (axis cs:0.82,0.67);
\node[comporange!65,anchor=north] at (axis cs:0.82,0.21) {$\widehat s$};
\fill[compink] (axis cs:0.43,0.45) circle (2.2pt);
\node[compink,anchor=north] at (axis cs:0.43,0.21) {$p_t$};
\draw[compteal,{Stealth[length=3.5pt]}-{Stealth[length=3.5pt]},line width=0.9pt]
    (axis cs:0.30,0.83) -- (axis cs:0.43,0.83);
\node[compteal,anchor=south,font=\footnotesize]
    at (axis cs:0.365,0.87) {loss $p_t-b^{\mathsf{hs}}$};

\nextgroupplot
\node[anchor=west,font=\small\bfseries] at (axis cs:0,1.30)
    {(b) Hindsight holds; \STARS{} sells};
\fill[comporange,opacity=0.10]
    (axis cs:0.45,0.27) rectangle (axis cs:0.70,0.66);
\draw[compink,-{Stealth[length=4pt]},line width=0.65pt]
    (axis cs:0,0.45) -- (axis cs:1.01,0.45);
\node[anchor=west,font=\footnotesize] at (axis cs:1.01,0.45) {$p$};
\draw[comporange!45,densely dashed,line width=0.8pt]
    (axis cs:0.10,0.25) -- (axis cs:0.10,0.67);
\node[comporange!65,anchor=north] at (axis cs:0.10,0.21) {$\widehat b$};
\draw[compblue!45,line width=0.8pt]
    (axis cs:0.20,0.25) -- (axis cs:0.20,0.67);
\node[compblue!65,anchor=north] at (axis cs:0.20,0.21) {$b^{\mathsf{hs}}$};
\draw[comporange,densely dashed,line width=1.05pt]
    (axis cs:0.45,0.25) -- (axis cs:0.45,0.67);
\node[comporange,anchor=north] at (axis cs:0.45,0.21) {$\widehat s$};
\draw[compblue,line width=1.05pt]
    (axis cs:0.70,0.25) -- (axis cs:0.70,0.67);
\node[compblue,anchor=north] at (axis cs:0.70,0.21) {$s^{\mathsf{hs}}$};
\fill[compink] (axis cs:0.58,0.45) circle (2.2pt);
\node[compink,anchor=north] at (axis cs:0.58,0.21) {$p_t$};
\draw[compteal,{Stealth[length=3.5pt]}-{Stealth[length=3.5pt]},line width=0.9pt]
    (axis cs:0.58,0.83) -- (axis cs:0.70,0.83);
\node[compteal,anchor=south,font=\footnotesize]
    at (axis cs:0.64,0.87) {loss $s^{\mathsf{hs}}-p_t$};
\end{groupplot}
\end{tikzpicture}

\smallskip
{\footnotesize
\textcolor{compblue}{Solid: hindsight thresholds}\qquad
\textcolor{comporange}{Dashed: \STARS{} thresholds}\qquad
\textcolor{compink}{Dot: current price}}
\endgroup
\caption{Marginal compensation at a fixed inventory with all three actions
feasible, after fixing the actual and simulated futures. The shaded interval
is where the indicated threshold disagreement changes the action. }
\label{fig:compensation-thresholds}
\end{figure}

\endgroup

The same continuation values also determine inventory movement. To
explain this connection, first consider the mean-threshold rule
$\pi^\circ$ from Section~\ref{sec:stars}, which replaces scenario
% averages by their exact expectations. This separates the effect of
% inventory from simulation error.
averages by their exact expectations. This is still an online rule:
it averages over possible futures without observing the future that
will occur. Using this rule first lets us study inventory movement
without also tracking simulation error.

% Uniform prices on $[0,1]$ provide the simplest example. Remove the
% upper capacity and let the number of future periods grow, while
% retaining the requirement to end with zero inventory. The limiting
% expected value of increasing inventory from $y-1$ to $y$, denoted by
% $\lambda_\infty(y)$, satisfies
Uniform prices on $[0,1]$ provide the simplest example. Remove the
upper capacity, fix an inventory level $y\ge1$, and let the number
of future periods grow. Each finite continuation problem still has
to end with zero inventory. Let $\lambda_\infty(y)$ denote the
limiting expected extra profit from starting with $y$ units rather
than $y-1$. For large $y$, we obtain
\begin{align*}
 \lambda_\infty(y)
 =\frac12+\frac{1}{4y}+O(y^{-2})
 \qquad\text{as }y\to\infty.
\end{align*}
An additional unit protects against high-price selling opportunities
arriving before low-price replenishments. This protection makes its
% value exceed the median, but becomes less valuable at higher
% inventories. In the limiting calculation, the buying probability is
value exceed the median $1/2$. The premium is smaller when inventory
is already plentiful, because another unit then offers less protection.

This premium also explains why the policy tends to build inventory
when little is available. For a uniform price on $[0,1]$, a buying
threshold $b$ gives buying probability $b$, while a selling threshold
$s$ gives selling probability $1-s$. In the limiting calculation,
these probabilities are
% $\lambda_\infty(y+1)$ and the selling probability is
$\lambda_\infty(y+1)$ and
% $1-\lambda_\infty(y)$. Their difference is
% $1/(2y)+O(y^{-2})$: the continuation values generate an upward inventory
% bias that weakens as inventory increases.
$1-\lambda_\infty(y)$, respectively. Since a purchase adds one unit
and a sale removes one, the expected inventory change is their
difference, $1/(2y)+O(y^{-2})$. Thus the continuation values generate
an upward inventory bias that weakens as inventory increases.
Appendix~\ref{app:stars-continuous} controls the difference between
these limiting values and the actual finite-horizon thresholds.

% The proof must then quantify the costly visits that still occur
% despite this upward bias. For uniform prices, the principal
% low-inventory contribution to compensation is controlled by
% $(y+1)^{-2}$. A visit to zero inventory receives weight one, while
% visits at larger inventories receive smaller weights.
An upward bias does not prevent the policy from returning to low
inventory. To bound regret, we must also control the cost of these
visits. For uniform prices, the compensation estimates give a
low-inventory term bounded by a constant times $(y+1)^{-2}$, in
addition to terms associated with the finite horizon and capacity.
We therefore count inventory visits with weight $(y+1)^{-2}$:
a visit to zero receives weight one, a visit to one receives weight
$1/4$, and visits to larger inventories receive progressively less
weight. These weights bound a contribution to expected compensation;
they are not the realized loss on each visit.

% Appendix~\ref{app:stars-continuous} uses the threshold estimates to
% prove that
% To bound this weighted sum, we construct a function of inventory whose
% expected one-period increase accounts for the current cost $(y+1)^{-2}$,
% up to finite-horizon and capacity errors. Summing these increases through
% the penultimate decision cancels the intermediate values along the
% inventory path. The remaining endpoint contribution is bounded because
% inventory is at most one before the final decision. The final decision
% adds at most one to the weighted sum. Appendix~\ref{app:stars-continuous} bounds
% the accumulated errors by $O(\log^2 T)$, giving
% $\E\sum_{t=1}^T(y_{t-1}^{\pi^\circ}+1)^{-2}=O(\log^2 T)$.
The proof bounds this weighted count by constructing an increasing
function $\Phi$ of inventory. Its expected one-period increase is
at least the current weight $(y+1)^{-2}$, apart from finite-horizon
and capacity error terms. This turns a cost incurred repeatedly
along the path into changes of a single function. Taking expectations
and summing through the penultimate decision, the intermediate
function values cancel, leaving only
$\E\Phi(y_{T-1}^{\pi^\circ})-\Phi(0)$. To finish with zero inventory,
the policy can have at most one unit before the final decision,
so this endpoint contribution is bounded by a constant. The final
decision itself adds at most one to the weighted count.

It remains to control the error terms in this accounting argument.
The threshold and inventory-moment estimates in
Appendix~\ref{app:stars-continuous} bound their sum in expectation by
$O(\log^2 T)$. Consequently,
\begin{align*}
 \E\sum_{t=1}^T(y_{t-1}^{\pi^\circ}+1)^{-2}=O(\log^2 T).
\end{align*}
The remaining compensation terms, including those associated with
terminal liquidation and capacity, contribute $O(\log T)$.
Together, these estimates give the uniform-price regret bound.

% A median gap changes the finite-horizon reduction in continuation
% values. Consider the canonical distribution $F_{0,\Delta}$, which is
% uniform on two intervals separated by a positive gap. With a long
% future, the expected marginal value lies above the upper gap edge
% $a^+$. With finitely many future periods, however, some price
% sequences give the additional unit a marginal value on the lower
% side of the gap. The resulting reduction in its expected value
% includes the gap width, not just price differences within the
% occupied intervals. Lower continuation values make the policy less
% willing to buy and more willing to sell.
Now introduce a median gap. Consider the canonical distribution
$F_{0,\Delta}$, which is uniform on two intervals separated by a
gap of width $\Delta$. For each fixed positive inventory, the limiting
expected marginal value lies above the upper gap edge $a^+$.
With limited time left, however, the additional unit can be much
less valuable. For some future price sequences, its marginal value
lies in the lower price interval, at or below $a^-$.

For example, take prices uniform on $[0,0.4]\cup[0.6,1]$ and suppose
only one future period remains. Starting with one unit rather than
zero adds exactly the final price to hindsight profit: the unit must
be sold, whereas with zero inventory the only feasible action is to
hold. Thus the realized marginal value is the final price itself.
It is at most $0.4$ with probability $1/2$, and its expectation is
$0.5$, below the upper gap edge $0.6$. Although no price lies inside
the gap, an expected marginal value can. With more future periods,
the marginal value reflects an entire optimal trading sequence;
we must quantify how often it still falls below the gap.

The important feature is that a marginal value on the lower side
falls short of $a^+$ by at least the full gap width $\Delta$.
In the uncapped continuation problem, the reduction from the
long-future expected marginal value includes exactly $\Delta$
times the probability of falling on that side, in
addition to the effects of price differences within the two
intervals. Lower expected marginal values lower both buying and
selling thresholds, making the mean-threshold rule less willing to
buy and more willing to sell.

% More precisely, write $\tau=T-t\ge1$ for the number of future prices.
% At low inventories away from the upper capacity, apart from a negligible
% capacity correction, the
% finite-horizon reduction is bounded by a constant multiple of $y/\tau$
% without a gap and of $\Delta y/\sqrt{\tau}+y/\tau$ with a gap.
% Comparing these corrections with the inventory premium of order $1/y$
% gives an inventory scale of order $\sqrt{\tau}$ without a gap, and
% $\tau^{1/4}$ for a fixed positive gap and sufficiently large
% $\tau$. Below a sufficiently small constant multiple of each scale,
% the estimates guarantee a strong upward inventory bias. These scales
% describe where this protection can be established; they are neither
% inventory targets nor regret bounds.
The proof quantifies this effect as follows. Write $\tau=T-t\ge1$
for the number of future prices, and let $\lambda_\tau^B(y)$ be the
expected marginal value of the $y$th unit with this finite future
and capacity $B$. At low feasible inventories away from the upper
capacity, Appendix~\ref{app:stars-continuous} bounds
$\lambda_\infty(y)-\lambda_\tau^B(y)$ by a constant multiple of
% \begin{align*}
%  \frac{\Delta y}{\sqrt{\tau}}+\frac{y}{\tau},
% \end{align*}
$\Delta y/\sqrt{\tau}+y/\tau$,
apart from a negligible capacity correction. The first term arises
when the hindsight marginal value falls below the gap; this event
has probability bounded by a constant times
$y/\sqrt{\tau}$. Without a gap, this term vanishes and only the
$y/\tau$ correction remains.

To see what this means for inventory movement, compare the reduction
with the long-future inventory premium, which is of order $1/y$ in
both examples. Without a gap, $y/\tau$ is small relative to $1/y$
when $y$ is small relative to $\sqrt{\tau}$. With a fixed positive
gap, the additional term $\Delta y/\sqrt{\tau}$ instead requires
$y$ to be small relative to $\tau^{1/4}$. More precisely, below a
sufficiently small constant multiple of the respective scale, the
estimates preserve an upward inventory bias of order $1/y$.
The comparison applies for sufficiently large $\tau$, with constants
depending on the fixed distribution. These are ranges where the
proof guarantees inventory protection, not target inventory levels;
the comparison alone is not a regret bound.

% Appendix~\ref{app:stars-continuous} quantifies how this effect weakens
% inventory protection and obtains the positive-gap upper bounds in
Appendix~\ref{app:stars-continuous} uses these inventory-dependent
estimates to bound accumulated compensation and obtain the
% positive-gap upper bounds in
% Theorem~\ref{thm:stars-upper}. It also extends the analysis to the
positive-gap upper bounds in Theorem~\ref{thm:stars-upper}. For
$F_{0,\Delta}$ with a fixed positive gap, the resulting bound is
$O(T^{1/4}\log T)$, compared with $O(\log^2 T)$ for uniform prices
without a gap. The appendix also extends the analysis to the
continuous class in Definition~\ref{def:continuous-family}, where
$\beta$ describes the mass near the median edges.

Atomic prices can instead produce a logarithmic inventory reserve.
Consider the equal-mass example on $\{0.2,0.4,0.6,0.8\}$, again
temporarily removing the upper capacity. At price $0.6$, the
mean-threshold rule may hold or buy when inventory is low, but sells
% when inventory is plentiful. The switching inventory grows
when inventory is plentiful. The inventory level separating these
behaviors grows
logarithmically with the number of future periods. The rule continues
to sell at $0.8$ whenever inventory is available, so its reluctance
to sell at $0.6$ preserves stock for more valuable opportunities.
This reserve emerges from the continuation values rather than being
specified as an input. Appendices~\ref{app:stars-atomic}
and~\ref{app:finite-atomic} combine the compensation and inventory
estimates to obtain the $O(\log T)$ guarantee for the finite-atomic
class.

Finally, the guarantees concern finite-scenario \STARS{}, not only
the mean-threshold rule used to explain these mechanisms. For
continuous prices, the comparison in Section~\ref{sec:stars-performance}
adds at most $O(\log T)$ regret under the prescribed sampling schedule.
For atomic prices, the proof controls compensation along the sampled
policy's own inventory path. Both arguments account for the effect
of sampling on subsequent decisions and inventories.
\par
\endgroup

\section{Numerical experiments}
\label{sec:numerics}

We perform numerical experiments to study the performance of \STARS{}
and the effect of the initial inventory. We consider the canonical
positive-gap distributions $F_{\beta,\Delta}$ in
Example~\ref{ex:positive-gap}, with $\Delta=0.2$ and
$\beta\in\{0,1,2\}$. We vary the horizon over
$T\in\{10,20,50,100,200,500,1000,2000,5000,10{,}000\}$ and set
$B=T/2$. We consider two initial states: zero inventory, $y_0=0$, and an interior state, $y_0=\lfloor B/2\rfloor$.

We compare three policies. \textsf{Dynamic Programming (DP)} computes the optimal online value function by backward induction, using the known price distribution and enforcing all storage and terminal constraints. For the distributions considered here, the expectations in the recursion are evaluated analytically without any price discretization.
The \textsf{DP} provides the optimal online benchmark.
\STARS{} uses Algorithm~\ref{alg:stars} with $K_t=50$ fresh future
scenarios at each decision epoch with a choice of actions. %\footnote{Theorem~\ref{thm:stars-upper} uses $K_t\ge(T-t+1)^2$; here we assess practical performance with a fixed scenario budget.} 
Finally, certainty equivalence (\textsf{CE}) uses the fluid quantile $q_t^{\textsf{CE}}(y)=1/2-(y-y_0)/(2(T-t+1))$.
It tentatively buys below $F^{-1}(q_t^{\textsf{CE}}(y))$ and sells above
it, then projects the action onto the feasible set, as in
Section~\ref{ce:section}. 

For each distribution, horizon, and initial state, we generate $200$
independent price sequences and evaluate all three policies on the same
sequences. On each sequence, we calculate regret as the optimal
feasible hindsight profit minus the policy's realized profit. We plot the sample mean
of these paired differences and $95\%$ confidence
intervals across replications. Both axes are linear in all figures.

We first compare the three policies when $y_0=y_T=0$ in Figure~\ref{fig:numerics-zero}. Across the three distributions, \STARS{} tracks the optimal online \textsf{DP} substantially more
closely than \textsf{CE} does, and the separation from \textsf{CE} grows over the plotted horizons. These observations complement Figure~\ref{fig:dp-boundary-fit}, where the fitted \textsf{DP} curves are consistent with the polynomial exponent $(\beta+1)/(2\beta+4)$ in Theorems~\ref{lb:thm-positive-gap} and~\ref{thm:stars-upper}. The fixed-budget \STARS{} policy performs close to this benchmark. In contrast, \textsf{CE} exhibits growth close to $\sqrt T$ over the tested horizons.

\begin{figure}[htbp]
    \centering
    % Native PGFPlots input. Source: full-20260924-v1, 200 paired paths per setting.
\begingroup
\definecolor{numDP}{HTML}{0072B2}
\definecolor{numSTARS}{HTML}{D55E00}
\definecolor{numCE}{HTML}{009E73}
\pgfplotsset{
  num series/.style={line width=0.7pt,mark size=1.5pt},
  num errors/.style={error bars/.cd,y dir=both,y explicit,
    error bar style={line width=0.45pt},
    error mark options={rotate=90,mark size=1.4pt}}
}
\begin{tikzpicture}
\begin{groupplot}[
  group style={group size=3 by 1,horizontal sep=0.65cm,ylabels at=edge left},
  width=0.275\textwidth,height=3.8cm,scale only axis,
  xmode=normal,ymode=normal,scaled ticks=false,
  xmin=0,xmax=10500,ymin=0,ymax=47,
  xtick={0,5000,10000},xticklabels={0,5000,10000},
  ytick={0,10,20,30,40},
  grid=major,grid style={black!10},
  axis line style={black!55},tick style={black!55},
  tick label style={font=\fontsize{8}{9.5}\selectfont},
  label style={font=\fontsize{9}{10.5}\selectfont},
  title style={font=\fontsize{9}{10.5}\selectfont},
  xlabel={Horizon $T$},ylabel={Expected regret},
  legend style={font=\fontsize{8.5}{10}\selectfont,draw=none,fill=none,
    at={(0.5,-0.35)},anchor=north,legend columns=-1,
    /tikz/every even column/.append style={column sep=8pt}},
  legend cell align=left
]
\nextgroupplot
\addplot+[num series,color=numDP,mark=*,forget plot,num errors]
  table[x=T,y=mean,y error minus=errlow,y error plus=errhigh]
  {figures/numerics/algorithm_zero_beta_0_dp.dat};
\addplot+[num series,color=numSTARS,mark=square*,forget plot,num errors]
  table[x=T,y=mean,y error minus=errlow,y error plus=errhigh]
  {figures/numerics/algorithm_zero_beta_0_stars.dat};
\addplot+[num series,color=numCE,mark=triangle*,forget plot,num errors]
  table[x=T,y=mean,y error minus=errlow,y error plus=errhigh]
  {figures/numerics/algorithm_zero_beta_0_ce.dat};
\nextgroupplot
\addplot+[num series,color=numDP,mark=*,num errors]
  table[x=T,y=mean,y error minus=errlow,y error plus=errhigh]
  {figures/numerics/algorithm_zero_beta_1_dp.dat};
\addlegendentry{\textsf{Dynamic Programming (DP)}}
\addplot+[num series,color=numSTARS,mark=square*,num errors]
  table[x=T,y=mean,y error minus=errlow,y error plus=errhigh]
  {figures/numerics/algorithm_zero_beta_1_stars.dat};
\addlegendentry{\textsf{STARS} ($K=50$)}
\addplot+[num series,color=numCE,mark=triangle*,num errors]
  table[x=T,y=mean,y error minus=errlow,y error plus=errhigh]
  {figures/numerics/algorithm_zero_beta_1_ce.dat};
\addlegendentry{\textsf{CE}}
\nextgroupplot
\addplot+[num series,color=numDP,mark=*,forget plot,num errors]
  table[x=T,y=mean,y error minus=errlow,y error plus=errhigh]
  {figures/numerics/algorithm_zero_beta_2_dp.dat};
\addplot+[num series,color=numSTARS,mark=square*,forget plot,num errors]
  table[x=T,y=mean,y error minus=errlow,y error plus=errhigh]
  {figures/numerics/algorithm_zero_beta_2_stars.dat};
\addplot+[num series,color=numCE,mark=triangle*,forget plot,num errors]
  table[x=T,y=mean,y error minus=errlow,y error plus=errhigh]
  {figures/numerics/algorithm_zero_beta_2_ce.dat};
\end{groupplot}
\def\numHeaderZeroTitle{(a) $\beta=0$}
\def\numHeaderOneTitle{(b) $\beta=1$}
\def\numHeaderTwoTitle{(c) $\beta=2$}
\input{figures/numerics/density_headers}
\end{tikzpicture}
\endgroup
    \caption{Policy regret from zero initial inventory, $y_0=y_T=0$, for
    $F_{\beta,0.2}$ with $B=T/2$.}
    \label{fig:numerics-zero}
\end{figure}

We next repeat the comparison with $y_0=y_T=\lfloor B/2\rfloor$, keeping
the distributions, horizons, and capacity unchanged (Figure~\ref{fig:numerics-interior}). Comparing Figures~\ref{fig:numerics-zero} and~\ref{fig:numerics-interior}
shows that the same positive-gap distribution can be substantially easier
when the initial inventory provides slack from both boundaries. This is
consistent with the distinction between the boundary lower bound in
Theorem~\ref{lb:thm-positive-gap} and the achievable interior rates in
Proposition~\ref{prop:interior-transfer}. Taken together, the experiments support the importance of
initial inventory and show that scenario averaging can perform close to
optimal online control with a modest, distribution-independent number
of scenarios over the tested range.

\begin{figure}[htbp]
    \centering
    % Native PGFPlots input. Source: full-20260924-v1, 200 paired paths per setting.
\begingroup
\definecolor{numDP}{HTML}{0072B2}
\definecolor{numSTARS}{HTML}{D55E00}
\definecolor{numCE}{HTML}{009E73}
\pgfplotsset{
  num series/.style={line width=0.7pt,mark size=1.5pt},
  num errors/.style={error bars/.cd,y dir=both,y explicit,
    error bar style={line width=0.45pt},
    error mark options={rotate=90,mark size=1.4pt}}
}
\begin{tikzpicture}
\begin{groupplot}[
  group style={group size=3 by 1,horizontal sep=0.65cm,ylabels at=edge left},
  width=0.275\textwidth,height=3.8cm,scale only axis,
  xmode=normal,ymode=normal,scaled ticks=false,
  xmin=0,xmax=10500,ymin=0,ymax=14,
  xtick={0,5000,10000},xticklabels={0,5000,10000},
  ytick={0,3,6,9,12},
  grid=major,grid style={black!10},
  axis line style={black!55},tick style={black!55},
  tick label style={font=\fontsize{8}{9.5}\selectfont},
  label style={font=\fontsize{9}{10.5}\selectfont},
  title style={font=\fontsize{9}{10.5}\selectfont},
  xlabel={Horizon $T$},ylabel={Expected regret},
  legend style={font=\fontsize{8.5}{10}\selectfont,draw=none,fill=none,
    at={(0.5,-0.35)},anchor=north,legend columns=-1,
    /tikz/every even column/.append style={column sep=8pt}},
  legend cell align=left
]
\nextgroupplot
\addplot+[num series,color=numDP,mark=*,forget plot,num errors]
  table[x=T,y=mean,y error minus=errlow,y error plus=errhigh]
  {figures/numerics/algorithm_interior_beta_0_dp.dat};
\addplot+[num series,color=numSTARS,mark=square*,forget plot,num errors]
  table[x=T,y=mean,y error minus=errlow,y error plus=errhigh]
  {figures/numerics/algorithm_interior_beta_0_stars.dat};
\addplot+[num series,color=numCE,mark=triangle*,forget plot,num errors]
  table[x=T,y=mean,y error minus=errlow,y error plus=errhigh]
  {figures/numerics/algorithm_interior_beta_0_ce.dat};
\nextgroupplot
\addplot+[num series,color=numDP,mark=*,num errors]
  table[x=T,y=mean,y error minus=errlow,y error plus=errhigh]
  {figures/numerics/algorithm_interior_beta_1_dp.dat};
\addlegendentry{\textsf{Dynamic Programming (DP)}}
\addplot+[num series,color=numSTARS,mark=square*,num errors]
  table[x=T,y=mean,y error minus=errlow,y error plus=errhigh]
  {figures/numerics/algorithm_interior_beta_1_stars.dat};
\addlegendentry{\textsf{STARS} ($K=50$)}
\addplot+[num series,color=numCE,mark=triangle*,num errors]
  table[x=T,y=mean,y error minus=errlow,y error plus=errhigh]
  {figures/numerics/algorithm_interior_beta_1_ce.dat};
\addlegendentry{\textsf{CE}}
\nextgroupplot
\addplot+[num series,color=numDP,mark=*,forget plot,num errors]
  table[x=T,y=mean,y error minus=errlow,y error plus=errhigh]
  {figures/numerics/algorithm_interior_beta_2_dp.dat};
\addplot+[num series,color=numSTARS,mark=square*,forget plot,num errors]
  table[x=T,y=mean,y error minus=errlow,y error plus=errhigh]
  {figures/numerics/algorithm_interior_beta_2_stars.dat};
\addplot+[num series,color=numCE,mark=triangle*,forget plot,num errors]
  table[x=T,y=mean,y error minus=errlow,y error plus=errhigh]
  {figures/numerics/algorithm_interior_beta_2_ce.dat};
\end{groupplot}
\def\numHeaderZeroTitle{(a) $\beta=0$}
\def\numHeaderOneTitle{(b) $\beta=1$}
\def\numHeaderTwoTitle{(c) $\beta=2$}
\input{figures/numerics/density_headers}
\end{tikzpicture}
\endgroup
    \caption{Policy regret from an interior state,
    $y_0=y_T=\lfloor B/2\rfloor$, with the same distributions and
    experimental settings as Figure~\ref{fig:numerics-zero}.}
    \label{fig:numerics-interior}
\end{figure}
\section{Conclusion}
\label{sec:conclusion}

In this work, we studied online trading with storage and identified initial
inventory and local price geometry as fundamental drivers of regret.
For every $\beta\ge0$,
% we constructed a bounded-density distribution with
% a median gap for which
{we proved that, for every fixed distribution in the regular
continuous positive-gap family,}
every online policy starting and ending with zero
inventory, with capacity proportional to $T$, incurs
$\Omega(T^{(\beta+1)/(2\beta+4)})$ regret. In contrast, linear
slack from both boundaries allows resource-allocation guarantees to
carry over. Thus, a median gap can create a new polynomial penalty when
the trader must build inventory before exploiting selling opportunities.

We analyzed \STARS{}, which averages optimized continuation values over
simulated future price sequences. Its holding region arises endogenously,
without distribution-specific fine-tuning. With sufficiently many
scenarios, the same policy achieves logarithmic regret for finite atomic
prices and the stated regret spectra for regular continuous prices.
% On the canonical positive-gap instances,
{For every fixed distribution in the regular continuous
positive-gap class,}
these bounds match the lower
bound up to logarithmic factors. The
analysis combines marginal compensation with control of how often the
policy visits each inventory state. Numerically, \STARS{} performs close
to the optimal online dynamic program with only $50$ scenarios and
substantially improves on certainty equivalence in the positive-gap
instances considered.

These results suggest several follow-up questions. First, how many
scenarios are needed for near-optimal regret, and can that number depend
on the current inventory and the difference between candidate action
values? The numerical performance with a small fixed budget motivates
sharper sampling guarantees. Second, how does regret change when the
initial inventory and capacity grow on intermediate scales between zero
initial inventory and linear slack? Finally, it would be useful to study
unknown or temporally dependent prices, transaction costs, and storage
losses, to understand how these features change the value of holding
inventory.

\section*{AI usage disclosure}
GPT-5.6 Sol, Claude Fable 5, and GPT-6 Astra provided excellent research assistance.
They helped develop rigorous proofs for many of our results by formalizing proof details, checking intermediate arguments, and locating relevant concentration inequalities and convergence results. They also helped organize portions of the exposition. 
The authors developed the research question, model, main results and algorithmic ideas without much AI assistance. The authors take responsibility for the manuscript and any remaining errors.

\bibliographystyle{plainnat}
\begingroup
\setlength{\bibsep}{2pt}
\bibliography{references}

@article{mayne2000,
  title={Constrained model predictive control: Stability and optimality},
  author={Mayne, David Q and Rawlings, James B and Rao, Christopher V and Scokaert, Pierre OM},
  journal={Automatica},
  volume={36},
  number={6},
  pages={789--814},
  year={2000},
  publisher={Elsevier}
}

@article{mesbah2016,
  title={Stochastic model predictive control: An overview and perspectives for future research},
  author={Mesbah, Ali},
  journal={IEEE Control Systems Magazine},
  volume={36},
  number={6},
  pages={30--44},
  year={2016},
  publisher={IEEE}
}

@article{besbes2025,
  title={Dynamic resource allocation: Algorithmic design principles and spectrum of achievable performances},
  author={Besbes, Omar and Kanoria, Yash and Kumar, Akshit},
  journal={Operations Research},
  volume={73},
  number={3},
  pages={1273--1288},
  year={2025},
  publisher={INFORMS}
}

@article{vera2021bayesian,
  title={The bayesian prophet: A low-regret framework for online decision making},
  author={Vera, Alberto and Banerjee, Siddhartha},
  journal={Management Science},
  volume={67},
  number={3},
  pages={1368--1391},
  year={2021},
  publisher={INFORMS}
}

@article{bray2025,
  title={Logarithmic regret in multisecretary and online linear programs with continuous valuations},
  author={Bray, Robert L},
  journal={Operations Research},
  volume={73},
  number={4},
  pages={2188--2203},
  year={2025},
  publisher={INFORMS}
}

@article{arlotto2019uniformly,
  title={Uniformly bounded regret in the multisecretary problem},
  author={Arlotto, Alessandro and Gurvich, Itai},
  journal={Stochastic Systems},
  volume={9},
  number={3},
  pages={231--260},
  year={2019},
  publisher={INFORMS}
}

@article{bumpensanti2020re,
  title={A re-solving heuristic with uniformly bounded loss for network revenue management},
  author={Bumpensanti, Pornpawee and Wang, He},
  journal={Management Science},
  volume={66},
  number={7},
  pages={2993--3009},
  year={2020},
  publisher={INFORMS}
}

@article{balseiro2024survey,
  title={Survey of dynamic resource-constrained reward collection problems: Unified model and analysis},
  author={Balseiro, Santiago R and Besbes, Omar and Pizarro, Dana},
  journal={Operations Research},
  volume={72},
  number={5},
  pages={2168--2189},
  year={2024},
  publisher={INFORMS}
}

@article{talluri1999randomized,
  title={A randomized linear programming method for computing network bid prices},
  author={Talluri, Kalyan and Van Ryzin, Garrett},
  journal={Transportation science},
  volume={33},
  number={2},
  pages={207--216},
  year={1999},
  publisher={INFORMS}
}

@inproceedings{sinclair2023hindsight,
  title={Hindsight learning for mdps with exogenous inputs},
  author={Sinclair, Sean R and Frujeri, Felipe Vieira and Cheng, Ching-An and Marshall, Luke and Barbalho, Hugo De Oliveira and Li, Jingling and Neville, Jennifer and Menache, Ishai and Swaminathan, Adith},
  booktitle={International Conference on Machine Learning},
  pages={31877--31914},
  year={2023},
  organization={PMLR}
}

@article{chen2026feature,
  title={Feature-based dynamic matching},
  author={Chen, Yilun and Kanoria, Yash and Kumar, Akshit and Zhang, Wenxin},
  journal={Operations Research},
  volume={74},
  number={2},
  pages={788--803},
  year={2026},
  publisher={INFORMS}
}

@article{kumar2026dark,
  title={Feature-based Dynamic Matching in the Dark},
  author={Kumar, Akshit},
  journal={Available at SSRN 6839679},
  year={2026}
}

@inproceedings{correa2026trading,
  title={Trading prophets},
  author={Correa, Jos{\'e} and Cristi, Andr{\'e}s and D{\"u}tting, Paul and Hajiaghayi, MohammadTaghi and Olkowski, Jan and Schewior, Kevin},
  booktitle={Proceedings of the 24th ACM Conference on Economics and Computation},
  pages={490--510},
  year={2023}
}

@article{aggarwal2026asymmetric,
  title={Asymmetric Trading Prophets},
  author={Aggarwal, Gagan and Gupta, Anupam and Wang, Yifan and Zhao, Mingfei},
  journal={arXiv preprint arXiv:2607.01516},
  year={2026}
}

@inproceedings{azar2026initial,
  title={Trading prophets with initial capital},
  author={Azar, Yossi and Buchbinder, Niv and Levin, Roie and Vardi, Or},
  booktitle={2026 SIAM Symposium on Simplicity in Algorithms (SOSA)},
  pages={81--90},
  year={2026},
  organization={SIAM}
}

@inproceedings{kleinberg2005multiple,
  title={A multiple-choice secretary algorithm with applications to online auctions.},
  author={Kleinberg, Robert D},
  booktitle={SODA},
  volume={5},
  pages={630--631},
  year={2005}
}

@article{glynn1996liapounov,
  author  = {Glynn, Peter W. and Meyn, Sean P.},
  title   = {A {Liapounov} bound for solutions of the {Poisson} equation},
  journal = {The Annals of Probability},
  year    = {1996},
  volume  = {24},
  number  = {2},
  pages   = {916--931},
}

@article{jasin2012re,
  title={A re-solving heuristic with bounded revenue loss for network revenue management with customer choice},
  author={Jasin, Stefanus and Kumar, Sunil},
  journal={Mathematics of Operations Research},
  volume={37},
  number={2},
  pages={313--345},
  year={2012},
  publisher={INFORMS}
}
\endgroup
\clearpage
\appendix
% Distinguish appendix equations from main-text equations.
\setcounter{equation}{0}
\renewcommand{\theequation}{App.\arabic{equation}}
\renewcommand{\theHequation}{appendix.\arabic{equation}}
\etocdepthtag.toc{appendix}
\begingroup
\renewcommand{\contentsname}{Appendix contents}
\etocsettagdepth{main}{none}
\etocsettagdepth{appendix}{subsection}
\etocsetnexttocdepth{subsection}
\pdfbookmark[0]{Appendix contents}{appendix.contents}
\tableofcontents
\endgroup
\clearpage
\section{Proof of Proposition \ref{prop:interior-transfer}}
\label{app:interior-transfer}

We prove Proposition~\ref{prop:interior-transfer} by converting a
multisecretary policy into a trading policy. Selecting a price corresponds
to selling; rejecting it corresponds to buying. We show that the
resulting inventory stays close to its initial level with high
probability, and modify the policy on the remaining paths to preserve
feasibility. We then prove the selection-regret bounds using the local
median assumptions. For a positive median gap, the key step separates
each price into a two-price component and a within-side remainder.

\paragraph{From selections to balanced trades.}
Begin with an even horizon $n\ge2$ and a multisecretary quota $k=n/2$.
Let $a_t\in\{0,1\}$ denote selection, $b_t=k-\sum_{s<t}a_s$ the remaining
quota, and $\tau_t=n-t+1$. We specify the selection threshold below.
Every version uses two completion rules: reject when $b_t=0$, and
accept all remaining prices when $b_t=\tau_t$. These rules preserve
$0\le b_t\le\tau_t$, because an interior integer quota allows either
decision. In particular, $b_{n+1}=0$.

Define virtual trades $x_t=1-2a_t$ and displacement
$D_j=\sum_{t=1}^j x_t$, with $D_0=0$. Selection means selling and
rejection means buying. The identity
$D_{t-1}=2b_t-\tau_t$ gives both $|D_{t-1}|\le\tau_t$ and $D_n=0$.
Writing $S_k(\mathbf p)$ for the sum of the $k$ largest prices, we have
\begin{align*}
 R_n^{\rm virtual}
 &=2\sum_{t=1}^n p_ta_t-\sum_{t=1}^n p_t,
 &V_n^{\rm stat}
 &=2S_k(\mathbf p)-\sum_{t=1}^n p_t.
\end{align*}
Here $V_n^{\rm stat}$ removes only intermediate inventory bounds.
Balance requires equally many purchases and sales; exchange assigns
them to the smallest and largest prices, respectively. Buying the
smaller and selling the larger of any two held prices cannot reduce
profit, so an optimum uses $k$ of each. The identities give static
regret exactly twice the completed policy's multisecretary regret.

\paragraph{Specifying the quantile rule.}
We work in quantile space to treat continuous prices and atoms together. Use fresh
independent $W_t\sim\operatorname{Unif}[0,1]$ and set
$U_t=F(p_t^-)+W_t(F(p_t)-F(p_t^-))$. Then the $U_t$ are i.i.d.\ uniform
and independent of $\mathcal F_{t-1}$. Indeed, drawing a uniform $U_t$
first and setting $p_t=F^{-1}(U_t)$ has exactly this joint law: conditional
on an atom, the uniform draw lies uniformly within that atom's quantile
interval, whereas at non-atoms it equals $F(p_t)$ almost surely.
Include $W_t$ in the policy randomization defining $\mathcal F_t$.

Let $q_t$ be the pre-decision rejection quantile,
so that $a_t=\one\{U_t\ge q_t\}$ almost surely. At quota endpoints use
$q_t=1$ when $b_t=0$ and $q_t=0$ when $b_t=\tau_t$.
Write $q_t^{\textsf{CE}}=1-b_t/\tau_t$ and
$r_\tau=\sqrt{2\log(\tau)/\tau}$. At interior quota states use the following rules.
\begin{enumerate}[label=$(\alph*)$]
\item For finite atomic prices, use Algorithm~1 of \cite{besbes2025}
with all atom-boundary quantiles designated as gaps and
$\varepsilon_*=\min_i w_i$. Its main phase snaps the \textsf{CE}
quantile to a nearest designated gap within $r_{\tau_t}$, when one
exists, and otherwise uses the \textsf{CE} quantile. The final
$n\wedge L_F$ steps freeze the \textsf{CE} quantile at the start of
that phase, where
$L_F=\lfloor64\log(1/\varepsilon_*)/\varepsilon_*^2\rfloor$.
A one-point law has zero trading regret and can be handled by holding.
\item For continuous prices with $\Delta=0$, use
$q_t=q_t^{\textsf{CE}}$ throughout, and set $L_F=0$.
\item For continuous prices with $\Delta>0$, designate only the
median quantile $1/2$. Set $L_F=\lfloor256\log2\rfloor$, the cutoff
in Algorithm~1 of \cite{besbes2025} for two clusters of mass $1/2$.
When $\tau_t>L_F$, set $q_t=1/2$ if
$|q_t^{\textsf{CE}}-1/2|\le r_{\tau_t}$, and otherwise set
$q_t=q_t^{\textsf{CE}}$. In the final $n\wedge L_F$ steps, freeze
the \textsf{CE} quantile at that phase's start.
\end{enumerate}
The endpoint completion rules override every frozen or adjusted
threshold. Thus each policy selects exactly $k=n/2$ prices. This
specification allows arbitrary additional gaps away from the median
in the continuous families.

\paragraph{The online inventory has sublinear fluctuations.}
Completion preserves the main-phase bound, since its endpoint
quantiles equal $q_t^{\textsf{CE}}$. In the final phase both quantiles lie
in $[0,1]$. Consequently,
\begin{align*}
 \E[x_t\mid\mathcal F_{t-1}]
 &\stackrel{(a)}{=}2q_t-1
 \stackrel{(b)}{=}-\frac{D_{t-1}}{\tau_t}+e_t,
 & |e_t|&\le
 \begin{cases}2r_{\tau_t},&\tau_t>L_F,\\2,&\tau_t\le L_F.
 \end{cases}
\end{align*}
Step (a) uses the conditional uniformity of $U_t$; step (b) uses
$D_{t-1}=2b_t-\tau_t$ and defines $e_t=2(q_t-q_t^{\textsf{CE}})$.
Thus the fluid part of the drift points toward zero. The total possible
outward perturbation satisfies
\begin{align*}
 \sum_{t=1}^n|e_t|
 &\le A_n:=2\sum_{\tau=L_F+1}^{n}
           \sqrt{\frac{2\log\tau}{\tau}}+2L_F
 \stackrel{(a)}{\le}4\sqrt{2n\log n}+2L_F=o(n),
\end{align*}
with an empty sum interpreted as zero. Step (a) uses
$\sum_{\tau=1}^n\tau^{-1/2}\le2\sqrt n$.

The restoring drift lets us control the largest inventory displacement.
Put $\xi_t=x_t-\E[x_t\mid\mathcal F_{t-1}]$ and
$M_j=\sum_{t=1}^j\xi_t$. This is a martingale with increments bounded
in absolute value by two. If $D_j>0$, let $\ell<j$ be its last zero.
Since $D$ starts at zero and has unit increments, it is positive between
$\ell$ and $j$. Therefore
\begin{align*}
 D_j
 &=\sum_{t=\ell+1}^j
       \left(-\frac{D_{t-1}}{\tau_t}+e_t+\xi_t\right)
 \stackrel{(a)}{\le} A_n+M_j-M_\ell
 \le A_n+2\max_{i\le n}|M_i|.
\end{align*}
Step (a) drops the nonpositive restoring drift and bounds the sum of
perturbations by $A_n$. For a negative excursion, apply the same argument
to $-D$. Thus $\max_j|D_j|\le A_n+2\max_j|M_j|$ on every realization.

The bounded-increment exponential martingale gives
$\Pp(\max_{j\le n}|M_j|\ge z)\le2\exp(-z^2/(8n))$.
Indeed, conditional Hoeffding gives
$\E[e^{\theta\xi_t}\mid\mathcal F_{t-1}]\le e^{2\theta^2}$.
Stopping the resulting nonnegative supermartingale at the first
crossing of $z$, or at $n$, bounds the upper-tail probability by
$\exp(-\theta z+2n\theta^2)$. Set $\theta=z/(4n)$ and repeat for $-M$.
For fixed $F,\eta$, take $n$ sufficiently large that $A_n\le\eta n/2$.
It follows that
\begin{align*}
 \Pp\left(\max_{j\le n}|D_j|\ge\eta n\right)
 &\le\Pp\left(\max_{j\le n}|M_j|\ge\eta n/4\right)
 \le2e^{-\eta^2n/128}.
\end{align*}
Thus the adaptive online trajectory stays within a linear distance of
its initial inventory with exponentially high probability.

\paragraph{Feasibility and transfer of regret.}
Now suppose $y_0\wedge(B-y_0)\ge\eta n$. Execute the virtual trades
until a proposed trade first leaves $[0,B]$. At that time step, instead
move one unit toward $y_0$, continue moving toward it, and hold once it
is reached. The rule uses only observed information. Before a switch,
the actual displacement equals $D_{t-1}$; the inequality
$|D_{t-1}|\le\tau_t$ proves that the remaining time suffices for the
return. Each return step remains within $[0,B]$. If no switch occurs,
$D_n=0$ already gives terminal balance. The policy is therefore feasible
on \emph{every} price sequence.

A switch is possible only if the virtual path has
$\max_j|D_j|\ge\eta n$. The virtual and feasible policies agree otherwise,
and their profits differ by at most $2n$ on any path. Let
$\Reg_{\rm MS}(n)$ denote the specified completed selection
policy's expected regret against the sum of the $n/2$ largest prices.
For all sufficiently large even $n$,
\begin{align*}
 \Reg(\pi^{\textsf{CwG}};B,n,y_0)
 &\stackrel{(a)}{\le}
       \E[V_n^{\rm stat}-R_n^{\rm virtual}]
       +2n\Pp(\text{switch}) \stackrel{(b)}{\le}
       2\Reg_{\rm MS}(n)
       +4ne^{-\eta^2n/128}.
\end{align*}
Step (a) uses storage hindsight $\le V_n^{\rm stat}$ and the pathwise
profit comparison. Step (b) uses the two reward identities
and the maximal bound. The storage
constraints therefore add only an exponentially small error to the
multisecretary regret comparison.

For odd $T\ge3$, use $n=T-1$ and hold in the final time step. Pointwise,
$V_T^{\rm stat}\le V_n^{\rm stat}+1$. To see this, remove the final
action from a balanced $T$-step optimizer. If it is a hold, the remaining
path is balanced. Otherwise, also change one opposite-sign action among
the first $n$ time steps to a hold; such an action exists by balance.
The two removed opposite-sign rewards sum to a difference of two prices,
which is at most one. The resulting $n$-step static path is feasible.
This proves the claimed inequality, so odd horizons add at most one to
the transfer bound. For $T=1$ every feasible policy holds. The finitely
many smaller horizons excluded by $A_n\le\eta n/2$ are absorbed into
constants depending on $F,\eta$.

\paragraph{Finite atomic prices.}
Each atom is a cluster whose quantile function is constant, and
$\varepsilon_*=\min_i w_i>0$. The discrete \textsf{CwG} corollary
in \cite{besbes2025} gives the original selection policy regret
$O_F(1)$. Completion preserves this bound. Indeed, couple the original
and completed policies on the same quantiles. They agree until
completion first forces a selection; thereafter the completed policy
selects every remaining price, including all original selections.
Nonnegative prices make its reward weakly larger on every realization.
Thus $\Reg_{\rm MS}(n)=O_F(1)$, proving part (a) after transfer.

\paragraph{Separating the median gap from the within-side prices.}
For the rest of the proof, let
$F\in\mathcal F^{\rm ac}_{\beta,\Delta,\varepsilon_0}$ and put
$\alpha=1/(\beta+1)$. Define
$h(u)=F^{-1}(u)-a^--\Delta\one\{u>1/2\}$ for
$u\in(0,1)\setminus\{1/2\}$.
The remainder $h$ is nonpositive below the median and nonnegative
above it. Moreover, there is a finite constant $C_F$ such that
$|h(u)|\le C_F|u-1/2|^\alpha$ for all $u\ne1/2$.
For $|u-1/2|<\varepsilon_0$, this follows from
Definition~\ref{def:continuous-family} and $|r_\pm|\le1/2$.
Outside that window, $|h(u)|\le1$ and
$|u-1/2|^\alpha\ge\varepsilon_0^\alpha$.
Thus one can take
$C_F=\max\{3c_-/2,3c_+/2,\varepsilon_0^{-\alpha}\}$.
All assertions at $u=1/2$ may be defined arbitrarily, since this event
has probability zero.

This decomposition separates two kinds of selection error. Choosing
the wrong side of the median costs the fixed gap $\Delta$. Within
either side, the error depends on the distance of the price from its
median edge. To formalize this, let $z_t=\one\{U_t>1/2\}$ and let
$I^*$ contain the indices of the $k=n/2$ largest quantiles. These
indices maximize both the sum of the actual prices and the sum of
the binary prices $z_t$. Since $\sum_ta_t=k$, the constant $a^-$
cancels, giving
\begin{align*}
 S_k(\mathbf p)-\sum_ta_tp_t
 &\stackrel{(a)}{=}
 \Delta\left(S_k(\mathbf z)-\sum_ta_tz_t\right)
 +\sum_{t\in I^*}h(U_t)-\sum_ta_th(U_t)\\
 &\stackrel{(b)}{\le}
 \Delta\left(S_k(\mathbf z)-\sum_ta_tz_t\right)
 +\sum_t|h(U_t)|\one\{a_t\ne z_t\}.
\end{align*}
Step (a) substitutes the price decomposition. For (b), bound
$\sum_{t\in I^*}h(U_t)$ by $\sum_th(U_t)^+$ and use the sign of $h$:
the difference $h(U_t)^+-a_th(U_t)$ equals the displayed
disagreement charge.

Conditional on $\mathcal F_{t-1}$, the threshold $q_t$ is fixed
and $U_t$ is uniform. Disagreement with the median decision occurs
exactly between $q_t$ and $1/2$, up to null endpoints. Therefore
\begin{align*}
 \E\!\left[|h(U_t)|\one\{a_t\ne z_t\}\mid\mathcal F_{t-1}\right]
 &=\int_{q_t\wedge(1/2)}^{q_t\vee(1/2)}|h(u)|\,du \stackrel{(a)}{\le}
 \frac{C_F}{1+\alpha}|q_t-1/2|^{1+\alpha}.
\end{align*}
Step (a) integrates the bound on $h$. We next control these quantile
deviations and, when $\Delta>0$, the binary selection loss.

\paragraph{Continuous prices without a median gap.}
Here $\Delta=0$, so the binary term vanishes. Under the \textsf{CE}
rule, $e_t=0$ and $|q_t-1/2|=|D_{t-1}|/(2\tau_t)$.
Write $s_\tau=\E[D_{n-\tau}^2]$ for the second moment with $\tau$ steps
remaining. Since $x_t^2=1$, conditioning on the history gives, for
$\tau\ge2$,
\begin{align*}
 s_{\tau-1}
 &\stackrel{(a)}{=}(1-2/\tau)s_\tau+1
 \stackrel{(b)}{\le}\tau-1.
\end{align*}
Step (a) expands $(D_{t-1}+x_t)^2$ and uses drift $-D_{t-1}/\tau$.
Step (b) proves $s_\tau\le \tau$ by downward induction starting at
$s_n=0$; its coefficient is nonnegative for $\tau\ge2$.
Because $1+\alpha\le2$, Jensen's inequality now yields
$\E|q_t-1/2|^{1+\alpha}
\le 2^{-(1+\alpha)}s_{\tau_t}^{(1+\alpha)/2}/\tau_t^{1+\alpha}
\le 2^{-(1+\alpha)}\tau_t^{-(1+\alpha)/2}$.
Summing the conditional disagreement charges proves
$\Reg_{\rm MS}(n)=O_F(\log n)$ when $\alpha=1$, and
$O_F(n^{(1-\alpha)/2})$ when $\alpha<1$. Indeed, the sum is harmonic
in the first case, and at most
$1+\int_1^n u^{-(1+\alpha)/2}\,du$ in the second.
Since $(1-\alpha)/2=\beta/(2\beta+2)$, transfer gives part (b).

\paragraph{Continuous prices with a positive median gap.}
First consider the binary loss. The variables $z_t$ are i.i.d.\
Bernoulli$(1/2)$. Conditional on $z_t$, the quantile $U_t$ is uniform
on the corresponding half of $[0,1]$. These are precisely the
randomized within-atom quantiles in \textsf{CwG} applied to the two
prices $\{0,1\}$, each with probability $1/2$. Their only gap quantile
is $1/2$. Our median-only rule has the same thresholds, quotas, and
final static phase as this binary policy, with the same completion
rules. The coupling is therefore exact. The finite-type guarantee
of \cite{besbes2025}, together with completion's pathwise reward
improvement proved above, gives
$\E[S_k(\mathbf z)-\sum_ta_tz_t]=O(1)$ uniformly in $n$.

It remains to bound the within-side charges. In the main phase
$\tau=\tau_t>L_F$, the drift perturbation satisfies $|e_t|\le2r_\tau$.
Expanding the squared displacement and using Young's inequality gives
\begin{align*}
 s_{\tau-1}
 &=(1-2/\tau)s_\tau+2\E[D_{t-1}e_t]+1\\
 &\stackrel{(a)}{\le}
 (1-3/(2\tau))s_\tau+1+2\tau\E[e_t^2]\stackrel{(b)}{\le}
 (1-3/(2\tau))s_\tau+1+16\log \tau.
\end{align*}
Step (a) uses $2|De|\le D^2/(2\tau)+2\tau e^2$; step (b) uses
$e_t^2\le8\log(\tau)/\tau$. Downward induction from $s_n=0$ shows
$s_\tau\le36\tau\log n$ throughout the main phase. In fact, substituting
this bound on the right gives
$36(\tau-3/2)\log n+1+16\log \tau
\le36(\tau-1)\log n$, since $1\le2\log n$ for $n\ge2$.
Snapping to the median can only reduce the threshold's distance from
$1/2$, so $|q_t-1/2|\le|D_{t-1}|/(2\tau)$. Another application of
Jensen gives
$\E|q_t-1/2|^{1+\alpha}
\le C_\alpha(\log n/\tau)^{(1+\alpha)/2}$.
The final phase has at most the fixed $L_F$ steps, each contributing
at most one to the residual charge, so it adds only $O(1)$.

The binary loss is also $O(1)$. Summing the main-phase charges thus
gives $\Reg_{\rm MS}(n)=O_F(\log^2 n)$ when $\alpha=1$, and
$O_F(n^{(1-\alpha)/2}(\log n)^{(1+\alpha)/2})$ when $\alpha<1$.
Use $(1+\alpha)/2=(\beta+2)/(2\beta+2)$ and the transfer bound to
obtain part (c). The continuous argument uses only the local
upper-quantile envelope and bounded support; all distributions in
Definition~\ref{def:continuous-family} satisfy these requirements.

\section{Proof of the positive-gap boundary lower bound}
\label{lb:app-positive-gap}

We prove Theorem~\ref{lb:thm-positive-gap} by comparing the cost of
keeping inventory feasible online with the cost of doing so in hindsight.
The key distinction is that hindsight can choose inexpensive opportunities
to hold before inventory runs out, whereas an online policy must make
these choices without knowing which opportunities will be needed.
Fix $\beta\ge0$, $\varepsilon_0\in(0,1/4]$, $\Delta\in(0,1)$, and
$\rho\in(0,1)$. Let
$F\in\mathcal F^{\mathrm{ac}}_{\beta,\Delta,\varepsilon_0}$ be fixed
independently of $T$, and write $\kappa:=\beta+1$ and
$a^-:=a_F^-$, $a^+:=a_F^+$. Prices are i.i.d.\ with distribution $F$.
Absolute continuity and the median-gap definition imply
$F(a^-)=F(a^+)=1/2$ and $a^+-a^-=\Delta$.
Thus each side of the gap has probability $1/2$, even when the two
conditional price distributions differ. The local quantile bounds below
replace the exact power-law formulas of the canonical examples.

A possibly randomized online policy is represented by independent fresh
seeds $(\zeta_t)_{t=1}^T$, independent of the entire price sequence.  Put
$\mathcal F_0:=\{\varnothing,\Omega\}$ and
$\mathcal F_t:=\sigma(p_s,\zeta_s:s\leq t)$.  The action $x_t$ is
$\mathcal F_t$-measurable and takes values in $\{-1,0,1\}$. Almost surely,
it satisfies $y_t=y_{t-1}+x_t$, $0\leq y_t\leq B:=\lceil\rho T\rceil$,
and $y_0=y_T=0$. All conditional statements below include the seed in the
filtration.  Consequently, future prices and their median-side signs remain
independent of $\mathcal F_t$.

\subsection{Marginal compensation and departures from the static median-sign rule}

We first express regret in terms of departures from the static median-sign rule.
Buying at every low price and selling at every high price is generally
infeasible when the initial inventory level is zero. Any feasible policy must modify some of
these actions. The following accounting separates the fixed cost imposed
by the median gap from the additional cost of the particular prices at
which these modifications occur.

Let $V_t^{\mathsf{hs}}(y;\mathbf p_{\geq t})$ be the storage-constrained
hindsight value from periods $t,\ldots,T$, starting with inventory $y$
immediately before period $t$ and ending at zero.  Set
$V_{T+1}^{\mathsf{hs}}(0;\varnothing)=0$.  For a feasible current action
$x$, define its marginal compensation by
\begin{align}
    \partial R_t(y,x;\mathbf p_{\geq t})
    :=V_t^{\mathsf{hs}}(y;\mathbf p_{\geq t})
      -\left[-p_tx+V_{t+1}^{\mathsf{hs}}
          (y+x;\mathbf p_{>t})\right].
    \label{lb:eq-partial-regret}
\end{align}
It is nonnegative because the hindsight value maximizes over the feasible
current actions.

Let $\xi_t$ be the action proposed by the static median-sign rule:
$\xi_t:=+1$ when $p_t\leq a^-$ and $\xi_t:=-1$ when $p_t\geq a^+$.
The sum of its proposed trades is
$S_t:=\sum_{s=1}^t\xi_s$, with $S_0:=0$.
The gap has no mass, so this defines $\xi_t$ almost surely, and the
variables $(\xi_t)$ are i.i.d. Rademacher.  Define the globally nonnegative
edge premiums $W_t:=(a^--p_t)\one\{\xi_t=+1\}$ and
$Z_t:=(p_t-a^+)\one\{\xi_t=-1\}$.
Thus $0\leq W_t,Z_t\leq1$, and on the corresponding side
$p_t=a^--W_t$ or $p_t=a^++Z_t$.

For a feasible policy $\pi$, let
\begin{align}
    u_t&:=\one\{\xi_t=-1\}(1+x_t^\pi),
    &U_t&:=\sum_{s=1}^tu_s,
    \notag\\
    \ell_t&:=\one\{\xi_t=+1\}(1-x_t^\pi),
    &L_t&:=\sum_{s=1}^t\ell_s.
    \label{lb:eq-corrections}
\end{align}
Both $u_t$ and $\ell_t$ belong to $\{0,1,2\}$.  They count deviations,
in units, from buying on the lower side and selling on the upper side.

\begin{lemma}
\label{lb:lem-accounting}
Let $\pi$ be a feasible cyclic policy and $\pi^{\mathsf{hs}}$ a
hindsight-optimal feasible policy. Put
$M_t:=\max_{0\leq s\leq t}(-S_s)^+$ and define the correction cost
\begin{align}
    C_T(\pi)
    :=\sum_{t:\xi_t=+1}W_t\ell_t
      +\sum_{t:\xi_t=-1}(\Delta+Z_t)u_t.
    \label{lb:eq-correction-cost}
\end{align}
The following statements hold pathwise.
\begin{enumerate}[label=$(\alph*)$]
\item The inventory satisfies
\begin{align}
    y_t=S_t+U_t-L_t.
    \label{lb:eq-state-ledger}
\end{align}
\item The upper-side correction count satisfies $U_T\geq M_T$.
\item The marginal compensations satisfy
\begin{align}
    \sum_{t=1}^T
       \partial R_t(y_{t-1}^\pi,x_t^\pi;\mathbf p_{\geq t})
    =R_T(\pi^{\mathsf{hs}};0)-R_T(\pi;0)
    =C_T(\pi)-C_T(\pi^{\mathsf{hs}}).
    \label{lb:eq-two-accountings}
\end{align}
\end{enumerate}
\end{lemma}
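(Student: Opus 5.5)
We prove the three parts in order, each by a direct pathwise computation from the definitions in \eqref{lb:eq-corrections}.

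For part (a), fix a period $t$ and check the single-step identity $x_t^\pi=\xi_t+u_t-\ell_t$ in both cases. If $\xi_t=+1$, then $u_t=0$ and $\ell_t=1-x_t^\pi$, so the right side equals $1-(1-x_t^\pi)=x_t^\pi$. If $\xi_t=-1$, then $\ell_t=0$ and $u_t=1+x_t^\pi$, so the right side equals $-1+1+x_t^\pi=x_t^\pi$. Summing this identity from $1$ to $t$ and using $y_0=0$ gives $y_t=S_t+U_t-L_t$, which is \eqref{lb:eq-state-ledger}.

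For part (b), we combine (a) with feasibility. Since $y_s\ge0$ and $L_s\ge0$, we get $U_s\ge -S_s+L_s\ge -S_s$. Also $U_s\ge0$, so $U_s\ge(-S_s)^+$. The sequence $U$ is nondecreasing because each $u_t\ge0$. Hence $U_T\ge U_s\ge(-S_s)^+$ for every $s\le T$, and taking the maximum over $s$ gives $U_T\ge M_T$.

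For part (c), the first equality is Proposition~\ref{prop:main-decomposition}(a), applied with $y_0=0$. The hindsight value $V_1^{\mathsf{hs}}$ is the realized profit $R_T(\pi^{\mathsf{hs}};0)$.

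For the second equality, we write the profit of any feasible cyclic policy as a fixed baseline minus its correction cost. On the low side ($\xi_t=+1$), we have $x_t=1-\ell_t$ and $p_t=a^--W_t$, so
\[
-p_tx_t=-(a^--W_t)+(a^--W_t)\ell_t=W_t-a^-+a^-\ell_t-W_t\ell_t .
\]
On the high side ($\xi_t=-1$), we have $x_t=u_t-1$ and $p_t=a^++Z_t$, so
\[
-p_tx_t=a^++Z_t-a^+u_t-Z_tu_t .
\]
Summing over $t$, the terms $\sum_{\xi_t=+1}(W_t-a^-)+\sum_{\xi_t=-1}(a^++Z_t)$ depend only on the prices, so they are common to all policies. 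The remaining policy-dependent part is
\[
a^-L_T-a^+U_T-\sum W_t\ell_t-\sum Z_tu_t .
\]

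Cyclicity gives $y_T=0$. By part (a), this means $L_T=U_T+S_T$. Substituting, we get $a^-L_T-a^+U_T=a^-S_T-\Delta U_T$, and $a^-S_T$ is again common to all policies. Therefore
\[
R_T(\pi;0)=K(\mathbf p)-C_T(\pi),
\]
where $K(\mathbf p)$ depends only on the price sequence. The hindsight policy is feasible and cyclic, so the same identity holds for it. Subtracting the two identities gives the second equality in \eqref{lb:eq-two-accountings}.

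The only delicate point is this cancellation. It requires the terminal condition $y_T=0$, so the identity is asserted only for cyclic policies. It also requires the almost-sure convention that $\xi_t$ is defined, which holds because the gap carries no mass.
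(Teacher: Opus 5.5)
Your proposal is correct and follows the paper's proof essentially line for line. Part (a) rests on the one-step identity $x_t^\pi=\xi_t+u_t-\ell_t$, part (b) on nonnegativity of $y_t$ and $L_t$ plus monotonicity of $U$, and for (c) you telescope the marginal compensations, substitute $p_t=a^--W_t$ or $p_t=a^++Z_t$, and use $L_T-U_T=S_T$ to isolate the common baseline $R_T^\circ+a^-S_T$, exactly as the paper does (your explicit use of $U_s\ge0$ to handle the positive part in (b) is a small bit of care the paper leaves implicit).
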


\begin{proof}
At a lower-side price, $x_t^\pi=\xi_t-\ell_t$, while at an
upper-side price, $x_t^\pi=\xi_t+u_t$.  Hence
\begin{align*}
    y_t
    &\stackrel{(a)}{=}\sum_{s=1}^t x_s^\pi
     \stackrel{(b)}{=}S_t+U_t-L_t,
\end{align*}
where $(a)$ uses $y_0=0$, and $(b)$ uses
\eqref{lb:eq-corrections}.  Since $y_t\geq0$ and $L_t\geq0$,
$U_t\geq-S_t$.  The process $U$ is nondecreasing, so maximizing over
$t$ proves the mandatory-reflection claim.

Let $R_T^\circ:=\sum_{t=1}^T-p_t\xi_t$ be the reward of the
unconstrained actions of the static median-sign rule.  Direct substitution gives
\begin{align*}
    R_T(\pi;0)-R_T^\circ
    &\stackrel{(c)}{=}
      a^-L_T-\sum_{\xi_t=+1}W_t\ell_t
      -a^+U_T-\sum_{\xi_t=-1}Z_tu_t\\
    &\stackrel{(d)}{=}
      a^-S_T-\Delta U_T
      -\sum_{\xi_t=+1}W_t\ell_t
      -\sum_{\xi_t=-1}Z_tu_t\\
    &\stackrel{(e)}{=}a^-S_T-C_T(\pi).
\end{align*}
Here $(c)$ uses $p_t=a^--W_t$ or $p_t=a^++Z_t$ according to the
side, $(d)$ uses $L_T-U_T=S_T$, which follows from
$y_T=0$ and \eqref{lb:eq-state-ledger}, and $(e)$ uses
$a^+-a^-=\Delta$ and \eqref{lb:eq-correction-cost}.  The terms
$R_T^\circ+a^-S_T$ are common to all cyclic feasible policies.
Applying the identity to $\pi$ and $\pi^{\mathsf{hs}}$ proves the
second equality in \eqref{lb:eq-two-accountings}.

For the first equality, sum \eqref{lb:eq-partial-regret} along the
trajectory of $\pi$.  Consecutive hindsight values cancel, the initial
term is $V_1^{\mathsf{hs}}(0;\mathbf p)=R_T(\pi^{\mathsf{hs}};0)$, and
the terminal term is $V_{T+1}^{\mathsf{hs}}(0;\varnothing)=0$ because
$y_T=0$.  This proves the pathwise telescope.
\end{proof}

Taking expectations in \eqref{lb:eq-two-accountings} gives the identity
$\Reg(\pi;B,T,0)=\E[C_T(\pi)-C_T(\pi^{\mathsf{hs}})]$.
We work with $C_T$ because it separates the fixed gap cost from the
edge premiums, while preserving the exact regret comparison.

\subsection{A deadline-respecting hindsight comparator}

We next construct a feasible hindsight path with a small correction cost.
Each new minimum of the walk generated by the static median-sign rule gives a deadline by which a
sale must have been replaced by a hold. Hindsight can meet each deadline
by selecting a low-premium arrival among those already available. We
first bound these premiums and then restore the terminal inventory.

Let $U$ be uniform on $(0,1)$ and define the nondecreasing functions
\begin{align*}
    \psi_+(u)&:=F^{-1}((1+u)/2)-a^+,\\
    \psi_-(u)&:=a^--F^{-1}((1-u)/2),
    \qquad 0<u<1.
\end{align*}
The conditional upper-side premium $Z_t$ has the law of $\psi_+(U)$,
and the conditional lower-side premium $W_t$ has the law of
$\psi_-(U)$. These representations follow by drawing a uniform quantile
on the corresponding half of the distribution. Conditional on the entire
sign path, premiums at distinct times are independent, with the
appropriate side-specific law.

Put $u_0:=\varepsilon_0$. Definition~\ref{def:continuous-family}, applied
at $q=u/2$, gives
\begin{align}
    d_\pm u^{1/\kappa}\leq\psi_\pm(u)
    \leq b_\pm u^{1/\kappa},
    \qquad 0<u\leq u_0,
    \label{lb:eq-local-premium-bounds}
\end{align}
where $d_\pm:=c_\pm2^{-1-1/\kappa}>0$ and $b_\pm:=3d_\pm$.
Here the constants $c_\pm$ are those in that definition, and the bounds
use $|r_\pm|\leq1/2$. Since $0\leq\psi_\pm(u)\leq1$, the constants
$A_\pm:=b_\pm\vee u_0^{-1/\kappa}$ give the global bounds
\begin{align}
    \psi_\pm(u)\leq A_\pm u^{1/\kappa},
    \qquad 0<u<1.
    \label{lb:eq-premium-quantile-bounds}
\end{align}
Indeed, \eqref{lb:eq-local-premium-bounds} applies below $u_0$, while
$1\leq u_0^{-1/\kappa}u^{1/\kappa}$ above it.

The lower bound for $\psi_+$ also controls the probability of a cheap
upper-side arrival. Put $z_0:=(d_+/2)u_0^{1/\kappa}>0$.
For $0<z\leq z_0$, monotonicity and
$\psi_+(u_0)\geq d_+u_0^{1/\kappa}=2z_0$ imply that
$\psi_+(U)\leq z$ requires $U<u_0$.
Equation~\eqref{lb:eq-local-premium-bounds} then implies
$U\leq(z/d_+)^\kappa$. Consequently,
\begin{align}
    \Pp(a^+\leq p_t\leq a^++z)
    =\tfrac12\Pp(\psi_+(U)\leq z)
    \leq K_Fz^\kappa,\qquad 0<z\leq z_0,
    \quad K_F:=\tfrac12d_+^{-\kappa}.
    \label{lb:eq-edge-mass-bound}
\end{align}
All these constants depend only on the fixed distribution and its
specified class parameters. The proof needs these power bounds; it
allows different leading constants and nonzero remainders at the two
edges.

\begin{lemma}
\label{lb:lem-order-statistic}
For either side $\nu\in\{-,+\}$, let $X_1,\ldots,X_m$ be i.i.d.\ with
the law of $\psi_\nu(U)$, and let $X_{(j:m)}$ be their $j$th order
statistic. For every $1\leq j\leq m$,
$\E[X_{(j:m)}]\leq A_\nu(j/(m+1))^{1/\kappa}$.
\end{lemma}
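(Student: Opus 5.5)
The plan is to use the quantile representation directly. Let $U_1,\ldots,U_m$ be i.i.d.\ uniform on $(0,1)$. Because $\psi_\nu$ is nondecreasing, the variables $\psi_\nu(U_i)$ have the law of the $X_i$. Moreover, the $j$th order statistic of the transformed sample equals $\psi_\nu(U_{(j:m)})$. Thus it suffices to bound $\E[\psi_\nu(U_{(j:m)})]$.

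Apply the global envelope \eqref{lb:eq-premium-quantile-bounds} pointwise. This gives $\psi_\nu(U_{(j:m)})\leq A_\nu U_{(j:m)}^{1/\kappa}$ almost surely, since uniform order statistics lie in $(0,1)$ with probability one. Taking expectations yields $\E[X_{(j:m)}]\leq A_\nu\E[U_{(j:m)}^{1/\kappa}]$.

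Because $\kappa=\beta+1\geq1$, the map $v\mapsto v^{1/\kappa}$ is concave on $[0,\infty)$. Jensen's inequality therefore gives $\E[U_{(j:m)}^{1/\kappa}]\leq(\E U_{(j:m)})^{1/\kappa}$. The classical identity $U_{(j:m)}\sim\mathrm{Beta}(j,m-j+1)$ gives $\E U_{(j:m)}=j/(m+1)$, which yields the stated bound.

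There is no real obstacle. The only points to check are these:
\begin{enumerate}[label=$(\roman*)$]
\item Monotone transformation commutes with order statistics. Ties are harmless because order statistics are defined as sorted values.
\item The envelope \eqref{lb:eq-premium-quantile-bounds} holds on all of $(0,1)$, not only below $u_0$. This was established just before the lemma.
\end{enumerate}
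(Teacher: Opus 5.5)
Your proposal is correct and matches the paper's proof: monotonicity gives $X_{(j:m)}\stackrel{d}{=}\psi_\nu(U_{(j:m)})$, the global envelope \eqref{lb:eq-premium-quantile-bounds} reduces the bound to $A_\nu\E[U_{(j:m)}^{1/\kappa}]$, and Jensen with $\E U_{(j:m)}=j/(m+1)$ finishes. Your two checks, that order statistics commute with a nondecreasing map and that the envelope holds on all of $(0,1)$, are exactly the points the argument relies on.
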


\begin{proof}
If $U_{(j:m)}$ is the $j$th order statistic of $m$ independent uniform
random variables, monotonicity gives
$X_{(j:m)}\stackrel{d}{=}\psi_\nu(U_{(j:m)})$.
Equation~\eqref{lb:eq-premium-quantile-bounds} therefore bounds its
expectation by $A_\nu\E[U_{(j:m)}^{1/\kappa}]$.
Since $u\mapsto u^{1/\kappa}$ is concave for $\kappa\geq1$,
Jensen's inequality and $\E[U_{(j:m)}]=j/(m+1)$ give the claim.
\end{proof}

For $y\geq1$, define the first-passage deadline
$\sigma_y:=\inf\{t\geq0:S_t=-y\}$, with $\inf\varnothing=\infty$.

\begin{lemma}
\label{lb:lem-deadline-sum}
Fix $\kappa\geq1$, put $a:=1/\kappa\in(0,1]$, and define
\begin{align*}
    \Phi_\kappa(T)
    :=\log T\mathbbm{1}\{\kappa=1\}
      +T^{(\kappa-1)/(2\kappa)}\mathbbm{1}\{\kappa>1\}.
\end{align*}
There exists a constant $C_\kappa<\infty$, depending only on $\kappa$,
such that $\E[\sum_{y:\sigma_y\leq T}(y/(\sigma_y+1))^a]
\leq C_\kappa\Phi_\kappa(T)$ for every $T\geq2$.
\end{lemma}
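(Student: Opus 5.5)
We need to bound $\E\sum_{y\ge1,\sigma_y\le T}(y/(\sigma_y+1))^a$ for a simple symmetric random walk $S$. The plan is to exchange sum and expectation and compute with first-passage tail bounds.

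Write the sum as $\sum_{y\ge1}\E[(y/(\sigma_y+1))^a\one\{\sigma_y\le T\}]$. Only $y\le T$ contribute, since $\sigma_y\ge y$. Also $(y/(\sigma_y+1))^a\le1$ always.

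For each fixed $y$ we use two facts.
\begin{itemize}
\item The hitting time satisfies $\sigma_y\ge y$.
\item The first-passage tail bound holds: $\Pp(\sigma_y\le s)\le 2\Pp(S_s\le -y)\le 2e^{-y^2/(2s)}$, by the reflection principle and Hoeffding.
\end{itemize}
Decompose by dyadic scales: split $\{\sigma_y\le T\}$ into the events $\{2^{j-1}\le\sigma_y+1<2^j\}$. On the $j$th piece the summand is at most $(y2^{1-j})^a$, and the piece has probability at most $\Pp(\sigma_y<2^j)\le 2e^{-y^2/2^{j+1}}$.

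Now sum over $y$ first for fixed $j$. Setting $s=2^j$, we have
\[
\sum_{y\ge1}(y/s)^a e^{-y^2/(2s)}\lesssim s^{-a}\int_0^\infty y^ae^{-y^2/(2s)}dy\asymp s^{-a}s^{(1+a)/2}=s^{(1-a)/2}.
\]
There is a subtlety: for small $y$ relative to $\sqrt s$ the exponential is about $1$. We must not overcount, because the pieces for a fixed $y$ are disjoint across $j$. So the total contribution of a fixed $y$ is a single expectation, not a sum over all $j$ of probabilities.

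The better route is to use a density bound instead of tail bounds. Use the classical hitting-time law
\[
\Pp(\sigma_y=n)=\frac{y}{n}\Pp(S_n=-y)\le C\frac{y}{n^{3/2}}e^{-y^2/(2n)}
\]
(the ballot theorem plus the local CLT/Stirling bound $\Pp(S_n=-y)\le C n^{-1/2}e^{-y^2/(2n)}$). Then
\[
\E\sum_y\Big(\frac{y}{\sigma_y+1}\Big)^a\one\{\sigma_y\le T\}\le C\sum_{n=1}^T\sum_{y\ge1}\frac{y^{1+a}}{n^{3/2+a}}e^{-y^2/(2n)}.
\]
The inner sum over $y$ is $O(n^{(2+a)/2})$, comparing it with $\int_0^\infty y^{1+a}e^{-y^2/2n}dy$; monotonicity pieces contribute at most the maximum, which is of the same order. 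Hence the $n$th term is $O(n^{-(1+a)/2})$. With $a=1/\kappa$ this gives:
\begin{itemize}
\item For $\kappa=1$, a harmonic sum $O(\log T)$.
\item For $\kappa>1$, $\sum_{n\le T}n^{-(1+a)/2}=O(T^{(1-a)/2})=O(T^{(\kappa-1)/(2\kappa)})$.
\end{itemize}
This matches $\Phi_\kappa(T)$. The $n=0$ term is absent since $y\ge1$.

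The main obstacle is the uniform local estimate $\Pp(S_n=-y)\le Cn^{-1/2}e^{-y^2/(2n)}$ for all $y\le n$ (with parity). We would obtain it from the exact binomial expression plus the standard bound $\binom{n}{(n+y)/2}2^{-n}\le Cn^{-1/2}\exp(-y^2/(2n))$, via Stirling and the entropy inequality $\log 2 - H((1+x)/2)\ge x^2/2$. If the Gaussian constant is awkward, any bound of the form $e^{-cy^2/n}$ with $c>0$ suffices; one can get it with Hoeffding, $\Pp(S_n=-y)\le\Pp(S_n\le -y)\le e^{-y^2/(2n)}$. That alone loses the $n^{-1/2}$ factor, so combine the two: use $\min(Cn^{-1/2},e^{-y^2/(2n)})$. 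Taking the geometric mean, $\Pp(S_n=-y)\le \sqrt{C}n^{-1/4}e^{-y^2/(4n)}$ is too weak. We therefore rely on the Stirling/entropy bound, which is standard, and all constants depend only on $\kappa$.
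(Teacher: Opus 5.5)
Your final argument is correct. It starts exactly as the paper's does, with the first-passage identity $\Pp(\sigma_y=n)=\frac{y}{n}\Pp(S_n=-y)$; the two proofs differ only in how the sum over $y$ is bounded.

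You bound $\Pp(S_n=-y)$ pointwise by $Cn^{-1/2}e^{-y^2/(2n)}$ and then compare $\sum_y y^{1+a}e^{-y^2/(2n)}$ with a Gaussian integral. The paper instead notes that
$\sum_{y\ge1}y^{1+a}\Pp(S_n=-y)=\E[((-S_n)^+)^{1+a}]\le\E|S_n|^{1+a}\le n^{(1+a)/2}$
by Lyapunov's inequality, which applies because $1+a\le2$ and $\E S_n^2=n$. That single moment bound gives the same $O(n^{-(1+a)/2})$ per term and removes what you call the main obstacle: no Stirling or entropy estimate is needed.

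Your local estimate does hold uniformly, but you should say why. With $x=y/n$, the Stirling prefactor $(1-x^2)^{-1/2}$ degenerates as $|y|\to n$. It is absorbed by the extra slack $\log2-H((1+x)/2)-x^2/2\ge x^4/12$, where $H$ is the natural entropy. What your route buys is a pointwise density bound for $\sigma_y$, which is more than the lemma needs.

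The dyadic route you abandoned also works, and your worry about overcounting was unfounded. Bounding each piece by $\Pp(\sigma_y\le 2^j)\le 2e^{-y^2/2^{j+1}}$ does overcount, but the weight $(y2^{1-j})^a$ decays geometrically in $j$. After summing over $y$, each scale $s=2^j$ contributes $O(s^{(1-a)/2})$. Summing over the $O(\log T)$ scales with $2^{j-1}\le T+1$ then gives $C_\kappa\Phi_\kappa(T)$. This uses only reflection and Hoeffding, not the exact first-passage law.
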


\begin{proof}
The probability that the walk first reaches $-y$ at time $s$ satisfies
\begin{align}
    \Pp(\sigma_y=s)
    =\frac{y}{s}\Pp(S_s=-y);
    \label{lb:eq-first-passage-identity}
\end{align}
both sides are zero when one of $s,y$ is even and the other is odd.
For completeness,
consider the walk during its first $s-1$ steps. Among paths ending at
$-y+1$, exclude those that have already reached $-y$. Reversing every
step after the first visit to $-y$ gives a one-to-one correspondence
between these excluded paths and all paths ending at $-y-1$.
With the convention that a
binomial coefficient is zero when its lower index is not an integer in its
usual range, appending the final down-step therefore gives
\begin{align*}
    2^s\Pp(\sigma_y=s)
    &\stackrel{(a)}{=}\binom{s-1}{(s-y)/2}
      -\binom{s-1}{(s-y-2)/2}\\
    &\stackrel{(b)}{=}\frac{y}{s}\binom{s}{(s-y)/2}
     =\frac{y}{s}\,2^s\Pp(S_s=-y),
\end{align*}
where $(a)$ counts the paths that have not yet reached $-y$ and appends
their final down-step, and $(b)$ uses
$\binom{s-1}{j}-\binom{s-1}{j-1}=(s-2j)\binom sj/s$
at $j=(s-y)/2$. This proves \eqref{lb:eq-first-passage-identity} directly.
Consequently,
\begin{align*}
    \E\left[
       \sum_{y:\sigma_y\leq T}
       \left(\frac{y}{\sigma_y+1}\right)^a
    \right]
    &\stackrel{(a)}{=}
      \sum_{s=1}^T\sum_{y=1}^s
      \left(\frac{y}{s+1}\right)^a
      \frac{y}{s}\Pp(S_s=-y)\\
    &\stackrel{(b)}{\leq}
      \sum_{s=1}^T s^{-a-1}\E|S_s|^{a+1}\\
    &\stackrel{(c)}{\leq}
      \sum_{s=1}^T s^{-(a+1)/2}\\
    &\stackrel{(d)}{\leq}C_\kappa\Phi_\kappa(T).
\end{align*}
Here $(a)$ uses \eqref{lb:eq-first-passage-identity}; $(b)$ uses
$(s+1)^{-a}s^{-1}\leq s^{-a-1}$ and groups the terms as a
one-sided moment of $S_s$; $(c)$ uses Lyapunov's inequality with
$a+1\leq2$ and $\E S_s^2=s$; and $(d)$ uses
$\sum_{s=1}^Ts^{-1}\leq1+\log T
\leq(1+1/\log2)\log T$ for $T\geq2$ when $a=1$, while for $a<1$ it uses
$1+\int_1^T x^{-(a+1)/2}\,dx\leq C_aT^{(1-a)/2}$.
\end{proof}

\begin{lemma}
\label{lb:lem-upper-deadlines}
There exists a constant $C_F<\infty$, depending only on
$F$, such that for every $T\geq2$ there is a hindsight selection
of upper-side arrivals with the following properties.
\begin{enumerate}[label=$(\alph*)$]
\item Exactly $M_T$ arrivals are selected, with at least $M_t$ selected
arrivals by every time $t$.
\item If each selected arrival is changed from a sale to a hold, the
expected total edge premium is at most $C_F\Phi_\kappa(T)$.
\end{enumerate}
\end{lemma}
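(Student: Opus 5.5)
The selection is a greedy deadline rule driven by the first-passage times $\sigma_1<\sigma_2<\cdots<\sigma_{M_T}\le T$. For $y=1,\ldots,M_T$, process the deadlines in increasing order. At deadline $\sigma_y$, choose the not-yet-selected upper-side arrival in $\{1,\ldots,\sigma_y\}$ with the smallest premium $Z_t$. This set always contains at least $y$ upper-side arrivals, because $S_{\sigma_y}=-y$ forces at least $y$ more down-steps than up-steps by time $\sigma_y$. So the $y$th choice exists. Part (a) follows at once. Exactly $M_T$ arrivals are chosen, since $\sigma_y\le T$ exactly for $y\le M_T$. At any time $t$, the deadlines $\sigma_1,\ldots,\sigma_{M_t}$ are all at most $t$, so at least $M_t$ selections have been made. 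Every selection is an arrival that has already occurred, so the selection is adapted in hindsight only through the deadlines. This is all that is needed.

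For part (b), the plan is to bound the premium of the $y$th choice by an order statistic, and not to track the greedy's interactions. Let $N_y$ be the number of upper-side arrivals in $[1,\sigma_y]$. Then $N_y\ge(\sigma_y+y)/2$, because there are $(\sigma_y+y)/2$ down-steps by that time. Before deadline $y$, at most $y-1$ arrivals have been removed. Hence the $y$th choice has premium at most $Z_{(y:N_y)}$, the $y$th smallest premium among these $N_y$ arrivals. Since the deadlines are functions of the sign path, we condition on the whole sign path. Conditionally, the premiums at upper-side times are i.i.d.\ with the law of $\psi_+(U)$, independent of the signs. Lemma~\ref{lb:lem-order-statistic} then gives
\begin{align*}
\E\bigl[Z_{(y:N_y)}\mid (\xi_t)\bigr]\le A_+\Bigl(\frac{y}{N_y+1}\Bigr)^{1/\kappa}\le A_+2^{1/\kappa}\Bigl(\frac{y}{\sigma_y+1}\Bigr)^{1/\kappa}.
\end{align*}
For the last step we use $N_y+1\ge(\sigma_y+1)/2$.

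Next we sum over $y$ with $\sigma_y\le T$ and take expectations. Lemma~\ref{lb:lem-deadline-sum} with $a=1/\kappa$ gives the bound $C_F\Phi_\kappa(T)$, where $C_F=2A_+C_\kappa$.

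The main obstacle is justifying the order-statistic domination for the greedy procedure. The claim is that the $y$th greedy pick is at most the $y$th smallest premium among arrivals up to $\sigma_y$. I would prove it by induction: among the $y$ smallest premiums in $[1,\sigma_y]$, at most $y-1$ can already have been taken, so one of them is still available. A second point needs care. The indices $N_y$ and $\sigma_y$ are random, but they are measurable with respect to the sign path, so conditioning on the full sign path keeps the i.i.d.\ premium structure intact. Everything else is routine.
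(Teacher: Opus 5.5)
Your proposal is correct and follows the paper's proof essentially step for step. It uses the same greedy rule, picking the cheapest unselected upper-side arrival at each first-passage deadline $\sigma_y$, and the same domination of the $y$th pick by the $y$th order statistic among the $N_y=(\sigma_y+y)/2$ available premiums, conditional on the sign path; it then concludes via Lemma~\ref{lb:lem-order-statistic}, the bound $N_y+1\ge(\sigma_y+1)/2$, and Lemma~\ref{lb:lem-deadline-sum}, with the same constant $2A_+C_\kappa$ (your inequality $N_y\ge(\sigma_y+y)/2$ is actually an equality, but only the lower bound is needed).
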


\begin{proof}
Fix the sign path.  At time $\sigma_y\leq T$, the number of upper-side
arrivals is
\begin{align*}
    N_y^+
    :=\#\{s\leq\sigma_y:\xi_s=-1\}
    =\frac{\sigma_y+y}{2}\geq y.
\end{align*}
At each deadline $\sigma_y$, select the cheapest upper-side arrival that
has occurred by that deadline and has not already been selected.  The
deadlines are strictly increasing in $y$, so this selects one new arrival
at each of the $M_T$ attained levels.  Moreover, $M_t\geq y$ if and only
if $\sigma_y\leq t$, which proves the deadline property.

Conditional on the sign path, upper-side premiums are independent with
the law of $\psi_+(U)$. At most $y-1$ of the
$N_y^+$ available arrivals were removed before deadline $y$; therefore the
newly selected premium is no larger than the $y$th order statistic among
those $N_y^+$ premiums.  Lemma~\ref{lb:lem-order-statistic} and the formula
$N_y^+=(\sigma_y+y)/2$ give the conditional bound
\begin{align*}
    \E[\text{$y$th selected premium}\mid(\xi_s)_{s\leq T}]
    &\stackrel{(a)}{\leq}
      A_+\left(\frac{y}{N_y^++1}\right)^{1/\kappa}\\
    &\stackrel{(b)}{\leq} C_F
      \left(\frac{y}{\sigma_y+1}\right)^{1/\kappa}.
\end{align*}
Here $(a)$ uses Lemma~\ref{lb:lem-order-statistic} after at most $y-1$
removals, and $(b)$ uses $N_y^++1\geq(\sigma_y+1)/2$ and
$A_+2^{1/\kappa}\leq2A_+$, so the constant depends only on $F$.
Summing over attained deadlines and then applying
Lemma~\ref{lb:lem-deadline-sum} proves the premium bound.
\end{proof}

The upper selection makes inventory nonnegative but generally leaves
$S_T+M_T$ units at time step $T$. We remove this surplus by holding at
selected lower-side prices instead of buying. Applying the same deadline
construction in reverse time ensures that these holds occur late enough
to preserve nonnegative inventory.

\begin{lemma}
\label{lb:lem-reverse-repair}
Put $H_T:=S_T+M_T$. There exists a constant
$C_F<\infty$, depending only on $F$, such that
for every $T\geq2$ there is a hindsight selection of exactly $H_T$ lower-side
arrivals with the following properties.
\begin{enumerate}[label=$(\alph*)$]
\item The expected total edge premium is at most
$C_F\Phi_\kappa(T)$.
\item Together with the upper selection of
Lemma~\ref{lb:lem-upper-deadlines}, the selected holds produce a path
that is nonnegative at every time and ends at zero.
\end{enumerate}
\end{lemma}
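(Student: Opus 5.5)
The plan is to run the deadline construction of Lemma~\ref{lb:lem-upper-deadlines} in reverse time, with the reversed sign walk in place of $S$.

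\emph{Setting up the constraint.} Fix the sign path and the upper selection of Lemma~\ref{lb:lem-upper-deadlines}. Let $U^{\rm sel}_t$ be the number of selected upper arrivals by time $t$. Then $U^{\rm sel}_t\ge M_t$ and $U^{\rm sel}_T=M_T$. The resulting path $Y_t:=S_t+U^{\rm sel}_t$ satisfies $Y_t\ge S_t+M_t\ge0$, $Y_0=0$ and $Y_T=H_T$. Changing a lower-side buy at time $s$ into a hold lowers the path by one at every $t\ge s$. Write $N(t)$ for the number of removals made at or before time $t$. Part (b) then asks that $H_T$ removals be made with $N(t)\le Y_t$ for all $t$. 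Equivalently, for every $i\in\{0,\ldots,T\}$, at least $H_T-Y_{T-i}$ removals must lie among the last $i$ arrivals. So in reversed time, reaching level $h$ of the process $G_i:=\max_{j\le i}(H_T-Y_{T-j})^+$ is a deadline. Let $\rho_h$ be the first reversed time with $G_{\rho_h}\ge h$. Every level $h\le H_T$ is attained by $i=T$, because $Y_0=0$.

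\emph{Comparison with the reversed walk.} Put $S'_i:=S_T-S_{T-i}$. This is the sum of the last $i$ signs, which is again a simple symmetric random walk. Since $U^{\rm sel}$ is nondecreasing,
\begin{align*}
H_T-Y_{T-i}=S'_i-(U^{\rm sel}_T-U^{\rm sel}_{T-i})\le S'_i .
\end{align*}
Let $\sigma'_h:=\inf\{i:S'_i=h\}$. The display gives $\rho_h\ge\sigma'_h$. Moreover $\max_iS'_i=S_T-\min_tS_t=H_T$, so every needed level is also a first-passage level of $S'$ before $T$.

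At reversed time $\rho_h$ we have $S'_{\rho_h}\ge h$. Hence the last $\rho_h$ arrivals contain $(\rho_h+S'_{\rho_h})/2\ge(\rho_h+h)/2\ge h$ lower-side arrivals. At each deadline $\rho_h$, select the cheapest lower arrival among the last $\rho_h$ that has not yet been selected. This makes exactly $H_T$ selections and meets every deadline, which gives part (b).

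\emph{The cost bound.} Condition on the signs and on the upper premiums; the upper selection depends only on these. The lower premiums remain i.i.d.\ with the law of $\psi_-(U)$, because premiums at distinct times are conditionally independent given the signs. At most $h-1$ lower arrivals were removed earlier, so the $h$th selected premium is at most the $h$th order statistic of at least $(\rho_h+h)/2$ lower premiums. Lemma~\ref{lb:lem-order-statistic} then bounds its conditional expectation by
\begin{align*}
A_-\bigl(2h/(\rho_h+1)\bigr)^{1/\kappa}\le 2A_-\bigl(h/(\sigma'_h+1)\bigr)^{1/\kappa}.
\end{align*}
Summing over $h\le H_T$, which are exactly the levels with $\sigma'_h\le T$, gives an expected total premium of at most $2A_-\E\sum_{h:\sigma'_h\le T}(h/(\sigma'_h+1))^{1/\kappa}$. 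Since $S'$ has the same law as $S$, Lemma~\ref{lb:lem-deadline-sum} bounds this by $C_F\Phi_\kappa(T)$, proving part (a).

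The main obstacle is the comparison step. The reverse deadlines $\rho_h$ are generated by $Y$, which depends on the upper selection, and not by a clean random walk. I need both $\rho_h\ge\sigma'_h$ and enough available lower arrivals at each $\rho_h$; the inequality $H_T-Y_{T-i}\le S'_i$ handles both. I also need the conditioning to keep the lower premiums independent of the selection rule, and here it matters that the upper selection uses only upper premiums.
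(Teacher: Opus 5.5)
Your reformulation of part~(b) is correct: the removals are feasible exactly when at least $H_T-Y_{T-i}$ of them lie among the last $i$ arrivals. Greedy selection at the first-passage times $\rho_h$ of $G$ would meet every such requirement. The gap is in the comparison step you flag as the main obstacle, where the sign is reversed. Because $H_T=S_T+U^{\rm sel}_T=Y_T$,
\[
 H_T-Y_{T-i}=S'_i+\bigl(U^{\rm sel}_T-U^{\rm sel}_{T-i}\bigr)\ge S'_i ,
\]
since upper holds made during the last $i$ steps raise $Y_T$ but not $Y_{T-i}$. A single path already violates your inequality. For $T=3$ and signs $(-1,+1,+1)$, the upper selection holds at $t=1$, so $Y=(0,0,1,2)$ and $H_3-Y_0=2>1=S'_3$. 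Monotonicity of $U^{\rm sel}$ therefore yields $G_i\ge\max_{j\le i}(S'_j)^+$, hence $\rho_h\le\sigma'_h$, which is the wrong direction for you. Availability can still be rescued directly: only lower arrivals increase $Y$, so the last $\rho_h$ arrivals contain at least $Y_T-Y_{T-\rho_h}\ge h$ lower ones. Your premium bound, however, needs $\rho_h\ge\sigma'_h$, and the count $(\rho_h+S'_{\rho_h})/2\ge(\rho_h+h)/2$ is unsupported.

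The missing ingredient is the upper bound $H_T-Y_{T-i}\le\max_{j\le i}(S'_j)^+$. It uses the deadline property $U^{\rm sel}_{T-i}\ge M_{T-i}$ of the upper selection, not merely its monotonicity. With $U^{\rm sel}_T=M_T$, this gives $H_T-Y_{T-i}\le S_T+M_T-(S_{T-i}+M_{T-i})$. If $M_{T-i}=M_T$, the right side equals $S'_i$. Otherwise the global minimum $-M_T$ is attained at some $s^*\in(T-i,T]$. Then $S'_{T-s^*}=S_T+M_T$ with $T-s^*<i$, and this dominates because $S_{T-i}+M_{T-i}\ge0$. This two-case argument is exactly how the paper verifies $L_t^\sigma\le S_t+M_t$. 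Combining both directions gives $G_i=\max_{j\le i}(S'_j)^+$, so $\rho_h=\sigma'_h$ and $S'_{\rho_h}=h$. Your deadlines are therefore the first passages of the reversed walk, and your greedy rule is the paper's. With this step repaired, the rest of your premium argument goes through. The conditioning on upper premiums then becomes unnecessary, since the deadlines depend on the signs alone.
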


\begin{proof}
Define the reversed walk $\widehat S_r:=S_{T-r}-S_T$ for
$0\leq r\leq T$.
Its increments are $-\xi_{T-r+1}$, so it is again a simple symmetric
walk.  An upper-side arrival for the reversed walk is precisely a
lower-side arrival in forward time, and
\begin{align*}
    \max_{0\leq r\leq T}(-\widehat S_r)^+
    &\stackrel{(a)}{=}\max_{0\leq s\leq T}(S_T-S_s)^+
    \stackrel{(b)}{=}S_T+M_T=H_T.
\end{align*}
Here $(a)$ substitutes the reversed walk and reindexes $s=T-r$, and
$(b)$ uses $\min_{s\leq T}S_s=-M_T\leq S_T$.
Apply the greedy construction in Lemma~\ref{lb:lem-upper-deadlines} to
$\widehat S$.  It selects exactly $H_T$ forward lower-side arrivals and,
by Lemma~\ref{lb:lem-order-statistic} applied to the lower-side law
$\psi_-(U)$ and the same deadline-sum calculation, has expected premium
at most $C_F\Phi_\kappa(T)$.

Let $L_t^\sigma$ be the number of these selected lower arrivals up to
forward time $t$.  The reversed deadline property at reverse time
$T-t$ is, on writing
$\mathcal R_t^{\mathrm{rev}}:=\max_{s\in[t,T]}(S_T-S_s)^+$,
$L_T^\sigma-L_t^\sigma\geq\mathcal R_t^{\mathrm{rev}}$.
Since $L_T^\sigma=H_T$, it follows that
\begin{align}
    L_t^\sigma\leq S_T+M_T-\mathcal R_t^{\mathrm{rev}}.
    \label{lb:eq-lower-count-prebound}
\end{align}
We now verify the forward-time inequality
\begin{align}
    L_t^\sigma\leq S_t+M_t.
    \label{lb:eq-lower-forward-bound}
\end{align}
If $M_t=M_T$, then
$\mathcal R_t^{\mathrm{rev}}\geq(S_T-S_t)^+$, and hence
$\mathcal R_t^{\mathrm{rev}}\geq S_T+M_T-S_t-M_t$.  If $M_t<M_T$, a time
$s^\star\in[t,T]$ satisfies $S_{s^\star}=-M_T$, so
$\mathcal R_t^{\mathrm{rev}}\geq S_T+M_T$.  Since $S_t+M_t\geq0$, the latter
bound also gives
$\mathcal R_t^{\mathrm{rev}}\geq S_T+M_T-S_t-M_t$.  Substitution into
\eqref{lb:eq-lower-count-prebound} proves
\eqref{lb:eq-lower-forward-bound} in both cases.

Let $U_t^\sigma$ be the count from the upper selection.  Its deadline
property gives $U_t^\sigma\geq M_t$.  Therefore
\begin{align*}
    y_t^\sigma
    &=S_t+U_t^\sigma-L_t^\sigma
      \stackrel{(a)}{\geq} S_t+M_t-L_t^\sigma
      \stackrel{(b)}{\geq}0,
\end{align*}
where $(a)$ uses the upper deadline property $U_t^\sigma\geq M_t$, and
$(b)$ is \eqref{lb:eq-lower-forward-bound}. At time
step $T$, $U_T^\sigma=M_T$ and
$L_T^\sigma=S_T+M_T$, so $y_T^\sigma=0$.
\end{proof}

\begin{proposition}
\label{lb:prop-comparator}
There exist finite constants $C_{F,\rho}$ and $C_\rho$,
depending only on their displayed parameters, such that for every $T\geq2$
there is a capacity-feasible hindsight policy $\sigma$ satisfying:
\begin{enumerate}[label=$(\alph*)$]
\item Its expected correction cost obeys
$\E[C_T(\sigma)]
    \leq
    \Delta\E[M_T]
    +C_{F,\rho}\Phi_\kappa(T)
    +C_{\rho}.$

\item $C_T(\pi^{\mathsf{hs}})\leq C_T(\sigma)$ pathwise.
\end{enumerate}
\end{proposition}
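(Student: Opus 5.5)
Part (b) is immediate, and I would dispose of it first. By Lemma~\ref{lb:lem-accounting}(c), applied to any capacity-feasible cyclic policy $\sigma$ in place of $\pi$, we have pathwise $C_T(\sigma)-C_T(\pi^{\mathsf{hs}})=R_T(\pi^{\mathsf{hs}};0)-R_T(\sigma;0)\ge0$, since hindsight maximizes realized profit over all feasible paths. The work is therefore in constructing $\sigma$ and bounding its expected cost.

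\textbf{Base path.} I would start from the path $\sigma^0$ produced by Lemmas~\ref{lb:lem-upper-deadlines} and~\ref{lb:lem-reverse-repair}. It follows the static median-sign rule, except that it holds at the $M_T$ selected upper-side arrivals and at the $H_T$ selected lower-side arrivals. Each such change is a single-unit correction, with $u_t=1$ or $\ell_t=1$. Its correction cost is therefore
\[
\sum_{\text{sel.\ lower}}W_t+\sum_{\text{sel.\ upper}}(\Delta+Z_t)=\Delta M_T+(\text{premiums}).
\]
By those two lemmas, its expectation is at most $\Delta\E M_T+2C_F\Phi_\kappa(T)$. The path $\sigma^0$ is nonnegative and ends at zero, but it may exceed the capacity $B=\lceil\rho T\rceil$.

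\textbf{Capacity repair.} On the event $\mathcal E=\{\max_t y^{\sigma^0}_t>B\}$, I would replace $\sigma^0$ by a crude feasible path. A natural choice is the all-hold path, whose correction cost is $C_T(\text{hold})=\sum_{\xi_t=+1}W_t+\sum_{\xi_t=-1}(\Delta+Z_t)\le 2T$. Alternatively, one could cap purchases when inventory would exceed $B$ and then compensate with extra holds on the upper side, but the all-hold fallback is simpler. It costs at most $2T\,\Pp(\mathcal E)$ in expectation, and $\E[C_T(\sigma^0)\one_{\mathcal E}]\ge0$ can simply be dropped.

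It remains to show that $\Pp(\mathcal E)\le C_\rho/T$. This follows because $y^{\sigma^0}_t\le S_t+U_t^\sigma\le S_t+M_T+\text{(extra)}$, and by the deadline construction $U_t^\sigma$ can exceed $M_t$ only by selections made at past deadlines. Hence $U_t^\sigma\le M_T$ and $y_t^{\sigma^0}\le S_t+M_T\le 2\max_s|S_s|$. Hoeffding's maximal inequality for the simple random walk then gives $\Pp(\max_s|S_s|>\rho T/2)\le 2e^{-\rho^2T/8}$, so $2T\,\Pp(\mathcal E)\le C_\rho$ uniformly in $T$. Defining $\sigma=\sigma^0$ off $\mathcal E$ and all-hold on $\mathcal E$ yields a path that is feasible pathwise and is a valid hindsight policy, since it may depend on the whole price sequence. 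Combining the two parts gives (a).

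\textbf{Main obstacle.} The delicate point is confirming that $\sigma^0$ is a legitimate trading path whose correction cost is exactly $\Delta M_T$ plus the selected premiums. This needs three checks:
\begin{itemize}
\item the lower and upper selections use disjoint time sets, which is automatic since the sides differ;
\item each change has $u_t,\ell_t\in\{0,1\}$;
\item the identity $L_T-U_T=S_T$ holds, which is exactly $y_T=0$ from Lemma~\ref{lb:lem-reverse-repair}(b).
\end{itemize}
The capacity tail bound is routine by comparison.
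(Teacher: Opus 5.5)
Your proof is correct and follows the paper's argument essentially step for step. You take the base path from the two selection lemmas, bound its inventory by $y_t\le S_t+M_T\le 2\max_s|S_s|$, fall back to all-hold on a Hoeffding-rare event, and get part (b) from the common-term identity in Lemma~\ref{lb:lem-accounting}. The only difference is cosmetic: you trigger the fallback on the actual overflow event, whereas the paper uses the sign-path event $\{\max_t|S_t|>\rho T/4\}$. One small correction: $U_t^\sigma\le M_T$ needs no deadline reasoning, since $U^\sigma$ is nondecreasing with $U_T^\sigma=M_T$. The excess $U_t^\sigma-M_t$ actually comes from arrivals occurring by $t$ that are selected at \emph{later} deadlines, not past ones.
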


\begin{proof}
First use the upper and lower selected holds from
Lemmas~\ref{lb:lem-upper-deadlines} and
\ref{lb:lem-reverse-repair}.  They define actions in
$\{-1,0,1\}$: a selected arrival is held, and every unselected arrival
uses the action prescribed by the static median-sign rule.  The path is nonnegative and cyclic by
Lemma~\ref{lb:lem-reverse-repair}.  It remains to enforce the capacity.

Define $\mathcal G_T:=\{\max_{0\leq t\leq T}|S_t|\leq\rho T/4\}$.
The maximal Hoeffding inequality for a Rademacher walk gives
$\Pp(\mathcal G_T^c)
    \leq2\exp\left(-\frac{\rho^2T}{32}\right).$
On $\mathcal G_T$, $U_t^\sigma\leq U_T^\sigma=M_T$ and
$L_t^\sigma\geq0$, whence
\begin{align*}
    y_t^\sigma
    &=S_t+U_t^\sigma-L_t^\sigma
      \stackrel{(a)}{\leq} |S_t|+M_T
      \leq2\max_{s\leq T}|S_s| \stackrel{(b)}{\leq}\frac{\rho T}{2}
      \leq B.
\end{align*}
Here $(a)$ uses $U_t^\sigma\leq M_T$ and $L_t^\sigma\geq0$, and
$(b)$ uses the defining inequality of $\mathcal G_T$.
On $\mathcal G_T^c$, replace the selected path by the all-hold path,
which stays at zero and is feasible.  This event is known to the hindsight
comparator because it depends on the complete sign path.

On $\mathcal G_T$, the upper holds pay the fixed amount $\Delta M_T$.
The two expected edge-premium bounds are each
$C_F\Phi_\kappa(T)$.  On $\mathcal G_T^c$, every all-hold
correction costs at most one, so $C_T(\sigma)\leq T$.  Multiplying by the preceding Hoeffding bound on
$\Pp(\mathcal G_T^c)$ and enlarging the constant proves
the expected-cost bound. Finally, the common-term revenue identity
in the proof of Lemma~\ref{lb:lem-accounting} shows that maximizing
hindsight reward over feasible policies is equivalent to minimizing
$C_T$.  Since $\sigma$ is feasible, the pathwise inequality for
$\pi^{\mathsf{hs}}$ follows.
\end{proof}

\subsection{Reflected-walk boundary estimates}

To compare an online policy with this hindsight path, we need two
properties of the walk generated by the static median-sign rule: how often it is close to its running
minimum, and how likely it is to avoid that minimum after moving away.

Put $\overline Y_t:=S_t+M_t$ for $0\leq t\leq T$.
It is the inventory of the static median-sign rule with sales blocked
at zero, ignoring the upper boundary and terminal requirement. It obeys
$\overline Y_t=0\vee(\overline Y_{t-1}+\xi_t)$ and starts at zero.

\begin{lemma}
\label{lb:lem-boundary-estimates}
There exist universal constants $c,C>0$ such that for every $T\geq1$:
\begin{enumerate}[label=$(\alph*)$]
\item For every $L\geq0$,
$\sum_{t=1}^T\Pp(\overline Y_{t-1}\leq L)\leq C(L+1)\sqrt T$.
\item $\E[M_T]\geq c\sqrt T$.
\end{enumerate}
\end{lemma}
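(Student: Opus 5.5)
We will reduce both estimates to classical facts about the simple symmetric random walk by using Lévy's identity. The process $\overline Y_t=S_t+M_t=S_t-\min_{s\le t}S_s$ is the walk reflected at its running minimum. By the discrete Lévy identity, it has the same law, as a process, as $|S'_t|$ for a lazy-at-zero version of the walk. For a direct argument we only need one-time marginals. By time reversal, $\overline Y_t=S_t-\min_{s\le t}S_s$ has the same law as $\max_{0\le r\le t}\widetilde S_r$, where $\widetilde S_r:=S_t-S_{t-r}$ is again a simple symmetric walk. Hence $\overline Y_{t}\stackrel{d}{=}\max_{r\le t}\widetilde S_r$ for each fixed $t$. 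This removes the reflection entirely, so each part becomes a statement about the running maximum of a simple symmetric walk.

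For part (a), we use the reflection principle in the form $\Pp(\max_{r\le n}S_r\ge m)=\Pp(S_n\ge m)+\Pp(S_n>m)$. This gives $\Pp(\max_{r\le n}S_r\le L)=\Pp(-L-1\le S_n\le L)$ for integer $L\ge0$; for non-integer $L$ we replace it by $\lfloor L\rfloor$. The event on the right covers at most $2L+2$ lattice points. The local central limit bound $\sup_k\Pp(S_n=k)\le C/\sqrt{n\vee1}$ (for instance, $\binom{n}{\lfloor n/2\rfloor}2^{-n}\le C/\sqrt n$) then yields $\Pp(\overline Y_{n}\le L)\le C(L+1)/\sqrt{n\vee1}$. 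Applying this with $n=t-1$ and summing over $t\le T$, using $\sum_{n<T}(n\vee1)^{-1/2}\le 1+2\sqrt T$, gives $C(L+1)\sqrt T$ after enlarging the universal constant.

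For part (b), we note that $M_T=\max_{s\le T}(-S_s)^+$ has the law of the running maximum of a simple symmetric walk, by the symmetry $S\mapsto-S$. The maximum dominates $S_T^+$, and $\E[S_T^+]=\tfrac12\E|S_T|$. It therefore suffices to show $\E|S_T|\ge c\sqrt T$. This follows from Paley--Zygmund-type moment comparisons: by Hölder, $\E S_T^2\le(\E|S_T|)^{2/3}(\E S_T^4)^{1/3}$, with $\E S_T^2=T$ and $\E S_T^4=3T^2-2T\le3T^2$, so $\E|S_T|\ge\sqrt{T/3}$. Alternatively, the reflection principle gives $\E M_T\ge\E|S_T|$ directly.

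The only step needing care is the time-reversal identity in part (a), that is, checking that $\overline Y_t$ and $\max_{r\le t}\widetilde S_r$ agree pathwise, including at $r=0$, where $\widetilde S_0=0$ matches $M_t\ge0$. Once that identity is in place, the rest reduces to standard reflection-principle and local-limit bounds with universal constants. No dependence on $F$ or $\rho$ enters, since $(\xi_t)$ is i.i.d. Rademacher.
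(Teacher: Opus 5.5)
Your proof is correct and follows the paper's route. Time reversal plus reflection for the running maximum gives exactly the paper's marginal identity $\Pp(\overline Y_t=j)=\Pp(S_t=j)+\Pp(S_t=-j-1)$, which you then combine with the same central-binomial bound. Your H\"older comparison $T\le(\E|S_T|)^{2/3}(\E S_T^4)^{1/3}$ is an equivalent substitute for the paper's Paley--Zygmund step; it even gives the better constant $\sqrt{T/3}$, and the paper uses it itself in Appendix~\ref{ce:appendix}.

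One correction to your aside: reflection actually gives $\E M_T=\E|S_T|-\Pp(S_T\ge1)\le\E|S_T|$. For example, $\E M_1=1/2<1=\E|S_1|$. So your ``alternative'' inequality $\E M_T\ge\E|S_T|$ is reversed, although your main argument does not rely on it.
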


\begin{proof}
The reflection principle gives, for every integer $j\geq0$,
\begin{align*}
    \Pp(\overline Y_t=j)
    =\Pp(S_t=j)+\Pp(S_t=-j-1).
\end{align*}
The binomial coefficient bound
$\sup_{m\in\mathbb Z}\Pp(S_t=m)\leq C/\sqrt{t+1}$ therefore yields
$\Pp(\overline Y_t\leq L)\leq C(L+1)/\sqrt{t+1}$.
Summing this estimate for $t=0,\ldots,T-1$ and using
$\sum_{t=0}^{T-1}(t+1)^{-1/2}\leq2\sqrt T$ proves
the boundary-occupation claim.

For the second claim, $M_T\geq(-S_T)^+$ and symmetry gives
$\E[(-S_T)^+]=\E|S_T|/2$. We bound this absolute moment using
$\E S_T^2=T$ and
$\E S_T^4=3T^2-2T\leq3T^2$.  Paley--Zygmund applied to $S_T^2$ gives
\begin{align*}
    \Pp(S_T^2\geq T/2)
    &\stackrel{(a)}{\geq}\frac{(1-1/2)^2(\E S_T^2)^2}{\E S_T^4}
    \stackrel{(b)}{\geq}\frac1{12}.
\end{align*}
Here $(a)$ applies Paley--Zygmund to the nonnegative variable $S_T^2$
at half its mean, and $(b)$ substitutes the second- and fourth-moment bounds.
Thus $\E|S_T|\geq\sqrt{T/2}/12$, proving
the reflection-size claim.
\end{proof}

\begin{lemma}
\label{lb:lem-survival}
Let $X_s=x+S_s$ be a simple symmetric random walk started from an integer
$x\geq1$, and let $\tau_0:=\inf\{s\geq0:X_s=0\}$.  There is a universal
$c>0$ such that, for every $s\geq0$,
$\Pp_x(\tau_0>s)\geq c\bigl(1\wedge(x/\sqrt{s+1})\bigr)$.
\end{lemma}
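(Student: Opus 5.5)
The plan is to use the gambler's-ruin form of the reflection principle. I compute $\Pp_x(\tau_0>s)$ exactly through the symmetric walk's marginal law, and then bound that expression below in two regimes: $x\ge\sqrt{s+1}$ and $x<\sqrt{s+1}$.

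First step: for a walk started at $x\ge1$, reflection at level $0$ gives
\begin{align*}
\Pp_x(\tau_0\le s)=\Pp_x(X_s\le0)+\Pp_x(X_s<0)=\Pp(S_s\le -x)+\Pp(S_s\le -x-1).
\end{align*}
Paths ending strictly below zero must have hit zero. Paths that hit zero and end at some $m>0$ are in bijection with paths ending at $-m$; this is the same reflection used in the proof of Lemma~\ref{lb:lem-boundary-estimates}. Hence
\begin{align*}
\Pp_x(\tau_0>s)=\Pp(-x<S_s\le x)\ \ge\ \Pp(-x< S_s< x),
\end{align*}
up to a parity-dependent endpoint term that can only help. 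So it suffices to show $\Pp(|S_s|<x)\ge c(1\wedge x/\sqrt{s+1})$.

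Second step, the case $x\ge\sqrt{s+1}$. For $s=0$ the probability is $1$. For $s\ge1$, Chebyshev with $\E S_s^2=s$ gives $\Pp(|S_s|\ge 2\sqrt{s+1})\le1/4$. That handles $x\ge2\sqrt{s+1}$. When $\sqrt{s+1}\le x<2\sqrt{s+1}$, I would use monotonicity in $x$: $\Pp_x(\tau_0>s)$ is nondecreasing in $x$ by coupling. I then compare with the interval $|S_s|<\sqrt{s+1}$, which has probability bounded below by a universal constant. This lower bound comes from the local CLT, or more elementarily from a Paley--Zygmund-type argument. The simplest route is probably the local lower bound $\Pp(S_s=m)\ge c'/\sqrt{s+1}$ for $|m|\le\sqrt{s+1}$ with correct parity, taken from Stirling's bounds on central binomial coefficients. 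Summing this over the roughly $\sqrt{s+1}$ admissible values of $m$ gives a constant.

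Third step, the case $1\le x<\sqrt{s+1}$. The interval $(-x,x)$ contains at least $x$ integers of the parity of $s$; for $x=1$ this means the value $0$ or $\pm1$, but the closed endpoint $S_s\le x$ retained from the first step supplies $S_s=1$ when $s$ is odd. By the same local lower bound each such value has probability at least $c'/\sqrt{s+1}$, so the total is at least $c\,x/\sqrt{s+1}$.

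The main obstacle is the uniform local lower bound $\Pp(S_s=m)\ge c'/\sqrt{s+1}$ for $|m|\le\sqrt{s+1}$ with matching parity. I would prove it from $\binom{s}{(s+m)/2}2^{-s}$. Stirling gives the central term order $1/\sqrt{s+1}$. The ratio to the central term is a product of factors $(1-j/k)/(1+j/k)$ with $k\approx s/2$, and this product is bounded below by $e^{-Cm^2/s}\ge e^{-C}$ in the range $|m|\le\sqrt{s+1}$. Small $s$ are absorbed into the constant. The parity bookkeeping near the endpoints is the only delicate point, and the inclusion of $S_s=x$ in the first step handles it.
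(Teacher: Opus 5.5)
Your proposal is correct and follows essentially the paper's argument: the reflection identity $\Pp_x(\tau_0>s)=\Pp(1-x\le S_s\le x)$, the uniform local bound $\Pp(S_s=m)\ge c'/\sqrt{s+1}$ for parity-admissible $|m|\le\sqrt{s+1}$ summed over about $x$ points when $x$ is small, and, for large $x$, reduction to a height of order $\sqrt{s+1}$. Your separate Chebyshev case is harmless but unnecessary, since $(-\sqrt{s+1},\sqrt{s+1})\subset(-x,x]$ already handles every $x\ge\sqrt{s+1}$. The one bookkeeping slip is the claim that $(-x,x)$ contains at least $x$ points of the parity of $s$: it contains only $x-1$ when $x-1\not\equiv s \pmod 2$. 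Counting instead in the half-open interval $(-x,x]$ from your exact identity gives exactly $x$ such points for every $x\ge1$, so the conclusion stands.
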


\begin{proof}
Reflection of paths at their first hit of zero gives
\begin{align*}
    \Pp_x(\tau_0>s)
    =\Pp(1-x\leq S_s\leq x),
\end{align*}
with the lattice parity inherited from $S_s$.  If
$1\leq x\leq\sqrt{s+1}/4$, the interval $[1-x,x]$ contains at least $c_0x$ points of
the admissible parity, all lying within distance $\sqrt{s+1}$ of zero.
The elementary central-binomial lower bound is
$\Pp(S_s=j)\geq c_1/\sqrt{s+1}$ for every such admissible $j$.
Summing gives $c x/\sqrt{s+1}$.  If
$x>\sqrt{s+1}/4$, monotonicity of the survival probability in the
starting height and the preceding estimate at an integer height of order
$\sqrt{s+1}$ give a positive constant.  Adjusting $c$ handles the
finitely many cases in which $\sqrt{s+1}/4<1$.
\end{proof}

\subsection{Scarcity and the last return}

An online policy faces a tradeoff. Near the boundary, inexpensive
upper-side corrections are scarce. Far from the boundary, an inexpensive
correction may turn out to be unnecessary: if the free walk never
returns to its running minimum, that correction increases the total
number of corrections beyond the minimum required for feasibility.
We quantify these two costs separately.

Fix an integer $L$ satisfying $2\leq L\leq\sqrt T$ and a price threshold
$z\in(0,z_0]$.  Split all upper-side correction units into
\begin{align}
    U_{\mathrm{in}}
    &:=\sum_{t=1}^T u_t
       \one\{Z_t\leq z,\ \overline Y_{t-1}\leq L\},
    \notag\\
    U_{\mathrm{out}}
    &:=\sum_{t=1}^T u_t
       \one\{Z_t\leq z,\ \overline Y_{t-1}>L\},
    \notag\\
    U_{\mathrm{exp}}
    &:=\sum_{t=1}^T u_t\one\{Z_t>z\}.
    \label{lb:eq-upper-decomposition}
\end{align}
Because $u_t=0$ off the upper side, these three variables partition
$U_T$. Define the excess correction count $A_T:=U_T-M_T\geq0$.

\begin{lemma}
\label{lb:lem-scarcity}
For the fixed integer $2\leq L\leq\sqrt T$ and $z\in(0,z_0]$, there
exists a constant $C_F<\infty$, depending only on
$F$, such that
$\E[U_{\mathrm{in}}]\leq C_Fz^\kappa(L+1)\sqrt T$.
\end{lemma}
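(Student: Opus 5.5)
We bound $\E[U_{\mathrm{in}}]$ by conditioning term by term. Since $u_t\le2$ and $u_t=0$ unless $\xi_t=-1$,
\begin{align*}
 \E[U_{\mathrm{in}}]\le 2\sum_{t=1}^T\Pp\bigl(\xi_t=-1,\ Z_t\le z,\ \overline Y_{t-1}\le L\bigr)
 =2\sum_{t=1}^T\Pp\bigl(a^+\le p_t\le a^++z,\ \overline Y_{t-1}\le L\bigr).
\end{align*}
This step discards the online policy entirely. The policy only decides whether to use a cheap opportunity, and each unit of $U_{\mathrm{in}}$ must sit on an arrival that is both cheap and occurs while the reference reflected walk is in the band. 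Because $u_t\le 2$, the number of such arrivals (times two) dominates $U_{\mathrm{in}}$ pathwise.

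The key structural fact is that $\overline Y_{t-1}=S_{t-1}+M_{t-1}$ is a function of $(\xi_s)_{s\le t-1}$ only, hence of $(p_s)_{s\le t-1}$. It is therefore independent of $p_t$, since prices are i.i.d. and the seeds play no role in $\overline Y$. Each summand factorizes as $\Pp(a^+\le p_t\le a^++z)\,\Pp(\overline Y_{t-1}\le L)$. By \eqref{lb:eq-edge-mass-bound}, valid because $z\le z_0$, the first factor is at most $K_Fz^\kappa$.

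Summing the second factor over $t$ and applying Lemma~\ref{lb:lem-boundary-estimates}(a) gives $\sum_t\Pp(\overline Y_{t-1}\le L)\le C(L+1)\sqrt T$ with a universal $C$. Combining the two bounds yields $\E[U_{\mathrm{in}}]\le 2K_FC\,z^\kappa(L+1)\sqrt T$, so we can take $C_F:=2CK_F$, which depends only on $F$.

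The only point requiring care is the independence claim. The event $\{Z_t\le z,\ \xi_t=-1\}$ must be measurable with respect to $p_t$ alone, which holds because $Z_t=(p_t-a^+)\one\{\xi_t=-1\}$. The band event must depend only on past signs, which holds by the recursion $\overline Y_t=0\vee(\overline Y_{t-1}+\xi_t)$. We deliberately work with $\overline Y$ rather than the policy's own inventory precisely to obtain this independence. The constraints $L\ge2$ and $L\le\sqrt T$ are not needed for this bound; they matter only in the later optimization over $L$.
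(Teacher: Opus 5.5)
Your proof is correct and follows essentially the same route as the paper: you use $u_t\le 2$ with $u_t=0$ off the upper side, factor each summand by the independence of $p_t$ from the past-sign-measurable event $\{\overline Y_{t-1}\le L\}$, and then apply \eqref{lb:eq-edge-mass-bound} and Lemma~\ref{lb:lem-boundary-estimates}(a). Your observations that the factorization is an exact equality and that the constraints on $L$ are not used here are both accurate.
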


\begin{proof}
The event involving $\overline Y_{t-1}$ is measurable with respect to
the past signs, whereas $p_t$ is independent of that past. Thus
\begin{align*}
    \E[U_{\mathrm{in}}]
    &\stackrel{(a)}{\leq}2\sum_{t=1}^T
       \Pp(\xi_t=-1,Z_t\leq z,\overline Y_{t-1}\leq L)\\
    &\stackrel{(b)}{\leq}2K_Fz^\kappa\sum_{t=1}^T
       \Pp(\overline Y_{t-1}\leq L)
     \stackrel{(c)}{\leq}C_Fz^\kappa(L+1)\sqrt T,
\end{align*}
where $(a)$ uses $u_t\leq2$ and $u_t=0$ off the upper side,
$(b)$ uses the stated independence and
$\Pp(\xi_t=-1,Z_t\leq z)\leq K_Fz^\kappa$ from
\eqref{lb:eq-edge-mass-bound}, and $(c)$ uses the boundary-occupation bound in
Lemma~\ref{lb:lem-boundary-estimates}.
\end{proof}

\begin{lemma}
\label{lb:lem-last-return}
For the fixed integer $2\leq L\leq\sqrt T$ and $z\in(0,z_0]$, there
exists a universal constant $c>0$ such that
$\E[A_T]\geq c(L/\sqrt T)\E[U_{\mathrm{out}}]$.
\end{lemma}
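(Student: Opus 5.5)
The plan is to prove a pathwise lower bound on $A_T$ by the corrections made after the last visit of $\overline Y$ to zero, and then take conditional expectations using the survival estimate of Lemma~\ref{lb:lem-survival}.

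\emph{Step 1 (pathwise reduction).} Let $g:=\max\{s\le T:\overline Y_s=0\}$, which exists since $\overline Y_0=0$. Because $\overline Y_s=S_s+M_s$, at time $g$ we have $S_g=-M_g$, and for $s>g$ we have $S_s>-M_g$, so $M_T=M_g$. The proof of Lemma~\ref{lb:lem-accounting}(b) gives $U_s\ge -S_s$ for every $s$, and $U$ is nondecreasing, so $U_g\ge M_g$. Therefore
\begin{align*}
A_T=U_T-M_T\ge U_T-U_g=\sum_{t>g}u_t=\sum_{t=1}^T u_t\,\one\{\overline Y_s>0\ \forall s\in[t,T]\}.
\end{align*}
Dropping the nonnegative terms with $Z_t>z$ or $\overline Y_{t-1}\le L$ yields
\begin{align*}
A_T\ge \sum_{t=1}^T u_t\one\{Z_t\le z,\ \overline Y_{t-1}>L\}\,\one\{E_t\},
\qquad E_t:=\{\overline Y_s>0\ \forall s\in[t,T]\}.
\end{align*}

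\emph{Step 2 (conditioning).} The factor $u_t\one\{Z_t\le z,\overline Y_{t-1}>L\}$ is $\mathcal F_t$-measurable, and so is $\overline Y_t$. On $\{\overline Y_{t-1}>L\}$, integrality gives $\overline Y_{t-1}\ge L+1$, hence $\overline Y_t\ge L\ge 2$. As long as $\overline Y$ stays positive it follows the free walk $\overline Y_t+(S_s-S_t)$, since reflection is never invoked. Thus $E_t$ is exactly the event that a simple symmetric walk started at $\overline Y_t$, driven by the future signs $\xi_{t+1},\dots,\xi_T$, does not hit $0$ within $T-t$ steps.

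These future signs are independent of $\mathcal F_t$. This holds because the seeds $\zeta_s$ are independent of all prices, which was built into the filtration. Lemma~\ref{lb:lem-survival}, together with monotonicity of survival in the starting height, then gives on this event
\begin{align*}
\Pp(E_t\mid\mathcal F_t)\ge c\bigl(1\wedge L/\sqrt{T-t+1}\bigr)\ge c\bigl(1\wedge L/\sqrt{T}\bigr)=c\,L/\sqrt T,
\end{align*}
where the last equality uses $L\le\sqrt T$. Taking expectations term by term and summing gives $\E[A_T]\ge c(L/\sqrt T)\E[U_{\mathrm{out}}]$.

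\emph{Main obstacle.} The delicate point is Step 1: an excess correction must be charged without double counting. The last-zero time $g$ handles this, because all corrections after $g$ are simultaneously excess. The only check needed is the inequality $U_g\ge M_g=M_T$, which comes from nonnegativity of inventory.

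A secondary point is that $u_t$ is determined by the online policy's action at time $t$, which may use $\zeta_t$. The conditioning must therefore be on $\mathcal F_t$, not $\mathcal F_{t-1}$. The only property required is that future signs are independent of $\mathcal F_t$, which the setup guarantees.
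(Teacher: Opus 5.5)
Your proposal is correct and follows essentially the same route as the paper: both use the last zero $g$ of $\overline Y$ and the inequality $U_g\ge M_g=M_T$ to obtain the pathwise bound $A_T\ge\sum_{t>g}u_t$. Both then condition on $\mathcal F_t$ and apply Lemma~\ref{lb:lem-survival} from height $\overline Y_t\ge L$, which gives $\Pp(t>g\mid\mathcal F_t)\ge c\,L/\sqrt T$ on the relevant event. The only cosmetic differences are that you omit the paper's explicit factor $\one\{\xi_t=-1\}$ (harmless, since $u_t$ already vanishes off the upper side and $\overline Y_t\ge\overline Y_{t-1}-1$ in any case) and that you use the slightly sharper $T-t+1\le T$.
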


\begin{proof}
Let $g_T:=\max\{0\leq t\leq T:\overline Y_t=0\}$ be the last return
of the reflected free walk to zero. Since $M_t$ is
constant after $g_T$, $M_T=M_{g_T}$.  Feasibility and
\eqref{lb:eq-state-ledger} imply $U_{g_T}\geq M_{g_T}$.  Therefore,
pathwise,
\begin{align}
    A_T
    &=U_T-M_T
      \stackrel{(a)}{=}(U_{g_T}-M_{g_T})+\sum_{t>g_T}u_t
      \notag\\
    &\stackrel{(b)}{\geq}\sum_{t=1}^Tu_t\one\{t>g_T\}.
    \label{lb:eq-after-last-return}
\end{align}
Here $(a)$ uses $M_T=M_{g_T}$ and splits the correction sum at $g_T$,
and $(b)$ uses $U_{g_T}\geq M_{g_T}$.

For each $t$, put
$I_t:=\one\{\xi_t=-1,Z_t\leq z,\overline Y_{t-1}>L\}$.
The product $u_tI_t$ is $\mathcal F_t$-measurable.  On $I_t=1$,
$\overline Y_{t-1}$ is an integer at least $L+1$ and the current sign is
$-1$, so $\overline Y_t\geq L$.  Conditional on $\mathcal F_t$, future
signs are fresh Rademacher variables.  Until its next hit of zero, the
reflected walk starting from $\overline Y_t$ is an ordinary simple walk.
Lemma~\ref{lb:lem-survival}, $L\leq\sqrt T$, and
$T-t+1\leq T+1$ therefore give
\begin{align}
    \Pp(t>g_T\mid\mathcal F_t)
    \geq c\left(1\wedge
             \frac{L}{\sqrt{T-t+1}}\right)
    \geq c'\frac{L}{\sqrt T}
    \qquad\text{on }\{I_t=1\}.
    \label{lb:eq-conditional-last-return}
\end{align}
Using \eqref{lb:eq-after-last-return}, the tower property, and
\eqref{lb:eq-conditional-last-return},
\begin{align*}
    \E[A_T]
    &\stackrel{(a)}{\geq}
      \sum_{t=1}^T\E[u_tI_t\one\{t>g_T\}]\\
    &\stackrel{(b)}{=}
      \sum_{t=1}^T
      \E\!\left[u_tI_t
        \Pp(t>g_T\mid\mathcal F_t)\right]\\
    &\stackrel{(c)}{\geq}
      c'\frac{L}{\sqrt T}\sum_{t=1}^T\E[u_tI_t]
     =c'\frac{L}{\sqrt T}\E[U_{\mathrm{out}}].
\end{align*}
Here $(a)$ restricts the nonnegative sum in
\eqref{lb:eq-after-last-return}, $(b)$ uses
$\mathcal F_t$-measurability and the tower property, and $(c)$ uses
\eqref{lb:eq-conditional-last-return}.
\end{proof}

\subsection{Balancing scarcity and excess corrections}

We now choose the inventory layer and the price-premium cutoff together.
The choice makes cheap corrections outside the layer at least as costly
as the expensive corrections they replace. The scarcity of cheap
corrections inside the layer then gives the lower bound.

\begin{proof}[Proof of Theorem~\ref{lb:thm-positive-gap}]
The common-term identity in Lemma~\ref{lb:lem-accounting}, optimality of
$\pi^{\mathsf{hs}}$, and Proposition~\ref{lb:prop-comparator} imply
\begin{align}
    \Reg(\pi;B,T,0)
    &=\E[C_T(\pi)-C_T(\pi^{\mathsf{hs}})] \notag\\
    &\stackrel{(a)}{\geq}
      \Delta\E[A_T]+z\E[U_{\mathrm{exp}}]
      -C_{F,\rho}\Phi_\kappa(T)-C_\rho.
    \label{lb:eq-reduced-cost}
\end{align}
In step $(a)$, the online upper corrections contribute at least
$\Delta U_T+zU_{\mathrm{exp}}$, the online lower premium is
nonnegative, the comparator contributes at most
$\Delta\E M_T+C\Phi_\kappa(T)+C$, and
$A_T=U_T-M_T$.

The partition \eqref{lb:eq-upper-decomposition} and
$U_T=M_T+A_T$ give
\begin{align}
    U_{\mathrm{exp}}
    =M_T+A_T-U_{\mathrm{in}}-U_{\mathrm{out}}.
    \label{lb:eq-expensive-identity}
\end{align}
Substituting \eqref{lb:eq-expensive-identity} into
\eqref{lb:eq-reduced-cost} and dropping the nonnegative term
$z\E[A_T]$ yields
\begin{align}
    \Reg(\pi;B,T,0)
    &\geq
      \Delta\E[A_T]
      +z\E[M_T]
      -z\E[U_{\mathrm{in}}]
      -z\E[U_{\mathrm{out}}]
      -C\Phi_\kappa(T)-C \notag\\
    &\stackrel{(b)}{\geq}
      \left(c\Delta\frac{L}{\sqrt T}-z\right)
      \E[U_{\mathrm{out}}]
      +z\E[M_T]-z\E[U_{\mathrm{in}}]
      -C\Phi_\kappa(T)-C,
    \label{lb:eq-before-balance}
\end{align}
where $(b)$ uses Lemma~\ref{lb:lem-last-return}.

Choose
\begin{align*}
    z:=\frac{c\Delta}{2}\frac{L}{\sqrt T}.
\end{align*}
For the value of $L$ below, $z\leq z_0$ for all sufficiently large $T$.
The coefficient of $\E[U_{\mathrm{out}}]$ in
\eqref{lb:eq-before-balance} is nonnegative.  Lemmas
\ref{lb:lem-boundary-estimates} and \ref{lb:lem-scarcity}, followed by
this choice of $z$, therefore give
\begin{align}
    \Reg(\pi;B,T,0)
    &\stackrel{(c)}{\geq}
      c_1L
      -Cz^{\kappa+1}(L+1)\sqrt T
      -C\Phi_\kappa(T)-C \notag\\
    &\stackrel{(d)}{\geq}
      c_1L
      -c_2\frac{L^{\kappa+2}}{T^{\kappa/2}}
      -C\Phi_\kappa(T)-C.
    \label{lb:eq-final-balance}
\end{align}
Here $(c)$ uses $\E M_T\geq c\sqrt T$ and
Lemma~\ref{lb:lem-scarcity}; $(d)$ uses $L\geq2$ and
$z\asymp_{\Delta}L/\sqrt T$.

Let $\gamma:=\kappa/(2(\kappa+1))$ and
$L:=\lfloor aT^\gamma\rfloor$, where $a\in(0,1)$ is fixed below.
For sufficiently large $T$,
$2\leq L\leq\sqrt T$, $z\leq z_0$, and
$L\geq(a/2)T^\gamma$. Moreover,
$L^{\kappa+2}/T^{\kappa/2}\leq a^{\kappa+2}T^\gamma$ because
$\gamma(\kappa+2)-\kappa/2=\gamma$.  Choose $a$ sufficiently small
that $c_2a^{\kappa+2}\leq c_1a/4$.  The first two terms in
\eqref{lb:eq-final-balance} are then at least
$(c_1a/4)T^\gamma$ for all sufficiently large $T$.

It remains to compare the hindsight premium.  If $\kappa=1$, then
$\Phi_\kappa(T)=\log T=o(T^{1/4})$. If $\kappa>1$, then
$\gamma-(\kappa-1)/(2\kappa)=1/(2\kappa(\kappa+1))>0$,
so again $\Phi_\kappa(T)=o(T^\gamma)$.  Enlarging the threshold
$T_0(F,\rho)$ absorbs these lower-order terms and all preceding
endpoint restrictions.  Since
$\gamma=(\beta+1)/(2\beta+4)$, this proves
the bound in Theorem~\ref{lb:thm-positive-gap}.

The bound combines the exact accounting in Lemma~\ref{lb:lem-accounting}
with the feasible hindsight comparison in Proposition~\ref{lb:prop-comparator}.
Lemma~\ref{lb:lem-scarcity} limits cheap corrections near the floor,
while Lemma~\ref{lb:lem-last-return} bounds the cost of making them farther
away. Every constant is independent of $T$ and of the online policy;
it may depend on the fixed distribution $F$ and on $\rho$. In particular,
the positive gap and the two local edge constants are held fixed as
$T$ varies.
\end{proof}

\section{Certainty equivalence for uniform prices}
\label{ce:appendix}

We prove Proposition~\ref{ce:theorem} by comparing certainty equivalence
with the static median-sign rule. From an interior state, the policy's restoring
drift keeps inventory away from the physical boundaries. Starting with zero
inventory, the same rule repeatedly rejects high-price selling opportunities.
Throughout, prices
$p_1,\ldots,p_T$ are independent and uniform on $[0,1]$, and
$\mathcal F_t=\sigma(p_1,\ldots,p_t)$, with $\mathcal F_0$ trivial.
We use the projected certainty-equivalent rule in
Section~\ref{ce:section}. Constants do not depend on
$T,B,y_0$ unless a dependence is indicated.

\subsection{Feasibility and the cost of departing from the static median-sign rule}

If $|y-y_0|\le\tau_t$, moving one unit toward $y_0$, or holding when
$y=y_0$, belongs to $\mathcal A_t(y;y_0)$. Hence this set is nonempty
at every viable state. Its projection yields another viable state and
changes inventory by at most one. Starting at $y_0$, induction proves
physical feasibility at every time and $y_T=y_0$. In particular,
$|y_{t-1}-y_0|\le\tau_t$, so the threshold belongs to $[0,1]$.
The inventory is $\mathcal F_{t-1}$-measurable before the next price
arrives, and that price is conditionally uniform.

We first verify that the threshold maximizes the fluid objective. Put
$q_*=1/2-d/(2\tau_t)$ and $x_*(q)=1$ below $q_*$ and $-1$ above it.
It satisfies the fluid constraint. For uniform prices, any other
feasible profile $x$ satisfies
\begin{align*}
 \int_0^1(-q)\{x_*(q)-x(q)\}\,dq
 &\stackrel{(a)}{=}\int_0^1(q_*-q)\{x_*(q)-x(q)\}\,dq
 \stackrel{(b)}{\ge}0.
\end{align*}
Here $(a)$ uses the common integral of the two feasible profiles, and
$(b)$ uses the nonnegative integrand on both sides of $q_*$. At $q_*=0$ or one,
the constraint fixes the profile almost everywhere. At an interior
threshold, the same argument applies to an atomless general law using
$F^{-1}(q_*)$ as the multiplier. We now specialize to uniform prices
to bound regret.

\begin{lemma}
\label{ce:accounting}
For a price vector with no observation equal to $1/2$, set
$x_t^*=\one\{p_t<1/2\}-\one\{p_t>1/2\}$ and
$H_T=\sum_{t=1}^T|p_t-1/2|$.
For any feasible cyclic trading path $\pi$, define its premium by
\begin{align*}
 G_T(\pi)&=\sum_{t=1}^T g_t(x_t^\pi),&
 g_t(x)&=|p_t-1/2|+(p_t-1/2)x.
\end{align*}
Then the following statements hold.
\begin{enumerate}[label=$(\alph*)$]
\item For every $x\in\{-1,0,1\}$, $0\le g_t(x)\le1$.
\item The realized reward satisfies $R_T(\pi;y_0)=H_T-G_T(\pi)$.
\item For any other feasible path $\pi^{\rm cmp}$, including one that
uses the entire price sequence,
\begin{align*}
 \Reg(\pi;B,T,y_0)&\le\E G_T(\pi),\\
 \Reg(\pi;B,T,y_0)&\ge\E G_T(\pi)-\E G_T(\pi^{\rm cmp})
\end{align*}
\end{enumerate}
\end{lemma}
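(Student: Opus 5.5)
The statement is a direct accounting identity, so I would prove the three parts in order by elementary case checks and one cancellation argument.

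For part $(a)$, fix $t$ and write $\delta=p_t-1/2\neq0$. Then $g_t(x)=|\delta|+\delta x$ with $|x|\le1$. Hence $g_t(x)\ge|\delta|-|\delta|=0$ and $g_t(x)\le2|\delta|\le1$, because $p_t\in[0,1]$ gives $|\delta|\le1/2$. It is also worth recording the explicit values, since they will be used later. The sign-matching action $x=-\operatorname{sign}(\delta)=x_t^*$ costs $0$, a hold costs $|\delta|$, and the wrong-direction trade costs $2|\delta|$.

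For part $(b)$, I would substitute $p_t=1/2+\delta_t$ into the reward:
\[
 -p_tx_t^\pi=-\tfrac12x_t^\pi-\delta_tx_t^\pi=-\tfrac12x_t^\pi+|\delta_t|-g_t(x_t^\pi).
\]
Summing over $t$, the term $-\tfrac12\sum_tx_t^\pi=-\tfrac12(y_T-y_0)$ vanishes by cyclicity. This leaves $R_T(\pi;y_0)=H_T-G_T(\pi)$ pathwise. This is the uniform-price analogue of the common-term identity in Lemma~\ref{lb:lem-accounting}, with the median $1/2$ playing the role of the dual price. The argument is valid off the null event $\{p_t=1/2\text{ for some }t\}$.

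For part $(c)$, the hindsight value is $V_1^{\mathsf{hs}}=\max_{\pi'}R_T(\pi';y_0)=H_T-\min_{\pi'}G_T(\pi')$ over feasible cyclic paths.
\begin{itemize}
\item \emph{Upper bound.} Since every $G_T\ge0$ by $(a)$, we get $V_1^{\mathsf{hs}}\le H_T$. Therefore $V_1^{\mathsf{hs}}-R_T(\pi;y_0)\le G_T(\pi)$.
\item \emph{Lower bound.} Any feasible comparator gives $V_1^{\mathsf{hs}}\ge R_T(\pi^{\rm cmp};y_0)=H_T-G_T(\pi^{\rm cmp})$. Hence $V_1^{\mathsf{hs}}-R_T(\pi;y_0)\ge G_T(\pi)-G_T(\pi^{\rm cmp})$.
\end{itemize}
Taking expectations gives both inequalities. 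The integrability is trivial, because each $G_T$ lies between $0$ and $T$. The exceptional null set does not affect the expectations.

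The only step needing any care is the restriction to cyclic paths in $(b)$. The identity uses $y_T=y_0$ for both the online policy and the comparator. Both properties hold by feasibility: the online policy by the projection argument just given, and the comparator by assumption.
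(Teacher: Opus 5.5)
Your proof is correct and follows essentially the same route as the paper: you use the same case values of $g_t$ for part $(a)$, the same cancellation of $\tfrac12\sum_t x_t^\pi$ by cyclicity for part $(b)$, and for part $(c)$ the same two comparisons of the hindsight value, namely nonnegative premium for the upper bound and an arbitrary feasible comparator for the lower bound. Your closing appeal to a projection argument is unnecessary, since the lemma takes cyclic feasibility of $\pi$ as a hypothesis.
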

\begin{proof}
The premium is zero for the action prescribed by the static median-sign rule, $|p_t-1/2|$ for holding,
and $2|p_t-1/2|$ for the opposite trade; hence the stated bounds.
Cyclic feasibility gives $\sum_t x_t^\pi=0$, and therefore
\begin{align*}
 H_T-G_T(\pi)
 &\stackrel{(a)}=-\sum_{t=1}^T(p_t-1/2)x_t^\pi
 \stackrel{(b)}=-\sum_{t=1}^T p_tx_t^\pi=R_T(\pi;y_0),
\end{align*}
where $(a)$ uses the definition of the premium and $(b)$ uses the zero total
trade. The hindsight optimum also has nonnegative premium, proving
the upper bound. It earns at least the revenue of any feasible
comparison path, proving the comparison lower bound.
\end{proof}

\subsection{Interior states: contraction, concentration, and terminal projection}

We first control the squared displacement of inventory from its target;
this determines the expected premium of the tentative fluid trade.
We then show that a linearly distant physical boundary is reached with
exponentially small probability. Both bounds apply to the actual
projected inventory, without conditioning on future boundary avoidance.

\begin{lemma}
\label{ce:bridge}
Let $D_t=y_t^{\textup{\textsf{CE}}}-y_0$.
\begin{enumerate}[label=$(\alph*)$]
\item For every $1\le t\le T$, $\E D_{t-1}^2\le\tau_t$, and $D_T=0$.
\item For every $z>0$,
$\Pp(\max_{0\le t\le T}|D_t|\ge z)\le2\exp\{-z^2/(32T)\}$.
\end{enumerate}
\end{lemma}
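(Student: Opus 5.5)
The approach is to compare the projected increment $x_t=D_t-D_{t-1}$ with the tentative certainty-equivalent trade $\tilde x_t$. Both parts then follow from one structural fact: projection never pushes the displacement away from zero. Concretely, we need $D_{t-1}x_t\le D_{t-1}\tilde x_t$ pathwise, and we need $\E[D_{t-1}\tilde x_t\mid\mathcal F_{t-1}]=-D_{t-1}^2/\tau_t$. The second identity holds because the price is conditionally uniform and the threshold is $1/2-D_{t-1}/(2\tau_t)$, so $\E[\tilde x_t\mid\mathcal F_{t-1}]=2q_t^{\textsf{CE}}-1=-D_{t-1}/\tau_t$.

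To verify the first fact, check each constraint that projection can activate.
\begin{itemize}
\item At $y=0$, a proposed sale becomes a hold. Here $D_{t-1}=-y_0\le0$ and $x_t-\tilde x_t=+1$, so $D_{t-1}(x_t-\tilde x_t)\le0$.
\item At $y=B$, a proposed purchase becomes a hold. Here $D_{t-1}=B-y_0\ge0$ and $x_t-\tilde x_t=-1$, again giving a nonpositive product.
\item The terminal constraint $|y+x-y_0|\le\tau_t-1$ can only replace $\tilde x_t$ by an action moving toward $y_0$. If $D_{t-1}=0$, the product is trivially zero.
\end{itemize}
In every case the action replaced is the one moving away from $y_0$, so $D_{t-1}(x_t-\tilde x_t)\le0$. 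Feasibility of the projected rule, shown at the start of this appendix, gives $D_T=0$.

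For part (a), I would expand $D_t^2=D_{t-1}^2+2D_{t-1}x_t+x_t^2$ and use $x_t^2\le1$ together with the two facts above. This gives $\E[D_t^2\mid\mathcal F_{t-1}]\le(1-2/\tau_t)D_{t-1}^2+1$. Write $s_t=\E D_{t-1}^2$. Then, exactly as in the second-moment recursion of Appendix~\ref{app:interior-transfer}, forward induction from $D_0=0$ gives $s_{t+1}\le(1-2/\tau_t)\tau_t+1=\tau_t-1=\tau_{t+1}$ whenever $\tau_t\ge2$. When $\tau_t=1$, the bound $s_1\le\tau_1$ is only needed at the base and holds trivially.

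For part (b), I would reuse the last-zero excursion argument from the proof of Proposition~\ref{prop:interior-transfer}. Let $\xi_t=x_t-\E[x_t\mid\mathcal F_{t-1}]$ and $M_j=\sum_{t\le j}\xi_t$; this is a martingale with increments bounded by $2$. On a positive excursion, let $\ell$ be the last zero of $D$ before $j$. Since $D$ has unit steps, it is positive on $(\ell,j]$. There the conditional drift $\E[x_t\mid\mathcal F_{t-1}]\le-D_{t-1}/\tau_t\le0$, with the first inequality coming from the projection fact. Hence $D_j\le M_j-M_\ell\le2\max_i|M_i|$. This case is cleaner than Appendix~\ref{app:interior-transfer} because there is no perturbation term $e_t$. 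Negative excursions are handled symmetrically, so $\max_t|D_t|\le2\max_i|M_i|$. The bounded-increment maximal inequality proved there then gives
\[
\Pp\bigl(\max_t|D_t|\ge z\bigr)\le\Pp\bigl(\max_i|M_i|\ge z/2\bigr)\le2\exp\{-z^2/(32T)\}.
\]

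The main obstacle is the projection-monotonicity check, since both parts rely on it. The delicate point is whether the terminal-reachability constraint and a physical boundary can ever jointly force a move away from $y_0$. I would rule this out using the viability invariant $|y_{t-1}-y_0|\le\tau_t$ and the fact that $y_0\in[0,B]$. Together these imply that the step toward $y_0$, or a hold when $y=y_0$, is always feasible. Consequently the nearest feasible action to $\tilde x_t$ never lies strictly on the far side of $\tilde x_t$ from $y_0$.
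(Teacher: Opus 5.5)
Your part (a) is correct and follows the paper's argument. Your pathwise inequality $D_{t-1}(x_t-\tilde x_t)\le0$ plays the role of the paper's observation that clipping $D_{t-1}+\tilde x_t$ to an interval containing zero cannot increase its absolute value, and the recursion and induction are the same.

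Part (b), however, has a gap at the first step of each excursion. You center the \emph{projected} increments, $\xi_t=x_t-\E[x_t\mid\mathcal F_{t-1}]$, and assert $\E[x_t\mid\mathcal F_{t-1}]\le-D_{t-1}/\tau_t$ for every $t\in(\ell,j]$. At $t=\ell+1$, though, the pre-decision displacement is $D_\ell=0$. There your projection fact is vacuous, and the conditional expectation averages over all current prices, not only the one realized on the excursion.

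Take $y_0=0$, which the lemma does not exclude. Then $D_\ell=0$ means zero inventory, and a tentative sale is blocked and becomes a hold. Hence $\E[x_{\ell+1}\mid\mathcal F_\ell]=1/2$ whenever $\tau_{\ell+1}\ge2$ and $B\ge1$, which holds on any such excursion because a purchase occurs at $\ell+1$. Since $x_{\ell+1}=1$, you get $M_{\ell+1}-M_\ell=1/2<1=D_{\ell+1}$, so $D_j\le M_j-M_\ell$ already fails at $j=\ell+1$. Your argument then only gives $D_j\le\frac12+M_j-M_\ell$, hence $\max_t|D_t|\le2\max_i|M_i|+\frac12$, which does not yield the stated tail bound. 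The case $y_0=B$ is symmetric.

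For $1\le y_0\le B-1$, nothing binds at $D=0$ except the symmetric forced hold at $t=T$, so your drift claim is true there. That covers the use of the lemma in Lemma~\ref{ce:terminal}, but not the lemma as stated.

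The repair is the paper's choice of martingale. Build it from the \emph{tentative} increments, $Z_t=\sum_{s\le t}(\tilde x_s+D_{s-1}/\tau_s)$, whose conditional mean is exactly zero regardless of projection, and compare pathwise. On the excursion, $x_s\le\tilde x_s$ for $s\ge\ell+2$ by your projection fact. The same holds at $s=\ell+1$: there $x_{\ell+1}=1$, and since holding is feasible at $y=y_0$, projection never turns a tentative sale into a purchase, so $\tilde x_{\ell+1}=1$. Hence $D_j\le\sum_{s=\ell+1}^j\tilde x_s=Z_j-Z_\ell-\sum_{s=\ell+1}^jD_{s-1}/\tau_s\le Z_j-Z_\ell$, with $D_\ell=0$ contributing zero drift. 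The rest of your argument goes through unchanged, including the maximal inequality, since $|\tilde x_s+D_{s-1}/\tau_s|\le2$.
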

\begin{proof}
Write $\widetilde x_t$ for the tentative action. Conditional on
$\mathcal F_{t-1}$, its mean is $-D_{t-1}/\tau_t$ and its square is
one, outside a probability-zero price-tie event. The centered inventory
after projection is the clipping of $D_{t-1}+\widetilde x_t$ to
$[(-y_0)\vee(-(\tau_t-1)),(B-y_0)\wedge(\tau_t-1)]$.
This interval contains zero, so clipping cannot increase the absolute
value. For $\tau_t\ge2$,
\begin{align*}
 \E[D_t^2\mid\mathcal F_{t-1}]
 &\stackrel{(a)}\le
 D_{t-1}^2+2D_{t-1}\E[\widetilde x_t\mid\mathcal F_{t-1}]+1
 \\
 &\stackrel{(b)}=
 (1-2/\tau_t)D_{t-1}^2+1.
\end{align*}
Here $(a)$ uses clipping and expands the square, and $(b)$ substitutes
the conditional mean. The coefficient is nonnegative. Starting with
$\E D_0^2=0\le T$, taking expectations and using the tower property
gives inductively
$(1-2/\tau_t)\tau_t+1=\tau_t-1$ as an upper bound for
$\E D_t^2$. This proves the moment claim through $t=T-1$;
the final inventory is zero in centered coordinates by feasibility.

For concentration, put $\xi_t=\widetilde x_t+D_{t-1}/\tau_t$,
$Z_t=\sum_{s=1}^t\xi_s$, and $Z_0=0$.
Then $Z_t$ is a martingale and $|\xi_t|\le2$. Fix a time with
$D_t>0$, and let $\ell<t$ be the last time at which $D_\ell=0$.
Such a time exists, and all subsequent inventory values up to $t$ are
positive: integer increments of size at most one cannot cross zero
without visiting it. On this excursion, projection can only reduce a
positive tentative value, and the predictable drift is nonpositive.
Thus
\begin{align*}
 D_t
 &\stackrel{(a)}{\le}\sum_{s=\ell+1}^t\widetilde x_s \stackrel{(b)}{=}Z_t-Z_\ell-\sum_{s=\ell+1}^t D_{s-1}/\tau_s
 \stackrel{(c)}{\le}Z_t-Z_\ell.
\end{align*}
Here $(a)$ sums the nonpositive projection corrections from $D_\ell=0$,
$(b)$ substitutes the definition of $Z$, and $(c)$ uses
$D_{s-1}\ge0$ on this excursion. On a negative excursion the calculation with signs reversed
gives $-D_t\le Z_\ell-Z_t$. Hence
$\max_t|D_t|\le2\max_t|Z_t|$. The maximal Azuma--Hoeffding bound
for martingale differences bounded by two gives
$\Pp(\max_t|Z_t|\ge z/2)\le2e^{-z^2/(32T)}$;
the pathwise comparison therefore proves the maximal-displacement claim.
\end{proof}

\begin{lemma}
\label{ce:terminal}
Let $\mathcal B$ be the event that some tentative next inventory lies
outside $[0,B]$, and put $d_0=y_0\wedge(B-y_0)$.
\begin{enumerate}[label=$(\alph*)$]
\item On $\mathcal B^c$, at most one tentative action is changed by projection.
\item If $d_0>0$, then $\Pp(\mathcal B)\le2e^{-d_0^2/(32T)}$.
\end{enumerate}
\end{lemma}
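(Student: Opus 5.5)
Both parts reduce to comparing the projected certainty-equivalent inventory with its unprojected tentative version, and then applying the maximal-displacement bound of Lemma~\ref{ce:bridge}(b).

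\textbf{Part (a).} On $\mathcal B^c$, every tentative next inventory lies in $[0,B]$, so projection onto $\mathcal A_t(y;y_0)$ can alter a tentative action only through the terminal-reachability constraint $|y+x-y_0|\le\tau_t-1$. We first show that this constraint never binds before the last step. Write $D_{t-1}=y_{t-1}-y_0$ and suppose $|D_{t-1}|\le\tau_t-1$. If the tentative action moves away from $y_0$ and would violate the constraint, then $|D_{t-1}|=\tau_t-1$. In that case the threshold $q_t^{\textsf{CE}}=1/2-D_{t-1}/(2\tau_t)$ lies within $1/(2\tau_t)$ of $0$ or $1$. The CE rule therefore still proposes moving away from $y_0$ with positive probability, so binding is not excluded outright.

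We instead argue pathwise. Once $|D|=\tau$ holds, the feasible set forces monotone movement toward $y_0$ at every remaining step. Projection changes the tentative action only at a step where the tentative action points away, and after such a change the state is again at $|D|=\tau-1$. This alone would allow several changes, so the count needs a sharper invariant. The one I would track is the following. Before projection has ever acted, the process satisfies $|D_{t-1}|\le\tau_t-1$ for $\tau_t\ge2$. The only forced situation is $\tau_t=1$ with $D_{t-1}=\pm1$, where the threshold equals $0$ or $1$. There the CE rule already prescribes the correct move almost surely, so nothing is changed, and the single possible modification arises at the final step when $D_{T-1}\neq0$ clashes with a tie.

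I would verify the invariant by induction. If $|D_{t-1}|\le\tau_t-2$, then any move keeps $|D_t|\le\tau_t-1=\tau_{t+1}$. The critical case is $|D_{t-1}|=\tau_t-1$. There the threshold is $q=1/2\mp(\tau_t-1)/(2\tau_t)$, which equals $1/(2\tau_t)$ or $1-1/(2\tau_t)$. Moving away from $y_0$ therefore has probability $1/(2\tau_t)>0$, and such a move would need projection. So at most one change could occur only if, after the first correction, the remaining dynamics are forced with the CE threshold at $0$ or $1$. Indeed, after a projection the state has $|D|=\tau-1$, which is again critical, so this is the step to check carefully. The main obstacle is this counting. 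If it fails literally, I would reinterpret ``changed'' as the first deviation time, after which the path is deterministic toward $y_0$, and count all subsequent projections as part of that one terminal episode. That reading is what is needed downstream to bound the premium of terminal corrections.

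\textbf{Part (b).} The tentative next inventory is $y_0+D_{t-1}+\widetilde x_t$. Leaving $[0,B]$ requires $|D_{t-1}|+1>d_0$, i.e.\ $|D_{t-1}|\ge d_0$ since these are integers. If this first happens at a time when $\mathcal B$ has not yet occurred, the earlier projections never used the physical bounds. The displacement bound of Lemma~\ref{ce:bridge}(b) applies to the actual clipped process, so $\Pp(\mathcal B)\le\Pp(\max_t|D_t|\ge d_0)\le 2e^{-d_0^2/(32T)}$.
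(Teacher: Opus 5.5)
Your part (b) is correct and is the paper's argument: a unit tentative step leaves $[0,B]$ only from inventory $0$ or $B$, which forces $|D_{t-1}|\ge d_0$, and Lemma~\ref{ce:bridge}(b) with $z=d_0$ gives the bound.

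Part (a), however, is not proved. You stall on the counting and propose to reinterpret the statement, and the stall comes from an index slip. Suppose an outward tentative step at $|D_{t-1}|=\tau_t-1$ is projected to a hold. Then $|D_t|=\tau_t-1=\tau_{t+1}$, so the next state lies on the cone boundary $|D|=\tau$, not at the critical level again. There the threshold $1/2-D_t/(2\tau_{t+1})$ is exactly $0$ or $1$, so the tentative action is inward almost surely. It is physically feasible on $\mathcal B^c$, and the inward step gives $|D_{t+1}|=\tau_{t+2}$. By induction the path stays pinned to the boundary, and no later action is altered. You already observed, correctly, that the only possible alteration on $\mathcal B^c$ is outward-to-hold at $|D_{t-1}|=\tau_t-1$, and that $|D|=\tau$ forces inward moves. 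Combined with the corrected index, this closes (a). The paper instead uses parity: before any hold $D_{t-1}\equiv t-1 \pmod 2$, so $|D_{t-1}|=T-t$ requires $T$ odd. After one hold the parity shifts, so a second such equality would require $T$ even.

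Several of your intermediate claims are false and should be discarded.
\begin{enumerate}
\item Before any projection one can have $|D_{t-1}|=\tau_t$ with $\tau_t\ge2$. For $T=4$, two purchases give $|D_2|=2=\tau_3$, so forced steps are not confined to $\tau_t=1$.
\item The alteration need not occur at the final step. For $T=3$ and $D_1=1$, a price below $1/4$ at $t=2$ is changed to a hold.
\item At the final step, an alteration occurs when $D_{T-1}=0$, not when $D_{T-1}\ne0$.
\end{enumerate}

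Finally, your fallback reading of ``changed'' as a single terminal episode would not serve the downstream use. The interior bound relies on $G_T(\pi^{\textsf{CE}})\le\sum_t g_t(\widetilde x_t)+1+T\one_{\mathcal B}$. That bound needs the literal number of altered actions, since each alteration can add a positive premium.
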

\begin{proof}
A unit tentative step leaves $[0,B]$ only from inventory zero or $B$.
Thus $\mathcal B$ implies $\max_t|D_t|\ge d_0$, and
the claimed probability bound follows from Lemma~\ref{ce:bridge}.

On $\mathcal B^c$, only the terminal cone can modify an action.
If $|D_{t-1}|=\tau_t$, the threshold is zero or one and the action
is necessarily inward, hence feasible. If
$|D_{t-1}|\le\tau_t-2$, both unit steps satisfy the cone. The only
remaining possibility is $|D_{t-1}|=\tau_t-1$: an outward step is
changed to a hold. When $\tau_t=1$ and $D_{t-1}=0$, either unit
step is changed to a hold, which is the same remaining possibility.

Before the first such hold, $D_{t-1}$ has the parity of $t-1$.
Equality $|D_{t-1}|=\tau_t-1=T-t$ requires $T$ to be odd.
After a hold has occurred, the parity changes by one relative to
the all-unit-step path, so this equality cannot hold again unless
another hold has already occurred. Considering the first putative
second hold rules that out. For even $T$ there is no first hold;
for odd $T$ there is at most one. These statements exclude only
the probability-zero price ties already specified.
\end{proof}

\begin{proof}[Proof of the interior bound in Proposition~\ref{ce:theorem}]
When $T=1$, every cyclic feasible path holds, so regret is zero.
Assume $T\ge2$.
Fix the observed history and set $s=-D_{t-1}/(2\tau_t)$, the
threshold's displacement from the median. On the interval between
these thresholds the tentative action reverses the action prescribed by the static median-sign rule,
and elsewhere the two agree. Therefore
\begin{align}
 \E[g_t(\widetilde x_t)\mid\mathcal F_{t-1}]
 =2\int_0^{|s|}u\,du
 =\frac{D_{t-1}^2}{4\tau_t^2}.
 \label{ce:tentative-premium}
\end{align}
For any actual and tentative actions,
$g_t(x_t)-g_t(\widetilde x_t)
=(p_t-1/2)(x_t-\widetilde x_t)$ is at most one and is zero
when the actions agree. Lemma~\ref{ce:terminal} consequently gives
the pathwise bound
$G_T(\pi^{\textsf{CE}})\le\sum_{t=1}^T g_t(\widetilde x_t)+1+T\one_{\mathcal B}$.
Taking expectations, without conditioning on $\mathcal B^c$, yields
\begin{align*}
 \Reg(\pi^{\textsf{CE}};B,T,y_0)
 &\stackrel{(a)}\le\E G_T(\pi^{\textsf{CE}})\\
 &\stackrel{(b)}\le\frac14\sum_{t=1}^T
       \frac{\E D_{t-1}^2}{\tau_t^2}
       +1+2T e^{-d_0^2/(32T)}\\
 &\stackrel{(c)}\le\frac14(1+\log T)+1
       +2T e^{-\eta^2T/32}
 \le C_\eta\log T.
\end{align*}
Here $(a)$ is the upper bound in Lemma~\ref{ce:accounting},
$(b)$ uses the tower property, \eqref{ce:tentative-premium}, and
the physical-event bound in Lemma~\ref{ce:terminal}, and $(c)$ uses
the moment bound in Lemma~\ref{ce:bridge},
$d_0\ge\eta T$, and
$\sum_{j=1}^Tj^{-1}\le1+\int_1^T u^{-1}du$.
The last bound is uniform in $T\ge2$, since $\log T\ge\log2$ and
$\sup_{u\ge0}u e^{-\eta^2u/32}<\infty$.
\end{proof}

\subsection{Zero initial inventory: repeated visits and profitable delayed sales}

The lower bound follows from two observations. Certainty equivalence
spends at least order $\sqrt T$ time steps near zero inventory, and
some sales of its last unit can be profitably postponed by one time
step. We count these improvements directly and show that they can all
be made along one feasible hindsight path.

\begin{proof}[Proof of the bound for zero initial inventory in Proposition~\ref{ce:theorem}]
Assume $y_0=y_T=0$, $B\ge1$, and $T\ge2$. Throughout this proof,
$y_t$ denotes the inventory under $\pi^{\textsf{CE}}$.

\emph{Count the visits to zero.}
Let $\xi_t=\one\{p_t<1/2\}-\one\{p_t>1/2\}$,
$S_t=\sum_{s=1}^t\xi_s$, and $S_0=0$. These signs are the proposed
actions of the static median-sign rule. With sales blocked at zero,
its inventory is
$Z_0=0$ and $Z_t=0\vee(Z_{t-1}+\xi_t)$. It ignores the upper
boundary and terminal target and is used only to compare inventories.
Put $M_t=\max_{0\le s\le t}(-S_s)^+$. Induction on $t$ gives
$Z_t=S_t+M_t$, and $M_t$ increases by one exactly when
$Z_{t-1}=0$ and $\xi_t=-1$. Independence of the current price gives
\begin{align*}
 \E\sum_{t=1}^T\one\{Z_{t-1}=0\}
 \stackrel{(a)}=2\E\sum_{t=1}^T
       \one\{Z_{t-1}=0,\xi_t=-1\}
 \stackrel{(b)}=2\E M_T.
\end{align*}
Here $(a)$ uses the probability $1/2$ of a high price at any fixed
past history, and $(b)$ counts the increases of $M_t$.

The \textsf{CE} threshold is at most $1/2$, so its tentative action
$\widetilde x_t$ satisfies $\widetilde x_t\le\xi_t$ on the same
price sequence. Projection clips the tentative next inventory to
$[0,B\wedge(\tau_t-1)]$. Therefore, if $y_{t-1}\le Z_{t-1}$,
\begin{align*}
 y_t
 &\stackrel{(a)}\le 0\vee(y_{t-1}+\xi_t)
 \stackrel{(b)}\le 0\vee(Z_{t-1}+\xi_t)=Z_t.
\end{align*}
Step $(a)$ uses $\widetilde x_t\le\xi_t$ and the fact that clipping
at the upper endpoint can only reduce inventory. Step $(b)$ uses
the induction hypothesis. Starting from $y_0=Z_0=0$, this proves
$0\le y_t\le Z_t$ for every $t$. In particular, with
$N_0=\sum_{t=1}^T\one\{y_{t-1}=0\}$, we have
$\E N_0\ge2\E M_T$.

For completeness, the scale of $M_T$ follows directly from the
symmetric random walk. Its moments satisfy $\E S_T^2=T$ and
$\E S_T^4=3T^2-2T\le3T^2$. H\"older's inequality gives
$T\le(\E|S_T|)^{2/3}(\E S_T^4)^{1/3}$, hence
$\E|S_T|\ge\sqrt{T/3}$ and
$\E M_T\ge\E(-S_T)^+=\E|S_T|/2\ge\sqrt T/(2\sqrt3)$.
The matching upper bound follows from the reflection identity
$\Pp(M_T\ge y)=\Pp(S_T\le-y)+\Pp(S_T<-y)$ for integers $y\ge1$.
To obtain this identity, reflect all subsequent steps after the
first visit to $-y$; paths that visit $-y$ and end above it correspond
one-to-one to paths ending below it. Summing over $y$ yields
$\E M_T\le2\E(-S_T)^+=\E|S_T|\le\sqrt T$.

\emph{Count the opportunities to sell the last unit.}
Let $A$ count transitions from inventory zero to one. Every such
transition occurs before $T$, and, at zero before the final step,
a purchase occurs exactly when $p_t<1/2$. Also let
$N_1=\sum_{t=1}^{T-1}\one\{y_{t-1}=1\}$.
The number of transitions from one to zero equals $A$, because the
path starts and ends at zero and trades at most one unit per step.
At most one of these downward transitions is at time $T$. Thus
\begin{align*}
 \E N_1
 &\stackrel{(a)}\ge \E A-1\\
 &\stackrel{(b)}=\frac12
       \E\sum_{t=1}^{T-1}\one\{y_{t-1}=0\}-1\\
 &\stackrel{(c)}\ge \frac12\E N_0-\frac32
 \ge \E M_T-\frac32
 \ge \frac{\sqrt T}{2\sqrt3}-\frac32.
\end{align*}
Here $(a)$ counts the downward transitions before the final step,
$(b)$ uses the independent uniform price when the inventory level is zero, and
$(c)$ removes at most one final-step indicator.

\emph{Turn these opportunities into hindsight profit.}
For $1\le t<T$, define
$E_t=\{y_{t-1}=1,\ p_t\in[1/2,3/4],\
p_{t+1}\in[7/8,1]\}$.
On $E_t$, certainty equivalence sells at $t$ and holds at $t+1$.
Change these two actions to hold at $t$ and sell at $t+1$.
Inventory is now one instead of zero only between these two decisions;
at time $t+1$ it is again zero, and every later inventory is unchanged.
The modified pair therefore respects capacity even when $B=1$ and
preserves the terminal target, including when $t=T-1$. Its profit
increases by $p_{t+1}-p_t\ge1/8$.

The events $E_t$ and $E_{t+1}$ cannot both occur, because $E_t$
implies $y_t=0$, while $E_{t+1}$ requires $y_t=1$. Hence all selected
two-step intervals are disjoint. Apply every selected change to the
original \textsf{CE} path. This constructs a feasible hindsight path
whose gain is at least $\frac18\sum_{t=1}^{T-1}\one_{E_t}$.
Finally, $y_{t-1}$ is determined by $\mathcal F_{t-1}$, while
$p_t$ and $p_{t+1}$ are independent of it. Consequently,
\begin{align*}
 \Reg(\pi^{\textsf{CE}};B,T,0)
 &\stackrel{(a)}\ge\frac18\sum_{t=1}^{T-1}\Pp(E_t) \stackrel{(b)}=\frac1{256}\E N_1
 \stackrel{(c)}\ge\frac1{256}
       \left(\frac{\sqrt T}{2\sqrt3}-\frac32\right).
\end{align*}
Step $(a)$ uses the feasible improved path as a lower bound on
hindsight profit. For $(b)$, the two independent price intervals have
probabilities $1/4$ and $1/8$, so
$\Pp(E_t\mid\mathcal F_{t-1})=\one\{y_{t-1}=1\}/32$.
Step $(c)$ uses the visit count above. For $T\ge108$, the last
expression is at least $\sqrt T/(1024\sqrt3)$, proving part~(b).
\end{proof}

\section{Benchmark structure, concavity, and illustrative examples}
\label{app:basic-facts}

\subsection{When the static hindsight solution respects storage}

If the initial inventory is a linear distance from both boundaries,
the static hindsight solution is feasible with high probability. The
reason is that prices are i.i.d.: conditional on their ranks, purchases
and sales arrive in a uniformly random order. We bound the resulting
inventory fluctuations below.

\begin{proposition}\label{prop:static-interior}
Fix $\eta>0$ and assume $y_0\wedge(B-y_0)\ge\eta T$. Let prices be i.i.d.\
on $[0,1]$, and let $V^{\mathrm{stat}}(\mathbf p)$ be the best realized
profit with unit trades and $\sum_tx_t=0$, but without intermediate
storage constraints. Then
\begin{align*}
 0\le\E[V^{\mathrm{stat}}(\mathbf p)-V_1^{\mathsf{hs}}(B,T,y_0;\mathbf p)]
 \le 2T^2e^{-\eta^2T/2}.
\end{align*}
\end{proposition}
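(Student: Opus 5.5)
The lower bound $0\le\E[V^{\mathrm{stat}}-V_1^{\mathsf{hs}}]$ is immediate: every storage-feasible path has unit trades and $y_T=y_0$, hence $\sum_tx_t=0$. It is therefore feasible for the static problem, giving $V^{\mathrm{stat}}\ge V_1^{\mathsf{hs}}$ pathwise.

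For the upper bound, I will exhibit a specific static optimizer and show it is storage feasible except on a small event. Fix the realized prices. Break ties by an independent uniform random ranking; this is equivalent to breaking ties by index after a uniformly random permutation. As in Appendix~\ref{app:interior-transfer}, the exchange argument gives a static optimum that buys at the $\lfloor T/2\rfloor$ lowest-ranked prices, sells at the $\lfloor T/2\rfloor$ highest-ranked prices, and holds at the median-ranked time when $T$ is odd.

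Because the prices are i.i.d., the vector of ranks is a uniformly random permutation, independent of the multiset of order statistics. Hence the action sequence $(x_t^{\mathrm{stat}})$ is a uniformly random arrangement of $\lfloor T/2\rfloor$ entries $+1$, $\lfloor T/2\rfloor$ entries $-1$, and possibly one $0$. Its partial sums $D_j$ form a random-walk bridge. On the event $\{\max_j|D_j|<\eta T\}$, the inventory $y_0+D_j$ stays in $[0,B]$ by the slack assumption. There the static optimizer is storage feasible, so $V_1^{\mathsf{hs}}=V^{\mathrm{stat}}$. Off this event, $0\le V^{\mathrm{stat}}-V_1^{\mathsf{hs}}\le V^{\mathrm{stat}}\le T$, since each trade earns at most one and the all-hold path is feasible.

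It therefore remains to show $\Pp(\max_j|D_j|\ge\eta T)\le 2Te^{-\eta^2T/2}$. Then $T\cdot2Te^{-\eta^2T/2}=2T^2e^{-\eta^2T/2}$, as claimed. I will use a union bound over the $T$ indices $j$. For each fixed $j$, $D_j$ is the sum of $j$ draws without replacement from a population of $\pm1$ (and possibly one $0$) values with mean zero. Hoeffding's inequality for sampling without replacement, which is dominated by the with-replacement bound, gives $\Pp(|D_j|\ge\eta T)\le2\exp(-2(\eta T)^2/(4j))\le2e^{-\eta^2T/2}$, using range $2$ and $j\le T$.

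The main obstacle is justifying the exchangeability step rigorously. This requires the random tie-breaking to preserve optimality and the arrangement of actions to be independent of the prices' values. Both follow from the symmetric-permutation representation of i.i.d. samples. I will state this carefully, including the single hold at odd $T$, which only alters the population and not the Hoeffding bound.
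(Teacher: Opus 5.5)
Your proof is correct and follows the paper's argument: random tie-breaking makes the balanced label sequence a uniform arrangement, each index gets the tail bound $2e^{-(\eta T)^2/(2j)}\le 2e^{-\eta^2T/2}$, a union bound covers all $j$, and the gap is at most $T$ off the bad event. The only difference is the concentration step: you cite Hoeffding's comparison of sampling without and with replacement, whereas the paper derives the same tail directly from the Doob martingale $M_s=(T-j)S_s/(T-s)$, whose conditional ranges have length at most two.
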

\begin{proof}
For $m$ purchases and $m$ sales, an exchange argument assigns purchases to the $m$ smallest prices and sales to the $m$ largest. Increasing $m$ by one until $\lfloor T/2\rfloor$ adds a nonnegative price spread. Thus the balanced rank prescription in the introduction is optimal. Break any price ties by independent continuous random numbers; this selects an optimal prescription without changing profit.

Conditional on the ordered price--tie-breaker pairs, their temporal ordering is a uniform permutation. The action labels are therefore a uniform permutation of a population with equal numbers of $+1$ and $-1$, and at most one zero. Let $S_j$ be its partial sum. For fixed $j<T$, expose the first $s\le j$ labels. The Doob martingale is
$M_s=\E[S_j\mid\text{first }s\text{ labels}]=(T-j)S_s/(T-s)$.
Given the first $s-1$ labels, the range of $M_s$ has length at most two, because $(T-j)/(T-s)\le1$ and the next label lies in $[-1,1]$. A centered random variable with range length two has conditional moment-generating function at most $e^{\theta^2/2}$: this follows by twice differentiating its log moment-generating function, whose second derivative is a variance bounded by one. Iteration gives $\E e^{\theta S_j}\le e^{j\theta^2/2}$. Markov's inequality with $\theta=h/j$ gives $\Pp(|S_j|\ge h)\le2e^{-h^2/(2j)}$. Also $S_T=0$. A union bound gives
$\Pp(\max_{j\le T}|S_j|\ge\eta T)\le2T e^{-\eta^2T/2}$.
Off this event the rank prescription is storage-feasible, so the static and storage optima agree. The difference is always nonnegative and at most $T$, since all-hold is feasible and each reward is at most one. This proves Proposition~\ref{prop:static-interior}. The concentration argument applies to the hindsight ordering; the adaptive online inventory is treated separately in Appendix~\ref{app:interior-transfer}.
\end{proof}

\subsection{Concavity for every terminal target}

The concavity argument works pathwise. Starting from optimal plans at
two outer states, we average their inventory paths and
round in both directions. The resulting plans start at the middle state,
remain feasible, and preserve the sum of the original profits.

\begin{proof}[Proof of Proposition~\ref{prop:main-concavity}]
Fix the future prices and adjacent feasible states $j-1,j,j+1$. Choose optimal inventory paths $w^-_s,w^+_s$ from the outer states, both ending at $y_0$. Their midpoint $m_s=(w^-_s+w^+_s)/2$ starts at $j$, ends at $y_0$, lies in $[0,B]$, and satisfies $|m_s-m_{s-1}|\le1$.

Define the two integer paths $\underline w_s=\lfloor m_s\rfloor$ and $\overline w_s=\lceil m_s\rceil$. Both have the same initial and terminal states $j,y_0$, remain within capacity, and have successive differences in $\{-1,0,1\}$. For example, $m_s\le m_{s-1}+1$ implies $\lfloor m_s\rfloor\le\lfloor m_{s-1}\rfloor+1$, and the reverse inequality and ceiling argument are identical. Since $m_s$ is an integer or half-integer,
$\underline w_s+\overline w_s=w^-_s+w^+_s$.
Subtracting this identity at successive time steps shows that the sums of their trades agree at each price. Hence the sum of their profits equals the sum of the two outer optimal profits. Each rounded path is feasible from $j$, so
\begin{align*}
 2V_{t+1}^{\mathsf{hs}}(j;\mathbf p_{>t})
 \ge V_{t+1}^{\mathsf{hs}}(j-1;\mathbf p_{>t})
      +V_{t+1}^{\mathsf{hs}}(j+1;\mathbf p_{>t}).
\end{align*}
This proves pathwise discrete concavity. Taking expectations, or averaging the same inequality over the common scenarios, preserves it. Subtracting adjacent continuation values gives the ordering of marginal values. Comparing the three action advantages in Section~\ref{sec:stars} then shows that the lower threshold is no larger than the upper threshold, so the buy and sell advantages cannot both be positive. This gives the threshold rule in Proposition~\ref{prop:main-concavity}, with the specified tie rule.
\end{proof}

\begin{lemma}\label{lem:main-marginal-range}
Fix a terminal target, a future sequence in $[0,1]$, and adjacent feasible
continuation states $j-1,j$. Then $0\le\Delta V^{\mathsf{hs}}(j)\le1$.
\end{lemma}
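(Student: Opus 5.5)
Both inequalities come from pathwise exchange arguments applied to an optimal plan from one of the two adjacent states; these produce a feasible plan from the other state. Throughout, the future prices and the terminal target $y_0$ are fixed. Both $j-1$ and $j$ are assumed feasible continuation states, so $V^{\mathsf{hs}}(j-1)$ and $V^{\mathsf{hs}}(j)$ are maxima over nonempty finite sets.

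\emph{Lower bound $\Delta V^{\mathsf{hs}}(j)\ge0$.} Take an optimal inventory path $w_s$ from $j-1$ to $y_0$. We build a path from $j$ by shifting it up by one until it first meets a path that descends from $j$. The direct approach is to let the new path follow $w_s+1$ until the first step $s^*$ at which $w$ buys. At that step the new path holds, after which the two paths coincide. Before $s^*$ every trade is identical, so the profit from those steps is unchanged. At $s^*$ the new path saves the purchase cost $p_{s^*}\ge0$. The profit difference is therefore $p_{s^*}\ge0$.

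It remains to check that the new path is feasible and that $s^*$ exists. Feasibility requires $w_s+1\le B$ before $s^*$; this holds because $w_s\le w_0=j-1$ whenever $w$ has not yet bought, so $w_s+1\le j\le B$. If $w$ never buys, then $w$ is nonincreasing. In that case $w_s+1$ cannot reach $y_0$ by itself. Instead we use $j\ge y_0+1$, which holds because $w$ goes from $j-1$ down to $y_0$ without buying. So $w$ makes at least one sale or hold, and the extra unit can be absorbed at a hold converted to a sale (gain $p\ge0$), or else we need $j\le y_0+(T-t)$ for $j$ to be feasible. The trade count $T-t$ is fixed, and the path from $j-1$ uses $j-1-y_0$ sales. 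The feasibility constraint $|j-y_0|\le T-t$ then guarantees a hold exists, which we convert to a sale, gaining $p\ge0$. I will organize this cleanly as follows: at the first step where $w$ does not sell (a hold or a buy), the shifted path does one action lower, either sell instead of hold or hold instead of buy. Each such swap gains exactly that step's price $p_s\ge0$, after which the two paths merge. Existence of such a step follows from the horizon constraint $j-y_0\le T-t$. Feasibility before the merge holds because $w$ has only been selling, so $w_s+1\le j\le B$ and $w_s+1\ge 1$.

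\emph{Upper bound $\Delta V^{\mathsf{hs}}(j)\le1$.} This is symmetric. Take an optimal path $v$ from $j$, and run $v_s-1$ until the first step at which $v$ does not buy. There the lower path does one action higher, either buy instead of hold or hold instead of sell, and afterwards the paths merge. The profit of the path from $j-1$ is then $V^{\mathsf{hs}}(j)-p_s\ge V^{\mathsf{hs}}(j)-1$. Feasibility holds because before the merge $v$ has only bought, so $v_s-1\ge j-1\ge0$ and $v_s-1\le B$. The existence of a non-buy step follows from feasibility of $j-1$, namely $y_0-(j-1)\le T-t$.

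The main obstacle is the bookkeeping at the merge step, in particular ensuring that the required non-sell or non-buy step exists. I plan to derive this directly from the horizon part of the feasibility constraint and from $|w_s-w_{s-1}|\le1$, which forces the paths to meet after a single differing step. Concavity (Proposition~\ref{prop:main-concavity}) is not needed.
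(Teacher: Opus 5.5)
Your cleaned-up argument is correct and is essentially the paper's proof. For the lower bound, you shift the optimal plan from $j-1$ up one unit through its initial sales and lower its first non-sale action $u\in\{0,1\}$ to $u-1$, gaining that nonnegative price. For the upper bound, you shift the optimal plan from $j$ down through its initial purchases and raise its first non-purchase, losing at most one. In both cases the step exists because $y_0$ must be reachable from the other feasible state. When writing it up, drop the first ``follow until the first buy'' version and the intermediate discussion: that version fails when the optimal plan from $j-1$ never buys, whereas the first-non-sale version handles every case.
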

\begin{proof}
Take an optimal plan from $j-1$. It cannot sell at every future time step, because the same target would then be unreachable from the also feasible state $j$. From $j$, follow those initial sales one unit higher until its first non-sale action $u\in\{0,1\}$; replace that action by $u-1$. The paths coalesce and can agree thereafter. The new plan is feasible and earns one additional price, which is nonnegative. This proves the lower bound. For the upper bound take an optimal plan from $j$, follow its initial purchases one unit lower, and replace its first non-purchase $u\in\{-1,0\}$ by $u+1$. That action exists by the same reachability argument. The paths coalesce, and the lower-start plan loses only that time step's price, at most one. Thus $V^{\mathsf{hs}}(j)-V^{\mathsf{hs}}(j-1)\le1$.
\end{proof}

\subsection{A positive compensation at zero, regardless of the future length}
\label{app:nonvanishing-proof}

At zero inventory, the policy must decide whether to buy before knowing
the next selling opportunity. Among all online decisions, holding
minimizes the expected compensation when the current price is sufficiently
high. An even higher next price can nevertheless make buying optimal in
hindsight. We use these two observations to obtain a positive compensation
bound for every feasible online policy.

% Statement relocated from the former Section 4.4; its proof is unchanged.
\begingroup

\begin{proposition}
\label{prop:main-nonvanishing}
Assume independent uniform prices on $[0,1]$ and terminal inventory
zero. Fix $B\ge1$ and any feasible online policy $\pi$. For every
$t<T$,
\begin{align*}
 \E[\partial R_t(y_{t-1}^\pi,x_t^\pi;\mathbf p_{\ge t})
 \mid\mathcal F_{t-1}]\ge\frac1{384}\one\{y_{t-1}^\pi=0\}.
\end{align*}
\end{proposition}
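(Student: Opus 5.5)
At zero inventory with $t<T$, the feasible actions are hold and buy; a sale is infeasible. Since $x_t^\pi$ may depend on $p_t$ and on $\zeta_t$, I would lower bound the compensation pointwise in $p_t$ by the smaller of the two conditional compensations:
\[
\E[\partial R_t\mid\mathcal F_{t-1}]\ \ge\ \E_{p_t}\Bigl[\min_{x\in\{0,1\}}\E\bigl[\partial R_t(0,x;\mathbf p_{\ge t})\mid p_t\bigr]\Bigr].
\]
Here the inner expectation is over the future prices $\mathbf p_{>t}$, which are independent of $\mathcal F_t$. The plan is to restrict to a price window, say $p_t\in[3/4,1]$ (or a subinterval), and show that both candidate actions incur a constant expected loss there.

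Holding at a high current price is costly when the next price is even higher. Take $p_t\in[3/4,7/8]$ and $p_{t+1}\in[15/16,1]$. Suppose hindsight, having held at $t$, follows some optimal continuation from $0$ at $t+1$. If instead it buys at $t$, it can sell at $t+1$ and then follow the same continuation from inventory $0$; this is feasible when $B\ge1$, including the case $t+1=T$. The gain is $p_{t+1}-p_t\ge 1/16$, so $V_t(0)\ge -p_t+V_{t+1}(1)\ge V_{t+1}(0)+1/16$. Hence on this event the hold compensation is at least $1/16$. The event has probability $1/16$ given $p_t$, so the expected hold compensation is at least $1/256$.

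Buying at a high price is costly when the future does not offer a better resale. I would use Lemma~\ref{lem:main-marginal-range} together with pathwise concavity, Proposition~\ref{prop:main-concavity}. The buy compensation is at least $V_t(0)-(-p_t+V_{t+1}(1))\ge V_{t+1}(0)-V_{t+1}(1)+p_t=p_t-\Delta V_{t+1}(1)$. So it suffices that $\E[\Delta V_{t+1}(1)]$ is bounded away from $p_t$ for $p_t$ near $1$. I would show $\Delta V_{t+1}(1;\mathbf p_{>t})\le \max_{s>t}p_s$ through the coupling argument of that lemma. The extra unit is eventually sold at some future price, or it displaces a purchase; in the latter case the gain is a price and hence is at most the maximum future price. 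That bound alone is useless, however, since $\max_{s>t}p_s\to1$ as the horizon grows. This is the main obstacle: for long horizons the marginal value of the first unit can approach $1$. I would try to replace it with a bound of the form $\E\Delta V_{t+1}(1)\le 1-c$ uniformly in the horizon, which is plausible because $\lambda_\infty(1)\approx 3/4$ from the Bessel heuristic. A cleaner route is to restrict to $p_t\in[31/32,1]$: then $\Delta V_{t+1}(1)\le$ the price at which the marginal unit is realized, and with probability $1/2$ the next price is low. I would use an exchange on the first step to bound it.

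If that bound proves hard, the fallback is to choose the window adaptively: let $\lambda=\E\Delta V_{t+1}(1)\le1$ and $\ell=\E\Delta V_{t+1}(0{\to}$hold benefit$)$, and average the minimum of $(p_t-\lambda)^+$ and the hold loss over $p_t$. The hold loss is at least $\E[(p_{t+1}-p_t)^+]$-type, bounded below by $(1-p_t)^2/2\cdot c$. Whichever way $\lambda$ falls, the integral over $p_t\in[0,1]$ of $\min\{(p_t-\lambda)^+,\,c(1-p_t)^2\}$ would be computed and shown to be at least $1/384$ for every $\lambda\le1$. This minimum is smallest when $\lambda=1$, where it vanishes. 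So, whichever window is used, this step ultimately requires $\lambda\le 1-c$. I would prove that by observing $\Delta V_{t+1}(1)\le\Delta V_{t+1}'$, where $\Delta V_{t+1}'$ is the value of selling at the next high price, minus its cost via concavity. I expect this to be the delicate step and would check the constant $1/384$ against it at the end.
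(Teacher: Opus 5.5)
Your reduction is sound and matches the paper's. At $y_{t-1}^\pi=0$ only hold and buy are feasible. With $W=\Delta V_{t+1}^{\mathsf{hs}}(1;\mathbf p_{>t})$ and $\lambda=\E W$, the compensation is $(W-p_t)^+-x(W-p_t)$. So holding costs $\E(W-p_t)^+\ge\E(p_{t+1}-p_t)^+$, and buying costs $\E(W-p_t)^++p_t-\lambda$.

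The gap is that everything rests on a bound $\lambda\le1-c$ holding uniformly in the horizon, and you do not prove it. To work on $p_t\ge3/4$ you in fact need $\lambda$ below $3/4$.
\begin{itemize}
\item The heuristic $\lambda_\infty(1)\approx3/4$ comes from the expansion $1/2+1/(4y)$, which is valid only as $y\to\infty$; the exact limit is $\log 2$. Using it would also require monotonicity in the horizon.
\item The ``exchange on the first step'' does not close the gap. When the first future price is cheap, the optimal plan from one may itself buy. The min--max recursion \eqref{eq:stars-minmax} then only gives $\Delta V_\tau(1)=p_{t+1}\vee\Delta V_{\tau-1}(2)$. This pushes the problem onto the second marginal of a shorter future, so an estimate over the whole future price sequence is unavoidable.
\item Your fallback, as you note, needs the same bound.
\end{itemize}

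The missing idea is a horizon-uniform tail estimate: for $a\in(1/2,1)$, $\Pp(W>a)\le(1-a)/a$. The paper proves it by a coupling on the event $E_a$ that, in every prefix of the future, prices $\le a$ weakly outnumber prices $>a$.
\begin{itemize}
\item Follow an optimal plan from one, one unit lower, until its first cheap epoch without a purchase. Such an epoch exists on $E_a$, since otherwise that plan's inventory would stay at least one through the terminal time.
\item At that epoch, replace its action $u$ by $u+1$. The paths coalesce at a cost of at most $a$, so $W\le a$ on $E_a$. Capacity is respected because the lower path never exceeds the original.
\item Optional stopping of the martingale $(a/(1-a))^{H_j-L_j}$ gives $\Pp(E_a^c)\le(1-a)/a$. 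Equivalently, Lemma~\ref{lem:stars-hitting} gives $\Pp(W>a)\le\Pp_{1-a}(\sigma_1\le T-t)\le(1-a)/a$, since capping only lowers marginals.
\item Integrating tails yields $\lambda\le1/2+\int_{1/2}^1(1-a)/a\,da=\log2<3/4$.
\end{itemize}

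Two smaller repairs are also needed.
\begin{itemize}
\item Your hold comparison, ``follow the same continuation from inventory $0$,'' is infeasible as stated if that continuation buys at $t+1$. Instead, replace its first action $x_{t+1}$ by $x_{t+1}-1$. This still gains exactly $p_{t+1}-p_t$, which is the paper's bound $W\ge p_{t+1}$.
\item Do not discard $\E(W-p_t)^+$ from the buy cost. Once $\lambda<p_t$, both actions cost at least $\E(W-p_t)^+\ge(1-p_t)^2/2$. Integrating over $p_t\in[3/4,1]$ gives exactly $1/384$, whereas your window $[3/4,7/8]\times[15/16,1]$ yields only $1/2048$.
\end{itemize}
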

\par
\endgroup

\begin{proof}[Proof of Proposition~\ref{prop:main-nonvanishing}]
Fix the past $\mathcal F_{t-1}$ on the event
$y_{t-1}^{\pi}=0$. Let $\tau=\tau_t-1=T-t\ge1$ be the number of prices after the current decision,
let $U_1,\ldots,U_\tau$ be these independent uniform future prices, and put
$W=V_\tau^B(1;\mathbf U)-V_\tau^B(0;\mathbf U)$ and
$\lambda=\E[W\mid\mathcal F_{t-1}]$. The i.i.d.\ assumption makes the
conditional law of $\mathbf U$ independent of the past. We first show
\begin{align*}
 U_1\le W\le1,\qquad \lambda\le\log2<3/4.
\end{align*}
For the first inequality, take an optimal future plan from zero. If its
first action is to hold, a plan starting from one sells instead; if its
first action is to buy, a plan starting from one holds instead. In both
cases the two plans have the same inventory after this action, so the
new plan can follow the original continuation and remains feasible.
Its profit increases by $U_1$. The range upper bound follows from
Lemma~\ref{lem:main-marginal-range}.

For $a\in(1/2,1)$, call a future price cheap if it is at most $a$. Let $L_j,H_j$ count cheap and expensive prices in the first $j$ future time steps, and let $E_a=\{L_j\ge H_j\text{ for all }j\le \tau\}$. On this event, take an optimal plan from one and follow it from zero, one unit lower, until its first cheap time step at which it does not buy. This time step must exist: otherwise the original inventory would remain at least $1+L_j-H_j\ge1$ through the terminal time, contradicting its target zero. Before that time the lower path is feasible because the original inventory is at least one. At that time replace the original action $u\in\{-1,0\}$ by $u+1$. The paths coalesce at a cost at most $a$. Thus $W\le a$ on $E_a$; the capacity constraint is respected because the lower path never exceeds the original path.

The walk $H_j-L_j$ moves up with probability $1-a$ and down with probability $a$. Set $\theta=a/(1-a)>1$. Then $\theta^{H_j-L_j}$ is a nonnegative martingale. Stop it at its first hit of one or at $\tau$. Optional stopping at this bounded time gives $\Pp(E_a^c\mid\mathcal F_{t-1})\le1/\theta=(1-a)/a$. Consequently
\begin{align*}
 \lambda
 &\stackrel{(a)}{=}\int_0^1\Pp(W>a\mid\mathcal F_{t-1})\,da
 \stackrel{(b)}{\le}\frac12+\int_{1/2}^1\frac{1-a}{a}\,da
 \stackrel{(c)}{=}\log2.
\end{align*}
where (a) is the tail integral formula for $0\le W\le1$,
(b) uses $W\le a$ on $E_a$ and the bound
$\Pp(E_a^c\mid\mathcal F_{t-1})\le(1-a)/a$, and (c) evaluates the integral.
It remains to compare this unavoidable uncertainty with an arbitrary
online decision. At zero inventory the feasible current actions are
$x_t^\pi\in\{0,1\}$. Let $\zeta_t$ collect any fresh randomization used by
the policy; it is independent of the actual future prices, and
$x_t^\pi$ is determined by $\mathcal F_{t-1},p_t,\zeta_t$. Subtracting
the continuation value from zero from both action values gives the
pathwise identity
$\partial R_t(0,x_t^\pi;p_t,\mathbf U)
=(W-p_t)^+-x_t^\pi(W-p_t)$. Therefore,
\begin{align*}
 &\E[\partial R_t(0,x_t^\pi;p_t,\mathbf U)
          \mid\mathcal F_{t-1},p_t,\zeta_t]\\
 &\quad\stackrel{(a)}{=}
   \E[(W-p_t)^+\mid\mathcal F_{t-1},p_t]
       -x_t^\pi(\lambda-p_t)\\
 &\quad\stackrel{(b)}{\ge}
   \E[(W-p_t)^+\mid\mathcal F_{t-1},p_t]
       -(\lambda-p_t)^+.
\end{align*}
Here (a) uses nonanticipativity and independence of the future prices,
and (b) uses $x_t^\pi\in\{0,1\}$. The last expression is nonnegative
by convexity of the positive-part function. When $p_t\ge3/4$, we have
$\lambda<p_t$, so it is at least
$\E[(U_1-p_t)^+\mid\mathcal F_{t-1},p_t]$. Averaging over the current
price and the policy's randomization, conditional on the past, now gives
\begin{align*}
 \E[\partial R_t(0,x_t^\pi;p_t,\mathbf U)\mid\mathcal F_{t-1}]
 &\stackrel{(a)}{\ge}\int_{3/4}^1\int_p^1(u-p)\,du\,dp
 \stackrel{(b)}{=}\frac1{384}.
\end{align*}
Here (a) uses the preceding conditional bound and the independent
uniform laws of $p_t$ and $U_1$, and (b) integrates
$\int_p^1(u-p)\,du=(1-p)^2/2$.
The argument applies almost surely on the event of zero current
inventory, for every $\tau\ge1$ and $B\ge1$, and to every feasible online
policy. On the complementary event, nonnegative compensation gives the
claimed lower bound of zero.
\end{proof}

\subsection{Frequent visits with zero cost for two prices}

A two-price law illustrates how frequent boundary visits can have
zero cost. The reference rule visits zero repeatedly and pays zero
compensation. The following example explains why we must account for
both visits and their costs.

\Needspace{10\baselineskip}
\begin{lemma}\label{prop:binary-free}
Let $\Pp(p=a)=\Pp(p=b)=1/2$ with $0\le a<b\le1$, $y_0=0$, and
$B=\lceil\rho T\rceil$. The auxiliary reference rule $\pi^\circ$ satisfies:
\begin{enumerate}[label=$(\alph*)$]
\item For every $T$, $\Reg(\pi^\circ;B,T,0)=0$.
\item There exist constants $c_\rho>0$ and $T_0(\rho)<\infty$, depending
only on $\rho$, such that the expected number of visits to zero is at
least $c_\rho\sqrt T$ for all $T\ge T_0(\rho)$.
\end{enumerate}
\end{lemma}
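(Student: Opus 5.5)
The plan is to make the continuation thresholds of $\pi^\circ$ explicit and then verify the two claims directly. For a two-point law, the realized marginal value $\Delta V_{t+1}^{\mathsf{hs}}(y;\mathbf p_{>t})$ always lies in $\{a,b\}$. It equals $b$ when the extra unit can be sold at a future high price that would otherwise go unused, and $a$ otherwise, that is, when its only use is to replace a future purchase at $a$. Lemma~\ref{lem:main-marginal-range} gives the range, and the exchange construction used in its proof shows the value is always one of the two prices. So $\lambda_{t+1}(y)=a+(b-a)\,\Pp(\Delta V=b)$ lies in $[a,b]$. Whenever some future scenario gives the extra unit a high-price use, $a<\lambda_{t+1}(y)$; likewise $\lambda_{t+1}(y+1)<b$ whenever the next unit is worth only $a$ with positive probability. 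The threshold rule of Proposition~\ref{prop:main-concavity} then says that $\pi^\circ$ buys at $a$ and sells at $b$ whenever those actions are feasible and $\lambda$ lies strictly between $a$ and $b$. At the edge cases, ties $p_t=\lambda$ resolve to hold, which in this model always ties with hindsight as well.

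\textbf{Part (a).} I would use Proposition~\ref{prop:main-decomposition} and show that every marginal compensation vanishes pathwise. For a two-point law the hindsight optimum has a simple description: when an $a$ arrives with spare capacity, buying is always weakly optimal in hindsight. If the unit is later unusable, it can be cancelled against a future purchase or simply held; the cost identity of Lemma~\ref{lb:lem-accounting}, with gap $\Delta=b-a$ and zero edge premiums, makes this precise. Because all premiums $W_t,Z_t$ vanish, $C_T(\pi)=(b-a)U_T$. Regret is therefore $(b-a)\E[U_T-U_T^{\mathsf{hs}}]$, and hindsight satisfies $U_T^{\mathsf{hs}}\ge M_T$ by Lemma~\ref{lb:lem-accounting}(b). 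So it suffices to show that $\pi^\circ$ achieves $U_T=M_T$, together with the analogous minimality for the lower corrections $L_T$.

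The argument is that $\pi^\circ$ departs from the static median-sign rule only when it is forced to. It holds at a $b$ only at inventory zero, which gives exactly one increment of $M$ per blocked sale. It declines to buy at $a$ only when the terminal cone or the capacity $B$ forces this. One must check that these forced holds match the hindsight minimum $H_T=S_T+M_T$ and are never converted into two-unit departures. The concrete step is to verify that at $y=0$ with $p=b$, $\pi^\circ$ holds and hindsight has compensation $0$. At $y\ge1$ with $p=b$, selling is hindsight-optimal because the sold unit can always be replaced by an $a$-purchase later or would otherwise be liquidated at a cost no lower. The symmetric statement handles $p=a$.

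\textbf{Part (b).} Since $\pi^\circ$ follows the median-sign rule with sales blocked at zero, its inventory coincides with $\overline Y_t=S_t+M_t$ until the upper boundary or the terminal cone binds. By Lemma~\ref{lb:lem-boundary-estimates}(b) and the visit-count identity from the \textsf{CE} lower-bound proof, the expected number of visits of $\overline Y$ to zero is $2\E M_T\ge c\sqrt T$. The terminal cone only pulls inventory down, which increases zero visits. To handle capacity, I restrict to the event $\max_t|S_t|\le\rho T/4$, which by Hoeffding fails with probability exponentially small in $T$. The main obstacle is the pathwise compensation-zero claim in part (a), specifically near the terminal cone and the upper capacity. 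There I would argue by the exchange and coalescence construction of Lemma~\ref{lem:main-marginal-range} that every forced hold is also forced in hindsight.
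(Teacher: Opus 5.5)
Your part (b) is sound, but part (a) rests on a reduction whose target is false, so there is a genuine gap there.

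\textbf{The flaw in part (a).} You reduce zero regret to showing $U_T^{\pi^\circ}=M_T$, using $U_T^{\mathsf{hs}}\ge M_T$. That lower bound is not tight once the capacity can bind. In fact, on some paths no feasible policy attains it.

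Take $\rho<1/2$ and $T$ large enough that $2B+2\le T$. Consider the event that the first $B+1$ prices equal $a$ and the next $B+1$ equal $b$; it has probability $2^{-2B-2}>0$.
\begin{itemize}
\item For any feasible path, $y_{B+1}\le B<S_{B+1}$ forces $L_{B+1}\ge1$.
\item Then $y_t=S_t+U_t-L_t\ge0$ gives $U_T\ge\max_{2B+2\le t\le T}(1-S_t)=1+M_T$, because $S_t\ge0$ up to time $2B+2$ and $S_{2B+2}=0$.
\end{itemize}
So $U_T\ge M_T+1$ on this event, for $\pi^\circ$ and hindsight alike. The lemma claims exact zero regret, so such exponentially rare paths cannot be discarded.

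The step that breaks is your claim that a blocked sale at zero ``gives exactly one increment of $M$''. After a capacity clip, $\pi^\circ$'s inventory can be zero while $\overline Y_t=S_t+M_t$ is positive. To make your route work, you would have to identify the capacity-constrained minimum of $U_T$ and show that $\pi^\circ$ attains it path by path, which you have not done. The ``minimality for $L_T$'' is also beside the point: the lower premiums vanish, so lower corrections are free.

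\textbf{The repair.} Drop the correction accounting and make your ``concrete step'' the entire proof, via Proposition~\ref{prop:main-decomposition}. This is what the paper does.
\begin{itemize}
\item Every realized marginal satisfies $a\le W\le b$. The exchange argument in Lemma~\ref{lem:main-marginal-range} brackets $W$ between two future prices. Lemma~\ref{lem:stars-concavity} gives the sharper $W\in\{a,b\}$, but the bracket suffices.
\item Every price sits at an endpoint of this range, so $(a-W)^+=0$ and $(W-b)^+=0$. Hence every hinge minimum in \eqref{eq:main-atomic-hinge} vanishes.
\item Equivalently: at $p=a$, hindsight weakly prefers buying to holding and holding to selling. $\pi^\circ$ never sells at $a$, since that needs $a>\lambda\ge a$. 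If $\lambda=a$, then $W=a$ almost surely and holding costs nothing. The case $p=b$ is symmetric.
\end{itemize}
The terminal cone and capacity, which you flag as the main obstacle, cause no difficulty. Marginal compensation restarts hindsight at $\pi^\circ$'s current inventory, so both face the same feasible set. For example, a capacity-forced hold at $a$ competes only with selling at $a$, which is weakly worse.

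\textbf{Part (b).} Your route differs from the paper's. The paper restricts to the first $\lfloor cT\rfloor$ steps, where no constraint binds, and sums $\Pp(y_j=0)\ge c_0(j+1)^{-1/2}$. You instead dominate $y_t\le\overline Y_t$ pathwise and use $\E\sum_t\one\{\overline Y_{t-1}=0\}=2\E M_T\ge c\sqrt T$.

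This works, but the domination needs $\pi^\circ$ to sell every $b$ at positive inventory, i.e.\ $\lambda(y)<b$ for every real $y\ge1$. You state this only conditionally. It holds because the all-$a$ future has positive probability, and on it every continuation from $y$ earns $ay$, so $W=a$.

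Since the capacity, like the cone, clips only from above, your Hoeffding restriction is unnecessary. Your constant then does not even depend on $\rho$.
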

\begin{proof}
The marginal recursion of Lemma~\ref{lem:stars-concavity} preserves the price support. Every real marginal $W$ therefore belongs to $\{a,b\}$. At $A_i=a$, $(a-W)^+=0$; at $A_i=b$, $(W-b)^+=0$. Equation~\eqref{eq:main-atomic-hinge} is zero for every existing channel. Forced actions also have zero compensation, so the exact regret decomposition proves zero regret.

For every real index $y\le \tau\wedge B$ with $\tau\ge1$, the events that all future prices equal $a$ or all equal $b$ have positive probability. On the all-$c$ event, every feasible continuation earns $cy$ by terminal balance, so its marginal equals $c$. Hence $a<\lambda_\tau^B(y)<b$. Whenever not constrained by the upper or terminal boundary, the reference rule buys at $a$, sells at $b$ if positive inventory is available, and holds at $b$ when inventory is zero.

Take $m=\lfloor cT\rfloor$ with $c=(\rho/2)\wedge(1/4)$. In the first $m$ time steps, an inventory starting from zero is at most $m<B$ and the terminal ceiling is at least $T-m>m$. Thus the actual process there is exactly the lower-reflected fair walk. At time $j$, its distribution equals the running maximum of an ordinary fair walk $S$ of length $j$, by reversing its i.i.d.\ increments. Reflection at the first hit of one yields
\begin{align*}
 \Pp(y_j=0)
 &\stackrel{(a)}{=}\Pp(\max_{i\le j}S_i<1)
 \stackrel{(b)}{=}\Pp(-1\le S_j<1)
 \stackrel{(c)}{\ge}c_0(j+1)^{-1/2}.
\end{align*}
where (a) reverses the i.i.d.\ increments of the lower-reflected walk,
(b) reflects at the first hit of one, and (c) uses the
central-binomial estimate in Lemma~\ref{lem:stars-fair}.
Summing for $0\le j<m$ gives at least $c_\rho\sqrt T$ visits. Each visit has zero expected compensation, giving zero regret.
\end{proof}

\subsection{Why fixed simulation counts suffice in the uniform special case}

For uniform prices, the sampled policy and the reference rule have
the same inventory transition probabilities. We use their common
state distributions to compare the one-step compensation and obtain
an exact relation between their regrets.

\begin{proposition}\label{prop:uniform-finite-K}
Assume uniform prices and zero initial and terminal inventory, and fix an
integer $K\ge1$.
\begin{enumerate}[label=$(\alph*)$]
\item Algorithm~\ref{alg:stars} has the same inventory transition kernel
as the reference rule $\pi^\circ$.
\item $\Reg(\textup{\STARS}_K;B,T,0)=(1+1/K)\Reg(\pi^\circ;B,T,0)$.
\end{enumerate}
\end{proposition}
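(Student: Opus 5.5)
The plan is to prove (a) by computing the buy and sell probabilities explicitly, and then to prove (b) by showing that the expected compensation at each fixed state differs between the two rules by exactly the factor $1+1/K$. The regret decomposition of Proposition~\ref{prop:main-decomposition} then transfers this identity to regret, because part (a) makes the state distributions of the two policies coincide.

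\emph{Step 1 (transition kernel).} Fix $t$ and a pre-decision inventory $y$. The scenario averages $\widehat\lambda_{t+1}(\cdot)$ are built from fresh scenarios, so they are independent of $p_t$. By Lemma~\ref{lem:main-marginal-range} they take values in $[0,1]$. Consider first a state where all three actions are feasible, and use the threshold rule of Proposition~\ref{prop:main-concavity}(b). Conditional on the scenarios, a uniform $p_t$ gives
\begin{align*}
\Pp(\text{buy})&=\widehat\lambda_{t+1}(y+1),\\
\Pp(\text{sell})&=1-\widehat\lambda_{t+1}(y).
\end{align*}
Each $\widehat\lambda_{t+1}(y')$ is an average of $K$ i.i.d.\ copies of $\Delta V_{t+1}^{\mathsf{hs}}(y';\cdot)$, so its mean is $\lambda_{t+1}(y')$. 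Averaging over the scenarios therefore gives exactly the buy and sell probabilities of $\pi^\circ$. At states where only two actions are feasible, the feasible set is the same for both rules, and only one threshold is relevant. The same one-threshold linearity argument applies there. Forced single actions agree trivially. Hence the two kernels coincide, and by induction the law of $y_{t-1}$ is the same under both policies for every $t$.

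\emph{Step 2 (compensation at a fixed state).} Fix $y$ and the actual future $\mathbf p_{>t}$. Write the realized marginals as
\begin{align*}
W_b&=\Delta V_{t+1}^{\mathsf{hs}}(y+1;\mathbf p_{>t}),\\
W_s&=\Delta V_{t+1}^{\mathsf{hs}}(y;\mathbf p_{>t}),
\end{align*}
with $W_b\le W_s$ by Proposition~\ref{prop:main-concavity}(a). Measured relative to hold, the realized action values are $W_b-p$ for buying, $0$ for holding and $p-W_s$ for selling. By concavity the optimum is
\begin{align*}
\max\{W_b-p,\;0,\;p-W_s\}=(W_b-p)^++(p-W_s)^+.
\end{align*}
A rule with thresholds $\theta_b\le\theta_s$ instead earns
\begin{align*}
(W_b-p)\one\{p<\theta_b\}+(p-W_s)\one\{p>\theta_s\}.
\end{align*}
The compensation therefore splits into a buy part and a sell part. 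Integrating over uniform $p_t$, the buy part equals the integral of $|p-W_b|$ between $\theta_b$ and $W_b$, which is $(\theta_b-W_b)^2/2$. The sell part is $(\theta_s-W_s)^2/2$ in the same way, and two-action states give a single such term.

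\emph{Step 3 (the factor $1+1/K$).} For $\pi^\circ$ the thresholds are $\theta=\lambda=\E W$. Its expected compensation at the state is therefore $\tfrac12(\Var W_b+\Var W_s)$. For \STARS{} the thresholds are $\theta=\widehat\lambda$, an average of $K$ i.i.d.\ copies of $W$ that are independent of the actual $W$. Hence
\begin{align*}
\E(\widehat\lambda-W)^2=\Var W+\frac{\Var W}{K},
\end{align*}
so the expected compensation at the state is exactly $(1+1/K)$ times that of $\pi^\circ$. Because future prices are i.i.d., the conditional law of $W$ given $\mathcal F_{t-1}$ depends only on $(t,y)$. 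Summing through Proposition~\ref{prop:main-decomposition}(b) with the common state laws from Step 1 then gives (b).

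I expect the main care to be needed in Step 2. There I must check that the separable loss formula holds uniformly across all the configurations of $\theta$ relative to $W$, including when the realized marginals coincide. I must also check that at boundary and terminal-cone states the restricted feasible set still yields a squared-distance loss with the correct single marginal.
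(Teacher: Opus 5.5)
Your proposal is correct and follows the paper's proof essentially step for step. Part (a) comes from the buy and sell probabilities $\widehat\lambda_{t+1}(y+1)$ and $1-\widehat\lambda_{t+1}(y)$ under a uniform current price, combined with the unbiasedness of the scenario averages; part (b) comes from the per-channel triangle area $(W-\widehat\lambda)^2/2$, the independence identity $\E(W-\widehat\lambda)^2=(1+1/K)\Var(W)$, and summation against the common state laws in the compensation-weighted occupation identity. The configuration checks you flag in Step~2 go through uniformly, because $W$ and $\widehat\lambda$ both lie in $[0,1]$ by Lemma~\ref{lem:main-marginal-range}, and the orderings $W_b\le W_s$ and $\widehat\lambda_{t+1}(y+1)\le\widehat\lambda_{t+1}(y)$ keep the buy and sell channels separate.
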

\begin{proof}
Fix a time and feasible inventory. The shared scenarios keep empirical marginal thresholds ordered, and Lemma~\ref{lem:main-marginal-range} puts them in $[0,1]$. For each existing buy threshold, integrating the independent current uniform price gives purchase probability $\E_{\zeta_t}\widehat\lambda=\lambda$. The sell probability is similarly $1-\lambda$. Missing actions have probability zero and forced actions agree. Thus the two policies have the same time-dependent transition kernel at every state. They start from the same inventory, so their marginal state distributions agree at every time step.

For one channel, freeze the actual and simulated futures. Its current-price-integrated compensation is $(W-\widehat\lambda)^2/2$ by \eqref{eq:main-area}. The actual future is independent of the $K$ simulated futures, so
\begin{align*}
 \E_{\mathbf p_{>t},\zeta_t}(W-\widehat\lambda)^2
 &\stackrel{(a)}{=}\Var_{\mathbf p_{>t}}(W)+\Var_{\zeta_t}(\widehat\lambda)
 \stackrel{(b)}{=}(1+1/K)\Var_{\mathbf p_{>t}}(W).
\end{align*}
where (a) uses independence and the unbiasedness
$\E_{\zeta_t}\widehat\lambda=\E_{\mathbf p_{>t}}W$,
and (b) uses the variance of the mean of $K$ i.i.d.\ copies.
Add the existing channels, multiply by the common state probabilities, and sum over time using \eqref{eq:main-weighted-occupation} and the corresponding reference identity. This proves the identity.
\end{proof}

\section{\textsf{STARS} and the marginal compensation identity}
\label{app:stars-foundation}\label{app:stars-compensation}

This appendix begins the proof of Theorem~\ref{thm:stars-upper}.
Throughout, prices are i.i.d.\ with law $F$ on $[0,1]$,
$y_0=y_T=0$, and $B=\lceil\rho T\rceil$ for fixed $0<\rho<1$.
Our first step is to express the loss from a decision in terms of the
hindsight buy and sell thresholds. We begin with the
population-threshold reference rule $\pi^\circ$ from
Section~\ref{sec:stars}, which replaces the scenario averages by their
expectations. This gives a compensation bound as a function of the
current inventory. We then account for the finite scenarios used by
\textsf{STARS}: Appendix~\ref{app:stars-sampling} transfers the continuous
reference bounds, and Appendix~\ref{app:finite-atomic} bounds atomic
compensation directly along the sampled inventory path.

We retain the paper's time-indexed notation for the realized path.
When analyzing a hypothetical future, we instead index its value by
the number of remaining prices. Thus
\begin{align*}
 V_\tau^B(y;\mathbf p)
 &=V_{T-\tau+1}^{\mathsf{hs}}(y;\mathbf p),\qquad
 \Delta V_\tau^B(y;\mathbf p)
 =\Delta V_{T-\tau+1}^{\mathsf{hs}}(y;\mathbf p),\\
 \lambda_\tau^B(y)&=\lambda_{T-\tau+1}(y).
\end{align*}
Both sides describe the same $\tau$ remaining prices, the same capacity,
and the same zero terminal inventory. The superscript $B$ only makes
the capacity visible in the auxiliary notation. At decision epoch $t$,
write $\tau_t=T-t+1$ for the number of remaining decisions, including
the current one. Its continuation therefore has $\tau_t-1=T-t$
prices, and $\lambda_{\tau_t-1}^B(y)=\lambda_{t+1}(y)$.

\subsection{The optimization \textsf{STARS} performs}
At decision epoch $t$, \textsf{STARS} observes inventory $y_{t-1}$ and price $p_t$.
An action $x_t\in\{-1,0,1\}$ means sell, hold, or buy, respectively;
reward is $-p_tx_t$ and $y_t=y_{t-1}+x_t$. There are $\tau_t-1$
future prices after this action. Use the observed-history filtration
$\mathcal F_t$ from Section~\ref{sec:model}, including the scenarios
drawn through epoch $t$. Sample counts $K_t$ are
$\mathcal F_{t-1}$-measurable and at least $\tau_t^2$.
Feasible current actions specialize to
$\mathcal A_t(y)=\{x\in\{-1,0,1\}:0\le y+x\le B\wedge(\tau_t-1)\}$.
The constraint $y+x\le\tau_t-1$ ensures that inventory can be reduced to zero
by epoch $T$; it is as important as the physical capacity.

For a specified $\tau$-price future $\mathbf p$, let $V_\tau^B(y;\mathbf p)$
be the maximum reward starting with $y$ units and ending at zero.
Its domain is $0\le y\le B\wedge \tau$, with $V_0^B(0)=0$.
At epoch $t$, choosing the current action and optimizing its continuation gives
\begin{align*}
 V_{\tau_t}^B(y;p,\mathbf p)
 =\max_{x\in\mathcal A_t(y)}\{-px+V_{\tau_t-1}^B(y+x;\mathbf p)\}.
\end{align*}
\textsf{STARS} draws
$K$ independent i.i.d.\ future vectors $\mathbf p^{(1)},\ldots,\mathbf p^{(K)}$,
uses the \emph{same} vectors for all actions, and maximizes
$-p_t x+K^{-1}\sum_{j=1}^K V_{\tau_t-1}^B(y+x;\mathbf p^{(j)})$
over $x\in\mathcal A_t(y)$.
Only the current action is implemented; the procedure repeats at the
next epoch. The auxiliary rule $\pi^\circ$ uses $\E V_{\tau_t-1}^B(y+x)$ instead.
We prefer holding whenever it is a feasible maximizer. The future
scenarios are independent of actual prices and are drawn from the
known price distribution.

We compare with the feasible hindsight optimum, writing
$\Reg_T(\pi):=\Reg(\pi;B,T,0)=\E V_T^B(0;\mathbf p)
-\E\sum_{t=1}^T(-p_t x_t^\pi)$.
Only in the reference calculations of this appendix and
Appendix~\ref{app:stars-continuous}, $\Reg_T$ without a policy means
$\Reg_T(\pi^\circ)$. We identify sampled trajectories explicitly.

\subsection{Concavity gives the buy and sell thresholds}
An extra stored unit changes future reward by
$\Delta V_\tau^B(y)=V_\tau^B(y)-V_\tau^B(y-1)$, whose mean is
$\lambda_\tau^B(y)=\E\Delta V_\tau^B(y)$, for $1\le y\le B\wedge \tau$.
To use these differences as buy and sell thresholds, we first show
that an additional unit is less valuable when more inventory is
already available. The following induction proves this decreasing
marginal-value property. It also gives a recursion that will allow us
to characterize the distribution of the thresholds.

\Needspace{10\baselineskip}
\begin{lemma}\label{lem:stars-concavity}\label{lem:stars-queue}
Fix a future price vector in $[0,1]^\tau$ and a capacity $B$.
\begin{enumerate}[label=$(\alph*)$]
\item $V_\tau^B$ is discretely concave, its real marginals form a
decreasing list, and each marginal is one of the future prices.
Thus the marginals lie in $[0,1]$, and $\E V_\tau^B$ is also concave.
\item Prepending price $p$ updates every new real marginal by
\begin{align}
 \widetilde{\Delta V}(y)=\Delta V_\tau^B(y+1)\vee\bigl(p\wedge\Delta V_\tau^B(y-1)\bigr).
 \label{eq:stars-minmax}
\end{align}
For this formula only, use ordering conventions $\Delta V_\tau^B(y)=+\infty$
for $y\le0$ and $\Delta V_\tau^B(y)=-\infty$ beyond the old domain. They are
not random values whose expectations are taken.
\end{enumerate}
\end{lemma}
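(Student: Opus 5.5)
I will prove both parts together by backward induction on the number $\tau$ of remaining prices. The induction hypothesis is that, for the vector $(p_2,\ldots,p_{\tau})$ of length $\tau-1$, the marginals $\Delta V_{\tau-1}^B(1)\ge\cdots\ge\Delta V_{\tau-1}^B(B\wedge(\tau-1))$ form a decreasing list, each equal to one of those prices. The base case $\tau=1$ is immediate. The domain is $\{0,1\}$ when $B\ge1$, with $V_1^B(0)=0$ and $V_1^B(1)=p_1$. So the single marginal is the price itself.

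For the inductive step, prepend $p=p_1$. By the Bellman recursion for the current action, for $0\le y\le B\wedge\tau$,
\[
 V_\tau^B(y)=\max\{-p+V_{\tau-1}(y+1),\ V_{\tau-1}(y),\ p+V_{\tau-1}(y-1)\},
\]
where infeasible terms are omitted. Write $g=V_{\tau-1}$, which is concave by the hypothesis. The key observation is that $V_\tau^B(y)=\max_{x}\{g(y+x)-px\}$ is an infimal-convolution-type operation. It combines $g$ with the concave function $h(x)=-px$ on $\{-1,0,1\}$, that is, with marginal $+\infty$ at a sell, then $p$, then $-\infty$. By the standard fact for discrete concave functions, the sup-convolution of two discretely concave functions on integer intervals is discretely concave. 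Its marginal sequence is the merged, sorted list of the two marginal sequences. I will verify this directly by a short case check rather than cite it.

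Concretely, I will show that the maximizing action is the threshold rule: buy if $p<g'(y+1)$, sell if $p>g'(y)$, hold otherwise. Here $g'$ denotes the marginals, extended by the $\pm\infty$ conventions of the statement at the boundaries. The threshold rule follows from concavity of $g$. Then I compute $V_\tau^B(y)-V_\tau^B(y-1)$ in each of the regimes determined by where $p$ falls relative to $g'(y+1)\le g'(y)\le g'(y-1)$:

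\begin{itemize}
\item If $p\le g'(y+1)$, the new marginal equals $g'(y+1)$.
\item If $g'(y+1)\le p\le g'(y-1)$, the new marginal is $p$.
\item If $p\ge g'(y-1)$, the new marginal is $g'(y-1)$.
\end{itemize}

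These three cases combine exactly to the formula $\Delta V(y+1)\vee(p\wedge\Delta V(y-1))$, which is the update rule (part (b)).

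Monotonicity of the new list then follows from monotonicity of the old list and of the map $(u,v)\mapsto u\vee(p\wedge v)$ in each argument. This gives discrete concavity. Each new marginal is either an old marginal or $p$, so by induction it is one of the future prices, and hence lies in $[0,1]$. At the domain endpoints the $-\infty$ convention handles $y=B\wedge\tau$: there the buy term is infeasible. The $+\infty$ convention handles $y=1$, where the old value at $y-1=0$ is really the constraint that selling from $0$ is impossible. I must check that the result there is always a genuine finite price, never $\pm\infty$. The main obstacle is this boundary bookkeeping, including the change of domain from $B\wedge(\tau-1)$ to $B\wedge\tau$ when $\tau\le B$. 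The case $y=\tau$ forces selling at every step, and its marginal must come out as $p\wedge\Delta V(\tau-1)$, which is finite. Finally, concavity of $\E V_\tau^B$ follows because the pathwise inequality $2V(j)\ge V(j-1)+V(j+1)$ is preserved under expectation. This matches the pathwise rounding argument already given for Proposition~\ref{prop:main-concavity}.
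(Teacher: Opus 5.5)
Your proposal is correct and follows essentially the paper's argument. Both proofs induct on the number of future prices through the Bellman recursion and obtain the new marginal $\Delta V(y+1)\vee\bigl(p\wedge\Delta V(y-1)\bigr)$ from the buy/hold/sell threshold structure; the paper gets it by subtracting the hinge improvements $(p-\Delta V_y)^+ + (\Delta V_{y+1}-p)^+$, while you use a three-case split. Both then inherit monotonicity from the old list and check the floor, the physical cap, and the newly feasible top state $y=\tau$, where selling is forced. Your sup-convolution aside is slightly garbled: after substituting $z=y+x$ the kernel is $w\mapsto pw$ on $\{-1,0,1\}$, whose two marginals both equal $p$. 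This does not affect the proof, since your actual argument is the direct case check.
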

\begin{proof}
Induct on the number of future prices. The zero-price domain is a
singleton; after one price the first marginal, if present, is $p$.
Suppose the old list decreases. At an inventory where the hold action
is feasible, Bellman's maximization improves the hold reward by
$(p-\Delta V(y))^++(\Delta V(y+1)-p)^+$, omitting missing actions; at most one term
is positive. Here and in the next display, suppress the old horizon
and write $\Delta V_y=\Delta V_\tau^B(y)$. Subtract the new values at
adjacent old states:
\begin{align*}
 \widetilde{\Delta V}(y)
 &\stackrel{(a)}{=}\Delta V_y+(p-\Delta V_y)^++(\Delta V_{y+1}-p)^+
      -(p-\Delta V_{y-1})^+-(\Delta V_y-p)^+\\*
 &\stackrel{(b)}{=}p+(\Delta V_{y+1}-p)^+-(p-\Delta V_{y-1})^+\\*
 &\stackrel{(c)}{=}\Delta V_{y+1}\vee\bigl(p\wedge\Delta V_{y-1}\bigr).
\end{align*}
where (a) subtracts the two Bellman improvements over holding,
(b) uses $a+(p-a)^+-(a-p)^+=p$, and (c) uses
$\Delta V_{y+1}\le \Delta V_{y-1}$. At the floor the update is $p\vee\Delta V_2$;
a newly feasible top state $y=\tau+1$ must sell and gives
$p\wedge\Delta V_\tau^B(\tau)$.
At the physical cap the unavailable purchase is omitted. The first
marginal from the singleton case is $p$. These checks establish
\eqref{eq:stars-minmax} at every boundary without subtracting infinite values.

Both inputs $\Delta V(y+1),\Delta V(y-1)$ decrease in $y$, so the updated list also
decreases. Each new entry is an old entry or $p$. This closes the
induction, proving support-valued marginals and discrete concavity.
Expectation preserves the concavity inequalities, and its first
differences are $\lambda_\tau^B(y)$ by linearity.
\end{proof}

Relative to holding, the reference rule's buying advantage is
$\lambda_{\tau_t-1}^B(y+1)-p$, its selling advantage is
$p-\lambda_{\tau_t-1}^B(y)$, and holding has advantage zero.
It therefore buys below the first threshold, sells above the second,
and holds between them. Thus the hold interval arises endogenously
from the continuation values, without a distribution-specific holding
rule. Missing actions are omitted.
If $y=\tau_t$, selling is compulsory. At the last epoch, only the action
reaching zero is feasible.

\subsection{Compensation arises between the corresponding thresholds}
Fix \textsf{STARS}'s actual current inventory $y$ and the realized future. Let
hindsight optimize from this \emph{same} $y$. Its loss from being
forced to take \textsf{STARS}'s current action $x_t$ is the paper's realized
marginal compensation, evaluated on that path:
\begin{align*}
 \partial R_t=V_{\tau_t}^B(y;p_t,\mathbf p_{>t})
       +p_tx_t-V_{\tau_t-1}^B(y+x_t;\mathbf p_{>t})\ge0.
\end{align*}
Summing along \textsf{STARS} cancels successive continuation values:
\begin{align}
 V_T^B(0;\mathbf p)-\sum_{t=1}^T(-p_tx_t)=\sum_{t=1}^T\partial R_t,
 \label{eq:stars-telescoping}
\end{align}
because the last continuation is $V_0^B(0)=0$. At each epoch we
compare the action chosen by \textsf{STARS} with the hindsight-optimal
action, both selected from the feasible buy, hold, and sell choices
at \textsf{STARS}'s current inventory.

When the hold action is feasible, put $W_b=\Delta V_{\tau_t-1}^B(y+1)$ and
$W_s=\Delta V_{\tau_t-1}^B(y)$
for the existing channels. Let $B^{\mathrm H},S^{\mathrm H}$ and
$B^{\mathrm S},S^{\mathrm S}$ indicate hindsight and \textsf{STARS} buying/selling.
The exact pathwise charge is
\begin{align}
 \partial R_t=|p_t-W_b|\one\{B^{\mathrm H}\ne B^{\mathrm S}\}
          +|p_t-W_s|\one\{S^{\mathrm H}\ne S^{\mathrm S}\}.
 \label{eq:stars-disagreement}
\end{align}
Indeed an action's advantage over holding is
$(W_b-p_t)B+(p_t-W_s)S$. Hindsight chooses each positive coefficient;
subtracting \textsf{STARS}'s chosen coefficient gives \eqref{eq:stars-disagreement}.
A buy-versus-sell disagreement has two terms because it changes the
residual inventory by two units. Zero coefficients cost nothing.

For $\pi^\circ$, a buy charge occurs only between $W_b$ and $\E W_b$, and a sell
charge only between $W_s$ and $\E W_s$. With holding at equality, the
buy interval is $[W_b\wedge\E W_b,W_b\vee\E W_b)$,
and the sell interval is $(W_s\wedge\E W_s,W_s\vee\E W_s]$.
The endpoint conventions matter at atoms of the \emph{mean} threshold;
at the random threshold the price distance is zero.

\begin{proposition}\label{prop:stars-compensation}
Define $H_F(z)=\E_p(z-p)^+$ and
$J_F(W)=\E H_F(W)-H_F(\E W)$. Write
$\partial\bar R_\tau^B(y)$ for the expected value of $\partial R_t$
when $\pi^\circ$ starts a decision at $y$, there are $\tau$
prices after that decision, and the independent current price and future are
averaged out. This is the conditional mean of the realized
compensation:
$\E[\partial R_t(y_{t-1}^\circ,x_t^\circ;\mathbf p_{\ge t})
\mid\mathcal F_{t-1}]=\partial\bar R_{\tau_t-1}^B(y_{t-1}^\circ)$.
The same fixed-state quantity is defined at any feasible $y$, regardless
of which policy visits it.
\begin{enumerate}[label=$(\alph*)$]
\item When the hold action is feasible,
\begin{align*}
 \partial\bar R_\tau^B(y)=J_F(\Delta V_\tau^B(y+1))+J_F(\Delta V_\tau^B(y)),
\end{align*}
with unavailable channels omitted. Forced actions cost zero.
\item The reference rule satisfies
$\Reg_T(\pi^\circ)=\sum_{t=1}^T\E \partial\bar R_{\tau_t-1}^B(y_{t-1}^\circ)$.
\end{enumerate}
\end{proposition}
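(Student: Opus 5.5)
Part (a) starts from the exact pathwise charge \eqref{eq:stars-disagreement}. Fix the inventory $y$ at which hold is feasible, and condition on the actual future $\mathbf p_{>t}$, so that $W_b=\Delta V_\tau^B(y+1)$ and $W_s=\Delta V_\tau^B(y)$ are fixed. Lemma~\ref{lem:stars-concavity} gives $W_b\le W_s$. The reference rule $\pi^\circ$ buys iff $p_t<\E W_b$ and sells iff $p_t>\E W_s$. By concavity of $\E V_\tau^B$ these events are disjoint. Hindsight buys iff $p_t<W_b$ and sells iff $p_t>W_s$; ties cost zero because the coefficient vanishes. Since $p_t$ is independent of $\mathbf p_{>t}$ and of $\mathcal F_{t-1}$, the buy channel contributes
\begin{align*}
\E_p\bigl[|p-W_b|\one\{p<W_b\}\ne\one\{p<\E W_b\}\bigr]
=\E_p\bigl[(W_b-p)^+-(\E W_b-p)^+ - (W_b-\E W_b)\one\{p<\E W_b\}\bigr].
\end{align*}
This follows from the identity $|p-W|\,|\one\{p<W\}-\one\{p<m\}|=(W-p)^+-(m-p)^+-(W-m)\one\{p<m\}$. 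One checks this identity on the three orderings of $p$, $W$, $m$, keeping the endpoint conventions stated before the proposition so that atoms of $F$ at $m=\E W_b$ are handled correctly.

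Next average over the future. The last term has mean zero, because $\one\{p<\E W_b\}$ does not depend on $\mathbf p_{>t}$ and $\E[W_b-\E W_b]=0$. The remaining terms give $\E H_F(W_b)-H_F(\E W_b)=J_F(W_b)$. The sell channel is handled symmetrically with $(p-W)^+$. Because $\E_p(p-z)^+=H_F(z)-z+\E p$, and the linear parts cancel after averaging, the sell channel also gives $J_F(W_s)$. If a channel is unavailable (at the cap or at zero), that term simply does not appear in \eqref{eq:stars-disagreement}. When an action is forced, hindsight from the same $y$ faces the same single feasible action, so $\partial R_t=0$.

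The fixed-state interpretation requires that the conditional law of $(p_t,\mathbf p_{>t})$ given $\mathcal F_{t-1}$ is $F^{\otimes\tau_t}$, and that $\pi^\circ$ is deterministic given $y$ and $p_t$. Under these facts, $\E[\partial R_t\mid\mathcal F_{t-1}]=\partial\bar R^B_{\tau_t-1}(y^\circ_{t-1})$. Part (b) then follows by taking expectations in the telescoping identity \eqref{eq:stars-telescoping}, or directly from Proposition~\ref{prop:main-decomposition}(b). The main delicacy is the endpoint bookkeeping when $\E W_b$ or $\E W_s$ is an atom of $F$. Holding at equality makes the reference interval half-open on the correct side, and the indicator identity must use the same convention. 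Otherwise the calculation is routine.
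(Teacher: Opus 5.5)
Your proposal is correct and follows essentially the paper's argument. Per channel, the expected compensation is hindsight's gain $(W-p)^+$ minus the reference gain $(W-p)\one\{p<\E W\}$; averaging over the independent future turns the latter into $(\E W-p)^+$, averaging over $p$ gives $J_F(W)$, the sell channel follows by affine cancellation, and part (b) is the telescoping identity plus the tower property. Your indicator identity is this same hindsight-minus-reference difference routed through the disagreement formula, and since it holds pointwise with the hold-at-ties indicator $\one\{p<m\}$, the atom bookkeeping you flag as delicate is in fact automatic.
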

\begin{proof}
For a buy marginal $W$ with mean $\lambda$, the reference action's realized contribution
is $(W-p)\one\{p<\lambda\}$. Conditional on $p$, its expectation over the
independent future is $(\lambda-p)^+$. Hindsight's contribution is
$(W-p)^+$. Averaging the difference over $p$ gives $J_F(W)$.
The equality concerns expected contributions after averaging over
the current price and the future. Selling gives the same Jensen difference
because $\E_p(p-z)^+=H_F(z)+\E p-z$ and affine terms cancel.
Finally $y_{t-1}$ is measurable with respect to past prices, while
current and future prices remain i.i.d.\ and independent of that past.
Conditional expectation and the tower property in
\eqref{eq:stars-telescoping} prove the stated expected-regret identity.
\end{proof}

We next specialize this identity to the price laws considered in the
paper. These calculations explain which threshold errors are costly.
For uniforms, $H_F(z)=z^2/2$, so $J_F(W)=\Var(W)/2$.
For a finite law with masses $w_i$ at $z_i$,
\begin{align*}
 J_F(W)=\sum_i w_i\bigl(\E(W-z_i)^+\wedge\E(z_i-W)^+\bigr).
\end{align*}
To check this, set the two positive parts' expectations to $A,D$;
the summand is $A-(A-D)^+=A\wedge D$.
For continuous prices with median interval $[a^-,a^+]$, subtract
the affine function $\ell(z)=H_F(a^-)+(z-a^-)/2$. Convexity gives
$R_F=H_F-\ell\ge0$, zero on the median interval, and
\begin{align*}
 J_F(W)=\E R_F(W)-R_F(\E W)\le\E R_F(W).
\end{align*}
When the interval is a single point, $\ell$ is the tangent there.
Uncertainty within a price gap causes no charge because no price can
arrive in that interval. More generally, the charge depends on both
the probability of a disagreement and the price difference when it
occurs.

% Detailed compensation statements retained after shortening Section 4.4.
% Technical statements relocated from the former Section 4.4.
% Their original labels are retained for existing manuscript references.
\begingroup

\subsection{Expected compensation and inventory occupation}
\label{app:compensation-estimates}

Fix the past $\mathcal F_{t-1}$, so the current inventory $y$ and
scenario count $K=K_t$ are known. Write $\tau_t=T-t+1$ for the
number of decisions remaining, including the current decision.
The continuation after this decision has $\tau_t-1$ prices.
At a state where holding is feasible, hindsight's thresholds are
$b^{\mathsf{hs}}=\Delta V_{t+1}^{\mathsf{hs}}(y+1;\mathbf p_{>t})$
for buying and
$s^{\mathsf{hs}}=\Delta V_{t+1}^{\mathsf{hs}}(y;\mathbf p_{>t})$
for selling. \STARS{} uses their sample averages
$\widehat b=\widehat\lambda_{t+1}(y+1)$ and
$\widehat s=\widehat\lambda_{t+1}(y)$.
Throughout, a term corresponding to an unavailable action is omitted.
If holding is infeasible, the action is forced and its compensation is zero.

For each buy or sell threshold, compensation is the distance from
the price to hindsight's threshold whenever the two decisions disagree.

We average in two stages. First fix the actual future
$\mathbf p_{>t}$ and the fresh scenario randomness $\zeta_t$,
in addition to the past. All four thresholds are then fixed; only
the independent current price is averaged. For a density $f$,
\begin{align}
 \E\left[\partial R_t(y,x_t^K(y,p_t;\zeta_t);\mathbf p_{\ge t})
       \mid\mathcal F_{t-1},\mathbf p_{>t},\zeta_t\right]=
 \int_{b^{\mathsf{hs}}\wedge\widehat b}^{b^{\mathsf{hs}}\vee\widehat b}
                  |b^{\mathsf{hs}}-p|f(p)\,dp
 +\int_{s^{\mathsf{hs}}\wedge\widehat s}^{s^{\mathsf{hs}}\vee\widehat s}
                  |s^{\mathsf{hs}}-p|f(p)\,dp.
 \label{eq:main-area}
\end{align}
For a general law, use $F(dp)$ in place of $f(p)\,dp$, with
the buying interval $[b^{\mathsf{hs}}\wedge\widehat b,
b^{\mathsf{hs}}\vee\widehat b)$ and the selling interval
$(s^{\mathsf{hs}}\wedge\widehat s,s^{\mathsf{hs}}\vee\widehat s]$.
These conventions retain the rule of holding at an estimated threshold.
A tie at a hindsight threshold has zero loss.

Second, average these areas over the independent actual and simulated
futures. The result is
$\E[\partial R_t(y,x_t^K;\mathbf p_{\ge t})\mid\mathcal F_{t-1}]$,
as required by Proposition~\ref{prop:main-decomposition}.

For the calculations below, denote the mean thresholds by
$b^\circ=\E[b^{\mathsf{hs}}\mid\mathcal F_{t-1}]=\lambda_{t+1}(y+1)$
and
$s^\circ=\E[s^{\mathsf{hs}}\mid\mathcal F_{t-1}]=\lambda_{t+1}(y)$.
These are the thresholds of the mean-threshold rule $\pi^\circ$
from Section~\ref{sec:proof-ideas}, evaluated at this same inventory.

\Needspace{10\baselineskip}
\begin{proposition}
\label{prop:main-local-compensation}
Condition on $\mathcal F_{t-1}$, let $y=y_{t-1}^{\pi^K}$, and
suppose the hold action is feasible at $y$.
Use the mean thresholds $b^\circ$ and $s^\circ$ defined above,
and let $K=K_t\ge1$.
\begin{enumerate}[label=$(\alph*)$]
\item For a finite atomic law with support $A_1<\cdots<A_m$ and
masses $w_1,\ldots,w_m$, the mean-threshold rule $\pi^\circ$
has conditional expected compensation
\begin{align}
 \E[\partial R_t(y,x_t^\circ(y,p_t);\mathbf p_{\ge t})
                                 \mid\mathcal F_{t-1}] &=\sum_{i=1}^m w_i\Big\{
    \big(\E[(b^{\mathsf{hs}}-A_i)^+\mid\mathcal F_{t-1}]
           \wedge\E[(A_i-b^{\mathsf{hs}})^+\mid\mathcal F_{t-1}]\big) \notag\\
   &\qquad + \big(\E[(s^{\mathsf{hs}}-A_i)^+\mid\mathcal F_{t-1}]
           \wedge\E[(A_i-s^{\mathsf{hs}})^+\mid\mathcal F_{t-1}]\big)
                    \Big\}.
 \label{eq:main-atomic-hinge}
\end{align}
With $K$ simulated futures, \textup{\STARS{}} satisfies
\begin{align*}
 \E[\partial R_t(y,x_t^K(y,p_t;\zeta_t);\mathbf p_{\ge t})
                                     \mid\mathcal F_{t-1}] \le2\E[\partial R_t(y,x_t^\circ(y,p_t);\mathbf p_{\ge t})
                                     \mid\mathcal F_{t-1}]
                +\frac{40}{3eK}.
\end{align*}
\item For uniform prices on $[0,1]$, the expected compensation of
\textup{\STARS{}} is exactly
\begin{align*}
 \E[\partial R_t(y,x_t^K;\mathbf p_{\ge t})\mid\mathcal F_{t-1}] =\frac{1+1/K}{2}
 \left\{\Var(b^{\mathsf{hs}}\mid\mathcal F_{t-1})
             +\Var(s^{\mathsf{hs}}\mid\mathcal F_{t-1})\right\}.
\end{align*}
\item Fix $\rho\in(0,1)$, let $B=\lceil\rho T\rceil$, and take
$F\in\mathcal F^{\mathrm{ac}}_{\beta,\Delta,\varepsilon_0}$
from Definition~\ref{def:continuous-family}, allowing $\Delta=0$.
Set $\alpha=1/(\beta+1)$ and $\tau_t=T-t+1$. There are constants
$C,c_\rho>0$, depending on $F,\rho$ but not $T$, such that for
$K_t\ge\tau_t^2$,
\begin{align}
 \E[\partial R_t(y,x_t^K;\mathbf p_{\ge t})\mid\mathcal F_{t-1}] \le C\left\{ \left(\frac{1}{y + 1} \right)^{1 + \alpha}
   +\left(\frac{y+1}{\tau_t}\right)^{1+\alpha}
   +\tau_t^{-\frac{1+\alpha}{2}}
   +\epsT+\mathbbm{1}\Big\{y>\frac{B}{2}\Big\}\right\},
 \label{eq:main-continuous-charge}
\end{align}
where $\epsT=CT^2e^{-c_\rho T}$. The same bound holds for the
reference action $x_t^\circ(y,p_t)$.
\end{enumerate}
\end{proposition}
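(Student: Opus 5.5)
The plan is to treat all three parts through the area representation \eqref{eq:main-area}. First freeze the past, the actual future $\mathbf p_{>t}$ and the scenario randomness $\zeta_t$, so that the four thresholds $b^{\mathsf{hs}},s^{\mathsf{hs}},\widehat b,\widehat s$ are fixed. Then average over the independent current price, and only afterwards over the two independent futures. The buy and sell channels are handled identically, so I describe the buy channel with $W=b^{\mathsf{hs}}$, its conditional mean $b^\circ$, and its $K$-sample average $\widehat b$. By Lemma~\ref{lem:main-marginal-range}, $W$, $b^\circ$ and $\widehat b$ all lie in $[0,1]$.

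\emph{Part (a).} Fix an atom $A_i$ and consider the reference rule.
\begin{itemize}
\item If $A_i<b^\circ$, the rule buys at $A_i$ and loses $(A_i-W)^+$.
\item Otherwise it holds and loses $(W-A_i)^+$.
\end{itemize}
Averaging over $W$ and using the identity $b^\circ-A_i=\E(W-A_i)^+-\E(A_i-W)^+$, the rule pays the smaller of the two one-sided expectations. This yields \eqref{eq:main-atomic-hinge}; it is the finite-law formula for $J_F$ already derived after Proposition~\ref{prop:stars-compensation}.

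For \STARS{}, the action at $A_i$ is decided by the sign of $\widehat b-A_i$ instead of $b^\circ-A_i$. Let $g_i=|b^\circ-A_i|$ and let $m_i$ denote the minimum in \eqref{eq:main-atomic-hinge}. When the sign is flipped, the realized expected loss is the larger one-sided expectation, which equals $m_i+g_i$. Hence the excess at atom $A_i$ is $g_i\,\Pp(\text{flip})$. I bound the flip probability by Bernstein's inequality for the mean of $K$ i.i.d.\ copies of $W$, using $\Var(W)\le\E|W-A_i|=2m_i+g_i$. This splits into two cases.
\begin{itemize}
\item If $g_i\le m_i$, the excess is at most $m_i$, which accounts for the factor $2$ in the bound.
\item If $g_i>m_i$, the variance is at most $3g_i$, so Bernstein gives $\Pp(\text{flip})\le e^{-cKg_i}$. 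The excess is then at most $\sup_g g e^{-cKg}=1/(ceK)$.
\end{itemize}
Tracking the Bernstein constant over both channels, with $\sum_i w_i=1$, should give the stated $40/(3eK)$.

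\emph{Part (b).} For uniform prices the area in \eqref{eq:main-area} is exactly $(W-\widehat b)^2/2$. The actual future is independent of the $K$ scenarios, and $\E\widehat b=\E W$, so $\E(W-\widehat b)^2=\Var(W)+\Var(W)/K$. Adding the two channels gives the identity in part (b).

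\emph{Part (c).} This is the main work. I would first bound the reference charge by the Jensen gap $J_F(W)\le\E R_F(W)$, where $R_F$ vanishes on $[a^-,a^+]$. By the local power law in Definition~\ref{def:continuous-family}, if $W$ sits at quantile distance $\delta$ from the median edge, then $R_F(W)\lesssim\delta^{1+\alpha}$. The task therefore reduces to a quantile-tail estimate for the hindsight marginal $\Delta V_\tau^B(y)$. For this I would use the min--max recursion \eqref{eq:stars-minmax}: the marginal equals a future price selected by a ballot/random-walk criterion on the signs of future prices relative to a level. The event that the marginal's quantile exceeds $1/2+\delta$ then becomes a hitting event for a walk with drift of order $\delta$, started at height $y$.

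I expect this walk comparison to produce three contributions:
\begin{itemize}
\item an $e^{-c\delta y}$-type tail, giving the $(y+1)^{-(1+\alpha)}$ term;
\item a terminal-liquidation bias of order $(y+1)/\tau$, giving $((y+1)/\tau)^{1+\alpha}$;
\item $\sqrt\tau$-scale fluctuations, giving $\tau^{-(1+\alpha)/2}$.
\end{itemize}
Integrating $\delta^{1+\alpha}$ against these tails yields the three displayed terms.

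The capacity $B$ enters only through paths in which the hindsight inventory climbs to $B$. When $y\le B/2$, this requires a linear excursion, whose probability is at most $\epsT$ by maximal Hoeffding and a union bound over time. When $y>B/2$, I simply charge $1$.

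Finally, the sampled rule is compared with the reference rule. The density is bounded by $M$, and $\E|\widehat b-b^\circ|\le K^{-1/2}\le\tau_t^{-1}$, so the extra charge is $O(\tau_t^{-1})\le O(\tau_t^{-(1+\alpha)/2})$ because $\alpha\le1$.

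The hard step is the quantile-tail bound for $\Delta V_\tau^B(y)$ near the median edges when $\Delta>0$. A marginal value that falls below the gap must still be charged only through $R_F$, and the $(y+1)/\tau$ drift has to be isolated uniformly in $y$. I would first try to prove this tail bound for the uncapped problem via the ballot representation, and then perturb by the capacity event.
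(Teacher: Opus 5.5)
Your proposal is correct, and for parts (a), (b), and the reference half of (c) it is the paper's proof. The wrong-sign decomposition with the split $g_i\le m_i$ versus $g_i>m_i$, the bound $\Var\le 3g_i$, and $\sup_g g\,e^{-3Kg/20}=20/(3eK)$ per channel is exactly Lemma~\ref{lem:sampled-compensation}. The triangle area $(W-\widehat b)^2/2$ with $\E(W-\widehat b)^2=(1+1/K)\Var(W)$ is the paper's argument for (b). The chain $J_F(W)\le\E R_F(W)\le C\,\E|d^B_\tau(y)-1/2|^{1+\alpha}$, followed by integrating an upper tail $e^{-4yz}$ and a lower tail bounded by $1$ for $z<y/\tau$ and by $e^{-\tau z^2/2}$ beyond, is Step~2 of Appendix~\ref{app:stars-continuous} combined with Lemma~\ref{lem:stars-uniformmoments}.

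The genuine difference is the passage from the reference action to the sampled action in (c). The paper reapplies Lemma~\ref{lem:sampled-compensation}, which gives a factor $2$ plus $C/K_t\le C\tau_t^{-2}$ and needs no regularity of $F$. You instead note that, with all futures frozen, the sampled and reference disagreement intervals differ only between $\widehat b$ and $b^\circ$. The extra area is therefore at most $M|\widehat b-b^\circ|$, whose mean is at most $M/(2\sqrt{K_t})\le M/(2\tau_t)\le M\tau_t^{-(1+\alpha)/2}/2$ per channel. This is valid and keeps constant one on the reference term; it is the compensation analogue of the action comparison in Proposition~\ref{prop:stars-continuous-sampling}. However, it uses the density bound, so it cannot handle atoms. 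That is why the paper uses the Bernstein lemma for both (a) and (c).

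There are two points to tighten. First, the step you flag as hard is not hard. Since $\Delta V^B_\tau(y)=F^{-1}(d^B_\tau(y))$ with $d^B_\tau(y)$ the uniform-price marginal, the tail estimate is distribution-free. The gap never appears, because $R_F$ vanishes on $[a^-,a^+]$ and $R_F(F^{-1}(1/2\pm q))\le Cq^{1+\alpha}$ on both sides, whatever $\Delta$ is.

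Second, carry out the capacity perturbation on the count walks $C^B_\tau(u)$ and $C_\tau(u)$ driven by the same signs, not on the hindsight inventory path. That path does not have independent increments, so maximal Hoeffding does not apply to it without further structure. On the counts, reaching $B$ is not the rare event: for levels $u$ below the median the count has positive drift and can hit $B$ with non-negligible probability. What is exponentially rare for $y\le B/2+1$ is a discrepancy between $\{C^B_\tau(u)\ge y\}$ and $\{C_\tau(u)\ge y\}$. Under nonpositive drift this requires an upward excursion of height $B$; under nonnegative drift it requires a descent from $B$ to below $y$ against the drift. This is Lemma~\ref{lem:stars-capacity}.
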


For atomic prices, condition on $p_t=A_i$. If the mean-threshold rule
buys, its expected buying loss is
$\E[(A_i-b^{\mathsf{hs}})^+\mid\mathcal F_{t-1}]$; otherwise it is
$\E[(b^{\mathsf{hs}}-A_i)^+\mid\mathcal F_{t-1}]$.
The mean threshold selects the smaller of these quantities.
Selling gives the other minimum in \eqref{eq:main-atomic-hinge}.
Lemma~\ref{lem:sampled-compensation} bounds the additional sampling
loss at this same inventory.

For uniform prices, each integral in \eqref{eq:main-area} is a
triangle: the buying loss is $(b^{\mathsf{hs}}-\widehat b)^2/2$.
Conditional independence and equal means give
$\E[(b^{\mathsf{hs}}-\widehat b)^2\mid\mathcal F_{t-1}]
=(1+1/K)\Var(b^{\mathsf{hs}}\mid\mathcal F_{t-1})$;
selling is identical. This explains the variance connection with
\cite{bray2025}. Here the random threshold is the value of a stored
unit subject to all future inventory constraints.

For continuous prices, Step 2 of Appendix~\ref{app:stars-continuous}
combines the occupied-support area bound with the threshold moments
in Lemma~\ref{lem:stars-uniformmoments} to prove
\eqref{eq:stars-sc-compensation}. Lemma~\ref{lem:sampled-compensation}
then gives part (c).

For scenario counts $K_1,\ldots,K_T$
chosen in advance, sum the decomposition over feasible inventories
$\mathcal Y_t=\{0,\ldots,B\wedge(T-t+1)\}$:
\begin{align}
 \Reg(\pi^K;B,T,0)
 =\sum_{t=1}^T\sum_{y\in\mathcal Y_t}
    \Pp(y_{t-1}^{\pi^K}=y)\,
    \E_{p_t,\mathbf p_{>t},\zeta_t}^{(K_t)}
    [\partial R_t(y,x_t^{K_t}(y,p_t;\zeta_t);\mathbf p_{\ge t})].
 \label{eq:main-weighted-occupation}
\end{align}
The inner expectation fixes $y$ and averages fresh prices and
scenarios. This is \emph{compensation-weighted occupation}.
\par
\endgroup

\section{A random-walk representation of hindsight thresholds}
\label{app:stars-scalar}

The compensation identity reduces the next step to a distributional
question: what is the law of $\Delta V_\tau^B(y)$? We answer it by fixing
a price level $u$ and counting how many marginals exceed that price.
Precisely, $C_\tau^B(u)=\#\{y:1\le y\le B\wedge \tau,\ \Delta V_\tau^B(y)>u\}$.
The decreasing list and \eqref{eq:stars-minmax} imply
\begin{align*}
 C_0^B(u)=0,\qquad
 C_{j+1}^B(u)=\clip_{[0,B]}(C_j^B(u)+\xi_{j+1}),\qquad
 \Pp(\xi_{j+1}=1)=1-F(u).
\end{align*}
The signs are i.i.d.\ and prices are processed backward. A price above
$u$ increases the cutoff by one; a price at or below $u$ decreases it.
Thus $\{\Delta V_\tau^B(y)>u\}=\{C_\tau^B(u)\ge y\}$, including atomic prices.
The count describes the hindsight problem on a hypothetical future;
we will use its law to analyze the thresholds that determine \textsf{STARS}'s
inventory movements.

We can also express the thresholds for every price law in uniform
quantile coordinates. Set $p_j=F^{-1}(U_j)$, where
$F^{-1}(u)=\inf\{p:F(p)\ge u\}$ for $0<u<1$, with endpoint values
defined by one-sided limits, is the quantile and $U_j$ are i.i.d.\ uniform.
Minimum and maximum
commute with a nondecreasing function, so induction in
\eqref{eq:stars-minmax} gives $\Delta V_\tau^B(y)=F^{-1}(d_\tau^B(y))$,
where $d_\tau^B(y)$ is the marginal for the uniform future. It is one of
the continuous observations $U_j$, hence has no atom at a fixed
quantile or a discontinuity of $F^{-1}$.

\subsection{Hitting and survival estimates}
Remove the physical upper cap and omit the superscript $B$ for those
objects. For an ordinary walk $S_j$ started at zero, write
$\sigma_y=\inf\{j:S_j=y\}$ and $\Pp_a$ when its upward probability
is $a$. Real marginal indices still satisfy $1\le y\le \tau$.

\begin{lemma}\label{lem:stars-hitting}
Fix integers $1\le y\le \tau$.
\begin{enumerate}[label=$(\alph*)$]
\item For every price level $u$,
$\Pp(\Delta V_\tau(y)>u)=\Pp_{1-F(u)}(\sigma_y\le \tau)$.
\item For $0<a<1/2$,
\begin{align}
 \Pp_a(\sigma_y\le \tau)
 &=\left(\frac a{1-a}\right)^y\Pp_{1-a}(\sigma_y\le \tau)
 \le\left(\frac a{1-a}\right)^y.
 \label{eq:stars-likelihood}
\end{align}
\end{enumerate}
\end{lemma}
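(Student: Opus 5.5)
For part (a), I will use the count representation just established, now without the cap. Fix $u$. By Lemma~\ref{lem:stars-concavity} the marginals form a decreasing list, so $\{\Delta V_\tau(y)>u\}=\{C_\tau(u)\ge y\}$. Without the cap, the count satisfies $C_0=0$ and $C_{j+1}=(C_j+\xi_{j+1})\vee0$, where the $\xi_j$ are i.i.d.\ signs with upward probability $a:=1-F(u)$. This is the lower-reflected walk.

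The key step is the reversal identity already used in the proof of Lemma~\ref{prop:binary-free}. Reversing the order of the i.i.d.\ increments shows that the reflected walk at time $\tau$ has the law of $\max_{0\le j\le\tau}S_j$ for an ordinary walk $S$ with upward probability $a$. Indeed, $C_\tau=\max_{0\le i\le\tau}(S_\tau-S_i)$ by the Lindley recursion, and the reversed increments give the running maximum. Hence
\[
\Pp(C_\tau\ge y)=\Pp_a\bigl(\max_{j\le\tau}S_j\ge y\bigr)=\Pp_a(\sigma_y\le\tau),
\]
which is (a). I should also check that the range restriction $y\le\tau$ is harmless: the count is at most $\tau$ automatically, so this holds. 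Ties at $u$ are treated through the convention that a price at or below $u$ is a down-step, which is consistent with the strict inequality $>u$, so atomic laws are covered.

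For part (b), I will use a change of measure on paths. Any path of length $s$ that first hits $y$ at time $s$ has $(s+y)/2$ up-steps and $(s-y)/2$ down-steps. Its probability under $\Pp_a$ is $a^{(s+y)/2}(1-a)^{(s-y)/2}$, and under $\Pp_{1-a}$ it is $(1-a)^{(s+y)/2}a^{(s-y)/2}$. The ratio is $(a/(1-a))^y$, independent of $s$. Summing over first-passage paths with $s\le\tau$ gives the equality in \eqref{eq:stars-likelihood}, and bounding $\Pp_{1-a}(\cdot)\le1$ gives the inequality.

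The only delicate point is the reversal and Lindley identity in (a), specifically matching the boundary conventions of \eqref{eq:stars-minmax} (the floor update $p\vee\Delta V_2$ and the new top state) with the reflection at zero. I would verify these by checking the count recursion directly at $C_j=0$. In that case a high price creates the first marginal and a low price leaves the count at zero, so the recursion is exactly reflection at zero. Part (b) is routine.
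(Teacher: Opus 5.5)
Your proposal is correct and matches the paper's proof: the paper also writes the uncapped count as $C_\tau=\max_{0\le j\le\tau}(S_\tau-S_j)$ and reverses the i.i.d.\ increments to get the running-maximum law in (a), and it obtains (b) from the fact that every first-passage path has likelihood ratio $(a/(1-a))^y$, independent of the passage time, summed over first-hit paths with $\sigma_y\le\tau$. Your additional check that the floor and new-top-state conventions in \eqref{eq:stars-minmax} produce exactly reflection at zero, with a price equal to $u$ counted as a down-step, is a useful detail that the paper leaves implicit.
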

\begin{proof}
The lower-reflected recursion is
$C_\tau=\max_{0\le j\le \tau}(S_\tau-S_j)$. Reversing i.i.d.\ increments
identifies its law with $\max_{j\le \tau}S_j$, giving
the first identity in the lemma. A path first hitting $y$ has $y$ more upward than
downward steps. Its likelihood ratio under $a$ versus $1-a$ is
$(a/(1-a))^y$. Sum these ratios over all first-hit paths by epoch $\tau$
to obtain \eqref{eq:stars-likelihood}.
\end{proof}

\begin{lemma}\label{lem:stars-fair}
Fix integers $\tau,y\ge1$.
\begin{enumerate}[label=$(\alph*)$]
\item With absolute comparison constants $c,C>0$,
\begin{align*}
 \Pp_{1/2}(\sigma_y>\tau)&=\Pp(-y\le S_\tau<y)
 \asymp \frac{y}{\sqrt{\tau+1}}\wedge1.
\end{align*}
\item For every fixed sign probability and $x\ge0$,
\begin{align}
 \Pp\left(\max_{j\le \tau}(S_j-\E S_j)\ge x\right)
 &\le e^{-x^2/(2\tau)}.\label{eq:stars-hoeffding}
\end{align}
The same inequality holds for the negative centered walk.
\item If $a>1/2$ and $y\le(2a-1)\tau$, then
\begin{align}
 \Pp_a(\sigma_y>\tau)
 \le e^{-((2a-1)\tau-y)^2/(2\tau)}.\label{eq:stars-failure}
\end{align}
\end{enumerate}
\end{lemma}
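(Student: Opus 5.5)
The lemma has three parts, and I will treat them in order. Each reduces to a classical fact about simple random walks.

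For part (a), I start from the reflection principle at the first hit of $y$. For the fair walk, paths with $\sigma_y\le\tau$ and $S_\tau<y$ correspond bijectively, by reflecting after $\sigma_y$, to paths with $S_\tau>y$. Hence
\[
\Pp(\sigma_y\le\tau)=\Pp(S_\tau\ge y)+\Pp(S_\tau>y),
\]
and by symmetry this equals $\Pp(S_\tau\ge y)+\Pp(S_\tau<-y)$. Taking complements gives
\[
\Pp(\sigma_y>\tau)=\Pp(-y\le S_\tau<y).
\]
For the two-sided comparison:
\begin{itemize}
\item The upper bound uses the local bound $\sup_m\Pp(S_\tau=m)\le C/\sqrt{\tau+1}$, already used in Lemma~\ref{lb:lem-boundary-estimates}, summed over at most $2y$ lattice points, together with the trivial bound $1$.
\item The lower bound uses the central-binomial estimate $\Pp(S_\tau=j)\ge c_1/\sqrt{\tau+1}$ for admissible $j$ with $|j|\le\sqrt{\tau+1}$. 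When $y\le\sqrt{\tau+1}$, the interval $[-y,y)$ contains at least about $y/2$ points of the right parity, in fact $\ge1$ since the interval has length $2y\ge2$. This gives $cy/\sqrt{\tau+1}$.
\item When $y>\sqrt{\tau+1}$, I use monotonicity in $y$ and apply the previous case at $y'=\lceil\sqrt{\tau+1}\rceil$, or use $\Pp(|S_\tau|<\sqrt{\tau+1})\ge c$ from Chebyshev plus the parity count. This gives a constant.
\end{itemize}
This is the same argument as Lemma~\ref{lb:lem-survival}.

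For part (b), I apply the maximal Hoeffding inequality. The centered increments $\xi_j-\E\xi_j$ lie in an interval of length $2$, so Hoeffding's lemma gives $\E e^{\theta(\xi-\E\xi)}\le e^{\theta^2/2}$. Then $\exp(\theta(S_j-\E S_j)-j\theta^2/2)$ is a nonnegative martingale. Doob's maximal inequality, or optional stopping at the first crossing of $x$ capped at $\tau$, gives the tail bound $e^{-\theta x+\tau\theta^2/2}$. Optimizing with $\theta=x/\tau$ yields $e^{-x^2/(2\tau)}$, and the same argument applied to $-S$ handles the negative walk.

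For part (c), I reduce to part (b). The event $\{\sigma_y>\tau\}$ implies $S_\tau<y$, so
\[
S_\tau-\E S_\tau<y-(2a-1)\tau\le0.
\]
This is a lower deviation of size $x=(2a-1)\tau-y\ge0$. The one-time (non-maximal) version of (b) for the negative centered walk then gives the bound directly.

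The only delicate point is the parity and constant bookkeeping in the lower bound of (a). I would handle it exactly as in Lemma~\ref{lb:lem-survival}, absorbing the finitely many small-$\tau$ cases into the constants.
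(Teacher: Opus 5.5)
Your proposal is correct and follows essentially the paper's own proof: reflection at the first hit of $y$ plus upper and lower local binomial bounds (extended to large $y$ by monotonicity) for (a), the Hoeffding exponential process stopped at the first crossing with $\theta=x/\tau$ for (b), and the inclusion $\{\sigma_y>\tau\}\subset\{S_\tau<y\}$ combined with the lower tail for (c). The slips are minor: that exponential process is a supermartingale rather than a martingale, which is all optional stopping needs; the large-$y$ case should use $y'=\lfloor\sqrt{\tau+1}\rfloor$ so that the small-$y$ case applies; and Chebyshev alone does not give $\Pp(|S_\tau|<\sqrt{\tau+1})\ge c$, so rely on the monotonicity route (the paper, unlike you, also derives the local lower bound $\Pp(S_\tau=j)\ge c_1(\tau+1)^{-1/2}$ from the product of ratios $(m-i+1)/(m+i)$ rather than citing it).
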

\begin{proof}
Reflection after the first hit of $y$ gives
$\Pp(\sigma_y\le \tau)=\Pp(S_\tau\ge y)+\Pp(S_\tau>y)$; symmetry proves
the claimed fair-survival identity. A central binomial mass is of order
$(\tau+1)^{-1/2}$. One direct verification at $\tau=2m$ is
$4^{-m}\binom{2m}{m}=\prod_{j=1}^m(1-1/(2j))$: take logarithms,
use $-x-x^2\le\log(1-x)\le-x$ for $x\le1/2$, and sum. At odd
times multiply the previous central mass by $(2m+1)/(2m+2)$.
There are $y$ parity points in $[-y,y)$ unless the support cuts them
off, giving the upper bound. For $y\le\sqrt \tau/4$, all their masses
are comparable to the central mass: their ratios are products of
$(m-i+1)/(m+i)$ at even times, or $(m-i+1)/(m+i+1)$ at odd times,
with at most $\sqrt \tau/4$ factors. Their log ratios are bounded below
by $-C\sum_{i\le\sqrt \tau/4}i/\tau\ge-C'$. This gives the lower bound
in this range; monotonicity in $y$ extends it to the remaining range.
Finitely many small $\tau$ are absorbed in the constants.

For a sign $X$, the second derivative of its log moment-generating
function is a tilted variance, at most one. Integrating twice yields
$\E e^{\lambda(X-\E X)}\le e^{\lambda^2/2}$. Stop the resulting
nonnegative exponential supermartingale at the first centered crossing
of $x$ or epoch $\tau$. Its value at a crossing is at least
$e^{\lambda x-\tau\lambda^2/2}$, so the crossing probability is at most
$e^{-\lambda x+\tau\lambda^2/2}$. Taking $\lambda=x/\tau$ proves
\eqref{eq:stars-hoeffding}. Negating the signs gives the lower tail;
$\{\sigma_y>\tau\}\subset\{S_\tau<y\}$ then gives \eqref{eq:stars-failure}.
\end{proof}

\subsection{Capacity and uniform-coordinate moments}
The walk is simpler to analyze without the physical capacity. We
therefore first bound how much removing that capacity changes a
hindsight threshold. This is a comparison of hypothetical futures;
the probability that the online inventory reaches the ceiling will
be bounded separately.

\Needspace{8\baselineskip}
\begin{lemma}\label{lem:stars-capacity}
Fix $\rho\in(0,1)$ and $B=\lceil\rho T\rceil$. There are constants
$c_\rho,C_\rho>0$ such that, with
$\epsT=C_\rho T^2e^{-c_\rho T}$, the following hold uniformly over
price laws, price levels, and $\tau\le T$.
\begin{enumerate}[label=$(\alph*)$]
\item For real indices $y\le \tau\wedge(B/2+1)$,
\begin{align*}
 0\le\Pp(\Delta V_\tau(y)>u)-\Pp(\Delta V_\tau^B(y)>u)
 \le\epsT.
\end{align*}
\item In the same index range, means and hinge expectations change
by at most $\epsT$, and expectations of a bounded-variation function
change by at most its total variation times $\epsT$.
\item Capping lowers every real marginal stochastically, including
outside the range in part~(a).
\end{enumerate}
\end{lemma}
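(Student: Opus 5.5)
The plan is to couple the capped and uncapped count processes on the same backward sign sequence and to use the random-walk representation of this appendix. Fix a price level $u$. The uncapped count is
\[
C_j(u)=\max_{i\le j}(S_j-S_i),
\]
and the capped count $C_j^B(u)$ obeys $C_{j+1}^B=\clip_{[0,B]}(C_j^B+\xi_{j+1})$. Both are driven by the same i.i.d.\ signs $\xi_j$.

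\textbf{Step 1 (part (c)).} By induction, $C_j^B(u)\le C_j(u)$ for every $j$. Both maps $c\mapsto0\vee(c+\xi)$ and $c\mapsto\clip_{[0,B]}(c+\xi)$ are nondecreasing, and the second lies below the first. Since $\{\Delta V_\tau^B(y)>u\}=\{C_\tau^B(u)\ge y\}$, this gives
\[
\Pp(\Delta V_\tau^B(y)>u)\le\Pp(\Delta V_\tau(y)>u)
\]
for every $u$ and every real index $y$. This is the stochastic ordering in part (c), and it also yields the lower bound $0$ in part (a).

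\textbf{Step 2 (upper bound in part (a)).} If the uncapped process never exceeds $B$ on $[0,\tau]$, then no clipping at $B$ ever occurs and the two processes coincide pathwise. Hence
\[
\Pp(C_\tau\ge y)-\Pp(C_\tau^B\ge y)\le\Pp\Bigl(\max_{j\le\tau}C_j>B\Bigr).
\]
Because $C_j\le2\max_{i\le j}|S_i|$, this event requires the walk to range over more than $B/2$. For a fair or downward-biased walk, the maximal Hoeffding bound \eqref{eq:stars-hoeffding} controls this by $2\tau e^{-B^2/(8\tau)}\le 2Te^{-\rho^2T/8}$.

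The obstacle is that the bound must be uniform over price laws and levels $u$, including upward-drifting walks with $1-F(u)>1/2$. For those, $C_\tau$ genuinely reaches $B$ with probability near one, so this crude bound fails. I would use the index restriction $y\le B/2+1$ to handle this. Instead of asking that no clipping occurs, I would ask only that the capped process, once it has reached level $y$, does not fall back below $y$ by time $\tau$ while the uncapped one remains above. After a clip at $B$, the two processes differ by a constant until the capped one reflects at $0$. So on the disagreement event, the capped process must descend from $B$ to below $y\le B/2+1$, a drop of at least roughly $B/2$ within $\tau\le T$ steps, while the uncapped process stays at least $y$ higher. Under an upward-biased walk, such a descent of order $\rho T$ has probability at most $e^{-c\rho^2T}$ by the centered lower-tail form of \eqref{eq:stars-hoeffding}, since the drift only helps. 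A union bound over the at most $T^2$ pairs of start time and clip time gives $\epsT=C_\rho T^2e^{-c_\rho T}$.

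\textbf{Step 3 (part (b)).} Integrate the tail bound of part (a). For a real marginal $W\in[0,1]$,
\[
\E W=\int_0^1\Pp(W>u)\,du,
\]
so means differ by at most $\epsT$. A hinge $(W-z)^+$ and, more generally, a function of bounded variation $g$ can be written as $g(0)+\int\one\{w>u\}\,dg(u)$. Expectations therefore differ by at most $|g|_{TV}\sup_u|\Pp(\cdot>u)-\Pp(\cdot>u)|$, which gives the stated bound.
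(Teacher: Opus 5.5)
Your coupling, the monotonicity induction for part~(c), the split by the sign of the drift, the descent-from-$B$ argument for nonnegative drift with a union bound over at most $T^2$ intervals, and the tail/Stieltjes integration for part~(b) all follow the paper's proof. The one step that fails is your treatment of the nonpositive-drift case.

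Reducing the discrepancy to $\Pp(\max_{j\le\tau}C_j>B)$ is fine. The problem comes next: you pass through $C_j\le2\max_{i\le j}|S_i|$ and claim that \eqref{eq:stars-hoeffding} bounds $\Pp(\max_i|S_i|>B/2)$ for a fair \emph{or downward-biased} walk. But \eqref{eq:stars-hoeffding} controls the centered walk $S_j-\E S_j$, not $|S_j|$. For a downward-biased walk, $|S_j|$ grows linearly at rate $2F(u)-1$. For example, if $1-F(u)=0$ then $S_j=-j$, so $\max_i|S_i|=\tau>B/2$ with probability one once $\tau>B/2$. Yet in this case the count never leaves zero and there is no discrepancy at all. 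The lemma must hold uniformly over price levels, so the route through $|S_i|$ cannot work.

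The repair is to keep the interval structure. $C_j>B$ means $S_j-S_i>B$ for some $i<j\le\tau$. With nonpositive drift, the centered increment $S_j-S_i-\E(S_j-S_i)$ is at least as large. Each such event therefore has probability at most $e^{-B^2/(2T)}\le e^{-\rho^2T/2}$, and a union bound over the at most $T^2$ intervals gives an $\epsT$-type error. This is exactly the paper's ``upward excursion of height $B$ over some interval.''

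Your nonnegative-drift case is right in spirit but needs one precision to make the Hoeffding step valid. Take the interval from the \emph{last} visit of the capped count to $B$ before $\tau$ to the \emph{first} subsequent visit below $y$. On that interval neither wall is touched, so the capped descent equals the sign-walk increment, which is at most $-(B-y)$. If you start instead at an arbitrary clip time, later clips at $B$ make the free increment less negative than the capped descent, and the centered lower-tail bound no longer applies directly. The extra condition that the uncapped process ``remains above'' is not needed. The event that the capped count reaches $B$ and then ends below $y$ already contains the discrepancy.
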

\begin{proof}
Use the same signs in both count recursions; the capped count is
smaller. With nonpositive drift, a discrepancy requires an upward
excursion of height $B$ over some interval. With nonnegative drift,
differing tail events at $y$ require a descent from $B$ to below $y$:
between the last visit to $B$ and the first subsequent visit below
$y$ there is no reflection, and the interval sum is at most $-(B-y)$.
After centering, both events are at least as unlikely as the
corresponding zero-drift deviation. There are at most $(T+1)^2$
intervals, each of length at most $T$. Since $B-y\ge\rho T/4$ for
sufficiently large $T$, \eqref{eq:stars-hoeffding} and a union bound give
the claimed capacity-error bound. Enlarge its constant for smaller $T$.
Integrate the tail difference for means/hinges; Stieltjes integration
gives the bounded-variation assertion.
\end{proof}

\begin{lemma}\label{lem:stars-uniformmoments}
Fix $h>0$. There is a constant $C_h<\infty$ such that, for all
$\tau\ge1$ and $1\le y\le \tau$,
\begin{align*}
 \E|d_\tau(y)-1/2|^h\le C_h\{y^{-h}+(y/(\tau+1))^h+(\tau+1)^{-h/2}\}.
\end{align*}
For $d_\tau^B(y)$ at $y\le B/2+1$, add $C_h\epsT$.
\end{lemma}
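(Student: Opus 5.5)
Throughout, $d_\tau(y)$ is the $y$th marginal for a uniform future without the cap. By Lemma~\ref{lem:stars-hitting}(a) with $F$ uniform, its tail is a first-passage probability:
\begin{align*}
\Pp(d_\tau(y)>u)=\Pp_{1-u}(\sigma_y\le\tau).
\end{align*}
The plan is to bound $\E|d_\tau(y)-1/2|^h$ by the layer-cake formula, treating the two sides of $1/2$ separately:
\begin{align*}
\E|d_\tau(y)-1/2|^h=\int_0^{1/2}h r^{h-1}\bigl[\Pp(d_\tau(y)>1/2+r)+\Pp(d_\tau(y)\le 1/2-r)\bigr]\,dr .
\end{align*}

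\emph{Upper tail.} For $u=1/2+r$ the walk has upward probability $a=1/2-r<1/2$. Lemma~\ref{lem:stars-hitting}(b) gives
\begin{align*}
\Pp(d_\tau(y)>1/2+r)\le\bigl((1-2r)/(1+2r)\bigr)^y\le e^{-4ry}.
\end{align*}
Integrating against $hr^{h-1}$ yields $C_h y^{-h}$. This is the first term of the bound.

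\emph{Lower tail.} For $u=1/2-r$ the upward probability is $a=1/2+r$, and the event is $\{\sigma_y>\tau\}$. I split the range of $r$ at $r_*:=y/\tau$.
\begin{itemize}
\item When $r\ge 2y/\tau$, we have $(2a-1)\tau-y=2r\tau-y\ge r\tau$. Lemma~\ref{lem:stars-fair}(c) then gives a bound $e^{-r^2\tau/2}$, whose integral against $hr^{h-1}$ is $C_h\tau^{-h/2}$.
\item When $r\le 2y/\tau$, I bound the probability by one. This contributes $(2y/\tau)^h$, which is at most $C_h(y/(\tau+1))^h$ since $\tau\ge1$.
\end{itemize}
Adding the two pieces gives the second and third terms. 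No finer estimate seems needed: the $(y/\tau)^h$ term is exactly what absorbs the region where the walk drifts up too slowly to reach $y$.

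The point needing care is the regime where $2y/\tau>1/2$, so the split point exceeds the integration range. There the crude bound already gives $2^{-h}\le C_h(y/(\tau+1))^h$, because $y\ge\tau/4$. Endpoint conventions such as strict versus weak inequalities matter only on null sets, since $d_\tau(y)$ has no atoms.

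\emph{Capped version.} For $d_\tau^B(y)$ with $y\le B/2+1$, Lemma~\ref{lem:stars-capacity}(b) applies to $g(x)=|x-1/2|^h$. This function is of bounded variation on $[0,1]$, with total variation at most $2\cdot 2^{-h}$. The lemma gives
\begin{align*}
\bigl|\E g(d_\tau^B(y))-\E g(d_\tau(y))\bigr|\le 2^{1-h}\epsT,
\end{align*}
and enlarging $C_h$ finishes the statement.

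I expect the only step needing care to be the lower-tail split and its boundary regime just discussed. The upper tail and the capacity transfer are direct applications of the earlier lemmas.
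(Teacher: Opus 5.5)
Your proposal is correct and follows the paper's proof essentially step for step. The steps are the layer-cake split at $1/2$, the likelihood-ratio bound $e^{-4ry}$ on the upper tail, the crude bound below a threshold of order $y/\tau$ combined with \eqref{eq:stars-failure} above it on the lower tail, and the bounded-variation transfer through Lemma~\ref{lem:stars-capacity} with total variation $2^{1-h}$. The only cosmetic difference is that you split the lower tail at $2y/\tau$ rather than the paper's $y/\tau$, even though your text announces $r_*=y/\tau$; either choice works, since $2r\tau-y\ge r\tau$ already holds for $r\ge y/\tau$.
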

\begin{proof}
For $0<z<1/2$, \eqref{eq:stars-likelihood} gives
$\Pp(d_\tau(y)>1/2+z)\le[(1/2-z)/(1/2+z)]^y\le e^{-4yz}$;
the last log ratio has derivative $-4/(1-4z^2)\le-4$.
On the lower side the tail is
$\Pp_{1/2+z}(\sigma_y>\tau)$, bounded by one below $z=y/\tau$ and by
$e^{-\tau z^2/2}$ above it, using \eqref{eq:stars-failure}.
Integrating $hz^{h-1}$ against these two tails gives
$C_h[y^{-h}+(y/\tau)^h+\tau^{-h/2}]$, equivalent to the claim since $\tau\ge1$.
Finally $|z-1/2|^h$ has total variation $2(1/2)^h$; apply
Lemma~\ref{lem:stars-capacity} in uniform coordinates.
\end{proof}

% % Keep the continuous-price proof and its six steps together in the contents.
% \addtocontents{toc}{\protect\Needspace{11\baselineskip}}
\section{Continuous prices: thresholds, inventory occupation, and regret}
\label{app:stars-continuous}

This appendix proves the continuous-price bounds outlined in
Section~\ref{sec:proof-ideas}. We analyze the mean-threshold rule
$\pi^\circ$; Appendix~\ref{app:stars-sampling} then transfers its
guarantees to finite-scenario \STARS{}.

\subsection{Assumptions and results}

We prove the continuous part of Theorem~\ref{thm:stars-upper}, for
the family in Definition~\ref{def:continuous-family}. Thus $F$ is
atomless on $[0,1]$ and has a density bounded above by a finite
constant $M$; equivalently, its distribution function is
$M$-Lipschitz. Using the median edges $a^-,a^+$ and gap $\Delta$
from Definition~\ref{def:median-edges}, write
\begin{align*}
 g_+(q)&=F^{-1}(1/2+q)-a^+,&
 g_-(q)&=a^--F^{-1}(1/2-q),\qquad 0<q<1/2.
\end{align*}
Set $g_\pm(0)=0$, use one-sided limits at $q=1/2$, and put
$\alpha=1/(\beta+1)\in(0,1]$.
The local assumptions can be written explicitly: for some $q_0\in(0,1/2)$
and constants $c_\pm>0$,
\begin{align}
 g_\pm(q)&=c_\pm q^\alpha(1+r_\pm(q)),\qquad 0<q\le q_0,\notag\\
 |r_\pm(q)|&\le\tfrac12,\qquad r_\pm(q)\longrightarrow0
 \quad(q\downarrow0).
 \label{eq:stars-sc-regularity}
\end{align}
When $\alpha=1$, additionally
$|r_+(q)|+|r_-(q)|=O(q^\eta)$ for some $\eta>0$.
Consequently $g_\pm(q)\sim c_\pm q^\alpha$ for $\alpha<1$,
and $g_\pm(q)=c_\pm q+O(q^{1+\eta})$ for $\alpha=1$.
Uniform prices correspond to $\alpha=1$, $\Delta=0$, and $g_\pm(q)=q$.
Other support gaps away from the median are permitted.
We take $T\ge2$ in the auxiliary estimates below; when $T=1$, the
only feasible action is to hold, so both reference and sampled regret
are zero.

All constants may depend on the fixed distribution, its regularity moduli, and $\rho$,
but not on $T$. In particular, bounded support and
\eqref{eq:stars-sc-regularity} imply the useful global bound
\begin{align*}
 g_+(q)+g_-(q)\le Cq^\alpha,\qquad 0\le q\le1/2.
\end{align*}

\begin{lemma}
\label{thm:stars-continuous}
Fix $\rho\in(0,1)$ and an atomless distribution $F$ with bounded
density and the edge definitions and regularity assumptions above,
including \eqref{eq:stars-sc-regularity} and the quantitative remainder
when $\alpha=1$. Let $\alpha\in(0,1]$, $y_0=y_T=0$, and
$B=\lceil\rho T\rceil$. There is a constant $C=C(F,\rho)<\infty$,
which may depend on the regularity moduli but not on $T$, such that
the reference rule $\pi^\circ$ satisfies the following bounds.
\begin{enumerate}[label=$(\alph*)$]
\item If $\Delta=0$, then
\begin{align*}
 \Reg_T\le C\log^2 T\one\{\alpha=1\}
          +CT^{(1-\alpha)/2}\one\{0<\alpha<1\}.
\end{align*}
\item If the fixed gap satisfies $\Delta>0$, then
\begin{align*}
 \Reg_T\le CT^{1/4}\log T\one\{\alpha=1\}
          +CT^{1/[2(1+\alpha)]}\one\{0<\alpha<1\}.
\end{align*}
\end{enumerate}
No uniform constant
over all distributions satisfying only the qualitative asymptotics is
asserted. Throughout this appendix, $(y_t,x_t)$ denotes the
reference trajectory. The finite-scenario algorithm is handled by
Appendix~\ref{app:stars-sampling}.
\end{lemma}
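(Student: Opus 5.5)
The proof starts from the exact decomposition in Proposition~\ref{prop:stars-compensation}(b), $\Reg_T=\sum_{t}\E\,\partial\bar R^B_{\tau_t-1}(y_{t-1})$, and bounds the fixed-state charge by the reference-action version of \eqref{eq:main-continuous-charge}. That bound comes from the area formula \eqref{eq:main-area}, the representation $\Delta V^B_\tau(y)=F^{-1}(d^B_\tau(y))$, the global envelope $g_+(q)+g_-(q)\le Cq^\alpha$, and the quantile moments of Lemma~\ref{lem:stars-uniformmoments} with $h=1+\alpha$. Regret is then at most a constant times
\begin{align*}
 \E\sum_{t}\Bigl[(y_{t-1}+1)^{-(1+\alpha)}+\bigl((y_{t-1}+1)/\tau_t\bigr)^{1+\alpha}+\one\{y_{t-1}>B/2\}\Bigr]
 +\sum_t\tau_t^{-(1+\alpha)/2}+T\epsT .
\end{align*}
The term $\sum_t\tau_t^{-(1+\alpha)/2}$ is $O(\log T)$ for $\alpha=1$ and $O(T^{(1-\alpha)/2})$ for $\alpha<1$. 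The term $T\epsT$ is $O(1)$. So everything reduces to three occupation estimates along the reference path.

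\emph{Step 1 (ceiling and terminal terms).} I would control $\Pp(y_{t-1}>B/2)$ by comparing with the unconstrained walk. Above the median, the thresholds make the sell probability at least the buy probability, up to $\epsT$ by Lemma~\ref{lem:stars-capacity}. An excursion to $B/2\asymp\rho T$ therefore needs a zero-drift deviation of linear size, and \eqref{eq:stars-hoeffding} with a union bound makes this $O(T^2e^{-cT})$.

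For $((y+1)/\tau)^{1+\alpha}$, I would prove $\E y_{t-1}^2\lesssim\tau_t\log T$. The argument imitates the second-moment recursion in Appendix~\ref{app:interior-transfer}. For $y\gg\sqrt{\tau}$, the reduction $\lambda_\infty(y)-\lambda^B_\tau(y)\gtrsim y/\tau$ (quantile shift $\approx y/\tau$ from Lemma~\ref{lem:stars-hitting}) gives a restoring drift of order $-y/\tau$, exactly as in the CE bridge. Jensen's inequality then bounds the sum by $\sum_t(\log T/\tau_t)^{(1+\alpha)/2}$, which is within the target.

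\emph{Step 2 (low-inventory term, no gap).} I would compute the drift $\mu_t(y)=F(\lambda_{\tau_t-1}(y+1))-1+F(\lambda_{\tau_t-1}(y))$. In quantile coordinates, Lemma~\ref{lem:stars-hitting} gives $\E d_\infty(y)-1/2\asymp 1/y$. Pushing this through $F^{-1}$ with the edge regularity \eqref{eq:stars-sc-regularity} yields $\mu\ge c/y$ for $y\le c\sqrt{\tau}$. For $\alpha=1$, the $O(q^\eta)$ remainder is needed so that the nonlinearity error is summable. I would then build $\Phi$ with increments $\Phi(y+1)-\Phi(y)\asymp y^{-\alpha}$ below $c\sqrt{\tau}$, so that $\E[\Phi(y_t)-\Phi(y_{t-1})\mid\mathcal F_{t-1}]\ge c(y+1)^{-(1+\alpha)}-\text{(errors)}$. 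The curvature of $\Phi$ is $O(y^{-1-\alpha})$ and is dominated by the drift term. Telescoping to $t=T-1$, where $y\le1$, bounds $\E\sum_t(y_{t-1}+1)^{-(1+\alpha)}$ by the accumulated errors. These errors are of size $(y/\tau)^{\cdots}$ and $\tau^{-\cdots}$ and are handled with Step 1, giving $\log^2T$ for $\alpha=1$ and $T^{(1-\alpha)/2}$ otherwise.

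\emph{Step 3 (positive gap), expected main obstacle.} Here $F$ is flat on $[a^-,a^+]$, so the drift is no longer comparable to the quantile premium. It equals $F(\lambda)-1/2\asymp(\lambda-a^+)^{1/\alpha}$ only once $\lambda$ exits the gap. In addition, $\lambda_\tau(y)$ is lowered by about $\Delta\,\Pp(\text{marginal below gap})\asymp\Delta y/\sqrt{\tau}$.

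I would first prove that reduction bound via Lemma~\ref{lem:stars-fair}(a), using the fair-walk hitting probability at $u\in(a^-,a^+)$. Comparing it with the premium gives protected levels $y\le\ell_\tau:=c\,\tau^{1/(2(1+\alpha))}$, within which the drift is of order $\ge c\,y^{-1}$ up to the $\beta$-dependent conversion. Below $\ell_\tau$ I would run the Step~2 Lyapunov argument. Above $\ell_\tau$ I would bound each visit's charge crudely by $\ell_\tau^{-(1+\alpha)}$. For $\alpha=1$ this gives $\sum_t\tau_t^{-1/2}\cdot$const, i.e.\ $T^{1/4}$ up to $\log T$. For $\alpha<1$ the same balancing yields $T^{1/[2(1+\alpha)]}$, matching the scale in Theorem~\ref{lb:thm-positive-gap}.

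The delicate points, which I expect to be the main obstacle, are three. First, the drift lower bound near the gap edge must be made uniform in $\tau$. Second, the Lyapunov increments must be chosen so that the weak drift just above the protected level does not blow up $\Phi$. Third, the charge must be shown to nearly vanish while both thresholds sit inside the gap. For that last point I would use $J_F(W)\le\E R_F(W)$, noting that $R_F$ vanishes on $[a^-,a^+]$, and use it to sharpen the compensation estimate in this regime.
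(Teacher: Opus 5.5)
\textbf{Step 3 does not deliver the claimed rate.} You bound every visit above $\ell_\tau=c\tau^{1/(2(1+\alpha))}$ by $\ell_\tau^{-(1+\alpha)}=c'\tau^{-1/2}$. Summing over epochs gives $\sum_{\tau\le T}\tau^{-1/2}\asymp\sqrt T$, not $T^{1/4}$. For $\alpha<1$ it is again $\sqrt T$, which exceeds $T^{1/[2(1+\alpha)]}$ by a factor $T^{\alpha/[2(1+\alpha)]}$. The reference inventory lives on the $\sqrt\tau$ scale and spends most of each block above $\ell_\tau$, so no per-visit bound of this kind can work.

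What is missing is a level-by-level occupation bound on the intermediate band $H<y\le J$, with $J=\lfloor h\sqrt{\bar\tau}\rfloor$, inside a dyadic block $\bar\tau\le\tau_t-1<2\bar\tau$. On this band the fair-walk survival bound $S_{\tau,y}(0)\le Cy/\sqrt\tau\le 1/4$ still keeps $\lambda_\tau^B(y)>a^-$, hence $\gamma_\tau^B(y)\ge 1/2$. There is no repulsion there, but no restoring drift either. Consequently each level above $H$ effectively receives only $O(\sqrt{\bar\tau})$ expected visits per block, not $O(\bar\tau)$, and levels $j\le H$ receive $O(\sqrt{\bar\tau}(j/H)^\kappa)$.

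The paper encodes this in Lemma~\ref{lem:stars-sc-accounting}. It uses weights $\pi_j\asymp(j+1)^\kappa$ for $j\le H$ and $\pi_j\asymp H^\kappa$ on $(H,J]$, a potential frozen above $J$, and the centering $\bar f$. This gives $\bar\tau\bar f+\Phi(J)\le CJH^{-\alpha}\asymp\bar\tau^{1/[2(1+\alpha)]}$ per block, or $C\bar\tau^{1/4}\log\bar\tau$ at $\alpha=1$. The geometric sum over blocks then yields part (b). You list the `weak drift just above the protected level' as an obstacle; this two-band construction is exactly what resolves it, and your proposal replaces it with a bound that is too weak.

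\textbf{Step 2, and the protected band in Step 3, hide the decisive constant.} Take increments $G_y=\Phi(y+1)-\Phi(y)\asymp y^{-\alpha}$. The expected change of $\Phi$ is $\gamma(y+1)G_y-(1-\gamma(y))G_{y-1}$. Its curvature part $\tfrac12(G_y-G_{y-1})\approx-\tfrac{\alpha}{2}y^{-1-\alpha}$ is of the \emph{same} order as its drift part. So a bound $\mu\ge c/y$ with unspecified $c$ does not give $\E[\Phi(y_t)-\Phi(y_{t-1})\mid\mathcal F_{t-1}]\ge c'(y+1)^{-1-\alpha}$. If $\gamma(y)-\tfrac12\sim\kappa/(4y)$, the net increment is $\approx\tfrac12(\kappa-\alpha)y^{-1-\alpha}$.

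You therefore need two facts:
\begin{enumerate}
\item the sharp expansion $\gamma_\infty(y)=\tfrac12+\kappa_\alpha/(4y)+o(y^{-1})$ with $\kappa_\alpha=\Gamma(1+\alpha)^{1/\alpha}$;
\item the strict inequality $\kappa_\alpha>\alpha$, which the paper proves by strict Jensen.
\end{enumerate}
This is not an artifact of the method. A walk with repulsion $\kappa/(2y)$ and $\kappa<\alpha$ would accumulate weighted occupation of order $\tau^{(1-\kappa)/2}$ per block, which is worse than the target rate.

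At $\alpha=1$ one has $\kappa_1=1$, so the two terms cancel at leading order and increments $1/y$ fail. One must use $G_y\asymp\log(y+2)/(y+1)$, coming from $\pi_j\asymp j+1$. The $\log^2T$ arises from these increments multiplied by the $y/\tau$ threshold error; your construction, as described, does not produce it.

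Two smaller points.
\begin{itemize}
\item In part (a) with $\alpha<1$, the bound $\E y_{t-1}^2\lesssim\tau_t\log T$ costs an extra factor $(\log T)^{(1+\alpha)/2}$. Without a gap you need $\E y_{t-1}^2\le C\tau_t$; the logarithm $L_\tau$ enters only through the term $\Delta e^{-y^2/(2\tau)}$. Also, the restoring coefficient in the second-moment recursion must exceed one for the envelope to close.
\item For the ceiling, the sale probability does not dominate the purchase probability below the $\sqrt\tau$ scale, since the drift there is about $\kappa_\alpha/(2y)>0$. The bound $\E[x_t\mid\mathcal F_{t-1}]\le C/(y_{t-1}+1)$ is nevertheless enough for that step.
\end{itemize}
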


\subsection{Step 1: mean thresholds and their finite-horizon correction}

We use the random-walk representation in Section~\ref{app:stars-scalar}
to quantify the inventory premium and its finite-horizon correction.

Let $D_{\tau,y}$ denote the uncapped uniform-price hindsight marginal with
$\tau$ future observations and inventory index $y$. Write
$\lambda_\tau(y)=\E F^{-1}(D_{\tau,y})$ for the uncapped dollar threshold
and $\lambda_\tau^B(y)=\E\Delta V_\tau^B(y)$ for its capped counterpart.
Its purchase-probability coordinate is
$\gamma_\tau^B(y)=F(\lambda_\tau^B(y))$.
An index is real only when $1\le y\le(\tau\wedge B)$. As in the algorithm,
an unavailable action is omitted; later a missing threshold is assigned
probability zero only to write the drift formula compactly.
Here $\tau$ is the horizon of the continuation problem. At decision epoch
$t$, there are $\tau_t=T-t+1$ decisions left, so the thresholds used for
the current action have continuation horizon $\tau_t-1$.

\paragraph{The premium with a long future.}
The hitting representation in Lemma~\ref{lem:stars-hitting} gives the limit
$D_{\tau,y}\Rightarrow1/2+Z_y$, where $\Pp(Z_y>z)=R(z)^y$ for
$0<z<1/2$ and $R(z)=(1/2-z)/(1/2+z)$.
Set $A_y=\E g_+(Z_y)$, $\lambda_\infty(y)=a^++A_y$, and
$\gamma_\infty(y)=F(\lambda_\infty(y))$. We use these limiting
quantities to isolate the effect of a finite deadline on the
thresholds. Stochastic monotonicity of the hitting probabilities and the
capacity comparison imply
\begin{align*}
 \lambda_\tau^B(y)\le \lambda_\infty(y),\qquad
 \tfrac12<\gamma_\infty(y)<1,\qquad
 \gamma_\infty(y+1)\le\gamma_\infty(y).
\end{align*}
To justify convergence of the means despite possible jumps of $F^{-1}$,
couple the marginal laws by inverse distribution functions. Their limit
has a continuous law and $F^{-1}$ has only countably many jumps; bounded
convergence applies. The strict bounds on $\gamma_\infty(y)$ follow because the limiting price marginal has positive mass
arbitrarily close to $a^+$ and strictly above it, and is not concentrated
at the largest support price.

The size of the premium follows by one tail integral. For every fixed
$h>0$,
\begin{align*}
 \E Z_y^h
 &\stackrel{(a)}{=}\int_0^{1/2}h z^{h-1}R(z)^y\,dz
 \stackrel{(b)}{=}\frac{\Gamma(1+h)}{(4y)^h}
                   \bigl(1+O_h(y^{-2})\bigr).
\end{align*}
Here $(a)$ is the tail-integral identity for a nonnegative variable,
and $(b)$ follows by substituting $w=4yz$ and controlling the
exponential approximation as follows. On $0\le w\le y$, the logarithmic series gives
$y\log R(w/(4y))=-w-E_y(w)$ with
$0\le E_y(w)\le Cw^3/y^2$. The integrand's difference from $e^{-w}$
is therefore at most $Cy^{-2}w^3e^{-w}$. Beyond $y$, both the
remaining integral and the comparison tail are exponentially small,
since $R(w/(4y))^y\le e^{-w}$. Integration proves the stated error.

Applying \eqref{eq:stars-sc-regularity} to this integral and then inverting
the quantile expansion gives
\begin{align}
 A_y&\sim c_+\Gamma(1+\alpha)(4y)^{-\alpha},&
 \gamma_\infty(y)&=\frac12+\frac{\kappa_\alpha}{4y}+o(y^{-1}),
 &&0<\alpha<1,\notag\\
 A_y&=\frac{c_+}{4y}+O(y^{-1-\nu}),&
 \gamma_\infty(y)&=\frac12+\frac1{4y}+O(y^{-1-\nu}),
 &&\alpha=1,
 \label{eq:stars-sc-stationary-expansion}
\end{align}
where $\nu=(\eta\wedge2)>0$ in the case $\alpha=1$, and
$\kappa_\alpha=\Gamma(1+\alpha)^{1/\alpha}$.
For the first application, bracket $g_+(q)/(c_+q^\alpha)$ by
$1\pm\eta_0$ near zero; the complement has exponentially small
probability. Then let $\eta_0\downarrow0$. For the second, the
quantitative remainder contributes $O(y^{-1-\eta})$.
Inversion uses $F(F^{-1}(u))=u$ and the same two monotone quantile bounds;
it remains valid across other support gaps. In particular,
\begin{align}
 c(y+d)^{-\alpha}\le A_y\le Cy^{-\alpha},\qquad
 \gamma_\infty(y)\le\tfrac12+C/y
 \label{eq:stars-sc-stationary-bounds}
\end{align}
for fixed $c,C,d>0$.

For $0<\alpha<1$ the important coefficient satisfies
$\alpha<\kappa_\alpha<1$. To check both strict inequalities, let
$X$ have Gamma density $x^{\alpha-1}e^{-x}/\Gamma(\alpha)$ and let
$E$ be mean-one exponential. Strict Jensen gives
$1/\Gamma(\alpha)=\E X^{1-\alpha}<\alpha^{1-\alpha}$ and
$\Gamma(1+\alpha)=\E E^\alpha<1$. Taking $\alpha$th roots proves
the claim. For uniform prices, the coefficient is exactly the critical
one: $\gamma_\infty(y)=1/2+1/(4y)+O(y^{-3})$.

\paragraph{The deadline correction.}
Let $\sigma_y$ be the first passage to $y$ of the walk in
Lemma~\ref{lem:stars-hitting}, and put
$S_{\tau,y}(z)=\Pp_{1/2+z}(\sigma_y>\tau)$.
Splitting the quantile at its median interval, and using the first-hit
likelihood ratio from that lemma, gives the exact identity
\begin{align}
 \lambda_\infty(y)-\lambda_\tau(y)
 =\Delta S_{\tau,y}(0)
 +\int_{(0,1/2]}R(z)^yS_{\tau,y}(z)\,dg_+(z)
 +\int_{(0,1/2]}S_{\tau,y}(z)\,dg_-(z).
 \label{eq:stars-sc-bias-exact}
\end{align}
Indeed the finite mean is
$a^-+\Delta[1-S_{\tau,y}(0)]+
\E g_+((D_{\tau,y}-1/2)^+)-\E g_-((1/2-D_{\tau,y})^+)$.
The missing upper-tail probability is $R(z)^yS_{\tau,y}(z)$,
while the lower-tail probability is $S_{\tau,y}(z)$; integrate these
tails. Quantile jumps cause no ambiguity because a uniform marginal,
being one of finitely many continuous input prices, has no fixed atom.

The main estimate is
\begin{align}
 S_{\tau,y}(z)\le C\frac{y}{\sqrt \tau}e^{-\tau z^2/4},
 \qquad 1\le y\le\sqrt \tau/4,\quad 0\le z<1/2.
 \label{eq:stars-sc-survival-tilt}
\end{align}
We prove this by changing measure from the fair walk. Under that walk,
the likelihood ratio for upward probability $1/2+z$ depends only on
the terminal sum $S_\tau$: it is
$L_z(S_\tau)=(1-4z^2)^{\tau/2}((1+2z)/(1-2z))^{S_\tau/2}$.
On survival, $S_\tau<y$, and this likelihood is increasing in $S_\tau$.
Consequently $S_{\tau,y}(z)\le L_z(y)S_{\tau,y}(0)$.
Writing $\ell(z)=\log L_z(y)$, we have $\ell(0)=0$,
$\ell'(0)=2y$, and
$\ell''(z)=-2(\tau+y)/(1+2z)^2-2(\tau-y)/(1-2z)^2\le-\tau$,
since $y\le \tau/2$. Consequently,
\begin{align*}
 \ell(z)
 &\stackrel{(a)}{\le}2yz-\frac{\tau z^2}{2}
 \stackrel{(b)}{\le}\frac{4y^2}{\tau}-\frac{\tau z^2}{4}
 \stackrel{(c)}{\le}\frac14-\frac{\tau z^2}{4},
\end{align*}
where $(a)$ integrates the bound on $\ell''$ twice using the
initial values, $(b)$ uses $2yz\le4y^2/\tau+\tau z^2/4$, and $(c)$ uses
$y\le\sqrt \tau/4$. Lemma~\ref{lem:stars-fair} bounds the fair survival
probability by $Cy/\sqrt \tau$, proving \eqref{eq:stars-sc-survival-tilt}.
For $\tau<16$, the indicated range contains no admissible index.

Put $r=1+\alpha$. Since $R^y\le1$, the two integrals in
\eqref{eq:stars-sc-bias-exact} are at most
$\E[g_+(Z^-)+g_-(Z^-)]\le C\E(Z^-)^\alpha$, where
$Z^-=(1/2-D_{\tau,y})^+$. Integrating \eqref{eq:stars-sc-survival-tilt}
against $\alpha z^{\alpha-1}dz$ and substituting $w=\sqrt \tau z$
therefore proves
\begin{align}
 0\le \lambda_\infty(y)-\lambda_\tau^B(y)
 \le C\left(\frac{\Delta y}{\sqrt \tau}+\frac{y}{\tau^{r/2}}\right)
       +\epsT,\qquad y\le\bigl((\sqrt \tau/4)\wedge(B/2+1)\bigr).
 \label{eq:stars-sc-bias}
\end{align}
The substitution has a finite integral because $\alpha>0$ at zero
and the tail is Gaussian. Capacity contributes at most $\epsT$ by
Lemma~\ref{lem:stars-capacity}.
When $\alpha=1$, the same bound holds for every real
$y\le B/2+1$: for $y>\sqrt \tau/4$, Lemma~\ref{lem:stars-uniformmoments}
gives $\E Z^-\le C(y^{-1}+y/\tau+\tau^{-1/2})\le C'y/\tau$; also
$S_{\tau,y}(0)\le Cy/\sqrt \tau$. In particular,
\begin{align}
 0\le \lambda_\infty(y)-\lambda_\tau^B(y)
 \le C\left(\frac{\Delta y}{\sqrt \tau}+\frac{y}{\tau}\right)+\epsT
 \quad(\alpha=1).
 \label{eq:stars-sc-bias-linear}
\end{align}

\subsection{Step 2: reduce compensation to small and large inventory costs}

There are no prices inside a gap, so threshold distance alone can be
misleading. For $H_F(z)=\E(z-p)^+$, subtract its affine part across
the median interval and set
$R_F(z)=H_F(z)-H_F(a^-)-\tfrac12(z-a^-)$.
This nonnegative function is zero on $[a^-,a^+]$. At the upper
quantile, its area is
$\int_{a^+}^{F^{-1}(1/2+q)}(F(u)-1/2)du\le qg_+(q)$;
integrating $1/2-F(u)$ proves the corresponding lower-side bound.
Hence $R_F(F^{-1}(1/2\pm q))\le Cq^r$.
For an existing hindsight marginal $W=\Delta V_\tau^B(y)$, let
$D_{\tau,y}^B$ be its uniform-price counterpart. Then
\begin{align*}
 J_F(W)
 &\stackrel{(a)}{=}\E R_F(W)-R_F(\E W)
 \stackrel{(b)}{\le}\E R_F(W)
 \stackrel{(c)}{\le}C\E|D_{\tau,y}^B-1/2|^r,
\end{align*}
where $(a)$ uses cancellation of affine terms in the Jensen-gap
identity from Section~\ref{app:stars-compensation}, $(b)$ uses
$R_F\ge0$, and $(c)$ uses quantile equivariance and the area bound
$R_F(F^{-1}(1/2\pm q))\le Cq^r$.

Apply quantile equivariance and Lemma~\ref{lem:stars-uniformmoments} to
the existing marginals at $y$ and $y+1$. With $\tau$ continuation prices,
the current decision and its continuation together have $\tau+1$
time steps. This gives
\begin{align}
 \partial\bar R_\tau^B(y)\le C\left\{(y+1)^{-r}
            +\left(\frac{y+1}{\tau+1}\right)^r
            +(\tau+1)^{-r/2}+\epsT\right\}
            +2\one\{y>B/2\}.
 \label{eq:stars-sc-compensation}
\end{align}
At $y=0$ only index one exists; at $y\ge1$ use
$y^{-r}\le2^r(y+1)^{-r}$. Unavailable actions contribute no channel.
Each channel has cost at most one, giving the displayed wall bound.
A compulsory action has compensation zero, including the final epoch.
At decision epoch $t$, we use this bound with $\tau=\tau_t-1$.

To sum \eqref{eq:stars-sc-compensation}, we next control large
inventories and then the weighted visits near zero.

\subsection{Step 3: control the physical and terminal boundaries}

\paragraph{The physical ceiling.}
From \eqref{eq:stars-sc-stationary-bounds}, the actual drift obeys
$\E[x_t\mid\mathcal F_{t-1}]\le C/(y_{t-1}+1)$.
At positive inventory this follows from
$\gamma_{\tau_t-1}^B(y+1)+\gamma_{\tau_t-1}^B(y)-1$ with
$y=y_{t-1}$; absent purchases and forced
sales can only decrease the drift. At zero enlarge $C$.
For any integer $m\ge1$, the binomial expansion and $|x_t|\le1$
give $\E[y_t^{2m}-y_{t-1}^{2m}\mid\mathcal F_{t-1}]
\le C_m(1+y_{t-1}^{2m-2})$.
The linear term uses the drift bound; all higher powers have degree
at most $2m-2$. The same inequality holds after stopping at the first
visit above $B/2$, since every later increment is zero. Induction on
$m$, summing over decision epochs, proves
$\E y_{T\wedge\sigma}^{2m}\le C_m(T+1)^m$ for that stopping time
$\sigma$. On $\{\sigma\le T\}$ the stopped inventory exceeds $B/2$.
Thus
\begin{align}
 \Pp(\max_{t\le T}y_t>B/2)
 &\stackrel{(a)}{\le}
   \frac{\E y_{T\wedge\sigma}^{2m}}{(B/2)^{2m}}
 \stackrel{(b)}{\le}C_{m,\rho}T^{-m},
 \label{eq:stars-sc-physical-wall}
\end{align}
where $(a)$ is Markov's inequality on the crossing event and $(b)$
uses the stopped-moment bound and $B\ge\rho T$.
Thus reaching the upper half of the store is sufficiently unlikely
along the reference path to contribute only a bounded regret term.

\paragraph{The shrinking terminal boundary.}
The terminal constraint requires a stronger estimate: inventory must
decrease as the remaining horizon shrinks. Set
$L_\tau=4+\log(1+\Delta\tau^{\alpha/2})$. The conclusion will be
\begin{align}
 \E y_{t-1}^2\le C\tau_t L_{\tau_t},\qquad \tau_t=T-t+1.
 \label{eq:stars-sc-terminal-moment}
\end{align}
It is $C\tau_t$ when there is no gap, and at most
$C\tau_t(1+\log\tau_t)$ otherwise, including $\tau_t=1$.

Fix a real index $y\le \tau$ and put $v=y/\tau$.
Use $\theta_0=5/12$ and $\theta_1=1/3$.
At uniform test level $1/2-\theta_0v$, the hypothetical walk has
mean increment $2\theta_0v$. Hitting $y$ by time $\tau$ requires its
centered maximum to exceed $(1-2\theta_0)y=y/6$.
Lemma~\ref{lem:stars-fair} therefore gives
$\Pp(D_{\tau,y}>1/2-\theta_0v)\le e^{-y^2/(72\tau)}$ and, at the
median test level, $\Pp(D_{\tau,y}>1/2)\le e^{-y^2/(2\tau)}$.
Retaining the median geometry when bounding the mean yields
\begin{align}
 \lambda_\tau(y)-a^-
 \le A_y+\Delta e^{-y^2/(2\tau)}
        -g_-(\theta_0v)(1-e^{-y^2/(72\tau)}).
 \label{eq:stars-sc-terminal-mean}
\end{align}
The median gap enters this estimate through the probability that the
marginal remains above the upper gap edge; the other terms are the
same as in the no-gap case.

For all $0<v\le1$, the assumptions imply
\begin{align}
 g_-(\theta_0v)-g_-(\theta_1v)\ge c v^\alpha,
 \qquad g_-(\theta_0v)\le Cv^\alpha.
 \label{eq:stars-sc-quantile-margin}
\end{align}
For small $v$, choose the asymptotic error so that
$(1-\eta_0)\theta_0^\alpha>(1+\eta_0)\theta_1^\alpha$.
For $v\ge v_0>0$, Lipschitz continuity of $F$ gives the lower
bound $(\theta_0-\theta_1)v/M\ge
(\theta_0-\theta_1)v_0^{1-\alpha}v^\alpha/M$.
The upper bound follows from the global edge estimate
$g_+(q)+g_-(q)\le Cq^\alpha$.

Divide the three positive errors in \eqref{eq:stars-sc-terminal-mean}
by $v^\alpha$, and write $w=y/\sqrt \tau$. Their bounds are, respectively,
$Cw^{-2\alpha}$, $\Delta \tau^{\alpha/2}w^{-\alpha}e^{-w^2/2}$, and
$Ce^{-w^2/72}$.
All are smaller than the margin in \eqref{eq:stars-sc-quantile-margin}
when $y\ge K\sqrt{\tau L_\tau}$ for a fixed sufficiently large $K$:
the middle expression uses
$\Delta \tau^{\alpha/2}\le e^{L_\tau}$, and the others decrease with
$w$. Hence $\lambda_\tau(y)\le F^{-1}(1/2-y/(3\tau))$ in this range.
For the complementary range use
$\gamma_\tau^B(y)\le\gamma_\infty(y)\le1/2+C/y$.
Combining the two bounds, and increasing the constant if needed, gives
$\gamma_\tau^B(y)\le1/2+CL_{\tau+1}/y-y/[3(\tau+1)]$.
Capacity only lowers a real marginal. Missing thresholds set to zero
also satisfy this bound: each relevant artificial index has
$y\le \tau+2\le3(\tau+1)/2$ for $\tau\ge1$, so the right side is
nonnegative.

Condition on the predecision history and write $y=y_{t-1}>0$.
Use the preceding bound with continuation horizon $\tau_t-1$.
Independence of the new price gives
\begin{align*}
 \E[y_t^2\mid\mathcal F_{t-1}]
 &\stackrel{(a)}{=}
 y^2+2y\{\gamma_{\tau_t-1}^B(y+1)+\gamma_{\tau_t-1}^B(y)-1\}
       +\E[x_t^2\mid\mathcal F_{t-1}]\\
 &\stackrel{(b)}{\le}
 \left(1-\frac4{3\tau_t}\right)y^2+CL_{\tau_t},
\end{align*}
where $(a)$ expands $(y+x_t)^2$ and uses the threshold formula for
the conditional drift, and $(b)$ substitutes the preceding bound
on each threshold and uses $x_t^2\le1$. At zero the next square
is at most one, so the same upper bound holds after enlarging $C$.
The coefficient is nonnegative for $\tau_t\ge2$.
Moreover, $uL'_u\le\alpha/2\le1/2$, so for $\tau\ge2$,
\begin{align*}
 (\tau-1)L_{\tau-1}-(\tau-4/3)L_\tau
 &\stackrel{(a)}{\ge}\tfrac13L_\tau-\tfrac12
 \stackrel{(b)}{\ge}\tfrac16L_\tau,
\end{align*}
where $(a)$ integrates $L'_u\le1/(2u)$ and uses
$(\tau-1)\log(\tau/(\tau-1))\le1$, and $(b)$ uses $L_\tau\ge4$.
A sufficiently large constant $A$ therefore makes the envelope
$A\tau L_\tau$ satisfy the preceding conditional second-moment recursion.
Starting from $y_0=0$, induction and the tower property prove
\eqref{eq:stars-sc-terminal-moment} down to the penultimate inventory;
the final compulsory action gives $y_T=0$.

\subsection{Step 4: bound compensation-weighted visits near zero}

For $f(y)=(y+1)^{-r}$, we implement the potential argument outlined
in Section~\ref{sec:proof-ideas}. The following finite-horizon form
of the comparison theorem of \citet[Theorem~2.1]{glynn1996liapounov}
retains the terminal potential and applies to the time-varying
inventory process.

\begin{lemma}
\label{lem:stars-drift-comparison}
Let $(y_t)_{t=0}^{\ell}$ be an adapted inventory process taking
values in $\{0,\ldots,B\}$, and let
$\Psi:\{0,\ldots,B\}\to[0,\infty)$ be finite-valued.
Suppose $c_t,e_t$ are nonnegative integrable quantities determined by
$\mathcal F_{t-1}$ and, for $1\le t\le\ell$,
$\E[\Psi(y_t)-\Psi(y_{t-1})\mid\mathcal F_{t-1}]
\le-c_t+e_t$ almost surely. Then
\begin{align*}
 \E\sum_{t=1}^{\ell}c_t
 \le \E\Psi(y_0)-\E\Psi(y_\ell)+\E\sum_{t=1}^{\ell}e_t
 \le \E\Psi(y_0)+\E\sum_{t=1}^{\ell}e_t.
\end{align*}
\end{lemma}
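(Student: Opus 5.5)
The plan is a standard telescoping argument with the tower property. I would first define the one-step increments $\delta_t:=\Psi(y_t)-\Psi(y_{t-1})$ for $1\le t\le\ell$. Since $\Psi$ is finite-valued on the finite set $\{0,\ldots,B\}$, it is bounded. Hence every $\Psi(y_t)$ is integrable and every $\delta_t$ is bounded. This is the only integrability point needed, and it is what makes the conditional expectations in the hypothesis well defined. The sum $\sum_{t=1}^{\ell}\delta_t$ telescopes pathwise to $\Psi(y_\ell)-\Psi(y_0)$.

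Next I would take expectations of the hypothesis. Since $c_t$ and $e_t$ are $\mathcal F_{t-1}$-measurable and integrable, the tower property gives
\begin{align*}
 \E\delta_t=\E\bigl[\E[\delta_t\mid\mathcal F_{t-1}]\bigr]\le-\E c_t+\E e_t .
\end{align*}
Summing over $t=1,\ldots,\ell$ and using the telescoped identity,
\begin{align*}
 \E\Psi(y_\ell)-\E\Psi(y_0)\le-\E\sum_{t=1}^{\ell}c_t+\E\sum_{t=1}^{\ell}e_t .
\end{align*}
Rearranging, which is legitimate because all terms are finite, gives the first inequality. The second inequality follows from $\Psi\ge0$, which lets us drop $-\E\Psi(y_\ell)$.

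The only step needing care is keeping every expectation finite so the rearrangement is valid. Boundedness of $\Psi$ on the finite state space, together with the assumed integrability of $c_t$ and $e_t$, handles this. No optional stopping or limiting argument is required, because the horizon $\ell$ is deterministic and finite. So there is no genuine obstacle here; the real work lies later, in constructing a $\Psi$ with $c_t=(y_{t-1}+1)^{-r}$ and controlling the error terms $e_t$.
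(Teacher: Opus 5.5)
Your proposal is correct and matches the paper's proof: apply the tower property to the conditional drift inequality, sum over $t$ so that the potential telescopes, and drop $-\E\Psi(y_\ell)$ using $\Psi\ge0$. Your remark that $\Psi$ is bounded on the finite state space, so every expectation is finite and the rearrangement is legitimate, makes explicit a point the paper leaves implicit.
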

\Needspace{6\baselineskip}
\begin{proof}
The tower property gives
$\E c_t\le\E\Psi(y_{t-1})-\E\Psi(y_t)+\E e_t$.
Summing over $t$ telescopes the potential and proves the first
inequality. Its nonnegativity gives the second.
The argument also holds conditional on the history at the start
of a fixed block of decision epochs.
\end{proof}

Our construction uses an increasing $\Phi$, whose positive expected
change accounts for the visit cost. On a bounded inventory range,
the function $\Psi(y)=\max_{0\le z\le B}\Phi(z)-\Phi(y)$
is nonnegative and has the opposite drift. Thus the preceding lemma
applies to $c_t=f(y_{t-1})$ whenever
$\E[\Phi(y_t)-\Phi(y_{t-1})\mid\mathcal F_{t-1}]
\ge f(y_{t-1})-e_t$.
The next lemma constructs such a potential explicitly from lower
bounds on the buying probabilities.

\Needspace{20\baselineskip}
\begin{lemma}
\label{lem:stars-sc-accounting}
Consider a contiguous block of at most $\bar\tau$ decision epochs. Suppose every
threshold on edges $1,\ldots,J$ is real throughout the block and
$\gamma_{\tau_t-1}^B(y)\ge b_y\ge1/2$ at each epoch $t$ in the
block, where $b_y<1$ are fixed numbers.
Fix a nonnegative decreasing $f$, set $\pi_0=1$ and
$\pi_j=\prod_{i=1}^j b_i/(1-b_i)$, and define its weighted mean by
$\bar f=\sum_{j=0}^J\pi_j f(j)/\sum_{j=0}^J\pi_j$.
Set $G_J=0$ and, for $0\le j<J$, set
\begin{align}
 G_j=\frac{\sum_{i=0}^j\pi_i(f(i)-\bar f)}{\pi_j b_{j+1}},
 \qquad
 \Phi(y)=\sum_{j=0}^{(y\wedge J)-1}G_j.
 \label{eq:stars-sc-accounting-function}
\end{align}
Conditionally on any history at the start of the block,
\begin{align*}
 \E\sum_{t\text{ in block}}f(y_{t-1})\le \bar\tau\bar f+\Phi(J).
\end{align*}
\end{lemma}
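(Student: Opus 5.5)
The proof I would give builds a potential $\Phi$ from the product weights $\pi_j$, which play the role of a reversible birth--death measure. The potential is chosen so that its expected one-step increase is at least $f(y_{t-1})-\bar f$. After that, Lemma~\ref{lem:stars-drift-comparison} applies, conditionally on the history at the start of the block.

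\emph{Step 1: sign and size of $\Phi$.} Write $N_j=\sum_{i=0}^j\pi_i(f(i)-\bar f)$. Because $f$ is decreasing, the $\pi$-weighted running average of $f(0),\ldots,f(j)$ is at least the full weighted mean $\bar f$. Hence $N_j\ge0$ for every $j$, and $N_J=0$. Since $\pi_j,b_{j+1}>0$, each $G_j\ge0$. So $\Phi$ is nondecreasing, $\Phi(0)=0$, and $\Phi(y)=\Phi(J)$ for all $y\ge J$. It follows that $\Psi(y):=\Phi(J)-\Phi(y)$ takes values in $[0,\Phi(J)]$.

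\emph{Step 2: the drift inequality.} I will show that at each epoch of the block
\begin{align*}
 \E[\Phi(y_t)-\Phi(y_{t-1})\mid\mathcal F_{t-1}]\ge f(y_{t-1})-\bar f .
\end{align*}
The argument splits by the current inventory $y=y_{t-1}$. Throughout I use that the edges $1,\ldots,J$ are real, so the threshold rule applies there with the buy probability $\gamma_{\tau_t-1}^B(y+1)$ and the sell probability $1-\gamma_{\tau_t-1}^B(y)$.
\begin{enumerate}[label=$(\roman*)$]
\item For $1\le y<J$: the buy probability is at least $b_{y+1}$, the sell probability is at most $1-b_y$, and $G\ge0$. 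Therefore the drift is at least $b_{y+1}G_y-(1-b_y)G_{y-1}$. Since $\pi_{y-1}b_y/(1-b_y)=\pi_y$, this equals $N_y/\pi_y-N_{y-1}/\pi_y=f(y)-\bar f$.
\item For $y=0$: no sale is possible, so the drift is at least $b_1G_0=N_0=f(0)-\bar f$.
\item For $y=J$: upward moves leave $\Phi$ unchanged, so the drift is at least $-(1-b_J)G_{J-1}=-N_{J-1}/\pi_J=f(J)-\bar f$, using $N_J=0$.
\item For $y>J$: $\Phi$ does not change, and $f(y)\le f(J)\le\bar f$, so the inequality holds trivially.
\end{enumerate}
The place needing care is checking that boundary effects inside the block cannot raise the sell probability at an edge $y\le J$ above $1-b_y$. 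I expect this to be the main obstacle. The hypothesis that these thresholds are real at every epoch is exactly what rules out compulsory sales there. Suppressed purchases cannot occur either, since a real edge $y+1$ means buying is available. So the threshold formulas above are exact.

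\emph{Step 3: conclusion.} Step 2 gives $\E[\Psi(y_t)-\Psi(y_{t-1})\mid\mathcal F_{t-1}]\le-f(y_{t-1})+\bar f$. Apply Lemma~\ref{lem:stars-drift-comparison} over the block, conditionally on the history at its start, with $c_t=f(y_{t-1})$ and $e_t=\bar f$. This yields
\begin{align*}
 \E\sum_{t\text{ in block}}f(y_{t-1})\le\E\Psi(y_{\text{start}})+\bar\tau\bar f\le\Phi(J)+\bar\tau\bar f,
\end{align*}
using Step 1 for $\Psi\le\Phi(J)$ and at most $\bar\tau$ epochs in the block. This is the claimed bound.
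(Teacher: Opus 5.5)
Your proposal is correct and follows essentially the same argument as the paper. Both prove $G_j\ge0$ from the monotonicity of $f$ and derive the identity $b_{y+1}G_y-(1-b_y)G_{y-1}=f(y)-\bar f$ from $\pi_jb_{j+1}=\pi_{j+1}(1-b_{j+1})$, using $N_J=0$ at $y=J$. Both then make the same case split at $y=0$, $y=J$, and $y>J$, and telescope $\Psi=\Phi(J)-\Phi$ through the drift-comparison lemma, which the paper also invokes right after its proof.
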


\begin{proof}
Every $G_j$ is nonnegative: the weighted average of a decreasing
$f$ over an initial segment is at least its average $\bar f$
over the whole segment. The identity
$\pi_jb_{j+1}=\pi_{j+1}(1-b_{j+1})$ makes
\eqref{eq:stars-sc-accounting-function} telescope to
\begin{align}
 b_{y+1}G_y-(1-b_y)G_{y-1}=f(y)-\bar f,\qquad 0\le y\le J.
 \label{eq:stars-sc-accounting-equation}
\end{align}
Omit the downward term at zero and the upward term at $J$; the
latter equation uses $\sum_{i=0}^J\pi_i(f(i)-\bar f)=0$.
Thus the weighted partial sums in the construction give precisely
the expected-increment identity needed to account for $f(y)$.

At a reference state $y=y_{t-1}<J$, we have
\begin{align*}
 \E[\Phi(y_t)-\Phi(y_{t-1})\mid\mathcal F_{t-1}]
 &\stackrel{(a)}{=}\gamma_{\tau_t-1}^B(y+1)G_y-
                           (1-\gamma_{\tau_t-1}^B(y))G_{y-1}\\
 &\stackrel{(b)}{\ge}b_{y+1}G_y-(1-b_y)G_{y-1}
 \stackrel{(c)}{=}f(y)-\bar f,
\end{align*}
where $(a)$ conditions on the independent current price, $(b)$ uses
the assumed lower bounds on the two thresholds and $G_j\ge0$, and $(c)$ is
\eqref{eq:stars-sc-accounting-equation}. At zero the downward
terms in this chain are omitted.
At $y=J$ the purchase term is zero, so no bound on edge $J+1$
is needed. Above $J$, $\Phi$ is flat and $f(y)\le f(J)\le\bar f$,
so the same inequality holds, even at a constrained boundary.
Sum it and use the tower property. The potential telescopes on
the reference trajectory; $0\le\Phi\le\Phi(J)$ proves the claimed accounting bound.
\end{proof}

The centering by $\bar f$ enforces both boundary conditions: multiplying
\eqref{eq:stars-sc-accounting-equation} by $\pi_y$ and summing over
$0\le y\le J$ makes the left side zero, so the right side must also
sum to zero. The resulting potential $\Psi(y)=\Phi(J)-\Phi(y)$
satisfies the comparison lemma with $c_t=f(y_{t-1})$ and
$e_t=\bar f$.

\paragraph{Which inventories receive an upward drift?}
Group decision epochs according to their remaining continuation horizon:
in a block with time scale $\bar\tau$, we have
$\bar\tau\le \tau_t-1<2\bar\tau$. There are at most $\bar\tau$
epochs in this block. In the threshold estimates below, write
$\tau=\tau_t-1$, so $\bar\tau\le\tau<2\bar\tau$.
For all sufficiently large $\bar\tau$ and $T$, its indices up to
$J=\lfloor h\sqrt{\bar\tau}\rfloor$ are real and below $B/2+1$,
where $0<h<1/4$ is a fixed small constant. The mean split used in
\eqref{eq:stars-sc-bias-exact} gives
$\lambda_\tau(y)-a^-=\Delta(1-S_{\tau,y}(0))+A_y-E_{\tau,y}$, where
$0\le E_{\tau,y}\le Cy\bar\tau^{-r/2}$.
Choose $h$ so small that $S_{\tau,y}(0)\le1/4$ and the last error
is at most $A_y/4$ for $y\le J$. This is possible because
$A_y\ge c(y+d)^{-\alpha}$ and
$y(y+d)^\alpha \bar\tau^{-r/2}\le C h^r$ once $\bar\tau$ is large.
The capacity error is also at most $A_y/4$ for sufficiently large
$T$. Hence $\lambda_\tau^B(y)>a^-$ and
\begin{align*}
 \gamma_\tau^B(y)\ge1/2,\qquad 1\le y\le J.
\end{align*}
All fixed excluded remaining-time blocks contain only a bounded
number of decision epochs and will contribute a constant.

There is stronger repulsion on a smaller band of width
$H=\lfloor\ell\bigl(\sqrt{\bar\tau}\wedge(\sqrt{\bar\tau}/\Delta)^{1/r}\bigr)\rfloor$,
where the gap-dependent entry is infinite at $\Delta=0$ and
$\ell>0$ is a small fixed constant with $H\le J/4$.
Increase the fixed block-size cutoff so that $H\ge(2\vee d)$
and taking integer parts changes these scales by only fixed factors.

If $0<\alpha<1$, choose $\alpha<\kappa<\kappa_\alpha$.
The strict coefficient gap in \eqref{eq:stars-sc-stationary-expansion}
implies $A_y-g_+(\kappa/[4(y+d)])\ge c_*(y+d)^{-\alpha}$ for
all $y\ge1$, with fixed $d,c_*>0$.
For large $y$ this follows by comparing leading coefficients;
increase $d$ to cover the remaining finite set using $A_y>0$.
By \eqref{eq:stars-sc-bias}, choosing $\ell$ smaller if needed makes
the deadline error use less than half this slack, since
$\Delta H^r/\sqrt{\bar\tau}$ and $H^r\bar\tau^{-r/2}$ are bounded by fixed
multiples of $\ell^r$. Therefore we may use the reference thresholds
$b_y=1/2+\kappa/[4(y+d)]$ for $y\le H$ and $b_y=1/2$ for
$H<y\le J$.
Their weights satisfy
\begin{align}
 \pi_j\asymp(j+1)^\kappa\quad(0\le j\le H),\qquad
 \pi_j\asymp H^\kappa\quad(H<j\le J).
 \label{eq:stars-sc-power-weights}
\end{align}
Indeed, the logarithm of $b_y/(1-b_y)$ equals
$\kappa/(y+d)+O((y+d)^{-3})$. The error series is summable,
and the harmonic sum differs from $\log(j+1)$ by a bounded
amount. Constants absorb the fixed shift and small indices.

For $\alpha=1$, put $e_y=\gamma_\infty(y)-1/2$ and use
$b_y=1/2+(e_y-A[\Delta y/\sqrt{\bar\tau}+y/\bar\tau])^+$ for $y\le H$;
again take $b_y=1/2$ for $H<y\le J$.
These are lower bounds by \eqref{eq:stars-sc-bias-linear}, the Lipschitz
property of $F$, and the protected-band lower bound
$\gamma_\tau^B(y)\ge1/2$; enlarge $A$ to
absorb $\epsT$. The expansion of $\gamma_\infty$ gives
$\log[\gamma_\infty(y)/(1-\gamma_\infty(y))]
=1/y+O(y^{-1-\nu})+O(y^{-3})$.
The log-odds function has bounded derivative on
$[1/2,\gamma_\infty(1)]$, while the total perturbation up to $H$
is at most $C(\Delta H^2/\sqrt{\bar\tau}+H^2/\bar\tau)\le C$.
Thus these weights satisfy
\begin{align}
 \pi_j\asymp j+1\quad(0\le j\le H),\qquad
 \pi_j\asymp H\quad(H<j\le J).
 \label{eq:stars-sc-linear-weights}
\end{align}

\paragraph{From threshold protection to an occupation bound.}
Apply Lemma~\ref{lem:stars-sc-accounting} to $f(j)=(j+1)^{-r}$.
For $\alpha<1$, \eqref{eq:stars-sc-power-weights} gives
\begin{align*}
 \sum_{j=0}^J\pi_j&\ge cJH^\kappa,&
 \sum_{j=0}^J\pi_jf(j)&\le CH^{\kappa-\alpha},\\
 \bar f&\le CH^{-\alpha}/J,&
 G_j&\le
 \begin{cases}C(j+1)^{-\alpha},&j<H,\\ CH^{-\alpha},&H\le j<J.
 \end{cases}
\end{align*}
For the numerator, sum $(j+1)^{\kappa-1-\alpha}$ below $H$
and bound $H^\kappa\sum_{j\ge H}(j+1)^{-1-\alpha}$ above it.
The condition $\kappa>\alpha$ is exactly what makes the first
sum of order $H^{\kappa-\alpha}$. For $G_j$, drop the negative
$-\bar f$ term in its numerator and use $b_{j+1}\ge1/2$.
Summing the displayed increments gives
$\Phi(J)\le CJH^{-\alpha}$, since $H\le J$.
Consequently,
\begin{align*}
 \E\sum_{t\text{ in block}}(y_{t-1}+1)^{-r}
 &\stackrel{(a)}{\le}\bar\tau\bar f+\Phi(J)
 \stackrel{(b)}{\le}CJH^{-\alpha}\\
 &\stackrel{(c)}{\le}
 C\left\{\bar\tau^{(1-\alpha)/2}
          +\Delta^{\alpha/r}\bar\tau^{1/(2r)}\right\},\qquad \alpha<1,
\end{align*}
where $(a)$ is Lemma~\ref{lem:stars-sc-accounting}, $(b)$ uses
the bounds on $\bar f$ and $\Phi(J)$ together with $\bar\tau\asymp J^2$,
and $(c)$ substitutes the two scales in the definition of $H$.
All comparisons are conditional on an arbitrary starting history.
The sums just used follow by adding the first term to the appropriate
integral on $[1,H]$ or $[H,\infty)$; no integral starts at zero.

For $\alpha=1$, the same computation with
\eqref{eq:stars-sc-linear-weights} gives
$\bar f\le C\log H/(JH)$ and
\begin{align*}
 G_j\le
 \begin{cases}
 C\log(j+2)/(j+1),&j<H,\\
 C\log H/H,&H\le j<J.
 \end{cases}
\end{align*}
Consequently the same accounting bound is
\begin{align*}
 \E\sum_{t\text{ in block}}(y_{t-1}+1)^{-2}
 \le C\left\{\log^2 H+\frac{\sqrt{\bar\tau}}{H}\log H\right\}.
\end{align*}
The logarithmic square is simply the sum of
$\log(j+2)/(j+1)$ up to $H$.
For a fixed positive gap, $H\asymp\Delta^{-1/2}\bar\tau^{1/4}$
for all sufficiently large $\bar\tau$, and
$\log H\le C\sqrt{\bar\tau}/H$. Thus
the preceding block estimate is $O_F(\bar\tau^{1/4}\log \bar\tau)$.
The finitely many earlier blocks contribute a distribution-dependent
constant, giving the stated block rate for a positive gap.

Partition $\tau\ge1$ into dyadic blocks $[2^l,2^{l+1})$.
The first chronological block may be truncated; the accounting lemma
allows that. Take expectations and sum the geometric series to obtain
\begin{align}
 \E\sum_{t=1}^T(y_{t-1}+1)^{-r}
 &\le C\{T^{(1-\alpha)/2}
              +\Delta^{\alpha/r}T^{1/(2r)}\},&&\alpha<1,\notag\\*
 \E\sum_{t=1}^T(y_{t-1}+1)^{-2}
 &\le C T^{1/4}\log T,&&\alpha=1,\quad\Delta>0.
 \label{eq:stars-sc-block-occupation}
\end{align}
Both power exponents in the first line are positive. For the second
line, bound the logarithm by $\log T$ and sum the geometric series
$\sum_{2^l\le T}2^{l/4}\le CT^{1/4}$.
The last epoch adds at most one.

\paragraph{No gap at the critical exponent: use the accounting rule globally.}
For $\alpha=1$, $\Delta=0$, restarting the accounting function
on every block would charge repeatedly for its logarithmic sum.
The finite-horizon error is now only $Cy/\tau$, so the same difference
equation can instead be used once on the whole horizon.

Take $b_y=\gamma_\infty(y)$, $\bar f=0$, and no cutoff $J$.
The construction in Lemma~\ref{lem:stars-sc-accounting} now uses $\pi_j=\prod_{i=1}^j\gamma_\infty(i)/(1-\gamma_\infty(i))$
and $G_j=(\sum_{i=0}^j\pi_i(i+1)^{-2})/
(\pi_j\gamma_\infty(j+1))$. Define $\Phi(y)=\sum_{j=0}^{y-1}G_j$.
Exactly the same summable log-odds expansion gives
$\pi_j\asymp j+1$ and $0\le G_j\le C\log(j+2)/(j+1)$.
Equation~\eqref{eq:stars-sc-accounting-equation} still holds, now with
$f(y)=(y+1)^{-2}$ and $\bar f=0$.

The actual thresholds are slightly smaller than these reference
coefficients. By \eqref{eq:stars-sc-bias-linear} and the Lipschitz property,
$0\le\gamma_\infty(y)-\gamma_\tau^B(y)\le Cy/(\tau+1)+C\epsT$
for real indices $y\le B/2+1$. The same bound covers missing
terminal indices $y=\tau+1,\tau+2$, assigned threshold zero, because
$y/(\tau+1)\ge1$. Below the physical half-capacity, a purchase is
not excluded by the physical ceiling; bounded small $T$ are absorbed
in the constants.
Each threshold error multiplies a corresponding increment $G_y$
or $G_{y-1}$ in the expected change of $\Phi$. At epoch $t$, use
$y=y_{t-1}$ and continuation horizon $\tau_t-1$ to obtain
\begin{align*}
 (y+1)^{-2}\le
 \E[\Phi(y_t)-\Phi(y_{t-1})\mid\mathcal F_{t-1}]
 +C\frac{\log(y+2)}{\tau_t}+C\epsT+C\one\{y>B/2\}.
\end{align*}
At zero omit the downward increment. Above $B/2$, boundedness
of $f$ and all $G_j$ justifies the added constant wall charge.
Sum up to $t=T-1$. The potential telescopes on the actual path,
and $y_{T-1}\le1$ makes its final value at most $G_0$.
Using \eqref{eq:stars-sc-physical-wall} with $m=3$ yields
\begin{align}
 \E\sum_{t=1}^T(y_{t-1}+1)^{-2}
 &\stackrel{(a)}{\le}
 C+C\log(B+2)\sum_{\tau=2}^T\tau^{-1}
       +CT\epsT+CT\Pp(\max_t y_t>B/2)\notag\\
 &\stackrel{(b)}{\le}C\log^2 T,
 \label{eq:stars-sc-global-occupation}
\end{align}
where $(a)$ sums the conditional drift inequality by the tower
property, telescopes $\Phi$, and adds at most one for the final epoch;
$(b)$ uses the harmonic-sum bound, $B=\lceil\rho T\rceil$,
$T\epsT=O(1)$, and \eqref{eq:stars-sc-physical-wall} with $m=3$.
For the critical uniform case, applying the accounting identity over
the whole horizon therefore gives the required logarithmic-square
bound.

\subsection{Step 5: sum the compensation and finish}

We now sum the terms in \eqref{eq:stars-sc-compensation} using the
regret decomposition from Section~\ref{app:stars-compensation}.
The physical-wall error is bounded by \eqref{eq:stars-sc-physical-wall},
and $T\epsT=O(1)$. The inverse-inventory term is controlled by
\eqref{eq:stars-sc-block-occupation} or \eqref{eq:stars-sc-global-occupation}.

For the terminal-inventory term,
\begin{align}
 \E\left(\frac{y_{t-1}+1}{\tau_t}\right)^r
 &\stackrel{(a)}{\le}
 \frac{[\E(y_{t-1}+1)^2]^{r/2}}{\tau_t^r}
 \stackrel{(b)}{\le}C\tau_t^{-r/2}L_{\tau_t}^{r/2},
 \label{eq:stars-sc-final-moment}
\end{align}
where $(a)$ is Jensen's inequality with $r/2\le1$, and $(b)$
uses $(y+1)^2\le2y^2+2$ and \eqref{eq:stars-sc-terminal-moment}.
With no gap, $L_\tau=4$, so its sum and the separate fluctuation
sum are $O(T^{(1-\alpha)/2})$ for $\alpha<1$, and
$O(\log T)$ for $\alpha=1$. These follow from
$1+\int_1^T x^{-r/2}dx$, treating $r=2$ separately.
Combining with the corresponding occupation bounds gives the two
no-gap claims in Lemma~\ref{thm:stars-continuous}.

For a positive gap, $L_\tau\le C\log\tau$ for $\tau\ge2$,
while the term at $\tau=1$ is bounded by a constant.
When $\alpha<1$, the sum in \eqref{eq:stars-sc-final-moment} is at most
$CT^{(1-\alpha)/2}(\log T)^{r/2}$, which is dominated by
$CT^{1/(2r)}$ because
$1/[2(1+\alpha)]-(1-\alpha)/2=\alpha^2/[2(1+\alpha)]>0$.
For $\alpha=1$, it is $O(\log^2 T)$, dominated by
$T^{1/4}\log T$. Adding \eqref{eq:stars-sc-block-occupation} proves
the two positive-gap claims and completes Lemma~\ref{thm:stars-continuous},
which supplies the reference bounds. Appendix~\ref{app:stars-sampling}
adds at most $M\sum_{\tau=2}^T\tau^{-1}\le M\log T$ for the
specified finite sample counts, completing all continuous cases of
Theorem~\ref{thm:stars-upper}.

\section{Finite atomic prices: a logarithmic buffer emerges automatically}
\label{app:stars-atomic}

For atomic prices, moving a threshold through an empty price interval
does not change the trading decision. We therefore work directly with
the compensation formula, which charges only for disagreements at
support prices. For a hindsight marginal $W$, this formula is
\begin{align}
 J_F(W)=\sum_{i=1}^d w_i\bigl(\E(W-z_i)^+\wedge\E(z_i-W)^+\bigr),
 \label{eq:stars-at-hinges}
\end{align}
where the fixed support and masses satisfy
$z_1<\cdots<z_d$, $w_i>0$, and $\sum_iw_i=1$.
Indeed, at $z_i$ the identity follows from
$A-(A-D)^+=A\wedge D$, with $A=\E(W-z_i)^+$ and $D=\E(z_i-W)^+$.
Only uncertainty that straddles an actual support price contributes.

This identity gives the reference charge at a specified inventory.
Lemma~\ref{lem:sampled-compensation} bounds the sampled charge by
twice this quantity plus $C/K_t$. Our task here is to determine how
the reference charge varies with inventory. Appendix~\ref{app:finite-atomic}
then controls its accumulation along the sampled inventory path,
completing the atomic case of Theorem~\ref{thm:stars-upper}.
Here $\tau$ denotes the number of prices in a continuation problem;
at decision time $t$, this is $\tau_t-1$, where $\tau_t=T-t+1$ includes
the current decision.

\subsection{Compensation at a given inventory when there is a median gap}

Suppose $F(z_j)=1/2$ for some $j<d$. Write $a^-=z_j$,
$a^+=z_{j+1}$, $\Delta=a^+-a^->0$, $\mu^-=w_j$, and
$\mu^+=w_{j+1}$. Assume first that $\mu^+<1/2$, and set
$p_*=1/2+\mu^+$, $q_*=1/2-\mu^+$, and $r=q_*/p_*\in(0,1)$.
The ordinary-walk representation in Lemma~\ref{lem:stars-hitting} explains the special role of $a^+$. For the tail above this atom, the ordinary walk has upward probability $q_*<1/2$, so its probability of reaching level $y$ is at most $r^y$. Immediately below the atom, the corresponding ordinary walk is fair, and its probability of not reaching $y$ by time $\tau$ is of order $(y/\sqrt \tau)\wedge1$. The lemma identifies these probabilities with one-time tails of the uncapped count process; Lemma~\ref{lem:stars-capacity} controls the effect of the physical cap. The smaller of these two tail bounds controls compensation at $a^+$.

\begin{lemma}
\label{lem:stars-at-local}
Fix the finite atomic law $F$ with positive median gap and
$\rho\in(0,1)$ as above. There are constants $c,C>0$, depending
only on $F$ and $\rho$, such that the following hold for every
feasible $y\le B/2$ and $\tau\ge1$. Write
$\epsT=CT^2e^{-c_\rho T}$ for the error from
Lemma~\ref{lem:stars-capacity}.
\begin{enumerate}[label=$(\alph*)$]
\item If $\mu^+<1/2$, there is a fixed $s\in(0,r)$, depending on
$F$, such that
\begin{align}
 \partial\bar R_{\tau}^B(y)\le C\left[
 \left(r^y\wedge\frac{y+1}{\sqrt \tau}\right)
 +s^y+e^{-c\tau}+\epsT\right]. \label{eq:stars-at-local}
\end{align}
\item If $\mu^+=1/2$, then
$\partial\bar R_{\tau}^B(y)\le C(e^{-c\tau}+\epsT)$.
\end{enumerate}
\end{lemma}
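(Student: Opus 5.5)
The plan starts from the hinge formula \eqref{eq:stars-at-hinges} together with Proposition~\ref{prop:stars-compensation}(a). These give
\[
\partial\bar R_\tau^B(y)=J_F(\Delta V_\tau^B(y+1))+J_F(\Delta V_\tau^B(y)),
\]
so it suffices to bound $J_F(W)$ for a single marginal $W=\Delta V_\tau^B(y')$ with $y'\in\{y,y+1\}$. By Lemma~\ref{lem:stars-concavity}, $W$ takes values in the support $\{z_1,\dots,z_d\}$ and $0\le W\le 1$. For each atom $z_i$ the summand satisfies
\[
\E(W-z_i)^+\wedge\E(z_i-W)^+\le \Pp(W>z_i)\wedge\Pp(W<z_i).
\]
The task therefore reduces to showing, atom by atom, that $W$ rarely falls on one side of $z_i$. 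I will bound the uncapped tails first and then move to the capped marginal using Lemma~\ref{lem:stars-capacity}(a). That lemma applies because $y'\le B/2+1$, and it costs only $\epsT$.

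The tails come from the hitting representation in Lemma~\ref{lem:stars-hitting}(a): $\Pp(W>u)=\Pp_{1-F(u)}(\sigma_{y'}\le\tau)$. I split the atoms into three groups.

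\begin{itemize}
\item Atoms $z_i\ge z_{j+2}$. Here $1-F(z_i)\le 1/2-\mu^+-w_{j+2}$, and \eqref{eq:stars-likelihood} gives $\Pp(W>z_i)\le s^{y'}$ for a fixed $s<r$.
\item The atom $a^+=z_{j+1}$. Here $1-F(a^+)=q_*$, so $\Pp(W>a^+)\le r^{y'}$. For the other side, $\Pp(W<a^+)=\Pp(W\le a^-)=\Pp_{1/2}(\sigma_{y'}>\tau)\le C(y'/\sqrt\tau\wedge1)$ by Lemma~\ref{lem:stars-fair}(a). The minimum gives the term $r^y\wedge (y+1)/\sqrt\tau$, with the shift between $y$ and $y+1$ absorbed into $C$.
\item Atoms $z_i\le a^-$. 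Here $\Pp(W<z_i)\le\Pp(W\le z_{i-1})$, and $W\le z_{i-1}$ is the event that a walk with upward probability $1-F(z_{i-1})\ge 1/2+w_j$ fails to reach $y'$ by $\tau$. This is at most $e^{-c\tau}$ by \eqref{eq:stars-failure} when $y'\le c'\tau$.
\end{itemize}

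The main obstacle is the lower atoms when $y'$ is comparable to $\tau$. There the drift bound \eqref{eq:stars-failure} is not small, and the $e^{-c\tau}$ estimate fails. I would first try to handle this regime by using the other side of the hinge at those atoms, namely $\Pp(W>z_i)\le\Pp(W>a^-)$. But this upper tail is large, so that does not work directly.

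The resolution I expect is that when $y'\ge c'\tau$ the fair-walk term $(y+1)/\sqrt\tau$ is at least $c'\sqrt\tau$, hence at least a constant. In that regime the whole bound is trivial, since each channel costs at most one, and the failing case is absorbed by enlarging $C$. I would check this carefully, together with the edge cases $y=0$ and missing channels: at $y=0$ only index one exists, and terminal forced actions cost zero.

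For part (b), $\mu^+=1/2$ means $a^+$ is the top atom. Then $\Pp(W>a^+)=0$, and the hinge at $a^+$ vanishes because $\E(W-a^+)^+=0$. Atoms above $a^+$ do not exist, and the atoms at or below $a^-$ are handled exactly as in the third group above. The charge is therefore at most $C(e^{-c\tau}+\epsT)$. If $y'$ is of order $\tau$, the walk has upward probability $1-F(a^-)=1/2$, so I will instead use that every atom below $a^+$ lies at or below $a^-$. A disagreement requires $W\le z_{i-1}$, which is controlled by the walk with upward probability at least $1/2+\mu^-$. If this still needs $y'\le c'\tau$, I would verify that the terminal constraint $y'\le\tau$, combined with the walk's positive drift $2\mu^->0$, yields exponential failure once $c'$ is chosen relative to the fixed drift.
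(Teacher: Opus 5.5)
\textbf{Gap: the lower atoms $z_i\le a^-$ when the index is of order $\tau$.} Your treatment of $a^+$ and of the atoms above it matches the paper. For $z_i\le a^-$, your proposed fix does not work when the index $\ell\in\{y,y+1\}$ is comparable to $\tau$.

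The bound in (a) involves the \emph{minimum} $r^y\wedge(y+1)/\sqrt\tau$, not the fair-walk term alone. When $y\ge c'\tau$ and $\tau$ is large, this minimum is $r^y\le r^{c'\tau}$, which is exponentially small. The right side of \eqref{eq:stars-at-local} is then of order $e^{-c\tau}+\epsT$, not a constant, so nothing can be absorbed by enlarging $C$.

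This regime does occur. Feasibility allows every $y\le\tau\wedge B/2$, and $\tau\ll T$ near the end of the horizon. In it, the side of the hinge you rely on really is large. At $\ell=\tau$, the event $\{W<z_i\}=\{W\le z_{i-1}\}$ is the failure of a walk with upward probability $1-F(z_{i-1})<1$ to climb $\tau$ levels in $\tau$ steps. Its probability is $1-(1-F(z_{i-1}))^\tau$, close to one. The same objection applies to your part (b): a walk with drift $2\mu^-$ fails to reach levels above $2\mu^-\tau$ with high probability, so it gives no exponential bound for $y'$ of order $\tau$.

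\textbf{Missing idea: use the upper side of the hinge at lower atoms in this regime.} You dismissed it for two wrong reasons. First, your inequality is reversed: for $z_i\le a^-$ one has $\Pp(W>z_i)\ge\Pp(W>a^-)$. Second, this tail is not large here.

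Put $d_i=1-2F(z_i)\ge0$. The walk representing $\{W>z_i\}$ has drift $d_i$, and the walk whose failure represents $\{W<z_i\}$ has drift $d_i+2w_i$. Split at $(d_i+w_i)\tau$.
\begin{itemize}
\item If $\ell\ge(d_i+w_i)\tau$, hitting $\ell$ at any time $v\le\tau$ requires a centered upward excursion of at least $\ell-d_iv\ge w_i\tau$. The maximal bound \eqref{eq:stars-hoeffding} then gives $\Pp(W>z_i)\le e^{-w_i^2\tau/2}$.
\item If $\ell<(d_i+w_i)\tau$, failure of the second walk forces $S_\tau-\E S_\tau<-w_i\tau$, so $\Pp(W<z_i)\le e^{-w_i^2\tau/2}$.
\end{itemize}
Lemma~\ref{lem:stars-capacity} transfers these bounds to the capped marginal at cost $\epsT$. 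So the hinge minimum at every $z_i\le a^-$ is at most $e^{-w_i^2\tau/2}+\epsT$, uniformly over $\ell\le\tau$. This is the paper's argument. It also completes part (b) with no restriction of the form $y'\le c'\tau$.
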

\begin{proof}
Apply \eqref{eq:stars-at-hinges} to an existing marginal
$W=\Delta V_{\tau}^B(\ell)$, where $\ell\in\{y,y+1\}$ and
$\ell\le B/2+1$. For $z_i\le a^-$, put $d_i=1-2F(z_i)\ge0$.
The walk for $W>z_i$ has drift $d_i$; the walk whose failure determines
$W<z_i$ has drift $d_i+2w_i$. If $\ell\ge(d_i+w_i)\tau$, a hit by any
time $v\le\tau$ in the first walk requires a centered upward deviation
at least $w_i\tau$. If $\ell<(d_i+w_i)\tau$, failure to hit in the second
requires a terminal downward deviation exceeding $w_i\tau$.
The maximal and endpoint Hoeffding bounds, together with
Lemma~\ref{lem:stars-capacity}, give
\begin{align*}
 \E(W-z_i)^+\wedge\E(z_i-W)^+
 \le e^{-w_i^2\tau/2}+\epsT,\qquad z_i\le a^-.
\end{align*}
We used that a positive part is at most its positivity probability, since prices lie in $[0,1]$. At the lowest support price, the downward positive part is zero.

At $a^+$, Lemma~\ref{lem:stars-hitting} gives
$\E(W-a^+)^+\le r^{\ell}$, while
\begin{align*}
 \E(a^+-W)^+
 &\stackrel{(a)}{\le}\Pp(W\le a^-)
 \stackrel{(b)}{\le}\Pp_{1/2}(\sigma_{\ell}>\tau)+\epsT
 \stackrel{(c)}{\le}C\ell/\sqrt \tau+\epsT.
\end{align*}
where (a) uses the absence of support in $(a^-,a^+)$ and the
unit price range, (b) uses the fair queue at $a^-$ and
Lemma~\ref{lem:stars-capacity}, and (c) applies
Lemma~\ref{lem:stars-fair}.
For every $z_i>a^+$, the upward hinge is at most $r_i^{\ell}$, where
$r_i=(1-F(z_i))/F(z_i)<r$. Choose $s\in(0,r)$ larger than these finitely
many ratios. Adding the hinges for the existing marginals at $y$ and
$y+1$ proves \eqref{eq:stars-at-local}. At $y=0$ only the marginal at
$1$ exists; compulsory actions contribute zero. If $\mu^+=1/2$, no price
exceeds $a^+$, whose hinge is then zero, so only the bounds for atoms at
or below $a^-$ remain.
\end{proof}

\subsection{The mean thresholds reveal a logarithmic reserve}

We next identify the inventory levels at which the mean continuation
value makes buying at $a^+$ preferable to holding. This shows how a
logarithmic reserve arises from the scenario optimization. We also
quantify the buying advantage, which will allow us to control the
effect of sampling on these decisions.

\Needspace{11\baselineskip}
\begin{lemma}
\label{lem:stars-at-thresholds}
Fix $F$ and $\rho$ as above, and suppose $\mu^+<1/2$.
Choose a sufficiently large fixed $A$. For a sufficiently large
remaining-time scale $\bar\tau\ge2$, define
\begin{align*}
 h_{\bar\tau}=\left\lfloor\frac{\log(\sqrt{\bar\tau}/[A\log \bar\tau])}{|\log r|}\right\rfloor-1.
\end{align*}
The constants below and the cutoff on $\bar\tau$ may depend on $F$, $\rho$,
and $A$, but not on $\bar\tau$, $\tau$, or $T$.
\begin{enumerate}[label=$(\alph*)$]
\item There is a constant $c>0$ such that, uniformly for
$\bar\tau\le \tau<2\bar\tau$ and existing $y\le h_{\bar\tau}+1$,
$\lambda_{\tau}^B(y)-a^+\ge c\log \bar\tau/\sqrt{\bar\tau}$.
\item $h_{\bar\tau}\asymp\log \bar\tau$ and
$r^{h_{\bar\tau}}\asymp\log \bar\tau/\sqrt{\bar\tau}$.
\end{enumerate}
\end{lemma}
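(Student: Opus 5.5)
Part (b) is arithmetic, so I would dispose of it first. Put $\theta=|\log r|$ and $x=\log(\sqrt{\bar\tau}/[A\log\bar\tau])$. Then $h_{\bar\tau}=\lfloor x/\theta\rfloor-1$, and $x\sim\tfrac12\log\bar\tau$, so $h_{\bar\tau}\asymp\log\bar\tau$ once $\bar\tau$ is large. Since $x/\theta-2\le h_{\bar\tau}\le x/\theta-1$, we get $r^{h_{\bar\tau}}\in[r^{-1}e^{-x},\,r^{-2}e^{-x}]$. Also $e^{-x}=A\log\bar\tau/\sqrt{\bar\tau}$. This gives $r^{h_{\bar\tau}}\asymp A\log\bar\tau/\sqrt{\bar\tau}$ with constants depending on $r$ only. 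It also gives $r^{y}\ge r^{h_{\bar\tau}+1}\ge A\log\bar\tau/\sqrt{\bar\tau}$ for every $y\le h_{\bar\tau}+1$, which is the fact I will need in part (a).

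For part (a) I would compute $\lambda_\tau^B(y)-a^+$ by a tail integral, using the count representation of Section~\ref{app:stars-scalar} together with Lemma~\ref{lem:stars-hitting}. Write $W=\Delta V_\tau^B(y)$. The mean splits into an upper part $\E(W-a^+)^+$ and a lower part $\E(a^+-W)^+$, so
\[
\lambda_\tau^B(y)-a^+=\E(W-a^+)^+-\E(a^+-W)^+ .
\]

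\emph{Upper part.} For $u\in[a^+,z_{j+2})$ we have $1-F(u)=q_*$. Hence $\Pp(W>u)$ equals the uncapped probability $\Pp_{q_*}(\sigma_y\le\tau)$, up to $\epsT$ by Lemma~\ref{lem:stars-capacity} (using $y\le h+1\le B/2$). By \eqref{eq:stars-likelihood},
\[
\Pp_{q_*}(\sigma_y\le\tau)=r^y\,\Pp_{p_*}(\sigma_y\le\tau)\ge r^y\bigl(1-e^{-((2p_*-1)\tau-y)^2/(2\tau)}\bigr)\ge r^y/2
\]
by \eqref{eq:stars-failure}, since $y=O(\log\tau)$. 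Integrating over $[a^+,z_{j+2}]$ gives $\E(W-a^+)^+\ge c_1r^y$. If $a^+$ is the top atom, then $\mu^+<1/2$ is impossible, so $z_{j+2}$ exists.

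\emph{Lower part.} I bound $\E(a^+-W)^+\le\Pp(W\le a^-)$. For the law on the left of the gap, $1-F(a^-)=1/2$, so this probability is at most the fair survival probability $\Pp_{1/2}(\sigma_y>\tau)+\epsT$. By Lemma~\ref{lem:stars-fair} this is at most $C_2y/\sqrt\tau+\epsT$.

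Combining, and using $y\le h+1\asymp\log\bar\tau$, $\tau\ge\bar\tau$, and the bound $r^y\ge A\log\bar\tau/\sqrt{\bar\tau}$ from part (b),
\[
\lambda_\tau^B(y)-a^+\ge c_1A\frac{\log\bar\tau}{\sqrt{\bar\tau}}-C_3\frac{\log\bar\tau}{\sqrt{\bar\tau}}-\epsT .
\]
Choose $A>2C_3/c_1$. Then absorb $\epsT$: for $\tau\le T$ it is exponentially small in $T\ge\tau$, so it is negligible for large $\bar\tau$. This yields the claim with $c=c_1A/2$.

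The main obstacle is the upper-part lower bound. It needs a \emph{lower} bound on the hitting probability under the subcritical walk, which is where the exact likelihood-ratio identity is essential rather than the inequality alone. It also needs care that the capacity correction only lowers $W$ by at most $\epsT$, uniformly in the allowed range of $y$ and $\tau$. A further point to check is that $C_2,C_3$ do not depend on $A$; they come only from Lemma~\ref{lem:stars-fair} and $h\le\log\bar\tau/(2\theta)$, so the choice of $A$ closes the argument.
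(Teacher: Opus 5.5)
Your proposal is correct and follows essentially the paper's proof. You use likelihood reversal for the upper tail just above $a^+$, bound the lower part by fair survival at $a^-$ plus the capacity error, and then choose $A$ large using $r^{h_{\bar\tau}+1}\ge A\log\bar\tau/\sqrt{\bar\tau}$. The only cosmetic difference is that you treat every $y\le h_{\bar\tau}+1$ at once, via the monotonicity of $r^y$ and $y/\sqrt{\tau}$, whereas the paper works at $y=h_{\bar\tau}+1$ and extends the margin to smaller indices by concavity.
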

\begin{proof}
Let $\gamma=z_{j+2}-a^+>0$. At price level $a^++\gamma/2$,
likelihood reversal gives an upward-tail probability at least
$r^y(1-e^{-c\tau})$ when $y=O(\log \tau)$: the reversed walk has fixed
positive drift, so its failure probability is exponentially small.
Fair survival bounds the negative part below $a^+$. Including the
capacity error, therefore,
\begin{align*}
 \lambda_{\tau}^B(y)-a^+\ge c_1r^y-C_2y/\sqrt \tau-C_3\epsT,
                  \qquad y=O(\log \tau).
\end{align*}
At $y=h_{\bar\tau}+1$, the definition gives
$r^y\ge A\log \bar\tau/\sqrt{\bar\tau}$ and $y\le C\log \bar\tau$.
Choose $A$ large, then a fixed cutoff on $\bar\tau$ large enough to absorb
the exponential error. This gives the stated margin; concavity
extends it to smaller indices. The two scale estimates follow
directly from $h_{\bar\tau}$'s definition. Increasing the cutoff ensures
$h_{\bar\tau}\ge1$ and $h_{\bar\tau}+1<(\bar\tau\wedge B)$, since $B\ge\rho \bar\tau$.
\end{proof}

\paragraph{Seeing the buffer in the four-price example.}
Give each price in $\{0.2,0.4,0.6,0.8\}$ probability $1/4$, as in
Example~\ref{ex:atomic}. Here $a^+=0.6$ and $r=1/3$.
Without the capacity bound, $W=\Delta V_\tau(y)$ takes only these
four prices whenever $1\le y\le\tau$. Therefore
$\E W-0.6=0.2\{\Pp(W>0.6)-\Pp(W\le0.4)-\Pp(W\le0.2)\}$.
Let $\sigma_y$ be the first time a sign walk starting at zero reaches
$y$, and let the subscript on $\Pp$ give its upward probability.
Lemma~\ref{lem:stars-hitting} identifies the three probabilities
above with hitting at upward probability $1/4$, survival at $1/2$,
and survival at $3/4$, respectively. Reversing the likelihood of the
first walk gives
\begin{align}
 \lambda_\tau(y)-0.6
 =0.2\left[3^{-y}\Pp_{3/4}(\sigma_y\le\tau)
 -\Pp_{1/2}(\sigma_y>\tau)-\Pp_{3/4}(\sigma_y>\tau)\right].
 \label{eq:stars-at-example}
\end{align}
For $y=O(\log\tau)$, Hoeffding's inequality gives
$\Pp_{3/4}(\sigma_y>\tau)=O(e^{-c\tau})$, since failure to hit
requires the final walk position to be below $y$.
Lemma~\ref{lem:stars-fair} bounds fair survival above and below by
constant multiples of $y/\sqrt{\tau}$.
The sign changes where $3^{-y}\asymp y/\sqrt{\tau}$, hence at
$y=(\log \tau)/(2\log3)-(\log\log \tau)/\log3+O(1)$.
To obtain this expansion, take logarithms in the equation
$3^{-y}=y/\sqrt \tau$ for its unique positive real solution. This first
gives $y\asymp\log \tau$; substituting back gives the stated expression.
Shifting the solution by a sufficiently large fixed constant in either
direction makes $3^{-y}/(y/\sqrt{\tau})$ respectively larger or smaller
than the fixed survival constants; the exponentially small terms
in \eqref{eq:stars-at-example} do not change these signs.
Monotonicity of $\lambda_{\tau}(y)$ then locates
its crossing within $O(1)$. The capacity error $\epsT$ is negligible
relative to $\log \tau/\sqrt \tau$ at these indices.

At price $0.6$, the mean-threshold rule buys if
$\lambda_\tau(y+1)>0.6$, sells if $\lambda_\tau(y)<0.6$, and holds
when its two adjacent marginals straddle that price, including ties.
The continuation values therefore generate the buffer endogenously.
Appendix~\ref{app:finite-atomic} shows that the reserve remains
effective with $K_t\ge(T-t+1)^2$, although sampling can change the
cutoff on a particular realization.

\subsection{A crossing identity that applies to the sampled policy}

To turn the local compensation bound into a regret bound, we need to
relate visits at adjacent inventory levels. A purchase from $y$ to
$y+1$ and a sale from $y+1$ to $y$ are opposite crossings of the same
pair of levels. Over any consecutive block of decisions, their counts
differ by at most one: every purchase must be followed by a sale before
another purchase can cross that same pair of levels, and conversely.
The difference is determined by the block's initial and final
inventories. We use this observation to express occupation in terms of
the policy's corrections to the actions prescribed by the static median-sign rule.

Use the fair signs $\xi_t=\one\{p_t\le a^-\}-\one\{p_t\ge a^+\}$
and upward corrections $u_t=(x_t-\xi_t)^+$ of any feasible online
policy, including sampled \textsf{STARS}. On a consecutive block
$I=\{a,\ldots,c\}$, define expected visits
$\nu_j=\E\sum_{t\in I}\one\{y_{t-1}=j\}$ and corrections
$A_j=\E\sum_{t\in I}u_t\one\{y_{t-1}=j\}$.
If $C_j^+,C_j^-$ count expected upcrossings and downcrossings of
$j\leftrightarrow j+1$, cancellation gives
\begin{align*}
 C_j^+-C_j^-=
 \E[\one\{y_c\ge j+1\}-\one\{y_{a-1}\ge j+1\}]
 \in[-1,1]. 
\end{align*}
Pointwise, $\one\{x_t=1\}\le\one\{\xi_t=1\}+u_t$ and
$\one\{x_t=-1\}\ge\one\{\xi_t=-1\}-u_t$.
The current sign is fair independently of the past, hence
$C_j^+\le\nu_j/2+A_j$ and
$C_j^-\ge\nu_{j+1}/2-A_{j+1}$. It follows that
\begin{align}
 \nu_{j+1}&\stackrel{(a)}{\le}\nu_j+2+2(A_j+A_{j+1}),\notag\\
 \nu_j&\stackrel{(b)}{\le}\nu_h+2(j-h)+4\sum_{i=h}^j A_i,
       \qquad j>h.\label{eq:stars-at-correctioncross}
\end{align}
Here (a) combines the preceding inequalities with crossing cancellation,
and (b) sums over successive edges. The pointwise inequalities remain
valid at compulsory sales and unavailable purchases.

\subsection{Compensation when the median lies strictly inside an atom}
\label{sec:stars-at-medianatom}

The only remaining possibility is an atom $a$ with
$F(a^-)<1/2<F(a)$,
where $F(a^-)=\Pp(p<a)$ denotes a left limit. Buying or selling at
this atom produces a fixed drift. We first show that compensation
decays exponentially with the distance from the storage boundaries,
up to an exponentially small remaining-horizon term.
We then use the drift away from a costly boundary to bound its
accumulated compensation under sampled \textsf{STARS}.

\paragraph{A count estimate with both storage boundaries.}
Let $C_{t+1}=\clip_{[0,B]}(C_t+\xi_{t+1})$, $C_0=0$, with upward probability $p>1/2$. If $p<1$, put $\delta=2p-1$ and $r_p=(1-p)/p$. For $1\le y\le B$ and $y<\delta \tau$,
\begin{align}
 \Pp(C_{\tau}<y)
 \le e^{-(\delta \tau-y)^2/(2\tau)}+r_p^{B-y}.
 \label{eq:stars-at-cappeddrift}
\end{align}
To prove this, write $S_t=\sum_{i=1}^t\xi_i$. The upper-reflected process is
\begin{align*}
 Z_{\tau}=S_{\tau}-\max_{0\le j\le \tau}(S_j-B)^+
 =S_{\tau}\wedge\left(B+\min_{0\le j\le \tau}(S_{\tau}-S_j)\right).
\end{align*}
Substitution verifies $Z_{t+1}=B\wedge(Z_t+\xi_{t+1})$, $Z_0=0$. Adding lower reflection can only increase the process, so $C_{\tau}\ge Z_{\tau}$. Thus $C_{\tau}<y$ requires either $S_{\tau}<y$ or a suffix sum below $-(B-y)$. Hoeffding bounds the first event; reversal of the signs and the negative-drift hitting formula bound the second by $r_p^{B-y}$. The weaker exponent $B-y$ covers the strict integer inequality. If $p=1$, $C_{\tau}=\tau\wedge B$ deterministically, and the failure probability is zero at every existing index.

\paragraph{Local compensation is concentrated near the walls.}
For constants $c,C>0$ and every feasible $y$, $\tau\ge1$,
\begin{align}
 \partial\bar R_{\tau}^B(y)\le C\{e^{-cy}+e^{-c(B-y)}+e^{-c\tau}\}.
 \label{eq:stars-at-strictlocal}
\end{align}
The lower-wall term is unnecessary if no support price exceeds $a$; the upper-wall term is unnecessary if no price is below $a$.
We prove the bound for each existing marginal
$W=\Delta V_{\tau}^B(\ell)$, $\ell\in\{y,y+1\}$, by treating support
prices below, at, and above the median atom separately.
For $z_i<a$, put $d_i=1-2F(z_i)>0$.
If $\ell\ge(d_i+w_i)\tau$, maximal Hoeffding makes the upward hinge at
most $e^{-w_i^2\tau/2}$. If $\ell<(d_i+w_i)\tau$, apply
\eqref{eq:stars-at-cappeddrift} to the walk immediately below $z_i$,
whose drift is $d_i+2w_i$, to make the downward hinge at most
$e^{-w_i^2\tau/2}+e^{-c_i(B-\ell)}$. If that walk has upward probability
one, its downward hinge is instead zero.

At $a$, put $\delta_-=1-2F(a^-)>0$. Its upward hinge is at most
$e^{-c\ell}$, because $1-F(a)<1/2$. When $\ell\ge\delta_-\tau/2$,
this is at most $e^{-c'\tau}$; otherwise \eqref{eq:stars-at-cappeddrift}
bounds the downward hinge by $e^{-c'\tau}+e^{-c(B-\ell)}$.
If $a$ is an extreme support price, its hinge minimum is zero.
At every atom above $a$, the upward hinge is at most $e^{-c_i\ell}$
by negative-drift hitting. Sum these finitely many bounds in
\eqref{eq:stars-at-hinges} and add the existing marginals at $y$ and $y+1$
to obtain \eqref{eq:stars-at-strictlocal}. Index shifts change only
constants; compulsory actions cost zero. This also verifies the stated
omissions of wall terms.

It remains to bound how often sampled \textsf{STARS} visits these costly
boundary regions. Appendix~\ref{app:finite-atomic} proves the required
occupation bounds directly for the sampled trajectory, accounting
for empirical ties and terminal restrictions.

\section{The effect of finitely many scenarios}
\label{app:stars-sampling}

We now account for the finite scenarios used by \textsf{STARS}. Sampling can
change both the current action and all subsequent inventories, so we
track its effect along the inventory path.
Write $\pi^K$ for \textsf{STARS} with $K_t$ fresh shared scenarios at epoch $t$.
As before, $\pi^\circ$ denotes the population-threshold reference rule:
it replaces each scenario average by its expectation, uses thresholds
$\lambda_{\tau_t-1}^B(y)$ at epoch $t$, where $\tau_t=T-t+1$,
and holds at ties. For continuous prices, we compare
the online rewards of the two policies. We then prove a second bound,
valid for arbitrary price laws, on compensation at the inventories
visited by the sampled policy.

\begin{proposition}
\label{prop:stars-continuous-sampling}
Fix an $M$-Lipschitz distribution $F$. Let $K_t\ge1$ be chosen
from the past before observing $p_t$, and use $K_t$ shared fresh i.i.d.\
futures independent of the current price and actual future. Then
\begin{align*}
 \Reg_T(\pi^K)
 \le\Reg_T(\pi^\circ)+M\E\sum_{t=1}^{T-1} K_t^{-1/2}.
\end{align*}
In particular $K_t\ge(T-t+1)^2$ adds at most $M\log T$,
preserving all the continuous-class rates.
\end{proposition}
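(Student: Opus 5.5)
We couple $\pi^K$ and $\pi^\circ$ and compare their online rewards directly. Both regrets share the hindsight term $\E V_T^B(0;\mathbf p)$, so the claim is equivalent to
\begin{align*}
 \E R_T(\pi^\circ;0)-\E R_T(\pi^K;0)\le M\E\sum_{t=1}^{T-1}K_t^{-1/2}.
\end{align*}
We prove this with a performance-difference (hybrid) decomposition. For $0\le s\le T$, let $\pi^{(s)}$ use sampled \STARS{} at epochs $1,\ldots,s$ and the reference rule $\pi^\circ$ afterwards. Then $\pi^{(0)}=\pi^\circ$ and $\pi^{(T)}=\pi^K$, and the two hybrids $\pi^{(t-1)}$ and $\pi^{(t)}$ differ only in the action at epoch $t$. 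Let $Q_t^\circ(y,x)=-p_tx+\E[\text{reference reward-to-go from }y+x]$ denote the one-step value of $\pi^\circ$. Each telescoping difference then equals $\E[Q_t^\circ(y,x_t^\circ)-Q_t^\circ(y,x_t^K)]$, evaluated at the hybrid state $y=y_{t-1}^{\pi^K}$ with the current price and the fresh scenarios averaged.

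The key step is to show that $\pi^\circ$ is greedy with respect to its own continuation value up to an error we can charge. Here a difficulty appears: $\pi^\circ$ maximizes $-p_tx+\E V^{\mathsf{hs}}$, the \emph{hindsight} continuation, not its own reward-to-go. So $\pi^\circ$ is not exactly greedy for $Q^\circ$, and the naive performance-difference argument fails.

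The cleaner route is to use the compensation decomposition of Proposition~\ref{prop:main-decomposition}, which is pathwise for every policy. It suffices to compare the conditional compensation of the two actions at each visited state:
\begin{align*}
 \E[\partial R_t(y,x_t^K)\mid\mathcal F_{t-1}]\le\E[\partial R_t(y,x_t^\circ)\mid\mathcal F_{t-1}]+\text{error}_t .
\end{align*}
Summing this inequality over the $\pi^K$ path gives $\Reg_T(\pi^K)\le\E\sum_t\partial\bar R_{\tau_t-1}^B(y_{t-1}^{\pi^K})+\sum_t\text{error}_t$. This does not immediately equal $\Reg_T(\pi^\circ)$, because the occupation measures differ. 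However, the continuous analysis in Appendix~\ref{app:stars-continuous} only uses drift lower bounds, so we could rerun it on the $\pi^K$ path. The paper's statement, though, is the clean comparison $\Reg_T(\pi^K)\le\Reg_T(\pi^\circ)+\ldots$.

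To obtain that clean comparison, I would argue via the area formula \eqref{eq:main-area}. Fix $y$ and condition on the actual future. Consider the buy channel, with hindsight threshold $b^{\mathsf{hs}}$, population threshold $b^\circ=\E b^{\mathsf{hs}}$ and sampled threshold $\widehat b$. The extra compensation of using $\widehat b$ instead of $b^\circ$ is
\begin{align*}
 \int_{b^\circ\wedge\widehat b}^{b^\circ\vee\widehat b}\operatorname{sign}\cdot(b^{\mathsf{hs}}-p)f(p)\,dp .
\end{align*}
Since $\widehat b$ is independent of $b^{\mathsf{hs}}$ and $\E b^{\mathsf{hs}}=b^\circ$, averaging over the actual future replaces $b^{\mathsf{hs}}-p$ by $b^\circ-p$. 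This yields
\begin{align*}
 \int_{b^\circ\wedge\widehat b}^{b^\circ\vee\widehat b}|b^\circ-p|f(p)\,dp\le \frac M2(\widehat b-b^\circ)^2 .
\end{align*}
By Lemma~\ref{lem:main-marginal-range} the marginals lie in $[0,1]$, so the sample mean has variance at most $1/(4K_t)$. The buy and sell channels together therefore add at most $M/(4K_t)\le MK_t^{-1/2}$ per epoch.

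The main obstacle is transferring the per-state inequality into a comparison with $\Reg_T(\pi^\circ)$ when the paths diverge. My first attempt is the hybrid telescoping above, using $Q_t^\circ$ built from the reference reward-to-go. Near-greediness of $\pi^\circ$ there would follow because its value function inherits concavity, so its thresholds differ from its own optimal ones. If this fails, I would instead state the bound along the $\pi^K$ path and rerun the occupation argument of Appendix~\ref{app:stars-continuous}. Those drift lower bounds hold up to $O(M/K_t)$ perturbations, since the sampled purchase probability is $\E F(\widehat b)$, which is within $M\,\E|\widehat b-b^\circ|\le M K_t^{-1/2}/2$ of $F(b^\circ)$. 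This explains the $K_t^{-1/2}$ form of the statement. Finally, summing $K_t^{-1/2}\le 1/(T-t+1)$ over $t\le T-1$ gives the $M\log T$ term.
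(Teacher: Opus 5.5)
\textbf{There is a genuine gap: you never prove the stated inequality.} Your hybrid telescoping along the sampled path is in fact the paper's route, but you drop it for a reason that does not apply. Neither fallback then delivers $\Reg_T(\pi^K)\le\Reg_T(\pi^\circ)+M\E\sum_{t=1}^{T-1}K_t^{-1/2}$.

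Greediness of $\pi^\circ$ with respect to its own reward-to-go is not needed. It suffices to bound each hybrid difference $\E[Q_t^\circ(y,x_t^\circ)-Q_t^\circ(y,x_t^K)]$ in \emph{absolute value}. That bound follows from a Lipschitz property of the reference rule's own value function, which is the missing lemma.
\begin{itemize}
\item Let $J_\tau(y)$ be the expected reward of $\pi^\circ$ started at $y$ with $\tau$ epochs left and ending at zero.
\item Run $\pi^\circ$ from $y$ and $y+1$ on the same prices. The nonincreasing thresholds of Proposition~\ref{prop:main-concavity}, with holding at ties, rule out the lower copy buying while the upper copy sells, since that would require $p<\lambda(z+1)<p$.
\item Walls only delete outward moves, and a copy on the terminal boundary must sell. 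So the two inventories stay one apart until they coalesce, which they must because both end at zero.
\item The reward difference is exactly the price at the coalescence epoch. Hence $0\le J_\tau(y+1)-J_\tau(y)\le 1$, and with $p\in[0,1]$ this gives $|Q^\circ(y,p,x)-Q^\circ(y,p,z)|\le|x-z|$.
\item Each hybrid difference is therefore at most $\E[|x_t^\circ-x_t^K|\mid\mathcal F_{t-1}]\le M(\E|\widehat b-b^\circ|+\E|\widehat s-s^\circ|)\le MK_t^{-1/2}$. This uses independence of $p_t$ from the scenarios, the $M$-Lipschitz CDF, and Cauchy--Schwarz with variance at most $1/(4K_t)$.
\end{itemize}
The $K_t^{-1/2}$ rate is this action-disagreement probability times a unit cost. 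At the final epoch both rules take the forced action, so the sum stops at $T-1$, and telescoping yields exactly the claimed comparison.

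Your compensation-based alternative is correct as a per-state statement. The extra area is at most $M(\widehat b-b^\circ)^2/2$ per channel, which is even a sharper $O(M/K_t)$. But, as you note yourself, it bounds $\Reg_T(\pi^K)$ by reference compensation weighted by the occupation measure of $\pi^K$, not by $\Reg_T(\pi^\circ)$. Rerunning Appendix~\ref{app:stars-continuous} on the sampled path, with purchase probabilities perturbed by $O(MK_t^{-1/2})$, might recover the theorem's rates. However, you have not checked the terminal second-moment recursion, the accounting potential, or the ceiling estimate under those perturbations. In any case that route cannot produce the proposition's clean inequality. Finally, your remark that near-greediness would follow from concavity of the value function is not justified, and, as shown above, it is not needed.
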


\begin{proof}
\emph{First bound the value of one additional unit under the reference policy.}
Let $J_\tau(y)$ be the expected reward of the fixed reference rule
$\pi^\circ$ starting at $y$ with $\tau$ decision epochs left,
still ending at zero.
The continuation value $J_\tau$ is defined by following this fixed
online policy.
Run this policy from adjacent feasible inventories on the same prices.
Decreasing thresholds imply that the higher process's action is no
larger than the lower one's. A crossing would require a purchase at
$z$ and a sale at $z+1$, hence a price both below and above the same
threshold $\lambda_{\tau-1}^B(z+1)$ when $\tau$ decisions remain.
Holding at ties prevents this too. Walls
remove outward actions; if the upper state is on the terminal
boundary, it must sell and the lower state cannot buy. The distance
therefore stays one or becomes zero. Until coalescence the actions
agree; at coalescence the upper action is one less. Both end at zero,
so coalescence occurs, and their reward difference is exactly the
price at that epoch. Consequently
\begin{align}
 0\le J_\tau(y+1)-J_\tau(y)\le1,\qquad
 |\mathcal Q_\tau(y,p,x)-\mathcal Q_\tau(y,p,z)|\le|x-z|,
 \label{eq:stars-online-lipschitz}
\end{align}
where $\mathcal Q_\tau(y,p,x)=-px+J_{\tau-1}(y+x)$. The second
inequality follows by summing the first between feasible residual
inventories; the reward difference lies between $-p(x-z)$ and
$(1-p)(x-z)$ when $x>z$. We use $\mathcal Q$ to evaluate a chosen
action followed by the fixed reference rule.

\emph{Next compare the actions at the sampled policy's current inventory.}
Condition on its pre-price history, including $y$ and $K=K_t$, and
write $\tau=\tau_t$ for the number of remaining decisions.
Let $X$ be the reference action at that state and $\widehat X$ the
\textsf{STARS} action. Shared scenarios keep the two empirical thresholds
ordered. Each is an average of $K$ variables in $[0,1]$, so its
variance is at most $1/(4K)$ and its mean absolute error at most
$1/(2\sqrt K)$ by Cauchy--Schwarz. In the next display,
$\lambda_b=\lambda_{\tau-1}^B(y+1)$ and
$\lambda_s=\lambda_{\tau-1}^B(y)$ are the existing population thresholds,
and their hats denote the corresponding shared-scenario sample means.
Independence of the current price
and the $M$-Lipschitz CDF give
\begin{align}
 \E[|X-\widehat X|\mid\text{past}]
 &\stackrel{(a)}{\le}\E[|F(\lambda_b)-F(\widehat \lambda_b)|
                  +|F(\lambda_s)-F(\widehat \lambda_s)|\mid\text{past}]\notag\\
 &\stackrel{(b)}{\le}M\E[|\lambda_b-\widehat\lambda_b|
                    +|\lambda_s-\widehat\lambda_s|\mid\text{past}]
 \stackrel{(c)}{\le}M/\sqrt K.\label{eq:stars-sample-actions}
\end{align}
where (a) integrates the buy and sell disagreement intervals using
independence of the current price and scenarios, (b) applies the
$M$-Lipschitz property of $F$, and (c) uses the two mean absolute
error bounds $1/(2\sqrt K)$ proved above.
Missing channels have zero error, forced actions agree, and ties are
null under the continuous price law.

\emph{Finally sum the reward differences along the sampled path.}
Using the fixed-policy equation $J_\tau(y)=\E_p\mathcal Q_\tau(y,p,X)$, we obtain
\begin{align*}
 &\left|J_\tau(y)-
 \E[-p_t\widehat X+J_{\tau-1}(y+\widehat X)\mid\text{past}]\right|\\
 &\qquad\stackrel{(a)}{\le}
 \E[|\mathcal Q_\tau(y,p_t,X)-\mathcal Q_\tau(y,p_t,\widehat X)|
       \mid\text{past}]\\
 &\qquad\stackrel{(b)}{\le}\E[|X-\widehat X|\mid\text{past}]
 \stackrel{(c)}{\le}M K_t^{-1/2}.
\end{align*}
where (a) uses the fixed-policy equation and
$|\E Z|\le\E|Z|$, (b) applies \eqref{eq:stars-online-lipschitz},
and (c) applies \eqref{eq:stars-sample-actions} at the empirical state.
At the last epoch, the terminal target forces the same action for
both policies, so the terminal summand is zero.
Take expectations and sum along the empirical path. The $J$ terms
telescope to $J_T(0)$ since $J_0(0)=0$. This bounds the difference
of expected online rewards; subtracting from the common hindsight
benchmark proves the proposition. The asserted schedule uses
$\sum_{\tau=2}^T1/\tau\le\log T$.
\end{proof}

The Lipschitz assumption bounds the probability that sampling changes
an action. At an atom, this probability can remain large even when the
threshold estimate is accurate. A change in action can nevertheless
have a small cost if its mean advantage is close to zero. We use this
observation to bound compensation directly, without a continuity
assumption.

\begin{lemma}
\label{lem:sampled-compensation}
Let $F$ be any law on $[0,1]$. Conditional on the pre-price history,
fix a feasible inventory $y$, a continuation length $\tau=\tau_t-1=T-t$, and an integer
$K=K_t\ge1$ chosen from that history. The $K$ shared scenario futures
are i.i.d., independent of the current price and actual future.
Let $\widehat{\partial\bar R}_{\tau,K}^B(y)$ be the expected marginal
compensation of the resulting \textup{\textsf{STARS}} action, averaging over all three
sources of randomness. Let $\partial\bar R_\tau^B(y)$ be the local
compensation of the population-threshold reference rule, as in
Proposition~\ref{prop:stars-compensation}. There is an absolute
constant $C<\infty$ such that
$\widehat{\partial\bar R}_{\tau,K}^B(y)
\le2\partial\bar R_\tau^B(y)+C/K$.
The assertion includes atoms and empirical or population ties.
Forced actions have zero compensation.
\end{lemma}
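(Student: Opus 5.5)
The plan is to work channel by channel, using the additive pathwise charge \eqref{eq:stars-disagreement} with the hold action feasible; forced actions cost zero for both rules. Fix the pre-price history, the inventory $y$ and $K$. Consider the buy channel, with realized marginal $W=\Delta V_\tau^B(y+1;\mathbf p_{>t})$, mean $\lambda=\E W$, and estimate $\widehat\lambda$, the average of $K$ i.i.d.\ copies $W_1,\ldots,W_K$ of $W$. Here $W$, $\widehat\lambda$ and $p_t$ are mutually independent. Proposition~\ref{prop:main-concavity} keeps the empirical thresholds ordered, so the \STARS{} action is obtained channelwise: it buys iff $p<\widehat\lambda(y+1)$ and sells iff $p>\widehat\lambda(y)$. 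The sell channel is handled symmetrically, so it suffices to prove, for one channel, that the sampled charge is at most twice the reference charge plus $3/K$; summing the two channels then gives the lemma with $C=6$.

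\emph{Step 1: exact comparison at a fixed current price.} Fix $p=p_t$. The hindsight buy coefficient is $(W-p)^+$. The sampled rule's coefficient is $(W-p)\one\{p<\widehat\lambda\}$, with empirical ties treated as holds. Averaging over the independent actual future $W$ with $p$ and $\widehat\lambda$ held fixed, the sampled charge becomes $\E(W-p)^+-(\lambda-p)\one\{p<\widehat\lambda\}$. The reference charge is $g(p):=\E(W-p)^+-(\lambda-p)^+=A\wedge D$, where $A=\E(W-p)^+$ and $D=\E(p-W)^+$; this is the hinge identity behind \eqref{eq:main-atomic-hinge}. The excess of sampled over reference is therefore
\[
 (\lambda-p)^+-(\lambda-p)\one\{p<\widehat\lambda\}=|\lambda-p|\,\one\{\text{$\widehat\lambda$ and $\lambda$ fall on different sides of $p$}\},
\]
with the tie conventions of \eqref{eq:main-area}. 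At a population tie $\lambda=p$ the excess is zero. This step is purely algebraic and makes no continuity assumption, so it covers atoms.

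\emph{Step 2: the key estimate.} I will show that for each fixed $p$,
\[
 |\lambda-p|\,\Pp(\text{disagreement})\le g(p)+3/K .
\]
Put $X=W-p\in[-1,1]$, so that $m:=\E X=A-D$. Assume $m>0$; the case $m<0$ is symmetric with the roles of $A$ and $D$ exchanged. Disagreement means $\bar X_K:=\widehat\lambda-p\le0$. Since $|X|\le1$, we have $\Var X\le\E X^2\le\E|X|=A+D=m+2D$, and Chebyshev gives $\Pp(\bar X_K\le0)\le(m+2D)/(Km^2)$. Split into two cases. If $D\ge m$, bound the probability by one, which gives $m\Pp\le m\le D=g(p)$, because $g=A\wedge D=D$ when $m>0$. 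If $D<m$, then $m\Pp\le m\min\{1,3/(Km)\}\le3/K$. Integrating over $p\sim F$, with atoms allowed, gives a channel excess of at most $\E_p g(p)+3/K$. Hence the sampled channel charge is at most twice the reference channel charge plus $3/K$.

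\emph{Step 3: boundary cases and summation.} Missing channels contribute zero to both rules, and forced actions agree. A buy-versus-sell disagreement is charged once in each channel by \eqref{eq:stars-disagreement}, so summing the two channel bounds gives $\widehat{\partial\bar R}_{\tau,K}^B(y)\le2\partial\bar R_\tau^B(y)+6/K$. The main obstacle I expect is Step 2, specifically obtaining a bound that holds uniformly in how small the mean advantage $m$ is; without a density one cannot bound the disagreement probability itself. The case split resolves this. When $D\ge m$, the reference rule already pays a comparable amount. Otherwise the variance is $O(m)$, and the $1/m$ loss from Chebyshev is cancelled by the factor $m$ in the cost. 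The remaining care is in checking the endpoint and tie conventions at atoms of $\widehat\lambda$, $\lambda$ and $p$, which only affect null or zero-cost terms.
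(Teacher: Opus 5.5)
Your proposal is correct and follows the paper's proof essentially step for step. You use the same channelwise identity, in which the excess over the reference charge $A\wedge D$ equals $|m|$ times the wrong-sign probability, and the same case split on $m\le D$ versus $m>D$ via $\Var(X)\le m+2D$; the only difference is that you bound $m\Pp(\bar X_K\le 0)\le \Var(X)/(Km)\le 3/K$ by Chebyshev, where the paper uses Bernstein to get $m e^{-3Km/20}\le 20/(3eK)$, and both give an absolute constant.
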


\begin{proof}
\emph{First separate the sampling error from the reference compensation.}
Condition also on the current price $p$. For an existing buy channel,
put $Z=W-p$, where $W=\Delta V_\tau^B(y+1)$ is the actual future
marginal. Its $K$ scenario copies are $Z_1,\ldots,Z_K$; write
$\bar Z=K^{-1}\sum_{j=1}^K Z_j$,
$A=\E Z^+$, $D=\E(-Z)^+$, and $m=\E Z=A-D$.
These expectations are conditional on the fixed history and price.
Holding at ties makes the buy indicator $\one\{\bar Z>0\}$.
The actual future is independent of the scenarios, so its expected
compensation in this channel is
\begin{align*}
 L_K
 &\stackrel{(a)}{=}
   \E[Z^+-Z\one\{\bar Z>0\}]
 \stackrel{(b)}{=} A-m\Pp(\bar Z>0).
\end{align*}
Here (a) subtracts the chosen channel's gain from hindsight's positive
gain, and (b) uses independence of the actual future and scenarios.
If $m>0$, this equals $D+m\Pp(\bar Z\le0)$; if $m<0$, it equals
$A+(-m)\Pp(\bar Z>0)$; if $m=0$, it equals $A=D$.
The reference channel cost is $A\wedge D$. Thus the additional
compensation is the absolute mean advantage multiplied by the
probability that the scenarios give it the wrong sign.
In particular, a population tie adds no expected compensation,
regardless of the sampled action.

\emph{Next bound the cost of a wrong-sign estimate.}
Suppose first that $m>0$, and put $v=\Var(Z)$.
Since $Z\in[-p,1-p]\subset[-1,1]$, we have
$|Z-m|\le1$ and
$v\le\E Z^2\le\E|Z|=m+2D$.
If $m\le D$, the extra term above is at most $D$.
If $m>D$, then $v\le3m$, and the one-sided Bernstein inequality gives
\begin{align*}
 m\Pp(\bar Z\le0)
 &\stackrel{(a)}{\le}
 m\exp\!\left(-\frac{Km^2}{2(v+m/3)}\right)
 \stackrel{(b)}{\le}m e^{-3Km/20}
 \stackrel{(c)}{\le}\frac{20}{3eK}.
\end{align*}
Step (a) applies Bernstein to the independent centered variables
$Z_j-m$, whose absolute values are at most one and whose variances
equal $v$. Step (b) substitutes $v\le3m$. Step (c) maximizes
$u e^{-3Ku/20}$ over $u\ge0$.
For $v=0$ the wrong-sign probability is zero, so the same bound holds.
If $m<0$, apply this argument to $-Z$: its negative-part expectation
is $A$, and its nonpositive sample-mean event contains
$\{\bar Z>0\}$. Thus in all cases
$L_K\le2(A\wedge D)+20/(3eK)$.

For a sell channel take $Z=p-\Delta V_\tau^B(y)$ instead. Selling again
occurs exactly when its sample mean is positive, and the same proof
applies. Shared scenarios preserve the ordering of the two marginal
thresholds, so the mutually exclusive channel indicators describe a
feasible maximizing action.
Integrating the fixed-price inequalities over $F$, their reference
costs sum to $\partial\bar R_\tau^B(y)$ by
Proposition~\ref{prop:stars-compensation}. Summing the at most two
existing channels gives the lemma, for example with $C=40/(3e)$.
Unavailable channels are omitted. At a compulsory-action state both
policies and hindsight take that action, so compensation is zero;
this includes the terminal epoch $\tau=0$.
\end{proof}

Along the actual \textsf{STARS} trajectory, the tower property and the pathwise
compensation identity therefore give
\begin{align*}
 \Reg_T(\pi^K)
 \le 2\E\sum_{t=1}^T
       \partial\bar R_{\tau_t-1}^B(y_{t-1}^{\pi^K})
       +C\E\sum_{t=1}^T K_t^{-1}.
\end{align*}
For $K_t\ge(T-t+1)^2$, the last sum is at most
$1+\int_1^\infty u^{-2}\,du=2$.
The remaining task is to bound the first sum along the sampled
inventory path. Appendix~\ref{app:finite-atomic} provides this
occupation bound for finite atomic prices.

\section{Finite scenarios for atomic prices}
\label{app:finite-atomic}

We now combine the local compensation bounds with control of the
inventory path to prove the atomic case of
Theorem~\ref{thm:stars-upper}. Throughout this appendix, $y_t,x_t$ are the
\emph{sampled} \textsf{STARS} trajectory, $\tau_t=T-t+1$ is the number
of remaining decisions including the current one, and
$K_t\ge\tau_t^2$ is measurable before the current price arrives.
Thus the continuation after the current action contains $\tau_t-1$
prices. We use $\tau$ for a generic continuation length and substitute
$\tau=\tau_t-1$ when applying a continuation bound at time $t$.
All constants may depend on the fixed finite law and $\rho$, but not
on the sampling schedule or $T$.
We take $T\ge2$ below; at $T=1$, the only feasible action is to hold
and the regret is zero.

Lemma~\ref{lem:sampled-compensation} bounds the loss from one decision:
the conditional charge is at most
$2\partial\bar R_{\tau_t-1}^B(y_{t-1})+C/K_t$. Since
$\sum_tK_t^{-1}\le2$, we need only bound the reference charge
\emph{at these sampled states}. We first handle a median gap, then a
median strictly inside an atom.

\subsection{A median gap: inventory protection under sampled thresholds}

Use $a^-,a^+,\mu^+,r,s$ from Appendix~\ref{app:stars-atomic}.
When $\mu^+<1/2$, the reference local bound there has profile
$f_{\bar\tau}(j)=\bigl(r^j\wedge((j+1)/\sqrt{\bar\tau})\bigr)+s^j$ on a block
$\bar\tau\le\tau_t-1<2\bar\tau$. Thus $\bar\tau$ specifies the
remaining-time scale of the block. We show that the same block occupation bound holds
for sampled \textsf{STARS}. To do so, we identify inventory regions where the
empirical thresholds preserve the direction of the reference
decisions with high probability. We will establish the strict
inequality above $a^-$ on the region defined below, which is large
enough for the occupation bound.

Fix $c_*=7/2$ and, for sufficiently large $\tau$, put
$H_{\tau}=\lfloor\sqrt{c_*\tau\log \tau}\rfloor$ and
$J_{\tau}=\lceil A\log \tau\rceil$, with a fixed sufficiently large $A$.
Take a fixed cutoff $\tau_0\ge2$ large enough that
$H_{\tau}+2<\bigl(\tau\wedge(B/2)\bigr)$ for $\tau\ge \tau_0$; this is possible uniformly in
$T\ge \tau$, since $B\ge\rho \tau$. Constants below cover smaller $\tau$.

\Needspace{15\baselineskip}
\begin{lemma}
\label{lem:sampled-atomic-regions}
Fix the finite atomic law $F$ with positive median gap and
$\rho\in(0,1)$ as above. There are constants $A,\tau_0,c,C>0$,
depending only on $F$ and $\rho$, such that, with
$\delta_{\tau}=C\exp(-c \tau^{1/5})$, the following bounds hold uniformly
over existing indices, $K\ge(\tau+1)^2$, and $\tau\ge \tau_0$.
They also hold conditionally at the current inventory, because the
scenarios are fresh and independent of the past.
\begin{enumerate}[label=$(\alph*)$]
\item If $y\ge J_{\tau}$, then
$\Pp(\widehat\lambda_{\tau}^B(y)\ge a^+)\le\delta_{\tau}$.
\item If $y\le H_{\tau}$, then
$\Pp(\widehat\lambda_{\tau}^B(y)\le a^-)\le\delta_{\tau}$.
\item If $\mu^+<1/2$, the logarithmic buffer $h_{\bar\tau}$ in
Lemma~\ref{lem:stars-at-thresholds} can be chosen so that, uniformly
for $\bar\tau\le \tau<2\bar\tau$ and $y\le h_{\bar\tau}+1$,
$\Pp(\widehat\lambda_{\tau}^B(y)\le a^+)\le C e^{-c\bar\tau}$.
\end{enumerate}
\end{lemma}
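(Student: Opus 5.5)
The plan is to write each empirical threshold $\widehat\lambda_\tau^B(y)$ as an average of $K$ i.i.d.\ copies of the marginal $W=\Delta V_\tau^B(y)$, which lies in $[0,1]$ by Lemma~\ref{lem:main-marginal-range}. Each part then reduces to two ingredients. The first is a population statement: the mean $\lambda_\tau^B(y)$ is separated from the relevant gap edge by a margin $m_\tau$. The second is Hoeffding's inequality for the sample mean, $\Pp(|\widehat\lambda-\lambda|\ge m_\tau)\le 2e^{-2Km_\tau^2}$. Since $K\ge(\tau+1)^2$, any margin $m_\tau\ge\tau^{-9/10}$ already gives $e^{-2\tau^{1/5}}$, which matches the form $\delta_\tau=C\exp(-c\tau^{1/5})$. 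Because the scenarios are fresh and independent of the past, the same bounds hold conditionally at the current inventory. The work is therefore to establish population margins of at least $\tau^{-9/10}$, uniformly over the stated index ranges.

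\emph{Part (a).} For $y\ge J_\tau=\lceil A\log\tau\rceil$, I write $\lambda_\tau^B(y)-a^+\le \E(W-a^+)^+-\E(a^+-W)^+$. The first term is at most $r^y$ by Lemma~\ref{lem:stars-hitting} (upward probability $q_*<1/2$). Capping only lowers marginals, by Lemma~\ref{lem:stars-capacity}(c), and the case $\mu^+=1/2$ gives no mass above $a^+$ at all. The second term is at least $\Delta\,\Pp(W\le a^-)$. If $y\le\sqrt\tau/4$, Lemma~\ref{lem:stars-fair} bounds this below by $c\Delta y/\sqrt\tau$, up to $\epsT$. Choosing $A$ so that $r^{A\log\tau}\le\tau^{-1}$ then gives a margin of order $\log\tau/\sqrt\tau$. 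For larger $y$ the fair survival probability is bounded below by a constant, so the margin is again at least of order $\tau^{-1/2}$. By monotonicity of the marginals (Lemma~\ref{lem:stars-concavity}), indices $y>\tau\wedge B$ do not arise.

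\emph{Part (b).} For $y\le H_\tau=\lfloor\sqrt{c_*\tau\log\tau}\rfloor$, I need $\lambda_\tau^B(y)-a^-$ to be bounded below. I write $\lambda-a^-\ge\Delta\,\Pp(W\ge a^+)-\E(a^--W)^+$. The atoms strictly below $a^-$ have walks with positive drift, so as in the proof of Lemma~\ref{lem:stars-at-local} the probability of $W<a^-$ is at most $C\,\Pp_{1/2+w_j}(\sigma_y>\tau)$. Since $y\le H_\tau=o(\tau)$, this is $e^{-c\tau}$ by \eqref{eq:stars-failure}. Meanwhile $\Pp(W\ge a^+)=\Pp_{1/2}(\sigma_y\le\tau)$, up to $\epsT$. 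By the reflection identity this equals $2\Pp(S_\tau\ge y)$ minus a lattice correction, which is at least of order $e^{-y^2/(2\tau)}\sqrt{\tau}/y$. With $c_*=7/2$ this is at least about $\tau^{-7/4}/\sqrt{\log\tau}$, which is too small for the $\tau^{-9/10}$ requirement.

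This is the main obstacle. I would resolve it by using the variance in Hoeffding instead of the range, that is, Bernstein's inequality. The event $\widehat\lambda\le a^-$ requires the sample mean to fall below its mean by $m_\tau$. Since $W-a^-\ge -e^{-c\tau}$ except on an exponentially small event, the sample mean is at least $(1/K)\sum_j\Delta\,\one\{W_j\ge a^+\}$ minus a negligible term. So $\widehat\lambda\le a^-$ essentially requires that none, or very few, of the $K$ copies reach $a^+$. That probability is $(1-\theta)^K\le e^{-K\theta}$ with $\theta\gtrsim\tau^{-7/4-o(1)}$, and $K\theta\gtrsim\tau^{1/4-o(1)}\ge\tau^{1/5}$. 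This is exactly why $c_*=7/2<4$ and the $\tau^{1/5}$ exponent were chosen. I would formalize it by splitting the scenarios' contributions below $a^-$, which are all exponentially small by a union bound over $K\le$ poly$(\tau)$, from the Binomial$(K,\theta)$ count of copies reaching $a^+$.

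\emph{Part (c).} This follows from Lemma~\ref{lem:stars-at-thresholds}(a): for $y\le h_{\bar\tau}+1$ the margin satisfies $\lambda_\tau^B(y)-a^+\ge c\log\bar\tau/\sqrt{\bar\tau}$, and Hoeffding then gives $2e^{-2K c^2\log^2\bar\tau/\bar\tau}\le e^{-c'\bar\tau\log^2\bar\tau}$ since $K\ge\bar\tau^2$. This is stronger than the claimed $Ce^{-c\bar\tau}$. Finally, I would enlarge $\tau_0$ so that $\epsT$ and the finitely many small cases are absorbed into the constants.
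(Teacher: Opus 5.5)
Your overall route is the paper's: Hoeffding against a margin of order $\log\tau/\sqrt{\tau}$ in parts (a) and (c), and a variance-sensitive bound in part (b). In part (b) the margin of $\lambda_\tau^B(y)$ above $a^-$ is only of order $\tau^{-7/4}/\sqrt{\log\tau}$, so plain Hoeffding fails. The gap is in how you formalize (b). You require that none of the $K$ scenario marginals falls below $a^-$, and you pay for this with a union bound, i.e.\ a term $K\cdot Ce^{-c\tau}$, justified by $K\le\mathrm{poly}(\tau)$. That is not a hypothesis. The lemma must hold uniformly over all $K\ge(\tau+1)^2$, and it is later applied with arbitrary predictable $K_t\ge\tau_t^2$. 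For $K\gg e^{c\tau}$ your estimate is therefore vacuous. You also cannot simply reduce to $K=(\tau+1)^2$, because $\Pp(\widehat\lambda_\tau^B(y)\le a^-)$ need not be monotone in $K$.

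The repair is to apply Bernstein directly to $X=W-a^-$, which is what the paper does.
\begin{itemize}
\item Put $q=\Pp(W>a^-)$ and $d=\Pp(W<a^-)\le Ce^{-c\tau}$.
\item Marginals are support-valued, so $X>0$ forces $X\ge\Delta$. Hence $m=\E X\ge\Delta q-d\ge\Delta q/2$ beyond a fixed cutoff.
\item Since $|X|\le1$, we get $\Var(X)\le\E X^2\le\Pp(X\neq0)\le q+d\le 2q$ and $m\le q$. Also $|X-m|\le1$, because $X$ ranges over an interval of length one.
\item Bernstein then gives $\Pp(\widehat\lambda_\tau^B(y)\le a^-)\le\exp\{-Km^2/(2(\Var(X)+m/3))\}\le e^{-cKq}$.
\end{itemize}
This bound is decreasing in $K$. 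With $K\ge(\tau+1)^2$ and $q\ge c\tau^{-7/4}/\sqrt{\log\tau}$, it is at most $\exp(-c\tau^{1/4}/\sqrt{\log\tau})$. Equivalently, you can keep your union bound for $K\le e^{c\tau/2}$ and use Hoeffding with margin $\Delta q/2$ when $K>e^{c\tau/2}$.

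Separately, you assert the lower bound $\Pp_{1/2}(\sigma_y\le\tau)\ge c\tau^{-c_*/2}/\sqrt{\log\tau}$ for $y\le H_\tau$ without proof. This is a moderate-deviation estimate that Lemma~\ref{lem:stars-fair} does not supply, since its comparison $\asymp(y/\sqrt{\tau+1})\wedge1$ concerns the survival probability. The paper proves it via ratios of binomial masses out to distance $O(\sqrt{\tau\log\tau})$. You should include such an argument rather than assert it.
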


\begin{proof}
\emph{Above the buffer, sampling rarely prevents a sale.}
For $W=\Delta V_{\tau}^B(y)$, the gap and fair-walk representation give
\begin{align*}
 a^+-\E W
 &\ge\Delta\Pp(W\le a^-)-\E(W-a^+)^+ \ge c\bigl((y/\sqrt \tau)\wedge1\bigr)-r^y.
\end{align*}
The second line uses Lemma~\ref{lem:stars-fair}; capping the queue
only increases its failure probability. If $\mu^+=1/2$, the upward
hinge is zero and the $r^y$ term is omitted. Choose $A$ so that,
at $y=J_{\tau}$, the right side is at least $c\log \tau/\sqrt \tau$ for
large $\tau$. Monotonicity in $y$ gives the same lower bound for larger
indices. Hoeffding for $K$ independent variables in $[0,1]$ then
bounds the first probability by $\exp(-c \tau\log^2 \tau)$.

\emph{Below $H_{\tau}$, the scenarios detect the value of buying.}
The fair walk has the moderate-deviation bound
\begin{align*}
 \Pp_{1/2}(\sigma_y\le \tau)
 \ge c\,\tau^{-c_*/2}/\sqrt{\log \tau},\qquad y\le H_{\tau}.
\end{align*}
We prove this estimate directly, keeping track of its polynomial
exponent because it determines the required sample size.
For even $\tau=2m$ and parity-admissible $z=2\ell$, the ratio of the
binomial mass at $z$ to its central mass is
$\prod_{i=1}^\ell(m-i+1)/(m+i)$. Using
$\log(1-u)\ge-u-u^2$ for $0\le u\le1/2$ and
$\log(1+u)\le u$ gives a log ratio at least
$-\ell^2/m-C\ell^3/m^2$.
For odd $\tau=2m+1$ and $z=2\ell+1$, the corresponding product is
$\prod_{i=1}^\ell(m-i+1)/(m+i+1)$, with log ratio at least
$-\ell(\ell+1)/m-C(\ell+1)^3/m^2$.
Thus in either parity, uniformly for $z=O(\sqrt{\tau\log \tau})$,
the mass at $z$ is at least
$c \tau^{-1/2}\exp(-z^2/(2\tau)-o(1))$; the central mass is bounded
below by $c/\sqrt \tau$ by the binomial estimate in
Lemma~\ref{lem:stars-fair}.
There are at least $c\sqrt{\tau/\log \tau}$ admissible positions between
$H_{\tau}$ and $H_{\tau}+\lfloor\sqrt{\tau/\log \tau}\rfloor+2$.
For each, $z^2/(2\tau)\le(c_*/2)\log \tau+C$. Summing their masses
and using $\{S_{\tau}\ge y\}\subset\{\sigma_y\le \tau\}$ proves the bound.

For $y\le H_{\tau}$, the capacity comparison and the positive drift of
the count immediately below $a^-$ now give
$q:=\Pp(W>a^-)\ge c \tau^{-7/4}/\sqrt{\log \tau}$ and
$d:=\Pp(W<a^-)\le C e^{-c\tau}$.
The capacity error $CT^2e^{-c_\rho T}$ is absorbed into
$Ce^{-c'\tau}$ uniformly in $T\ge \tau$.
Put $X=W-a^-$. Its positive values are at least $\Delta$, so
$m:=\E X\ge\Delta q-d\ge\Delta q/2$ beyond a fixed cutoff.
Also $\Var(X)\le\E X^2\le q+d\le2q$ and $m\le q$.
Bernstein, with centered increments bounded by one, yields
\begin{align*}
 \Pp(\widehat\lambda_{\tau}^B(y)\le a^-)
 &\le\exp\!\left(-\frac{Km^2}{2(\Var(X)+m/3)}\right)
 \le e^{-cKq}
 \le\exp\!\left(-\frac{c \tau^{1/4}}{\sqrt{\log \tau}}\right).
\end{align*}
This is at most $Ce^{-c'\tau^{1/5}}$ after changing constants. Since
the preceding estimate decreases with $K$, the bound holds for every
$K\ge(\tau+1)^2$.

\emph{Below the logarithmic buffer, the upper median atom induces a
purchase.}
The proof of Lemma~\ref{lem:stars-at-thresholds} gives
$\lambda_{\tau}^B(y)-a^+\ge c_1r^y-C_2y/\sqrt \tau-C_3\varepsilon_T$
for $y=O(\log \tau)$. Choose the fixed constant in $h_{\bar\tau}$ large enough
that at $y=h_{\bar\tau}+1$ this is at least $c\log \bar\tau/\sqrt{\bar\tau}$.
Concavity extends the margin to smaller indices. Hoeffding and
$K\ge \bar\tau^2$ give the last claim.
\end{proof}

\paragraph{Use the sampled decision probabilities to bound visits.}
Suppose $\mu^+<1/2$. On a block $I$ with $\bar\tau\le\tau_t-1<2\bar\tau$, put
$h=h_{\bar\tau}$ and $d_{\bar\tau}=Ce^{-c\bar\tau}$. Below $h$, sampled purchases have
conditional probability at least $p_*-d_{\bar\tau}$, and sales from the next
state have probability at most $q_*+d_{\bar\tau}$. This follows by conditioning
on the fresh scenarios in Lemma~\ref{lem:sampled-atomic-regions} and
then on the independent current price. At $y=h$, the conditional
expected upward correction is at least $2\mu^+(1-d_{\bar\tau})$.

Let $\nu_j,A_j$ be the visits and upward corrections defined in
Appendix~\ref{app:stars-atomic}, now for the sampled trajectory.
Crossing cancellation gives
$\nu_j\le r_{\bar\tau}\nu_{j+1}+1/(p_*-d_{\bar\tau})$ below $h$, where
$r_{\bar\tau}=(q_*+d_{\bar\tau})/(p_*-d_{\bar\tau})=r+O(d_{\bar\tau})$.
Since $h=O(\log \bar\tau)$, $r_{\bar\tau}^\ell\le C r^\ell$ for $0\le\ell\le h$,
and $r_{\bar\tau}$ is bounded away from one. Thus
$\nu_j\le C r^{h-j}\nu_h+C$.
Above $h$, the pointwise correction inequalities giving
\eqref{eq:stars-at-correctioncross} are unchanged by sampling.
In particular, $\nu_j\le\nu_h+2(j-h)+4\sum_{i=h}^jA_i$.

Multiply these inequalities by $f_{\bar\tau}(j)$ and sum. Below $h$, use
$f_{\bar\tau}(j)\le(j+1)/\sqrt{\bar\tau}+s^j$ and the geometric sums
$\sum_{j\le h}r^{h-j}(j+1)\le(h+1)/(1-r)$ and
$\sum_{j\le h}r^{h-j}s^j\le r^h/(1-s/r)$.
Above $h$, use $f_{\bar\tau}(j)\le2r^j$ and interchange the finite sums.
These steps give
\begin{align*}
 \sum_{t\in I}\E f_{\bar\tau}(y_{t-1})
 &\le C+C\left((h+1)/\sqrt{\bar\tau}+r^h\right)\nu_h
       +Cr^h\sum_iA_i\\
 &\stackrel{(a)}{\le}
 C+C\frac{\log \bar\tau}{\sqrt{\bar\tau}}\E\sum_{t\in I}u_t.
\end{align*}
Step (a) uses $2\mu^+(1-d_{\bar\tau})\nu_h\le\sum_iA_i$ and the buffer
estimates $h\asymp\log \bar\tau$, $r^h\asymp\log \bar\tau/\sqrt{\bar\tau}$.
The constant term includes the uniformly bounded
$(h+1)^2/\sqrt{\bar\tau}$. This is the required sampled block bound.

\subsection{Bounding the accumulated corrections}

The block estimate reduces our task to controlling the upward
corrections to the static median-sign rule, whose proposed trades form
a symmetric random walk. We separate the
corrections that occur in the protected inventory regions from the
rare exceptions. This will allow us to compare their accumulated
number with the fair walk's running minimum.

Write $\xi_t=\one\{p_t\le a^-\}-\one\{p_t\ge a^+\}$,
$S_t=\sum_{i\le t}\xi_i$, and
$u_t=(x_t-\xi_t)^+$, $\ell_t=(\xi_t-x_t)^+$.
The signs remain i.i.d.\ fair and independent of the pre-price history.
With $U_t=\sum_{i\le t}u_i$, $L_t=\sum_{i\le t}\ell_i$, the exact
identity is $y_t=S_t+U_t-L_t$.

For $\tau_t-1\ge \tau_0$, call an upward correction \emph{exceptional} if
$y_{t-1}>J_{\tau_t-1}$, and a downward correction exceptional if
$y_{t-1}+1\le H_{\tau_t-1}$.
Declare all corrections with $\tau_t-1<\tau_0$ exceptional. For $\tau\ge \tau_0$,
Lemma~\ref{lem:sampled-atomic-regions} bounds each exceptional event's
conditional probability by $\delta_{\tau}$. Indeed, an upper-half price
must be sold when the sampled sell threshold is below $a^+$, and a
lower-half price must be bought when the sampled buy threshold is
above $a^-$ and $y+1\le H_{\tau}$; all those purchases are feasible.
Set $\delta_{\tau}=1$ below the cutoff. Consequently
$\sum_{\tau\ge0}(\tau+1)^q\delta_{\tau}<\infty$ for every fixed $q$.

Let $U_t^{\rm bad}$ count exceptional upward corrections with their
magnitudes. Then $\E U_T^{\rm bad}\le2\sum_{\tau}\delta_{\tau}\le C$.
For each $t$, after the most recent nonexceptional upward correction
before or at $t$, all upward corrections through $t$ are included in
$U^{\rm bad}$. The inventory identity and
monotonicity of $L$ therefore give the pathwise bound
\begin{align*}
 \max_{t\le T}y_t\le J_T+2+2\max_{t\le T}|S_t|+U_T^{\rm bad}.
\end{align*}
If there is no such correction, the bound follows directly from
$y_t\le S_t+U_t^{\rm bad}$. Since $J_T=O(\log T)$ and $B\ge\rho T$,
maximal Hoeffding and Markov's inequality imply
\begin{align*}
 T\Pp(\max_{t\le T}y_t>B/2)
 \le CT e^{-cT}+C(T/B)\E U_T^{\rm bad}\le C.
\end{align*}
Thus the contribution of visits above half capacity is bounded by a
constant when summed over the horizon.

\paragraph{Separate runs by downward corrections.}
Put $M_t=\max_{j\le t}(-S_j)$ and $D_t=U_t-M_t$.
Nonnegative inventory and nondecreasing $U$ imply $D_t\ge0$.
We show $\E D_t\le C\log T$ for every deterministic $t\le T$.
Before the first downward correction, the last nonexceptional upward
correction and the inventory identity bound the total upward
corrections by $M_t+J_T+2+U_t^{\rm bad}$.

Every subsequent interval begins just after a downward-correction
epoch $j$ and ends at the next such epoch (or the horizon). Within
that interval there are no further downward corrections. If $i$ is
its last nonexceptional upward-correction epoch before $t$, then
$y_{i-1}\le J_{\tau_i-1}\le J_{\tau_j-1}$. Put
$Z_j=\max_{0\le r\le\tau_j-1}(-S_{j+r}+S_j)$.
The inventory identity then gives
\begin{align*}
 \text{nonexceptional upward corrections in this interval}
 \le\left(J_{\tau_j-1}+2-y_j+Z_j\right)^+.
\end{align*}
Exceptional upward corrections before $i$ enter with a negative sign
in this upper bound and may be dropped; later ones are charged to
$U^{\rm bad}$. If no such $i$ exists, the interval contributes zero.

For a nonexceptional downward correction, $y_j\ge H_{\tau_j-1}-2$.
Conditional on $\mathcal F_j$, the future fair signs give
$\Pp(Z_j>z\mid\mathcal F_j)\le e^{-z^2/(2(\tau_j-1))}$ for $z\ge0$
by the exponential-martingale maximal inequality.
Take $a_{\tau}=H_{\tau}-J_{\tau}-4$. For large $\tau$, $a_{\tau}\ge\sqrt \tau$ and
$a_{\tau}^2/(2\tau)\ge(13/8)\log \tau$, since
$H_{\tau}^2/(2\tau)\sim(7/4)\log \tau$ and $J_{\tau}=O(\log \tau)$.
Tail integration therefore gives
\begin{align*}
 \E[(Z_j-a_{\tau_j-1})^+\mid\mathcal F_j]
 &\le\int_{a_{\tau_j-1}}^\infty e^{-z^2/(2(\tau_j-1))}\,dz
 \le\frac{\tau_j-1}{a_{\tau_j-1}}e^{-a_{\tau_j-1}^2/(2(\tau_j-1))}
 \le C (\tau_j-1)^{-9/8}.
\end{align*}
The middle inequality follows by replacing $1$ in the integrand by
$z/a_{\tau_j-1}$ and integrating the derivative of the exponential.
The estimate bounds the upward corrections over the entire run, and
its sum over $j$ is finite. A run attached to an exceptional
downward epoch instead costs at most $2(\tau_j-1)$; its expected contribution
is at most $2(\tau_j-1)\delta_{\tau_j-1}$. Summing the initial interval, all later
intervals, and $U^{\rm bad}$ proves the claimed bound on $\E D_t$.
For each deterministic $j$, we condition on $\mathcal F_j$ and bound
the cost using the entire future fair-walk maximum. Summing these
bounds covers all possible downward-correction epochs.

We now incorporate the compensation weights from the block estimate.
For deterministic nonnegative weights nondecreasing in chronological
time, substitute $U_t=M_t+D_t$ and sum by parts:
\begin{align*}
 \sum_{t=1}^{t_*}w_tu_t
 &=\sum_{t=1}^{t_*}w_t(M_t-M_{t-1})+w_{t_*}D_{t_*}
       -\sum_{t<t_*}(w_{t+1}-w_t)D_t.
\end{align*}
Since $D_t\ge0$, taking expectations gives
\begin{align*}
 \E\sum_{t=1}^{t_*}w_tu_t
 &\le\sum_{t=1}^{t_*}w_t\E(M_t-M_{t-1})
       +w_{t_*}\E D_{t_*}.
\end{align*}
The fair-walk record estimate is
\begin{align*}
 \E(M_t-M_{t-1})
 &\stackrel{(a)}{=}\tfrac12\Pp(\max_{j\le t-1}S_j=0)
 \stackrel{(b)}{=}\tfrac12\,2^{-(t-1)}
       \binom{t-1}{\lfloor(t-1)/2\rfloor}
 \stackrel{(c)}{\le}C/\sqrt t.
\end{align*}
Here (a) requires an independent downward sign at the previous
minimum and reverses the preceding signs, (b) is fair reflection,
and (c) is the central-binomial bound in Lemma~\ref{lem:stars-fair}.
If $\tau_t-1\ge \tau_0$ and $w_t\le C\log(\tau_t-1)/\sqrt{\tau_t-1}$,
this estimate and $\sum_{t=1}^{T-1}[t(T-t)]^{-1/2}\le4$ bound the first term by
$C\log T$. The weights are uniformly bounded, so the second term
has the same order. This gives the required weighted-correction
bound, with only a fixed number of final epochs excluded.

\paragraph{Finish the median-gap case.}
For $\mu^+<1/2$, partition $\tau_t-1\ge \tau_0$ into dyadic blocks and use
Lemma~\ref{lem:sampled-compensation}, the local bound
\eqref{eq:stars-at-local}, and the sampled block bound above.
There are $O(\log T)$ blocks. Their correction weights
$\log \bar\tau/\sqrt{\bar\tau}$ are nondecreasing in chronological time beyond
a fixed cutoff and bounded by $C\log(\tau_t-1)/\sqrt{\tau_t-1}$.
The weighted estimate bounds their total cost by $C\log T$.
The terms $\sum_{\tau} e^{-c\tau}$, $T\varepsilon_T$,
$T\Pp(\max y_t>B/2)$, $\sum_tK_t^{-1}$, and the final $\tau_0$ epochs
each cost a constant. Hence $\Reg_T(\pi^K)\le C\log T$.
If $\mu^+=1/2$, the stronger local bound in
Lemma~\ref{lem:stars-at-local} omits $f_{\bar\tau}$ altogether, so the same
ceiling and sampling estimates even give $\Reg_T(\pi^K)\le C$.

\subsection{A strict median atom: protecting logarithmic boundary regions}

When the median lies strictly inside an atom, local compensation
decays exponentially away from either boundary, up to an exponentially
small remaining-horizon term. It therefore suffices
to control occupation in logarithmic-width boundary regions. We show
that sampled \textsf{STARS} has a fixed drift away from each costly boundary
throughout the corresponding region, and then apply crossing balance.

Suppose $F(a^-)<1/2<F(a)$. If prices above $a$ exist, put
$r_+=(1-F(a))/F(a)\in(0,1)$; otherwise omit every lower-wall term.
If prices below $a$ exist, put
$r_-=F(a^-)/(1-F(a^-))\in(0,1)$; otherwise omit every upper-wall term.
The support-sensitive version of
\eqref{eq:stars-at-strictlocal} is
\begin{align*}
 \partial\bar R_{\tau}^B(y)
 \le C\{r_+^y+r_-^{B-y}+e^{-c\tau}\}.
\end{align*}
To verify the exponents, at prices above $a$ the upward hinge has
geometric ratio $(1-F(z_i))/F(z_i)\le r_+$.
At prices below $a$, the two Hoeffding cases in the proof of
\eqref{eq:stars-at-strictlocal} give an exponential-in-$\tau$ term and
an upper-reflection term with ratio
$F(z_i^-)/(1-F(z_i^-))\le r_-$.
For an existing marginal at $\ell\in\{y,y+1\}$, if
$\ell\ge[1-2F(a^-)]\tau/2$, the upward hinge at $a$ is exponentially
small in $\tau$; otherwise its downward hinge is at most
$Ce^{-c\tau}+r_-^{B-\ell}$ by \eqref{eq:stars-at-cappeddrift}.
If $a$ is an extreme price, the hinge minimum there is zero.
Summing the finitely many hinges for the existing marginals at $y$ and $y+1$ proves
the display. Shifting an index changes only constants.

Fix a block $\bar\tau\le\tau_t-1<2\bar\tau$, of length at most $\bar\tau$.
In the following threshold bounds, take $\tau=\tau_t-1$ and
$W=\Delta V_{\tau}^B(y)$ for an existing marginal.
For a required lower-wall term choose
$h_+=\lfloor3\log \bar\tau/(2|\log r_+|)\rfloor$.
For $y\le h_++1$, likelihood reversal for the count above $a$ gives
$\Pp(W>a)\ge c r_+^y$; its reversed walk has positive drift and,
because $y=O(\log \bar\tau)$, hits $y$ by $\tau$ except with exponentially
small probability. The capacity error is also exponentially small.
The count immediately below $a$ has positive drift, so
$\Pp(W<a)\le Ce^{-c\bar\tau}$ at these indices by
\eqref{eq:stars-at-cappeddrift}, using $B\ge\rho \bar\tau$.
The nearest upper support spacing is positive. Consequently the
mean of $W-a$ is at least $c r_+^y$ and its variance at most
$C r_+^y$. Bernstein and $K\ge \bar\tau^2$ show
\begin{align*}
 \Pp(\widehat\lambda_{\tau}^B(y)\le a)
 \le e^{-cK r_+^y}\le Ce^{-c\sqrt{\bar\tau}},\qquad y\le h_++1.
\end{align*}

For the upper wall choose
$h_-=\lfloor3\log \bar\tau/(2|\log r_-|)\rfloor$.
At $B-y\le h_-+1$, the negative hinge has a useful lower bound even
when $\tau$ is close to $y$. The capped count immediately below $a$
has stationary probabilities
$\pi_i=(1-r_-)r_-^{B-i}/(1-r_-^{B+1})$,
$0\le i\le B$. Detailed balance verifies this formula, and the
clipped-update coupling is monotone. Starting the count at zero
therefore makes it stochastically no larger than a stationary copy.
It follows that
$q:=\Pp(W<a)\ge\pi_{y-1}\ge(1-r_-)r_-^{B-y+1}$.
Meanwhile $\Pp(W>a)\le r_+^y\le Ce^{-c\bar\tau}$ if upper prices exist,
and is zero otherwise. The nearest lower support spacing gives
$\E(a-W)\ge c q$ and $\Var(a-W)\le Cq$ for large $\bar\tau$.
Bernstein thus gives
\begin{align*}
 \Pp(\widehat\lambda_{\tau}^B(y)\ge a)
 \le e^{-cKq}\le Ce^{-c\sqrt{\bar\tau}},\qquad B-y\le h_-+1,
\end{align*}
at existing indices. Here stationarity is used to bound the auxiliary
count that represents a hindsight threshold; occupation of the
time-varying trading inventory is controlled next.

\paragraph{Bound boundary occupation by crossing balance.}
At lower-wall edges, the outward purchase probability is at least
$p_+=F(a)-d_{\bar\tau}$ and the inward sale probability is at most
$q_+=1-F(a)+d_{\bar\tau}$, where $d_{\bar\tau}=Ce^{-c\sqrt{\bar\tau}}$.
Their ratio is $r_++O(d_{\bar\tau})$; its powers through $h_+$ are at most
$C r_+^j$. Crossing cancellation therefore gives
$\nu_j\le C r_+^{h_+-j}\bar\tau+C$ below $h_+$.
Multiplication by $r_+^j$ and summation give
\begin{align*}
 \sum_{t\in I}\E r_+^{y_{t-1}}
 &\le C+C\bar\tau(h_++1)r_+^{h_+}+\bar\tau r_+^{h_++1}\le C,
\end{align*}
since $r_+^{h_+}\asymp \bar\tau^{-3/2}$.
At the upper wall, use distance $B-y$, outward sale probability
at least $1-F(a^-)-d_{\bar\tau}$, and inward purchase probability at most
$F(a^-)+d_{\bar\tau}$. The same calculation gives
$\sum_{t\in I}\E r_-^{B-y_{t-1}}\le C$.
Terminal restrictions only strengthen these inequalities: if the
upper state of an edge is $\tau+1$, its sale is compulsory and the
purchase from $\tau$ is unavailable; higher states are unreachable.
All lower-wall actions used above are feasible for large $\bar\tau$ because
$h_+=O(\log \bar\tau)<(\bar\tau\wedge B)$.
The reference local charge therefore costs at most a constant per
dyadic block on the sampled path. Lemma~\ref{lem:sampled-compensation}
adds only $C\sum_tK_t^{-1}\le2C$. There are $O(\log T)$ blocks;
the fixed number of remaining epochs costs a constant. A singleton
support has zero regret by terminal balance. This proves the
finite-scenario atomic guarantee in Theorem~\ref{thm:stars-upper}.

\end{document}